\newcommand{\highlight@DoHighlight}{
  \fill [ decoration = {random steps, amplitude=1pt, segment length=15pt}
        , outer sep = -15pt, inner sep = 0pt, decorate
        , every highlighter, this highlighter ]
        ($(begin highlight)+(0,8pt)$) rectangle ($(end highlight)+(0,-3pt)$) ;
}
\newcommand{\highlight@BeginHighlight}{
  \coordinate (begin highlight) at (0,0) ;
}
\newcommand{\highlight@EndHighlight}{
  \coordinate (end highlight) at (0,0) ;
}
\newdimen\highlight@previous
\newdimen\highlight@current
\DeclareRobustCommand*\highlight[1][]{%
  \tikzset{this highlighter/.style={#1}}%
  \SOUL@setup
  \def\SOUL@preamble{%
    \begin{tikzpicture}[overlay, remember picture]
      \highlight@BeginHighlight
      \highlight@EndHighlight
    \end{tikzpicture}%
  }%
  \def\SOUL@postamble{%
    \begin{tikzpicture}[overlay, remember picture]
      \highlight@EndHighlight
      \highlight@DoHighlight
    \end{tikzpicture}%
  }%
  \def\SOUL@everyhyphen{%
    \discretionary{%
      \SOUL@setkern\SOUL@hyphkern
      \SOUL@sethyphenchar
      \tikz[overlay, remember picture] \emphlight@EndHighlight ;%
    }{%
    }{%
      \SOUL@setkern\SOUL@charkern
    }%
  }%
  \def\SOUL@everyexhyphen##1{%
    \SOUL@setkern\SOUL@hyphkern
    \hbox{##1}%
    \discretionary{%
      \tikz[overlay, remember picture] \highlight@EndHighlight ;%
    }{%
    }{%
      \SOUL@setkern\SOUL@charkern
    }%
  }%
  \def\SOUL@everysyllable{%
    \begin{tikzpicture}[overlay, remember picture]
      \path let \p0 = (begin highlight), \p1 = (0,0) in \pgfextra
        \global\highlight@previous=\y0
        \global\highlight@current =\y1
      \endpgfextra (0,0) ;
      \ifdim\highlight@current < \highlight@previous
        \highlight@DoHighlight
        \highlight@BeginHighlight
      \fi
    \end{tikzpicture}%
    \the\SOUL@syllable
    \tikz[overlay, remember picture] \highlight@EndHighlight ;%
  }%
  \SOUL@
}
\numberwithin{equation}{section}
\theoremstyle{plain}
\newtheorem{theorem}{Theorem}
\newtheorem{lemma}[theorem]{Lemma}
\newtheorem{proposition}[theorem]{Proposition}
\newtheorem{definition}{Definition}
\newtheorem{example}[theorem]{Example}
\newtheorem{conjecture}[theorem]{Conjecture}
\newtheorem{problem}[theorem]{Problem}
\theoremstyle{definition}
\theoremstyle{remark}
\newtheorem{remark}[theorem]{Remark}
\newcommand{\Z}{{\mathbb Z}}
\renewcommand{\epsilon}{\varepsilon}
\renewcommand{\phi}{\varphi}
\DeclareMathOperator{\Suff}{Suff}
\DeclareMathOperator{\Pref}{Pref}
\DeclareMathOperator{\Fact}{Fact}
\renewcommand{\alph}{\textit{alph}}
\DeclareMathOperator{\Pal}{Pal}
\DeclareMathOperator{\AS}{AS}
\def\PV{\mathcal{P}}
\def\tr{\textit{tr}}
\def\tm{\textit{tm}}
\def\vtm{\textit{vtm}}
\def\cd3#1{\textbf{\textsf{#1}}}
\newcommand{\ass}[2]{\AS_{#2}(#1)}
\newcommand{\as}[1]{\AS(#1)}
\def\act{\chi_{ab}}
\title{Abelian Combinatorics on Words: a Survey}
\author{Gabriele Fici \and Svetlana Puzynina\thanks{The second author has been supported by Russian Foundation of Basic Research
(grant 20-01-00488).}}
\date{\today}
\begin{document}

\maketitle

\begin{abstract}
We survey known results and open problems in abelian combinatorics on words. Abelian combinatorics on words is the extension to the commutative setting of the classical theory of combinatorics on words. The extension is based on \emph{abelian equivalence}, which is the equivalence relation defined in the set of words by having the same Parikh vector, that is, the same number of occurrences of each letter of the alphabet. In the past few years, there was a lot of research on abelian analogues of classical definitions and properties in combinatorics on words. This survey aims to gather these results.  
\end{abstract}

\tableofcontents

\section{Introduction}

Combinatorics on words is the algebraic study of symbolic sequences. Given a finite set $\Sigma$, called the alphabet, one can construct a free monoid $\Sigma^*$ by equipping the alphabet with the operation of concatenation. The specificity of the theory is that the concatenation operation is not commutative. However, over the past years researchers built up a commutative theory by restricting the attention to the number of occurrences of letters in a word rather than to the order in which they appear. Formally speaking, fixed an ordered alphabet $\Sigma_d=\{0,1,\ldots,d-1\}$, of cardinality $d>1$\footnote{The case of a unary alphabet, $d=1$, is not interesting in this context, since it can be trivially reduced to elementary arithmetic.}, one can consider the map $\PV$ defined on the set $\Sigma_d^*$ of words over $\Sigma_d$ by $\PV(w)=(|w|_0,|w|_{1},\ldots,|w|_{d-1})$, where $|w|_i$ denotes the number of occurrences of $i$ in the word $w$. The equivalence relation induced by $\PV$ is called the \emph{abelian equivalence} and is at the basis of the theory of abelian combinatorics on words. A natural aim of the theory is to extend to the abelian setting the main definitions and results that have been introduced and discovered in the noncommutative setting. 
In the past few years, there was a lot of research on abelian analogues of classical definitions and properties in combinatorics on words. This survey aims to gather these results.

Some specific aspects of abelian combinatorics on words have already been the object of a survey, e.g., avoidance. In these cases, we shortly recall the results. Otherwise, we present a more detailed description of the results and give references to the published papers. Even if our aim is to give a comprehensive view of the subject, in some cases we decided to select only the results that we find more relevant or interesting. 

The content of the document covers the following topics: Among the various definitions of complexity for an infinite word, the classical one is the factor complexity, i.e., the integer function that counts for each $n$ the number of distinct factors of length $n$ occurring in the word. When this number of factors is counted up to abelian equivalence, one has the notion of abelian complexity, which is the object of Sec.~\ref{sec:complexity}. The study of repetitions in words has been generalized to the abelian setting and is the object of Sec.~\ref{sec:repetitions}.
Another classical subject in combinatorics of words is that of avoidability of patterns. The extension of avoidability to the abelian setting is the object of Sec.~\ref{sec:avoidability}. For finite words, many of the classical results about periods and borders have an abelian counterpart, presented in Sec.~\ref{sec:borders}. One of the most studied classes of words is that of Sturmian words; Abelian properties of Sturmian words are the object of Sec.~\ref{sec:sturmian}. In Sec.~\ref{sec:modifications}, we present some interesting modifications of the abelian equivalence, e.g., $k$-abelian equivalence, weak abelian equivalence and $k$-binomial equivalence. Finally, in Sec.~\ref{sec:misc}, we present some  results on abelian counterparts of specific definitions, e.g., subshifts or palindromic richness. 

Combinatorics on words has found applications in several disciplines, e.g., text processing, bioinformatics, error-correction codes, etc. The field referring to the applications of combinatorial properties of words to the design of efficient algorithms for string pattern matching is sometimes referred to as ``stringology''. Recently, some of the results in abelian combinatorics on words we present in this survey have found applications in what is called ``abelian stringology'', or abelian pattern matching (also called jumbled pattern matching). We do not cover these results, however, since the focus of this survey is more on the algebraic aspects of the theory. 

In this document, we briefly recall the basic definitions of the classical theory of combinatorics on words, pointing the reader to the classical books on the subject for further details. In this way, the document remains self-contained and accessible to all interested readers. We made a particular effort in trying to make the notation uniform all along the presentation of the results. For this reason, the notation we use in this document may differ with respect to that used in the original papers.

\section{Preliminaries}\label{sec:prelim}

We start by recalling some standard definitions. For other basics of combinatorics on words 
we refer the reader to the classical books on the subject~\cite{Lot01,LothaireAlg,Lot05,Pitheasfogg,AS03,DBLP:reference/hfl/ChoffrutK97,berthe_rigo_CANT,Rigo2014}.

Let $f,g$ be integer functions. We write $f\in O(g)$ if there exists a constant $C>0$ such that for every $n$, $f(n)\leq Cg(n)$. In this case we also write $g\in \Omega(f)$. If $f\in O(g)$ and $g\in O(f)$, we write $f\in \Theta(g)$. If $\lim_{n\to \infty}f(n)/g(n)=0$, we write $f\in o(g)$. 


We let  $\Sigma_d=\{0,1,\ldots, d-1\}$ denote a $d$-ary alphabet. A \emph{word} over $\Sigma_d$ is a concatenation of letters from $\Sigma_d$. The length of a word $w$ is denoted by $|w|$. The empty word $\varepsilon$ has length $0$. The set of all words (resp. all nonempty words) over  $\Sigma_d$ is denoted  $\Sigma_d^*$ (resp.~by $\Sigma_d^+$), while the set of all words of length equal to $n$ over  $\Sigma_d$ is denoted by $\Sigma_d^n$. 

Let $w=uv$, with $u,v\in \Sigma_d^*$. We say that $u$ is a \emph{prefix} of $w$ and that $v$ is a \emph{suffix} of $w$. A \emph{factor} of $w$ is a prefix of a suffix (or, equivalently, a suffix of a prefix) of $w$. The sets of prefixes, suffixes, factors of a word $w$ are denoted, respectively, by $\Pref(w),\Suff(w),\Fact(w)$. We also use  $\Pref_k(w),\Suff_k(w)$ to denote the prefix and the suffix of length $k$ of $w$, respectively. 


An integer $p$ is \emph{a period} of a word $w=w_1w_2\cdots w_n$, $w_i\in \Sigma_d$, if $w_i=w_j$ whenever $i=j \mod p$. We call \emph{the period} of $w$ the smallest of its periods,  denoted  $\pi(w)$. The \emph{exponent} of a word $w$ is the ratio $|w|/\pi(w)$. 

A word $w$ is a \emph{$k$-power}, $k>1$, if it has length $kp$ and $p>0$ is a period of $w$. A $2$-power is simply called 
a \emph{square} and a $3$-power is simply called a \emph{cube}. A word that is not a $k$-power for any $k>1$ is called \emph{primitive}.

Let $w$ be a $k$-power. Then its prefix $u$ of length $|w|/k$ is called \emph{a root} of $w$. The \emph{primitive root} of the word $w$ is the shortest of its roots, that is, its prefix of length $\pi(w)$ (notice that the primitive root of a word is a primitive word). For example, the word $aaaa$, $a\in\Sigma_d$, is both a square (with root $aa$) and a $4$-power, with primitive root $a$.

We say that a nonempty word $v$ is a \emph{border} of a word $w\neq v$ if $w=vu=u'v$ for some words $u,u'$. It follows from the definition that $v$ is a border of $w$ if and only if $|w|-|v|$ is a period of $w$ of length $\leq |w|$.

A nonempty word $w=w_1w_2\cdots w_n$, $w_i\in \Sigma_d$, is a \emph{palindrome} if  it coincides with its \emph{reversal} $w^R=w_nw_{n-1}\cdots w_1$. The empty word is also assumed to be a palindrome. The set $\Pal(w)$ of palindromic factors of $w$ has cardinality $|\Pal(w)|\leq |w|+1$;  $w$ is called \emph{rich} if the equality holds.


An \emph{infinite word} (or \emph{right-infinite word}) over $\Sigma_d$ is a non-ending concatenation of letters from $\Sigma_d$.
An infinite word is called \emph{purely periodic} if it has a period, i.e., it can be written as $v^\omega$ for some finite word $v$ (the notation $u^{\omega}$ stands for $uuu \cdots $); \emph{ultimately periodic} if it has a purely periodic infinite suffix, i.e., it can be written as $uv^\omega$ for some finite words $u,v$; or \emph{aperiodic} otherwise, i.e., if it is not ultimately periodic. 

An infinite word $x$ is \emph{recurrent} if every finite factor of $x$ occurs in $x$ infinitely often;  \emph{uniformly recurrent} if for every finite factor $u$ of $x$ there exists an integer $N$ (that depends on $u$) such that $u$ occurs in every factor of $x$ of length $N$;  \emph{linearly recurrent} if there exists an integer $m$ such that for every finite factor $u$ of $x$, $u$ occurs in every factor of $x$ of length $m|u|$.

A \emph{substitution} is a map $h$ from $\Sigma_d$ to $\Sigma_d^*$ such that the image of every letter is nonempty. The notion of a substitution is generalized from letters to words in a natural way by concatenation: $h(uv)=h(u)h(v)$. If for a letter $a\in \Sigma_d$,  $h(a)$ is a word of length at least $2$ beginning with $a$, the substitution has a unique fixed point beginning with $a$, which is the infinite word $\lim_{n\to\infty}h^n(a)$.
An infinite word is called \emph{purely morphic} if it is a fixed point of a substitution.

A substitution is \emph{uniform} if all the images have the same length and \emph{primitive} if for every letter $a$ there exists an iterate of the substitution on $a$ that contains all the letters of $\Sigma_d$. Fixed points of primitive substitutions are known to be linearly recurrent. 

More generally, given two alphabets $\Sigma$ and $\Delta$, a \emph{morphism} between  $\Sigma^*$ and $\Delta^*$ is a map $h$ such that for every $u,v\in\Sigma^*$, one has $h(uv)=h(u)h(v)$. A morphism can be specified by giving the list of images of letters in $\Sigma$. A morphism is \emph{non-erasing} if the images of all letters are nonempty. Notice that a substitution is therefore a non-erasing endomorphism.  

Given an infinite  word $x$ over $\Sigma_d$, the \emph{factor complexity} of $x$ is the integer function $p_{x}(n)=|\Fact(x)\cap \Sigma_d^n|$ counting the number of distinct factors of length $n$ of $x$, for each $n\geq 0$.

An infinite word is aperiodic if and only if its factor complexity is unbounded. In particular, a classical result of Morse and Hedlund~\cite{MoHe38} is that the factor complexity of an aperiodic word $x$ verifies $p_{x}(n)\geq n+1$ for every $n$. An aperiodic word with minimal factor complexity $p_{x}(n)= n+1$ for every $n$ is called a \emph{Sturmian word}. A famous example of Sturmian word is the \emph{Fibonacci word} $$f=010010100100101001\cdots$$ which can be obtained as the fixed point of the substitution $0\mapsto 01, 1 \mapsto 0$. 
Sturmian words can be defined in many equivalent ways; in particular, via balance, iterated palindromic closure and Sturmian morphisms. 

The most natural generalization of Sturmian words to nonbinary alphabets, which shares many structural properties of Sturmian words, is \emph{Arnoux--Rauzy words}, or strict episturmian words~\cite{ArRa,DJP}. One of the ways to define Arnoux--Rauzy words  --- and in particular Sturmian words --- is via iterated palindromic closure. 
The \emph{right palindromic closure} of a finite word $u\in\Sigma_d^*$, denoted
by $u^{(+)}$, is the shortest palindrome that has $u$ as a prefix.
The \emph{iterated (right) palindromic closure operator} $\psi$ is
 defined recursively by the following rules:
\[ \psi(\varepsilon)=\varepsilon, \quad \psi(ua)=(\psi(vu)a)^{(+)}\]
for all $u \in  \Sigma_d^*$ and $a \in  \Sigma_d$. For example, $\psi(0110)=0101001010$. 

The definition of $\psi$ can
be extended to infinite words over $ \Sigma_d$, $d\geq 2$, as follows:
$\psi(u)=\lim_{n\to\infty} \psi(\mbox{Pref}_n (u))$, i.e., $\psi(u)$ is the
infinite word having $\psi(\mbox{Pref}_n (u))$ as its prefix for
every $n \in \mathbb N$.
Let $u$ be an infinite word over the alphabet $\Sigma_d$ such
that every letter occurs infinitely often in $u$. The word
$x=\psi(u)$ is then called a \emph{characteristic (or
standard) Arnoux--Rauzy word} and $u$ is called the
\emph{directive sequence} of $x$. An infinite word $x$ is called
an Arnoux--Rauzy word if it has the same set of factors of a
(unique) characteristic Arnoux--Rauzy word. For $d=2$, this gives an equivalent definition of Sturmian words~\cite{deLuca1996:sturmian_words_structure_combinatorics_arithmetics}. 
An example of characteristic Arnoux--Rauzy word is given by the \emph{Tribonacci word} \[\tr=010201001020101020100\cdots\] which has directive sequence $(012)^\omega$. The Tribonacci word can also be defined as the fixed point of the substitution $0 \mapsto 01, 1\mapsto 02, 2 \mapsto 0$.

The \emph{critical exponent} $\chi(x)$ of an infinite word $x$ is the supremum of the exponents of its factors. We say that an infinite word $x$ is $\beta$-free (resp.~$\beta^+$-free), for a real number $\beta$, if no factor has exponent $\beta$ or larger (resp.,~if no factor has exponent larger than $\beta$). For example, the critical exponent of the Fibonacci word is $2+\phi$, where $\phi=(1+\sqrt{5})/2$ is the golden ratio~\cite{MignosiPirillo}; hence the Fibonacci word is $(2+\phi)$-free (and in particular $4$-free).

It is a trivial fact that every word over $\Sigma_2$ of length at least $4$ contains a square factor, so there do not exist infinite binary square-free words. Still, there exist infinite binary words that are $2^+$-free. An example is the \emph{Thue--Morse word}
\[\tm=01101001100101101001\cdots\]
which can be obtained as the fixed point starting with $0$ of the substitution $0\mapsto 01, 1 \mapsto 10$.

Another famous word we will mention in this paper is the \emph{regular paperfolding word}:
$$p=001001100011011000100\cdots$$ 
which, contrarily to the Fibonacci and the Thue--Morse words, cannot be obtained as the fixed point of a substitution. It can be defined as a \emph{Toeplitz word} with pattern $v = 0?1?$, that is, starting from the word $v^\omega$, we replace the occurrences of the characters $?$ with the word $v^\omega$, then in the new word we again replace the remaining occurrences of $?$ with $v^\omega$ and so on, thus defining a word without $?$. 

More generally, one can construct an infinite (actually, uncountable) family of words, called \emph{paperfolding words}, by alternating the replacements of the occurrences of $?$ with $v_0^\omega$ and $v_1^\omega$, where $v_0=0?1?$ and $v_1=1?0?$, according to a binary directive sequence.


We will need a symbolic dynamical notion of the subshift generated by an infinite word. A \emph{subshift} $X\subseteq \Sigma_d^{\mathbb{N}}$, $X\neq \emptyset$, is
a closed set (with respect to the product topology of
$\Sigma_d^{\mathbb{N}}$) and is invariant under the shift operator $\sigma$, defined by $\sigma(a_0a_1a_2\cdots) = a_1a_2\cdots$, that is,
$\sigma(X) \subseteq X$. We call $\Sigma_d^{\mathbb{N}}$ the \emph{full
shift} over $\Sigma_d$. A subshift $X\subseteq \Sigma_d^{\mathbb{N}}$ is
called \emph{minimal} if $X$ does not contain any proper
subshifts. For a subshift $X\subseteq \Sigma_d^{\mathbb{N}}$ we let 
$\Fact(X) = \cup_{ y\in X}\Fact( y)$.
Let $ x \in \Sigma_d^{\mathbb{N}}$. We let $\Omega_{x}$ denote the
\emph{shift orbit closure} of $x$, that is, the set $\{ y\in
\Sigma_d^{\mathbb{N}} \colon \Fact( y)\subseteq \Fact( x)\}$. Thus,
$\Fact(\Omega_{ x}) = \Fact( x)$ for an infinite word $
x\in\Sigma_d^{\mathbb{N}}$. It is known that $\Omega_{ x}$ is minimal if and
only if $ x$ is uniformly recurrent. See \cite{LindMarcus95} for more on
the topic.

Given a  word $w$ over  $\Sigma_d$, we let $|w|_{i}$ denote
the number of occurrences of the letter $i$ of  $\Sigma_d$ in $w$.
The \emph{Parikh vector} (also called \emph{composition vector} or  \emph{abelianization})
of the word $w$ is the vector
$\PV(w)=(|w|_0,|w|_{1},\ldots,|w|_{d-1})$, counting the
occurrences of the letters of $\Sigma_d$  in $w$. 

Two words have the same Parikh vector if and only if one is an anagram of the other. In particular, if two words have the same Parikh vector, then they must have the same length, which is also the sum of the components of the Parikh vector (called the \emph{norm} of the
Parikh vector).

\begin{definition}
The equivalence relation $\sim_{ab}$ defined on $\Sigma_d^*$ by the property of having the same Parikh vector is called
\emph{abelian equivalence}.
\end{definition}

 For example, the words $01101$ and $10011$ are abelian equivalent, while the words $01101$ and $10010$ are not.  Besides combinatorics on words, the concepts of Parikh vector (and Parikh matrix) and  abelian equivalence are used in semigroup theory and are applied in formal language theory; see, e.g., Parikh theorem for context-free languages \cite{10.1145/321356.321364}.

\section{Abelian complexity}\label{sec:complexity}

In this section,  we discuss abelian modifications of the classical notion of factor complexity of an infinite word and of the pattern complexity introduced by Kamae and Zamboni~\cite{kamae_zamboni_2002}.

\subsection{Abelian complexity and periodicity}

The \emph{abelian complexity} of the word $x$ over $\Sigma_d$ is the integer function
$$a_{x}(n)=\left | ({\Fact(x) \cap \Sigma_d^{n}}) / {\sim_{ab}}
\right |,$$ where $\sim_{ab}$ is the abelian equivalence, i.e.,
$a_{x}$ is the function that counts the number of distinct Parikh
vectors of factors of length $n$ of $x$, for every $n\geq 0$.

If an infinite word $x$ is ultimately periodic, then its
abelian complexity is bounded. Indeed, by the  Morse--Hedlund theorem, the usual factor complexity of
ultimately periodic words is bounded, and the abelian complexity
cannot be greater than the factor complexity, since the identity is a refinement of the abelian equivalence.

On the other hand, there exist aperiodic words with bounded
abelian complexity. As an example, all Sturmian words are aperiodic
and have abelian complexity equal to $2$, as we will see in Section~\ref{sec:sturmian}. In fact, it is easy to see that an aperiodic word cannot have an abelian complexity equal to $1$ for any $n$: 

\begin{lemma}
 If there exists $n>0$ such that  $a_{x}(n)=1$, then $x$ is purely periodic. More precisely, the smallest period of $x$ is the least such $n$.
\end{lemma}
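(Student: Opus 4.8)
The plan is to connect the hypothesis $a_x(n)=1$ to periodicity via a one-letter sliding window. Write $x=x_0x_1x_2\cdots$ with $x_i\in\Sigma_d$, and for each $i\ge 0$ set $w_i=x_ix_{i+1}\cdots x_{i+n-1}$, the length-$n$ factor starting at position $i$. The assumption $a_x(n)=1$ says precisely that all the $w_i$ share a single Parikh vector. Since $w_{i+1}$ arises from $w_i$ by deleting the first letter $x_i$ and appending $x_{i+n}$, we have $\PV(w_{i+1})=\PV(w_i)-\mathbf{e}_{x_i}+\mathbf{e}_{x_{i+n}}$, where $\mathbf{e}_j$ is the standard basis vector with a $1$ in coordinate $j$. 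Equality of Parikh vectors then forces $\mathbf{e}_{x_i}=\mathbf{e}_{x_{i+n}}$, i.e.\ $x_i=x_{i+n}$, for every $i\ge 0$. Thus $n$ is a period of $x$ and $x=(\Pref_n(x))^\omega$ is purely periodic, which establishes the first assertion.

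For the sharper claim, let $m=\min\{\,n>0: a_x(n)=1\,\}$; I want to show $m=\pi(x)$. The argument above shows that $m$ is a period, so $\pi(x)\le m$ by minimality of the period. For the reverse inequality I would prove the converse implication: every period $p$ of $x$ satisfies $a_x(p)=1$. Indeed, if $p$ is a period then $x=v^\omega$ with $v=\Pref_p(x)$, and using $x_{i+p}=x_i$ one sees that each length-$p$ factor $w_i$ equals the cyclic rotation $x_i\cdots x_{p-1}x_0\cdots x_{i-1}$ of $v$. Cyclic rotations are anagrams of one another, hence share a single Parikh vector, so $a_x(p)=1$. Taking $p=\pi(x)$ gives $a_x(\pi(x))=1$, whence $m\le\pi(x)$ by minimality of $m$. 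The two inequalities yield $m=\pi(x)$.

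There is no genuine obstacle here; the only point requiring care is the Parikh-vector bookkeeping in the sliding window, which is immediate once one observes that passing from $w_i$ to $w_{i+1}$ decrements exactly the count of $x_i$ and increments exactly the count of $x_{i+n}$, leaving every other coordinate unchanged.
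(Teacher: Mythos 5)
Your proof is correct and rests on essentially the same argument as the paper's: a sliding window of length $n$, where comparing the Parikh vectors of two consecutive length-$n$ factors forces $x_{i+n}=x_i$ for every $i$. You in fact go a bit further than the paper, which stops after showing that the least such $n$ is a period; your converse step (every period $p$ of a purely periodic word gives $a_x(p)=1$, since all length-$p$ factors are cyclic rotations of the prefix of length $p$) explicitly justifies the ``more precisely'' clause that the paper leaves implicit.
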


\begin{proof}
 Let $n$ be the least integer such that $a_{x}(n)=1$, that is, all the factors of $x$ of length $n$ have the same Parikh vector. In particular, the prefix $x_1x_2\cdots x_n$ of length $n$ of $x$ has the same Parikh vector as the factor $x_2x_3\cdots x_{n+1}$. This implies that $x_{n+1}=x_1$. Analogously, one deduces that $x_{n+2}=x_2$ and so on. We therefore have that $x$ has period $n$.
\end{proof}

The maximal abelian complexity is realized, for example, by words with full factor complexity, like, e.g., the \emph{binary Champernowne word} $0\,1\,10\,11\,100\,101\,110\,111\cdots$ obtained by concatenating the binary expansions of the natural numbers in the natural order (with zero represented by $0$). We have:

\begin{theorem}
 For all infinite words $x$ over $\Sigma_d$, and for all $n \ge 0$,
\[1\le a_x(n)\le \binom{n+d-1}{d-1}.\]
In particular, the abelian complexity is bounded by $O(n^d)$.
\end{theorem}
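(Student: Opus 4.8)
The plan is to prove the two bounds separately, since they are independent. The lower bound $a_x(n)\ge 1$ is immediate: every infinite word has at least one factor of length $n$ (for instance its prefix of length $n$), and any nonempty set of factors produces at least one equivalence class under $\sim_{ab}$. So all the work is in the upper bound, which amounts to a counting argument that is purely about Parikh vectors and has nothing to do with the specific word $x$.

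For the upper bound, the key observation is that $a_x(n)$ counts the number of \emph{distinct Parikh vectors} of length-$n$ factors, so it is at most the total number of Parikh vectors that \emph{any} word of length $n$ over $\Sigma_d$ can have. A Parikh vector of a length-$n$ word is a $d$-tuple $(c_0,c_1,\ldots,c_{d-1})$ of nonnegative integers with $c_0+c_1+\cdots+c_{d-1}=n$, since the norm of the Parikh vector equals the length. Hence I would reduce the whole theorem to the following standard combinatorial identity: the number of solutions in nonnegative integers to $c_0+\cdots+c_{d-1}=n$ equals $\binom{n+d-1}{d-1}$.

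To establish that identity I would invoke the classical stars-and-bars argument: a composition of $n$ into $d$ ordered nonnegative parts corresponds bijectively to a way of placing $d-1$ dividers among $n$ stars arranged in a row, i.e., to choosing the positions of the $d-1$ dividers among the $n+d-1$ total symbols, which can be done in $\binom{n+d-1}{d-1}$ ways. This gives $a_x(n)\le\binom{n+d-1}{d-1}$ and completes the two-sided bound.

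Finally, for the asymptotic claim that the abelian complexity is $O(n^d)$, I would note that $\binom{n+d-1}{d-1}$ is a polynomial in $n$ of degree $d-1$, so in fact the sharper estimate $O(n^{d-1})$ holds and a fortiori the stated $O(n^d)$ bound follows. I do not expect any genuine obstacle here: the entire argument is a reduction to the elementary count of weak compositions, and the only mild subtlety worth stating explicitly is that the bound is uniform over all words $x$ precisely because it caps the number of possible Parikh vectors rather than invoking any property of $x$ itself.
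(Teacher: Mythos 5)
Your proposal is correct and follows essentially the same route as the paper: both bound $a_x(n)$ by the total number of Parikh vectors of length-$n$ words over $\Sigma_d$, i.e., the number of weak compositions of $n$ into $d$ parts, which equals $\binom{n+d-1}{d-1}$; you merely make the stars-and-bars argument explicit where the paper cites it as well known. Your added remark that this binomial coefficient is a polynomial of degree $d-1$, giving the sharper bound $O(n^{d-1})$, is a valid (minor) strengthening of the paper's stated $O(n^d)$.
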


\begin{proof}
The maximum value of the abelian complexity of a word over $\Sigma_d$ is the maximum
number of ways of writing $n$ as the sum of $d$ nonnegative integers. This well-known number is called the number of compositions of $n$ into $d$ parts and its value is given
by the binomial coefficient $\binom{n+d-1}{d-1}$.
\end{proof}

However, there exist  infinite binary words having maximal abelian complexity but linear factor
complexity. For example, take the alphabet $\Delta=\{a,b,c\}$ and let $f$ and $g$ be the morphisms defined by $f(a) = abc$,
$f(b) = bbb$, $f(c) = ccc$, $g(a) = 0 = g(c)$ and $g(b) = 1$. Let $x$ be the
fixed point of $f$ beginning in $a$. Then the image of $x$ under $g$ is the word
\[x'=0\prod_{i\ge 0}1^{3^i}0^{3^i}\]
The word $x'$ has maximal abelian complexity but linear factor complexity.

\begin{definition}\label{def:balance}
 A (finite or infinite) word $w$ over $\Sigma_d$ is $C$-balanced for an integer $C>0$ if for every letter $a\in\Sigma_d$ and every two factors $u,v$ of $w$ of the same length, one has $||u|_a-|v|_a|\leq C$.
For $C=1$, the constant is usually omitted and the word is simply called
balanced.

The balance function of $w$ is the function
\[B_w(n)=\max_{a\in \Sigma_d}\ \ \max_{u,v\in \Fact(w)\cap\Sigma_d^n}||u|_a-|v|_a|.\]
Clearly, a word is $C$-balanced if and only if its balance function is bounded by $C$.
\end{definition}

In other words, a word is $C$-balanced if for every letter $a$, taking a window of any fixed size sliding on the word one has a number of $a$'s falling in the window that ranges from a minimal value $k$, depending on the size of the window, to a maximal value $k+C$. An immediate consequence of this remark is the following:

\begin{proposition}
 Let $x$ be an infinite word. Then the abelian complexity of $x$ is bounded if and only if $x$ is $C$-balanced for some $C>0$.
\end{proposition}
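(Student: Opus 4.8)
The plan is to prove the two implications separately, handling the ``balanced $\Rightarrow$ bounded'' direction by an elementary counting argument and the ``bounded $\Rightarrow$ balanced'' direction by a discrete intermediate value argument applied to a sliding window.

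First I would dispose of the easy direction. Suppose $x$ is $C$-balanced. Then for every $n$ and every letter $a$ the quantity $|u|_a$, as $u$ ranges over $\Fact(x)\cap\Sigma_d^n$, takes values in an interval of at most $C+1$ integers. A Parikh vector of a factor of length $n$ is determined by its first $d-1$ components, since the last one equals $n$ minus their sum; hence the number of distinct Parikh vectors of length-$n$ factors is at most $(C+1)^{d-1}$. As this bound is independent of $n$, the abelian complexity $a_x$ is bounded.

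For the converse I would show that bounded abelian complexity forces the balance function to be bounded. The crux is the observation that, for each fixed $n$ and each letter $a$, the set of values $\{\,|u|_a : u\in\Fact(x)\cap\Sigma_d^n\,\}$ is an interval of consecutive integers. To see this, take two occurrences of length-$n$ factors of $x$, at positions $i<j$, and slide a window of width $n$ from $i$ to $j$: each one-step shift deletes the leftmost letter of the current window and appends one new letter on the right, so $|u|_a$ changes by at most $1$ at each step. By the discrete intermediate value theorem, every integer between the extreme counts is attained by some factor. Writing $\delta_a(n)$ for the difference between the maximal and minimal value of $|u|_a$ over length-$n$ factors, the number of distinct values of $|u|_a$ is therefore exactly $\delta_a(n)+1$. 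Since factors whose $a$-component differs have different Parikh vectors, we obtain $a_x(n)\ge \delta_a(n)+1$ for every $a$, and taking the maximum over $a$ gives $a_x(n)\ge B_x(n)+1$ for all $n$. Consequently, if $a_x(n)\le M$ for all $n$, then $B_x(n)\le M-1$ for all $n$, i.e.\ $x$ is $(M-1)$-balanced, which completes the proof.

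The main obstacle is precisely the interval property invoked in the converse: a bounded number of distinct Parikh vectors does not, a priori, rule out a large balance (two factors could exhibit $a$-counts that are far apart with nothing in between). What rescues the argument is that all length-$n$ factors arise from a single infinite word and any two of them are linked by a chain of single-letter window shifts, so the attained $a$-counts cannot skip values. The counting direction, by contrast, is entirely routine.
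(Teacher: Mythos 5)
Your proof is correct and follows essentially the same route as the paper, which states this proposition as an ``immediate consequence'' of its sliding-window remark and gives no further details. Your two halves --- the counting bound $a_x(n)\le (C+1)^{d-1}$ for the easy direction, and the discrete intermediate-value argument showing that $\{\,|u|_a : u\in\Fact(x)\cap\Sigma_d^n\,\}$ is an interval of consecutive integers, whence $a_x(n)\ge B_x(n)+1$ --- are precisely the details that remark leaves implicit.
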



A well-known result by Coven and Hedlund~\cite{Coven-Hedlund} states that a binary aperiodic word is 1-balanced if and only if it is Sturmian, which can be reformulated in terms of abelian complexity as follows:

\begin{theorem}
Let $x$ be an aperiodic binary  word. Then $x$ is Sturmian 
if and only if $a_x(n)=2$ for every $n\ge 1$. 
\end{theorem}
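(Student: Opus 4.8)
The plan is to prove the theorem by relating the condition $a_x(n) = 2$ for all $n \geq 1$ to the $1$-balance property, and then invoking the Coven--Hedlund characterization stated in the preceding paragraph. The key observation is that abelian complexity $2$ at every length is, in the binary case, essentially a restatement of being $1$-balanced but not $0$-balanced, so the two directions of the equivalence reduce to understanding the interplay between the balance function $B_x(n)$ and the counting function $a_x(n)$.

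First I would handle the forward direction. Assume $x$ is Sturmian. By the Coven--Hedlund result, an aperiodic binary word is Sturmian if and only if it is $1$-balanced, so $x$ is $1$-balanced, meaning $B_x(n) \leq 1$ for all $n$. In the binary case, any two factors $u,v$ of length $n$ satisfy $|u|_0 + |u|_1 = |v|_0 + |v|_1 = n$, so a factor's Parikh vector is determined by $|u|_0$ alone. The $1$-balance condition says the values $|u|_0$ taken over all length-$n$ factors differ by at most $1$, hence they take at most two distinct values, giving $a_x(n) \leq 2$. To get $a_x(n) = 2$ rather than $1$, I would use the earlier lemma (the one stating that if $a_x(n) = 1$ for some $n$ then $x$ is purely periodic): since $x$ is aperiodic, $a_x(n) \neq 1$ for every $n \geq 1$, and combined with $a_x(n) \geq 1$ and $a_x(n) \leq 2$ this forces $a_x(n) = 2$.

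For the converse, assume $a_x(n) = 2$ for every $n \geq 1$. Since the Parikh vector of a binary factor of length $n$ is determined by the count $|u|_0$, having exactly two distinct Parikh vectors at each length means the set $\{|u|_0 : u \in \Fact(x) \cap \Sigma_2^n\}$ has exactly two elements. The nontrivial point is that these two values must be \emph{consecutive} integers, so that $B_x(n) \leq 1$; this is the step I expect to be the main obstacle. One cannot immediately rule out a gap such as $\{k, k+2\}$ from counting alone, so I would argue it via a continuity/connectedness principle: as a sliding window of length $n$ moves one position to the right along $x$, the count $|u|_0$ changes by at most $1$ (it changes by $-1$, $0$, or $+1$ depending on the letter leaving and the letter entering). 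Hence the set of attained values of $|u|_0$ is an interval of integers; having exactly two attained values forces them to be consecutive, yielding $B_x(n) \leq 1$ for all $n$, i.e., $x$ is $1$-balanced.

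Finally, having shown $x$ is $1$-balanced, I would conclude by the Coven--Hedlund characterization. It remains only to check that $x$ is aperiodic, which is part of the hypothesis of the theorem, so Coven--Hedlund applies directly and gives that $x$ is Sturmian. The whole argument thus hinges on two ingredients already available: the sliding-window continuity of the letter-count (which converts a counting statement into a balance statement) and the Coven--Hedlund equivalence between $1$-balance and Sturmianicity; the aperiodicity hypothesis is what lets us upgrade $a_x(n) \leq 2$ to the strict equality $a_x(n) = 2$ via the purely-periodic lemma.
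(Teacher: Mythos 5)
Your proof is correct and follows exactly the route the paper intends: the paper states this theorem as a direct reformulation of the Coven--Hedlund characterization (aperiodic binary word is Sturmian iff $1$-balanced), offering no further proof, and your argument supplies precisely the implicit details --- the sliding-window (discrete intermediate value) observation that converts ``exactly two Parikh vectors per length'' into $1$-balance, and the earlier lemma on $a_x(n)=1$ to rule out the degenerate case. Nothing in your write-up deviates from or adds a gap to the paper's approach.
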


\subsection{Abelian complexity of some families of words}

We start with the Thue--Morse word $\tm$. Its abelian complexity is given by:
 \[a_{tm}(n) = \left\{ \begin{array}{lllll}
2 & \mbox{if $n$ is odd,}\\
3 & \mbox{if $n$ is even.}\\
\end{array} \right.\]
Indeed, the Thue--Morse word consists of blocks 01 and 10, so its factors of odd length contain several blocks plus either 1 or 0, hence abelian complexity is 2. For even lengths, a factor  contains either several complete blocks, or several complete blocks plus two letters, which can be both 0, both 1 or 0 and 1, the latter case giving the same abelian class as factors consisting of full blocks; hence the abelian complexity is 3 for even lengths. 

The abelian complexity together with the factor complexity almost characterize the  Thue--Morse word, in the sense that
an infinite word has the same abelian and factor complexity as the Thue--Morse word
if and only if it is in its shift orbit closure \cite{Richomme201179}.


Let $\tr$ be the Tribonacci word. 
For every $n\ge 1$, $a_{\tr}(n)\in\{3,4,5,6,7\}$. Moreover,  each of these five values is assumed\cite{DBLP:journals/aam/RichommeSZ10}, and the exact value of $a_{\tr}(n)$ can be effectively computed~\cite{Turek15,Shallit21}. However, in general, Arnoux--Rauzy words can have unbounded abelian complexity (or, equivalently, there exist Arnoux--Rauzy words which are not $C$-balanced for any $C$)~\cite{AIF_2000__50_4_1265_0}.
 
 Madill and Rampersad~\cite{DBLP:journals/dm/MadillR13} studied the abelian complexity of the regular paperfolding word $p$ and  characterized it by proving the following recursive relations:
 \begin{eqnarray*}
 & & a_p(4n)  =  a_p(2n) \\
 & & a_p(4n+2) = a_p(2n+1)+1 \\
  & & a_p(16n+1) =  a_p(8n+1) \\
 & & a_p(16n+\{3,7,9,13\}) =  a_p(2n+1)+2 \\
 & & a_p(16n+5) =  a_p(4n+1)+2 \\
 & & a_p(16n+11) =  a_p(4n+3)+2 \\
 & & a_p(16n+15) =  a_p(2n+2)+1.
\end{eqnarray*}
From these formulas, it follows that the regular paperfolding word has unbounded abelian complexity.

Blanchet-Sadri et al.~studied the abelian complexity of the ternary squarefree word of Thue (also called \emph{Hall word}, or \emph{Variant of Thue--Morse}) $\vtm=012021012102012\cdots$ --- which can be obtained as the fixed point of the substitution $0 \mapsto 012, 1\mapsto 02, 2 \mapsto 1$ --- and that of the \emph{period-doubling word} $pd=01000101010001000\cdots$, which is equal to $\vtm$ modulo $2$ and is the fixed point of the substitution  $0 \mapsto 01, 1\mapsto 00$~\cite{DBLP:journals/int/Blanchet-SadriC14}.

Rauzy\cite{Ra82} asked whether an infinite word exists with constant abelian complexity equal to~$3$. Richomme, Saari and Zamboni~\cite{Richomme201179} answered  this question positively by showing that any aperiodic balanced word over $\Sigma_3$ has this property. It has been proved that there are no recurrent words of
constant abelian complexity~4~\cite{DBLP:journals/aam/CurrieR11}. However, for every integer $c\geq 2$, there is a  recurrent
word $x$ with abelian complexity $a_x(n)=c$  for every $n\geq c-1$. 
\cite{DBLP:journals/jalc/Saarela09}.

We  now discuss the (abelian) complexity of purely morphic words. 
A well-known classification of factor complexities of
fixed points of morphisms has been obtained in a series of papers, finally completed by Pansiot \cite{Pansiot}, and states that
there are 5 classes of possible complexity growths: $\Theta(1)$,
$\Theta(n)$, $\Theta(n\log n)$, $\Theta(n\log\log n)$ and
$\Theta(n^2)$.
The abelian complexity of purely morphic words is more complicated
and is completely classified only for fixed points of binary
morphisms (more precisely, only the superior limit of the abelian complexity has been classified). 

The balance function of primitive morphic words has
been characterized by Adamczewski \cite{DBLP:journals/tcs/Adamczewski03}. As an
immediate corollary of this characterization, we get a
classification of  abelian complexities of fixed points of primitive binary
morphisms. For integer functions $f$ and $g$, let us write $f(n)=\Omega'(g(n))$ if
$\limsup_{n\to\infty} f(n)/g(n)>0$. Then the abelian complexity of a purely
morphic word is either $\Theta(1)$, or $(O\cap\Omega')(\log n)$, or
$(O\cap\Omega')(n\log_{\theta_1} \theta_2)$, where $\theta_1$ and
$\theta_2$ are the first and second largest eigenvalues
of the adjacency matrix of the morphism. \footnote{We cannot write $\Theta$ in place of $(O\cap\Omega')$ because the functions could be oscillating; however, here we are essentially interested in their maximum values.} 

A classification of abelian complexities of fixed points of
non-primitive binary morphisms is due to Blanchet-Sadri, Fox and
Rampersad \cite{DBLP:journals/aam/Blanchet-SadriF14} and completed by Whiteland \cite{DBLP:journals/jalc/Whiteland19}: this
can be either $\Theta(1)$, or $\Theta(n)$, or $\Theta(n/\log n)$,
or $\Theta(n^{\log_k l})$ with $1<k<l$, or it can fluctuate
between $\Theta(1)$ and $\Theta(\log(n))$. Some algorithmic aspects of computing  the abelian complexity of fixed points of uniform morphisms have been studied in~\cite{DBLP:journals/tcs/Blanchet-SadriS16}.




 \subsection{Abelian pattern complexity}

The \emph{pattern complexity}, a modification of the notion of factor complexity, introduced by Kamae and Zamboni~\cite{kamae_zamboni_2002}, can also be well generalized to the abelian setting. A \emph{pattern} $S$ is a $k$-element subset of nonnegative integers: $S=\{s_1<s_2< \dots < s_k\}$. For an infinite word $w$, we put 

$$w[S]=w_{s_1} w_{s_2} \cdots w_{s_k}.$$

For each $n$, the word $w[n+S]$ is called an $S$-factor of $w$, where $n+S = \{n+s_1, n+s_2, \dots, n+
s_k\}$. We  let $F_w(S)$ denote the set of all $S$-factors of $w$. The pattern complexity $patt_w(S)$ is then defined by 
\[patt_w(S) = |F_w(S)|,\] and the \emph{maximal pattern complexity} $patt^*_w(k)$ by 
\[patt^*_w(k)= \sup_{\substack{S\subset \mathbb{N}\\|S|=k}} patt_w(S).\]

An infinite word $w$ over $\Sigma_d$ is called \emph{periodic by projection} if there exists a nonempty set $B \subsetneq \Sigma_d$  such that
$1_B(w) = 1_B(w_0)1_B(w_1)1_B(w_2) \cdots \in \{0, 1\}^{\mathbb{N}}$
is ultimately periodic (where $1_B$ denotes the characteristic function of $B$). A word is \emph{aperiodic
by projection} if it is not periodic by projection. The following connection between the maximal pattern complexity and periodicity is known:

\begin{theorem}{\cite{DBLP:journals/ejc/KamaeH06}} \label{th:maxpatt} Let $w$ be an infinite aperiodic by projection word over $\Sigma_d$, $d \geq 2$. Then for every
positive integer $k$, $patt^*_w(k) \geq dk$.\end{theorem}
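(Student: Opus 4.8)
The plan is to prove the bound by induction on the pattern size $k$, building a nested family of patterns $\emptyset=S_0\subset S_1\subset\cdots$ with $|F_w(S_k)|\ge dk$, where each $S_{k+1}$ is obtained from $S_k$ by adjoining a single well-chosen offset. Throughout I use that $F_w(S)$ depends on $S$ only up to translation, since it is the set $\{w[n+S]:n\ge 0\}$; so I am free to add offsets on either side of $S$ and re-normalise. For the base case $k=1$ I first observe that every letter of $\Sigma_d$ occurs in $w$: if some letter $a$ never occurred, then taking $B=\{a\}$ would make $1_B(w)=0^\omega$ purely periodic, contradicting aperiodicity by projection. Hence $patt_w(\{0\})=d$ and so $patt^*_w(1)\ge d$.

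The heart of the argument is the following incremental lemma: for every finite pattern $S$ there is an offset $t\notin S$ with $|F_w(S\cup\{t\})|\ge |F_w(S)|+d$; granting this, the induction is immediate. To set it up, I partition the positions by their $S$-factor, letting $P_v=\{n:w[n+S]=v\}$ for $v\in F_w(S)$, so the $P_v$ form a partition of $\nats$ into $m:=|F_w(S)|$ classes. Adjoining an offset $t$ splits each class according to the letter $w_{n+t}$, and a direct count gives $|F_w(S\cup\{t\})|=\sum_v c_v(t)$, where $c_v(t)=|\{w_{n+t}:n\in P_v\}|\in\{1,\dots,d\}$ counts the distinct letters seen at offset $t$ inside $P_v$. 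Thus the gain equals $\sum_v(c_v(t)-1)$, and the lemma asks for an offset at which this total excess is at least $d$.

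I would prove the lemma contrapositively: suppose the excess $\sum_v(c_v(t)-1)$ stays $\le d-1$ for every offset $t$, and derive that some projection $1_B(w)$ is eventually periodic, contradicting the hypothesis. Intuitively a uniformly small excess means that, outside a bounded amount of variation, the letter $w_{n+t}$ depends only on the class of $n$ and on $t$; two positions in a common class then have shifted tails that disagree only rarely, which should force those tails to agree eventually and so pin down an eventual period after a suitable projection. The main obstacle is making this implication rigorous for all $B$ simultaneously and without assuming recurrence: the naive estimate, when pushed onto a single projection $1_B$, only preserves the bound $d-1$ rather than forcing binary constancy, so the classes must be regularised first. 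I expect this to require a Ramsey-type extraction, passing to an infinite set of offsets on which the induced colouring by classes is \emph{super-stationary} in the sense of Kamae and Zamboni, after which the bounded-excess hypothesis becomes rigid enough to locate an eventual period in some $1_B(w)$. One may also first replace $w$ by a uniformly recurrent $y\in\Omega_{w}$, which is legitimate because $F_y(S)\subseteq F_w(S)$ gives $patt^*_y(k)\le patt^*_w(k)$, provided one checks that aperiodicity by projection can be retained in the orbit closure; this removes the non-recurrence difficulty at the cost of that verification.

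Finally, as a check on the slope, I would note that the bound is tight and explains why each step gains exactly $d$. For a binary Sturmian word ($d=2$) a pattern $S=\{s_1,\dots,s_k\}$ corresponds to $k$ arcs on the circle cut out by the coding interval and its rotations, whose $2k$ endpoints partition the circle into exactly $2k$ cells, giving $patt^*_w(k)=2k$; the same picture for a coding of a circle rotation by $d$ intervals yields exactly $dk$, so each adjoined offset contributes precisely $d$ new $S$-factors and no more, matching the increment in the lemma above.
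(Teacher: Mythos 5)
Your reduction of the theorem to an incremental lemma --- every finite pattern $S$ admits an offset $t$ with $|F_w(S\cup\{t\})|\ge |F_w(S)|+d$ --- is set up cleanly: the gain identity $|F_w(S\cup\{t\})|=\sum_v c_v(t)$ and the base case $patt_w(\{0\})=d$ are correct. But that lemma \emph{is} the entire content of the theorem, and you do not prove it; the paragraph beginning ``I would prove the lemma contrapositively'' is an acknowledged placeholder (``the main obstacle is making this implication rigorous\dots I expect this to require a Ramsey-type extraction''). Note that the survey states this result with a citation and gives no proof, so the comparison is with the cited paper of Kamae and Rao~\cite{DBLP:journals/ejc/KamaeH06}: their argument (following the binary case of Kamae--Zamboni~\cite{kamae_zamboni_2002}) runs on increments of the \emph{supremum} function --- if $patt^*_w(k+1)\le patt^*_w(k)+d-1$ for some $k$, then $w$ is periodic by projection --- where the hypothesis controls \emph{all} patterns of size $k+1$ at once and the optimal patterns at different sizes need not be nested. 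Your per-pattern lemma is strictly stronger: it must handle an arbitrary, possibly far-from-optimal $S$ knowing only that the extensions of that one $S$ gain little. So your plan demands more rigidity than the known route establishes, precisely at the step you leave open.

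Beyond the admitted gap, two steps fail as written, both tied to non-recurrence (which the theorem does not exclude). First, your standing assumption that $F_w(S)$ depends on $S$ only up to translation is false for non-recurrent words: one only has $F_w(n+S)\subseteq F_w(S)$ (for $w=10^\omega$, $F_w(\{0\})=\{0,1\}$ but $F_w(\{1\})=\{0\}$), so ``adding offsets on either side and re-normalising'' can lose factors. Second, the fallback of replacing $w$ by a uniformly recurrent $y\in\Omega_w$ is unsound, because aperiodicity by projection need not survive passage to the orbit closure: if $w$ is the characteristic sequence of $\{2^n : n\ge 0\}$, then $w$ is aperiodic by projection, yet the only minimal subshift inside $\Omega_w$ is $\{0^\omega\}$, so every uniformly recurrent $y\in\Omega_w$ is periodic and the induction has nothing to run on. That the theorem holds for non-recurrent words is genuine content --- contrast the abelian analogue (Theorem~\ref{th:maxpatt_ab}), where recurrence is a necessary hypothesis --- so this difficulty cannot be assumed away; a correct proof has to work with $w$ itself, as the cited one does.
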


We can then define an abelian analogue of the notion of the pattern complexity by 
\[patt^{ab}_w(S) = |F_w(S)/\sim_{ab}|,\] 
and the \emph{maximal pattern abelian complexity} $patt^{*ab}_w(k)$ by \[patt^{*ab}_w(k)= \sup_{S\subset \mathbb{N}, |S|=k} patt^{ab}_w(S).\]

Then the following abelian analogue of Theorem \ref{th:maxpatt} holds:

\begin{theorem} \cite{kamae_widmer_zamboni_2015} \label{th:maxpatt_ab} Let  $w$ be a recurrent and aperiodic by projection infinite word over $\Sigma_d$, $d\geq 2$. Then for
every positive integer $k$, 
$$patt^{*ab}_w(k) \geq (d-1)k + 1.$$
When $d = 2$, the equality always holds. Moreover, for $k = 2$ and general $d$, there exists $w$ satisfying
the equality.\end{theorem}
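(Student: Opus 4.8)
Throughout write $e_0,\dots,e_{d-1}$ for the standard basis of $\R^d$, so that $\PV(i)=e_i$ for a single letter $i$. First note that every letter of $\Sigma_d$ must occur in $w$: if some $a$ were absent, then with $B=\Sigma_d\setminus\{a\}$ (nonempty and proper, as $d\ge 2$) one would have $1_B(w)=1^\omega$, contradicting aperiodicity by projection. In particular the singleton pattern already exhibits $d=(d-1)\cdot1+1$ Parikh vectors, settling $k=1$ and, more to the point, showing that the whole difficulty lies in the \emph{lower} bound; the upper bounds claimed for $d=2$ and for $k=2$ are pure counting of Parikh vectors and will be disposed of at the end.

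The plan for the bound $patt^{*ab}_w(k)\ge(d-1)k+1$ is to produce a single pattern $S$ of size $k$ whose $S$-factors realize the \emph{staircase} of Parikh vectors obtained by walking from $k\,e_0$ to $k\,e_{d-1}$ one unit at a time along $e_0\to e_1\to\cdots\to e_{d-1}$, namely
$$(k-r)\,e_q+r\,e_{q+1},\qquad 0\le q\le d-2,\ 0\le r\le k.$$
Consecutive vectors on this walk differ by a single $e_{q+1}-e_q$, and counting the steps gives exactly $(d-1)k+1$ of them. Equivalently, in terms of the ordered alphabet it suffices to find one pattern $S$ for which the threshold counts $c_j(n)=|w[n+S]|_{\ge j}$ realize the monotone profiles ``$k$ up to level $q$, then $r$, then $0$'': because the Parikh vector is recovered from $(c_1,\dots,c_{d-1})$ by $|w[n+S]|_i=c_i-c_{i+1}$, distinct profiles mean distinct Parikh vectors.

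To build such an $S$ I would place widely spaced positions $s_1<\cdots<s_k$ and feed on the two structural inputs. For each level $j$ the threshold projection $1_{B_j}(w)$, with $B_j=\{j,\dots,d-1\}$, is a nonempty proper projection, hence a binary \emph{aperiodic} word; by the Morse--Hedlund theorem it has a right-special factor of every length, i.e.\ arbitrarily long common contexts after which ``$\ge j$'' can be toggled at one controlled position. Combining the toggles at levels $q$ and $q+2$ (forcing all readings into $\{q,q+1\}$) with a toggle at level $q+1$ lets one flip a single reading from $q$ to $q+1$, which advances the staircase by one step and, crucially, controls the letter it lands on. Recurrence then lets every context produced this way reappear, so the partial configurations can be re-used. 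The main obstacle, and the part that I expect to absorb most of the work, is \emph{coordination}: all $(d-1)k+1$ vectors must be realized by the \emph{same} $S$, whereas the toggling arguments a priori hand back a different favourable gap for each step. Overcoming this is exactly why one takes the gaps $s_{i+1}-s_i$ successively large: the enlargement ``decouples'' the readings (what is seen at $n+s_{i+1}$ is no longer pinned down by the earlier readings), after which recurrence can embed, inside a single occurrence window, the contexts demanded by every flip. That decoupling is genuinely needed: for a balanced word the consecutive pattern $\{0,\dots,k-1\}$ only ever shows abelian complexity $2$, whereas spreading the positions out restores the full staircase --- already $S=\{0,2\}$ produces the three classes of $00,01,11$ in the Fibonacci word, which the consecutive pattern misses.

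It remains to settle the two equalities. For $d=2$ the bound just proved reads $patt^{*ab}_w(k)\ge k+1$, and it meets the trivial upper bound: an $S$-factor is a binary word of length $k$, so its Parikh vector is determined by its number of $1$'s, a quantity in $\{0,\dots,k\}$; hence at most $k+1$ classes occur and equality holds for all $k$. For $k=2$ and general $d$ the target is $(d-1)\cdot2+1=2d-1$, and one must instead exhibit a \emph{single} recurrent, aperiodic-by-projection word $w$ for which \emph{every} gap $g$ yields at most $2d-1$ distinct pair-Parikh-vectors $\PV(w_nw_{n+g})$; the staircase uses precisely $\{2e_i\}\cup\{e_i+e_{i+1}\}$ and avoids the long mixed vectors $e_i+e_j$ with $|i-j|\ge2$, so the point is to find a word whose two-point correlations are that constrained at all scales. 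An Arnoux--Rauzy word is a natural candidate, since even its ordinary length-$2$ factors number only $(d-1)\cdot2+1=2d-1$; the substantive step is to control the pair configurations at \emph{every} gap simultaneously, which one does by exploiting the word's self-similarity to reduce the infinitely many gaps to finitely many cases.
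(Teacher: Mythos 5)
You should first be aware that the survey contains no proof of this statement: it is quoted verbatim from Kamae--Widmer--Zamboni \cite{kamae_widmer_zamboni_2015}, so there is no in-paper argument to compare against, and your plan has to stand on its own. Parts of it do: the observation that every letter occurs (via the projection $B=\Sigma_d\setminus\{a\}$) is correct and settles $k=1$; the count of the staircase vectors $(k-r)e_q+re_{q+1}$ is indeed $(d-1)k+1$; the reformulation of Parikh vectors through the threshold counts $c_j(n)=|w[n+S]|_{B_j}$ with $B_j=\{j,\dots,d-1\}$, each $1_{B_j}(w)$ being binary, recurrent and aperiodic, is sound; and the upper bound for $d=2$ (a binary word of length $k$ has one of only $k+1$ Parikh vectors) is trivially right, so for $d=2$ equality would follow from the lower bound.

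The genuine gap is the step you yourself flag as ``coordination/decoupling'', and it is not a residual technicality --- it is the entire content of the theorem. Your mechanism is: each threshold projection, being aperiodic, admits toggles (right-special factors), and taking the gaps $s_{i+1}-s_i$ successively large ``decouples'' the readings so that recurrence can embed, in a single window, the contexts demanded by every flip. But recurrence provides no independence at all between readings at widely separated positions: a recurrent, aperiodic-by-projection word can be rigid at every scale (Toeplitz-type constructions are the standard warning), so what is read at $n+s_{i+1}$ may be essentially determined by what is read at $n+s_1,\dots,n+s_i$ no matter how large the gap. Concretely, here is what your plan must exclude and currently does not: for a fixed pattern $S$, each threshold count $c_j(\cdot)$ could take all $k+1$ values while the tuples $\bigl(c_1(n),\dots,c_{d-1}(n)\bigr)$ stay within a bounded distance of the diagonal $c_1=\dots=c_{d-1}$; then every binary projection looks as rich as possible, all letters appear at pattern positions, and yet only $k+O(1)$ abelian classes arise --- far short of $(d-1)k+1$. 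So coordinate-wise richness of the projections, which is what your toggling argument naturally yields, does not imply the conclusion; one must prove an actual independence statement among the $d-1$ threshold projections using only recurrence and aperiodicity by projection. That is precisely where the proof in \cite{kamae_widmer_zamboni_2015} does its work, and ``take the gaps large'' is not a substitute for it. (Realizing the staircase itself is fine, and for $d=2$ is even equivalent to the theorem; the missing piece is the realization argument, not the choice of target.)

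There is a second, smaller gap: for $k=2$ and general $d$ the theorem asserts the \emph{existence} of a word attaining $2d-1$, which requires exhibiting a specific recurrent, aperiodic-by-projection word and verifying the upper bound for \emph{every} gap simultaneously. Naming Arnoux--Rauzy words as a ``natural candidate'' and appealing to self-similarity to ``reduce to finitely many cases'' is a research plan, not a proof; you have not verified the bound for any concrete word, and nothing you prove elsewhere in the plan implies it.
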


In the abelian case, the condition of recurrence is necessary, since there exist non-recurrent counterexamples satisfying the inequality.

\section{Abelian repetitions}\label{sec:repetitions}

Recall that an abelian square is a nonempty word of the form $uv$, where $u$
and $v$ are abelian equivalent, i.e.,  have the same Parikh
vector. For example, $0110110011$ is an abelian square: $01101\sim_{ab} 10011$. More generally, an abelian $k$-power is a word of the form
$u_1u_2\cdots u_k$, where all the $u_i$ have
the same Parikh vector. An asymptotic estimate of the number of abelian squares of length $n$ has been given in \cite{DBLP:journals/combinatorics/RichmondS09}.

\subsection{Abelian complexity and abelian powers}

There is a relationship between abelian complexity and abelian powers, stated in the following theorem:

\begin{theorem}\cite{Richomme201179}\label{boundedAC}
 If a word has bounded abelian complexity, then it contains abelian $k$-powers for every $k>1$.
\end{theorem}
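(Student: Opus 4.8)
The plan is to exploit the hypothesis of \emph{bounded} abelian complexity via a pigeonhole argument on Parikh vectors of prefixes. Fix $k > 1$ and suppose the abelian complexity of $x$ is bounded by some constant $M$; that is, $a_x(n) \le M$ for all $n$. The key observation is that an abelian $k$-power starting at position $i$ with block length $\ell$ exists precisely when the $k+1$ prefix-Parikh-vectors at positions $i, i+\ell, i+2\ell, \ldots, i+k\ell$ form an arithmetic progression in $\Z^d$. Writing $P(j) = \PV(x_1 x_2 \cdots x_j)$ for the Parikh vector of the prefix of length $j$, the factor $x_{i+1} \cdots x_{i+k\ell}$ is an abelian $k$-power with blocks of length $\ell$ if and only if $P(i+\ell) - P(i) = P(i+2\ell) - P(i+\ell) = \cdots = P(i+k\ell) - P(i+(k-1)\ell)$, i.e.\ the finite sequence $P(i), P(i+\ell), \ldots, P(i+k\ell)$ is an arithmetic progression.

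First I would fix a convenient length $n$ and restrict attention to the $n$ prefix-Parikh-vectors $P(0), P(1), \ldots, P(n-1)$ reduced modulo the common length, i.e.\ consider the ``normalized'' vectors $Q(j) = P(j) - \frac{j}{?}\cdots$. A cleaner route is the following: since $a_x(\ell) \le M$ for every window length $\ell$, all the \emph{difference} vectors $P(j+\ell) - P(j)$ (which are exactly the Parikh vectors of the length-$\ell$ factors) take at most $M$ distinct values as $j$ varies. The idea is to find a single value $\ell$ and a long block of consecutive positions along which this difference is constant, and then to iterate. Concretely, consider the infinite sequence of Parikh vectors of the length-$\ell$ factors $x_{j+1}\cdots x_{j+\ell}$ for $j = 0, 1, 2, \ldots$; because there are at most $M$ possible values, by the infinite pigeonhole principle some value $\mathbf{v}$ is attained infinitely often. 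This alone is not enough, since we need $k$ \emph{equally spaced} repetitions, so the spacing $\ell$ must be chosen to match the gaps between occurrences.

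The decisive step is to vary $\ell$ and use boundedness simultaneously across all window lengths. I would proceed by a Ramsey-type / iterated-pigeonhole argument: define, for the block of positions $\{0, 1, \ldots, N-1\}$, a coloring that records for each pair $i < j$ the Parikh vector $P(j) - P(i)$ of the factor between them; boundedness of abelian complexity guarantees that each ``diagonal'' $j - i = \ell$ uses at most $M$ colors, but we need a monochromatic arithmetic-progression structure. The clean formulation is to apply van der Waerden's theorem: consider the word's factorization into maximal constant-difference runs. More directly, I would show that among $P(0), P(\ell), P(2\ell), \ldots$ for a suitable fixed $\ell$, the consecutive differences lie in a set of size $\le M$, so by van der Waerden's theorem on the sequence of these differences (a word over an alphabet of size $\le M$) there is a monochromatic arithmetic progression of length $k$; choosing the common spacing of that progression and composing with $\ell$ yields $k$ consecutive equal blocks, hence an abelian $k$-power.

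The main obstacle, and the step requiring the most care, is coordinating the \emph{two} parameters: the block length $\ell$ and the requirement that the $k$ blocks be \emph{consecutive and equal in Parikh vector}. Bounded abelian complexity controls, for each fixed $\ell$, how many distinct block-Parikh-vectors occur, but an abelian $k$-power needs $k$ \emph{adjacent} blocks sharing one value. The right tool is van der Waerden's theorem: consider the infinite sequence $d_1 d_2 d_3 \cdots$ over the finite alphabet of block-Parikh-vectors, where $d_m = \PV(x_{(m-1)\ell+1} \cdots x_{m\ell})$ for a fixed $\ell$; this is a word over an alphabet of size $a_x(\ell) \le M$, and van der Waerden (applied to one color class) produces a monochromatic arithmetic progression of indices $m, m+r, m+2r, \ldots, m+(k-1)r$. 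Replacing $\ell$ by $r\ell$ and grouping accordingly gives $k$ consecutive blocks of length $r\ell$ with equal Parikh vector, i.e.\ the desired abelian $k$-power. Verifying that the grouped blocks indeed have equal Parikh vectors (the sum over each group of $r$ blocks telescopes correctly) is the routine calculation I would defer, but it is exactly where the arithmetic-progression-in-$\Z^d$ picture must be checked.
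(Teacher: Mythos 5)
Your reduction of abelian $k$-powers to arithmetic progressions of prefix Parikh vectors $P(n)=\PV(x_1\cdots x_n)$ is correct, and van der Waerden's theorem is indeed the right tool, but the decisive step of your argument fails. Applying van der Waerden to the sequence of block colors $d_m = \PV(x_{(m-1)\ell+1}\cdots x_{m\ell})$ yields indices $m, m+r, \ldots, m+(k-1)r$ with $d_m = d_{m+r} = \cdots = d_{m+(k-1)r}$; this controls only the \emph{first} length-$\ell$ block in each of your groups of $r$ consecutive blocks. The Parikh vector of the $j$-th group is $\sum_{s=0}^{r-1} d_{m+jr+s}$, and the $r-1$ remaining summands are completely unconstrained, so the groups need not be abelian equivalent: the ``telescoping'' you defer as routine is precisely the step that is false. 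A decisive sanity check: for \emph{any} infinite word over $\Sigma_d$ and any fixed $\ell$, the coloring $d_1d_2d_3\cdots$ uses finitely many colors (at most $\binom{\ell+d-1}{d-1}$), so your argument as written applies verbatim to every infinite word --- it never genuinely uses that $a_x$ is bounded uniformly over all lengths --- and it would conclude that every infinite word contains abelian $k$-powers of all orders, contradicting Ker\"anen's theorem (\cref{avoid_global}) that abelian squares are avoidable over four letters.

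The missing idea (and essentially the argument behind the cited result \cite{Richomme201179}) is to color \emph{positions} by prefix data strong enough that monochromaticity of an arithmetic progression pins down the Parikh vectors of the whole intervening factors. Bounded abelian complexity implies that $x$ is $C$-balanced for some $C$ (the proposition in \cref{sec:complexity}); a $C$-balanced word has letter frequencies $\rho_a$ satisfying $|P(n)_a - n\rho_a|\le C$ for every $n$ and $a$. Now color $n$ by the cell containing the deviation vector $D(n)=P(n)-n\rho$ in a partition of $[-C,C]^d$ into finitely many cells of diameter less than $1/2$, and apply van der Waerden to this finite coloring of $\mathbb{N}$: for each $k$ there is a monochromatic progression $i, i+\ell, \ldots, i+k\ell$. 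The increments $\delta_j = P(i+j\ell)-P(i+(j-1)\ell)-\ell\rho$ then satisfy $\|\delta_j\|_\infty < 1/2$, and for $j\ne j'$ one has $\delta_j-\delta_{j'}\in\Z^d$ with $\|\delta_j-\delta_{j'}\|_\infty<1$, forcing $\delta_j=\delta_{j'}$. Hence the $k$ consecutive blocks $x_{i+(j-1)\ell+1}\cdots x_{i+j\ell}$, $j=1,\ldots,k$, are pairwise abelian equivalent, i.e., they form an abelian $k$-power. Note where the full hypothesis enters: balance, hence frequencies with uniformly bounded deviation, is exactly the ingredient your block-coloring version lacked.
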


However, this is not a characterization of words with bounded abelian complexity. 
Indeed, Holub proved that all paperfolding words contain abelian powers of every order, and paperfolding words have unbounded abelian complexity. 

\begin{theorem}\cite{DBLP:journals/jct/Holub13}\label{paperfoldingpowers}
 All paperfolding words contain abelian $k$-powers for every $k>1$.
\end{theorem}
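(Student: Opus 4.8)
The plan is to understand the combinatorial structure of paperfolding words well enough to locate arbitrarily long abelian $k$-powers. Recall that every paperfolding word is a Toeplitz word built from the pattern $v=0?1?$ (or $v_0=0?1?$ and $v_1=1?0?$, chosen according to a binary directive sequence), so it carries a rigid hierarchical self-similar structure indexed by the dyadic levels. First I would set up notation for this hierarchy: writing the word $p=p_0p_1p_2\cdots$, the crucial fact is that the letters at even positions are completely determined by the fixed pattern (they alternate $0,1,0,1,\ldots$ at positions $0,2,4,\ldots$), while the letters at odd positions are governed recursively by a shifted copy of a paperfolding word from the same family. This decomposition into even-indexed and odd-indexed subsequences is the standard tool for analyzing paperfolding words, and it lets one control Parikh vectors of factors level by level.

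The key idea is to build an abelian $k$-power by concatenating $k$ consecutive blocks of a carefully chosen length and showing that the block structure forces equal Parikh vectors. Concretely, I would choose a block length $\ell$ that is a suitable power of two (or a multiple of a power of two) so that each of the $k$ consecutive blocks aligns with the self-similar structure: within each block, the even positions contribute a fixed, predictable number of $0$'s and $1$'s (since they come from the deterministic alternating pattern), and the contribution of the odd positions is governed by a factor of a paperfolding word at the next level down. The plan is then to argue by an inductive or fixed-point argument: if paperfolding words of the family contain long abelian $k$-powers at level $n+1$, then the even-position balancing combined with the level-$(n+1)$ abelian $k$-power at the odd positions produces a level-$n$ abelian $k$-power. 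One must be careful that the directive sequence may switch between $v_0$ and $v_1$; however, $v_0$ and $v_1$ are abelian equivalent as patterns (each has one $0$, one $1$ and two $?$'s), so the even-position contribution is the same regardless of which pattern is used at a given level, which is exactly what makes a uniform argument across all paperfolding words possible.

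The main obstacle I anticipate is making the alignment precise and uniform over the whole uncountable family simultaneously: one cannot argue separately for each directive sequence, so the block length $\ell$ and the starting position of the abelian $k$-power must be chosen in a way that depends only on $k$ (and perhaps on a bounded amount of local data), not on the particular sequence. The technical heart will be a counting lemma showing that for an appropriately chosen $\ell = c\cdot 2^m$ (with $c,m$ depending on $k$), the differences $\PV(b_{i+1})-\PV(b_i)$ between consecutive blocks $b_i$ telescope to zero. The even positions are automatically balanced by the abelian equivalence of $v_0$ and $v_1$; the odd positions reduce to the same problem one level down, so I would close the argument either by an explicit choice of scale that makes all $k$ blocks abelian equivalent outright, or by iterating the reduction until the residual discrepancy is forced to vanish. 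I expect the bookkeeping of the odd/even recursion and the control of boundary effects where blocks meet to be the delicate part, while the underlying mechanism—deterministic even positions plus a self-similar odd subsequence—is conceptually clean.
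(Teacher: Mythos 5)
The survey you are reading states this theorem without proof (it is Holub's result, quoted with a citation), so your proposal has to stand on its own; unfortunately it has a genuine gap. Your structural setup is correct: in any paperfolding word the even positions carry the alternating word $0101\cdots$ (up to complementation) and the odd positions carry another paperfolding word, whose directive sequence is the shift of the original one. From this you correctly obtain a reduction: if $4$ divides $\ell$, every factor of length $\ell$ contains exactly $\ell/4$ ones at even positions, so $k$ consecutive blocks of length $\ell$ are abelian equivalent if and only if the corresponding $k$ consecutive blocks of length $\ell/2$ of the next-level word are abelian equivalent.

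The gap is that this reduction is circular, and you never produce a base case. The level-shift maps the family of paperfolding words onto itself, so ``iterating the reduction'' never reaches anything simpler: after $m$ steps you need an abelian $k$-power of period $\ell/2^m$ in yet another arbitrary paperfolding word, which is the problem you started with. Moreover, both ways the recursion could bottom out provably fail. Period $1$ would require a run $0^k$ or $1^k$; but among any four consecutive positions two are consecutive even positions, whose letters differ, so constant runs have length at most $3$. Period $2$ would require (by the same even/odd analysis, one level down) an alternating factor of length about $k$ in some level word; but if $w_j=w_{j+2}$ and $w_{j+1}=w_{j+3}$, then whichever of the pairs $(j,j+2)$ or $(j+1,j+3)$ consists of even positions yields two consecutive, hence distinct, letters of the alternating level-one word, a contradiction --- so alternating factors have length at most $3$. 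Consequently your own reduction shows that for $k\geq 4$ no paperfolding word contains an abelian $k$-power whose period is a power of $2$ at all: the period of any abelian $k$-power must have a nontrivial odd part. But once the period is odd at the bottom of your recursion, consecutive blocks start at positions of alternating parity, their even-position counts oscillate between two values rather than being equal, and the clean telescoping you rely on disappears. Handling exactly this situation --- an odd period interacting with the alternating structure, with the imbalance at one level compensated by deeper levels --- is unavoidable in any correct proof, in particular in the cited one, and it is precisely what your outline does not contain; the difficulty is not, as you suggest, bookkeeping of boundary effects.
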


 
 In the case of the Thue--Morse word, we even have that every infinite suffix begins with an abelian
$k$-power for every positive integer $k$. However, it is possible to construct a uniformly recurrent binary word with bounded abelian complexity such that none of its prefixes is an abelian square~\cite{cassaigne2011avoiding}. 

\subsection{Abelian critical exponent}

Recall that the critical exponent $\chi(x)$ of an infinite word $x$ is the supremum of rational numbers $\beta$ such that $u^{\beta}$ occurs in $x$ for some factor $u$ of $x$. 
Notice that the critical exponent of an infinite word can be infinite, as in the case, for example,  for any (ultimately) periodic word. 

The following theorem was proved by Krieger and Shallit~\cite{DBLP:journals/tcs/KriegerS07}.

\begin{theorem}
 The following statements hold:
\begin{enumerate}
 \item For every real number $\beta>1$  there exists an infinite word over some alphabet whose critical exponent is $\beta$; 
 \item For every real number $\beta\ge 2$  there exists an infinite binary word whose critical exponent is $\beta$. 
\end{enumerate}
\end{theorem}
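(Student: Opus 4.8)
The plan is to prove both parts by explicit construction, and the observation that shapes the whole argument is that the exponent of a finite factor is always rational: a factor $u^k v$ with $v$ a proper prefix of $u$ has exponent $(k|u|+|v|)/|u|\in\Q$. Hence for irrational $\beta$ the equality $\chi(x)=\beta$ can only hold as an unattained supremum, and the natural approach is to plant, for a chosen sequence of rationals $\beta_n\nearrow\beta$ with $\beta_n<\beta$, a factor of exponent exactly $\beta_n$, while ensuring that no factor of $x$ ever exceeds $\beta$. Two lower-bound phenomena pin down the two regimes. Over a $k$-letter alphabet Dejean's repetition threshold forces $\chi\ge k/(k-1)$, so the alphabet in (1) must be allowed to grow as $\beta\to 1^+$; and in the binary case every word of length at least $4$ contains a square, so $\chi\ge 2$ always. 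This both explains the hypothesis $\beta\ge 2$ in (2) and supplies the base case $\beta=2$ for free, realized by the overlap-free Thue--Morse word $\tm$, which has squares but no factor of exponent exceeding $2$, whence $\chi(\tm)=2$.

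For the remaining values I would use a block-concatenation construction. Fix rationals $\beta_n=p_n/q_n$ with $\beta_n\nearrow\beta$ and $\beta_n<\beta$ (for instance $q_n=n!$, $p_n=\lfloor\beta\,n!\rfloor$), and in case (2) with $\beta>2$ arrange all $\beta_n>2$. For each $n$ choose a \emph{primitive} word $u_n$ of length $q_n$ whose own critical exponent is strictly below $\beta$; such $u_n$ exist over a sufficiently large alphabet by Dejean's theorem in case (1), and can be taken to be overlap-free binary words (critical exponent $2<\beta$) in case (2). Let $R_n=u_n^{\beta_n}$, that is $u_n$ repeated $\lfloor\beta_n\rfloor$ times followed by the prefix of $u_n$ of length $p_n\bmod q_n$; by primitivity (via the Fine--Wilf property of powers of primitive words) $R_n$ is a factor of exponent exactly $\beta_n$, not secretly larger. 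Finally set $x=s\,R_1\,s\,R_2\,s\,R_3\cdots$, where $s$ is a short separating block: fresh marker letters from a small auxiliary alphabet in case (1), and a fixed binary guard pattern (for example a short overlap-free block) in case (2).

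The verification splits into a lower and an upper bound. The lower bound $\chi(x)\ge\sup_n\beta_n=\beta$ is immediate, since each planted $R_n$ witnesses exponent $\beta_n$. For the upper bound $\chi(x)\le\beta$ one shows every factor has exponent at most $\beta$. A factor contained in a single $R_n$ is either controlled by the critical exponent of $u_n$ (chosen $<\beta$) or is the planted power itself (exponent $\beta_n\le\beta$). The genuinely delicate case, and the main obstacle, is a factor straddling one or more separators $s$: with only finitely many letters one cannot use fresh symbols to annihilate all long-range periodicity, so I must ensure that the separators desynchronize adjacent repetitions and that the lengths $q_n$ grow fast enough (super-linearly) that any period sustained across a boundary is short relative to the blocks it meets, forcing its exponent down towards $1$ and in any case below $\beta$.

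I expect this cross-boundary control to be hardest in the binary case (2), where the guard patterns must at once be short, have low exponent themselves, and block any repetition from propagating from one $R_n$ into the next; verifying the last point reduces to a careful synchronization and overlap analysis of the concrete binary encoding, and is where the bulk of the work lies. Once it is in place, combining the two bounds gives $\chi(x)=\beta$, proving both statements --- with the alphabet size in (1) depending on $\beta$ (necessarily unbounded as $\beta\to 1^+$ by Dejean), and with $\tm$ handling the boundary value $\beta=2$ in (2).
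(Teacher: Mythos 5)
The paper itself contains no proof of this statement --- it is quoted from Krieger and Shallit with a citation only --- so your proposal must be judged on its own terms, and it has two genuine gaps. The first is a step that is false as stated: you claim that a factor lying inside a single block $R_n=u_n^{\beta_n}$ is ``either controlled by the critical exponent of $u_n$ (chosen $<\beta$) or is the planted power itself.'' Controlling $\chi(u_n)$ does not control the exponents of factors of $u_n^{\beta_n}$: take $u=0010$, a primitive word that is even a factor of the Thue--Morse word, with $\chi(u)=2$ (from the factor $00$); yet $u^2=00100010$ contains the cube $000$. In general a factor of $u_n^{\beta_n}$ can straddle several copies of $u_n$ and carry a short period unrelated to $|u_n|$, with exponent exceeding both $\chi(u_n)$ and $\beta_n$. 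What you actually need is control over the exponents of all factors of the periodic word $u_n^{\omega}$ whose periods are not multiples of $|u_n|$ --- a property of the circular word $u_n$, not of $u_n$ as a plain word --- and obtaining words with that property (over the right alphabet, of the right lengths) is one of the real difficulties of the theorem, not a routine check.

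The second gap is the cross-boundary analysis, which you explicitly defer (``where the bulk of the work lies'') even though it is the entire content of the upper bound $\chi(x)\le\beta$; and simple examples show it is not merely tedious but fails outright for your separators in the hard regime. Even with a fresh marker letter $c$ as separator in part (1), a factor of the form $(\text{suffix of }R_n)\,c\,(\text{prefix of }R_{n+1})$ can have exponent close to $2$: if $R_n$ ends in $01$ and $R_{n+1}$ begins with $01$, then $01c01$ has period $3$ and exponent $5/3$, and a factor $aca$ has exponent $3/2$. So for $1<\beta<3/2$ --- precisely the delicate regime of part (1), where every square and even every $3/2$-power must be kept out --- the word you build has critical exponent larger than $\beta$, and no growth condition on the lengths $q_n$ can repair this, because these junction factors are short and local. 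Fixing it requires additional ideas about what the blocks look like near their ends and how they mesh with the separators; until both points are handled, only the trivial lower bound $\chi(x)\ge\beta$ has been proved.
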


The maximum exponent of an abelian power occurring in
an infinite word does not give any interesting information on abelian powers, e.g.,  in words with bounded abelian complexity. Therefore, the following generalization to the abelian case has been proposed~\cite{tcs16}:

\begin{definition}\label{def:ace}
  Let $x$ be an infinite word. For every integer $m>1$, let $k_{m}$  be the
  maximum exponent of an abelian power  of period $m$ in $x$. The \emph{abelian
  critical exponent of $x$} is defined as
  \begin{equation}\label{eq:1}
    \act(x) = \limsup_{m \to \infty} \frac{k_{m}}{m}.
  \end{equation}
\end{definition}

Peltom\"aki and Whiteland proved the following result:

\begin{theorem}\cite{DBLP:conf/mfcs/PeltomakiW20} 
For every nonnegative real number $\beta$ there
exists an infinite binary word having abelian critical exponent $\beta$. 
\end{theorem}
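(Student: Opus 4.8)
The plan is to translate Definition~\ref{def:ace} into a statement about prefix sums and then realize each target $\beta$ by an explicit construction. Write $S_n=|x_0x_1\cdots x_{n-1}|_1$ for the number of $1$'s among the first $n$ letters. A factor of $x$ beginning at position $n$ is an abelian power of period $m$ and exponent $k$ exactly when the sampled values $S_n,S_{n+m},S_{n+2m},\dots,S_{n+km}$ form an arithmetic progression; hence, up to an additive constant, $k_m$ is the length of the longest affine run occurring in any of the subsequences $(S_{n+jm})_j$, and $\act(x)=\limsup_{m\to\infty}k_m/m$. This reformulation makes transparent which words are useful: a long abelian power of period $m$ is a stretch on which the abelianized graph of $x$ stays close to a straight line along a mesh of spacing $m$.

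First I would record why a single well-known family does not suffice. For a Sturmian word $s_\alpha$ of slope $\alpha$ the window sum $S_{n+m}-S_n$ takes only two values, and the sampled sequence $(S_{n+jm})_j$ is governed by the rotation by $\|m\alpha\|$; a short computation gives $k_m\asymp 1/\|m\alpha\|$ and hence $\act(s_\alpha)=\limsup_m 1/(m\|m\alpha\|)=1/\liminf_m m\|m\alpha\|$, the Lagrange number of $\alpha$. The values thus range over the Lagrange spectrum, so Sturmian words realize only values $\ge\sqrt5$ and, below the Hall ray, only a gapped set. Consequently the theorem cannot be obtained from Sturmian words alone, and in particular the small values $\beta\in[0,\sqrt5)$ force a genuinely different, more flexible construction.

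The construction I propose separates the two inequalities $\act(x)\ge\beta$ and $\act(x)\le\beta$. For the lower bound I would plant, at a sparse increasing sequence of scales $m_1<m_2<\cdots\to\infty$, a chunk $u_j^{e_j}$ with $|u_j|=m_j$ and $e_j=\lfloor\beta m_j\rfloor+2$; each chunk is literally a power, hence an abelian power of period $m_j$ and exponent $e_j$, so $k_{m_j}/m_j\to\beta$ and $\act(x)\ge\beta$ (for $\beta=0$ the chunks are squares and already give $\limsup 2/m_j=0$). The blocks $u_j$ must be chosen with care: they should be primitive and, crucially, have no nontrivial abelian period, so that repeating them does not create short-period abelian powers of large exponent. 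The naive choices $0^a1^b$ and $(01)^{m/2}$ fail precisely here, producing huge exponents at period $1$ and $2$ respectively. The $u_j$ are then concatenated, separated, and embedded in a background word whose own abelian critical exponent is $0$.

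The upper bound $\act(x)\le\beta$ is where the real work lies, and it is the step I expect to be the main obstacle. One must show that, apart from the planted chunks, no factor is an abelian power whose exponent-to-period ratio exceeds $\beta+o(1)$, and this has to be checked simultaneously for every period $m$, for factors straddling the boundaries between consecutive blocks, and---most delicately---for small fixed periods $m'$ sitting inside the very long blocks, where one must forbid long affine runs of the $m'$-window sums. Packaged differently, the heart of the matter is the construction and analysis of a binary background word with $\act=0$, i.e.\ with $k_m=o(m)$ at every scale. This is delicate because binary words cannot avoid abelian squares, because words of bounded abelian complexity already contain abelian $k$-powers for every $k$ by Theorem~\ref{boundedAC}, and because generic words have $\act=\infty$ (a fixed period almost surely exhibits arbitrarily long constant window sums). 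The target word must therefore be engineered so that its abelianized prefix sum is uniformly non-affine at every scale except the planted ones, and carrying out this quantitative control---uniformly over all periods and all block boundaries---is the crux of the proof.
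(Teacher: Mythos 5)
First, a point of comparison: this survey does not prove the statement at all --- it is quoted with a citation to \cite{DBLP:conf/mfcs/PeltomakiW20} --- so your proposal has to stand on its own merits. Its skeleton (the prefix-sum reformulation of abelian powers, the observation via Theorem~\ref{thm:lag} that Sturmian words only realize Lagrange-spectrum values $\geq \sqrt{5}$, and the plan of planting powers $u_j^{e_j}$ of blocks of length $m_j$ with $e_j \approx \beta m_j$ at sparse scales) is the right shape and is close in spirit to the published construction. But the one concrete sufficient condition you impose on the blocks is not the right one, and this is a genuine gap, not a detail. Read literally against Definition~\ref{def:abper}, ``no nontrivial abelian period'' is unsatisfiable: every binary word $u$ with $|u|\geq 3$ has abelian period $|u|-1$ (empty head, one block, one-letter head or tail, since for a binary word the first or the last letter must reoccur). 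Read charitably as abelian primitivity, it is insufficient: $u=0011$ is abelian primitive, yet $u^{e}$ contains, starting from position $1$, the factor $(01)(10)(01)(10)\cdots$, an abelian power of period $2$ and exponent about $2e$; planting such chunks forces $\act(x)=\infty$. The property your chunks actually need is \emph{cyclic} avoidance of abelian powers --- every abelian $k$-power occurring in $u^{\omega}$ has period at least $|u|$ --- and the existence of arbitrarily long binary words with this property for $k=4$ (the statement $A_\infty(2)=4$ in Sec.~\ref{sec:avoidability}) is a nontrivial theorem proved by the same authors in the companion paper \cite{DBLP:journals/aam/PeltomakiW20} precisely to enable this kind of construction. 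It cannot be waved in with ``chosen with care'': over a binary alphabet abelian squares and cubes are unavoidable, so no naive choice of blocks works.

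Second, and more decisively, you explicitly defer the entire upper bound $\act(x)\leq\beta$: the background word with abelian critical exponent $0$, the factors straddling block boundaries, and the periods between the planted scales. But the case $\beta=0$ of the theorem \emph{is} the assertion that a binary word with $\act=0$ exists, so as written your argument establishes the theorem for no value of $\beta$ whatsoever: the lower bound you do prove is the routine half, and every hard step is named rather than carried out. Correctly identifying the crux, and correctly ruling out Sturmian words for $\beta<\sqrt{5}$, is real progress, but a plan that stops exactly where the difficulty begins is not a proof.
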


We will see in a later section that for every nonnegative real number $\beta$ greater than a constant $c_F\simeq 4.53$ there exists a Sturmian word having abelian critical exponent $\beta$.


\subsection{Abelian square factors}\label{subsec:squares}

  In this section, we consider the problem of counting the number of abelian squares in a word of length $n$. For classical squares, their maximal number in a word of length $n$ is less than $n$. More precisely,  Fraenkel and Simpson~\cite{DBLP:journals/jct/FraenkelS98} showed  that a word of length $n$  contains less than $2n$  distinct squares, and conjectured that the bound is actually $n$. After several improvements, the conjecture was recently solved by Brlek and Li~\cite{BL22} (see also~\cite{DBLP:journals/aam/Li22}). 

As for the number of abelian square factors, it is easy to see that a word of length $n$ can contain $\Theta(n^2)$ distinct abelian square factors; e.g., words of the form $0^m10^m10^m$. 


 If one considers only abelian squares that are not abelian equivalent, then it can be shown that a word of length $n$ can contain $\Theta(n^{3/2})$ nonequivalent abelian square factors~\cite{DBLP:journals/tcs/KociumakaRRW16}. It is conjectured that a word of length $n$ always contains $O(n^{3/2})$ nonequivalent abelian square factors. 

For other open problems on abelian squares the reader is referenced to~\cite{simpson2018solved}.

The largest number of distinct abelian square factors in an infinite word has also been studied.
We need some notation. Given a finite or infinite word $w$, we
let $\ass{w} {n}$ denote the number of distinct abelian-square factors of
$w$ of length $n$. Of course,  $\ass{w} {n} =0$ if $n$ is odd, so this
quantity is significant only for even values of $n$. Furthermore, for a
finite word $w$ of length $n$, we let $\as{w} =\sum_{m\leq n} \ass{w} {m}$ denote the
total number of distinct abelian-square factors, of all lengths, in $w$.

\begin{definition}
An infinite word $w$ is \emph{abelian-square-rich} if there exists a positive constant $C$ such that for every $n$ one has 
\[\frac{1}{p_w(n)}\sum_{v\in \Fact(w)\cap\Sigma^n} \as{v} \geq C n^2.
\]
\end{definition}

Christodoulakis et al. \cite{Ch14} proved that a binary word of length $n$ contains  $\Theta(n\sqrt{n})$  distinct abelian-square factors on average; 
hence a random infinite binary word is almost surely not  abelian-square-rich.

In an abelian-square-rich word the number of distinct abelian squares contained in any factor is, on average, quadratic in the length of the factor. A stronger condition is that \emph{every} factor contains a quadratic number of distinct abelian squares:

\begin{definition}
An infinite word $w$ is {\it uniformly abelian-square-rich\/} if 
there exists a positive constant $C$ such that $\as{v} \geq C |v|^2$
for all $v \in \Fact(w)$.
\end{definition}

Clearly, if a word is uniformly abelian-square-rich, then it is also abelian-square-rich, but the converse is not always true. However, in the case of linearly recurrent words, the two definitions are equivalent. Moreover,  a uniformly abelian-square-rich word is always $\beta$-free for some $\beta$~\cite{DBLP:journals/tcs/FiciMS17}. Examples of uniformly abelian-square-rich words are the Thue--Morse word and the Fibonacci word.

In the opposite direction, one can ask what is the minimum number of abelian square factors in a word of length $n$. Let $f_d(n)$ be the least number of distinct abelian square factors in a word of length $n$ over $\Sigma_d$. By a result of Ker\"anen, $f_4(n)=0$ for every $n$ (see  Theorem~\ref{avoid_global}). Rao and Rosenfeld  proved that $f_3(n)\leq 34$ for every $n$ (see  Theorem~\ref{thm:avoidance} below). For binary words,  Entringer, Jackson and Schatz~\cite{EJS74} proved that every binary word of length $n^2+6n$ contains an abelian square of length $2n$, hence $f_2(n)$ is unbounded. 
The following conjecture  is supported by computer experiments:

\begin{conjecture}\cite{FiSa14}
Every binary word of length $n$ contains at least $\lfloor n/4 \rfloor$ distinct abelian square factors. That is, $f_2(n)=\lfloor n/4 \rfloor.$
\end{conjecture}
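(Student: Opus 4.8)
Since the final statement is an open conjecture (supported only by experiments), what follows is a proposed line of attack rather than a proof. The claim splits into a lower bound, $f_2(n)\ge \lfloor n/4\rfloor$ (every binary word has at least this many distinct abelian-square factors), and a matching upper bound, $f_2(n)\le \lfloor n/4\rfloor$ (an explicit extremal family). My plan is to encode occurrences of abelian squares through the \emph{height path} of $w=w_1\cdots w_n$: set $h(0)=0$ and $h(i)=|w_1\cdots w_i|_1-|w_1\cdots w_i|_0$, a $\pm 1$ lattice walk. A factor $w_{i+1}\cdots w_j$ of even length with midpoint $m=(i+j)/2$ is an abelian square precisely when $2h(m)=h(i)+h(j)$, i.e.\ when $(m,h(m))$ lies at the average of the endpoints $(i,h(i))$ and $(j,h(j))$. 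This turns the problem into one about midpoint configurations on a lattice path, which is convenient for both directions.

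For the lower bound I would argue by induction in blocks of four letters, showing that passing from the prefix $w_1\cdots w_{4k}$ to $w_1\cdots w_{4(k+1)}$ forces at least one abelian-square factor that did not occur before. A warm-up fact, immediate from the midpoint criterion, is that every binary word of length $4$ already contains an abelian square: if it avoids $00$ and $11$ it must alternate, and then the whole length-$4$ block is itself an abelian square. The difficulty is that a new \emph{occurrence} need not be a new \emph{factor}, since the abelian square forced inside the new block may already appear earlier, so one cannot simply add one unit per block. I would therefore order the distinct abelian-square factors by the ending position of their \emph{leftmost} occurrence and try to show that these ending positions cannot cluster --- at most a bounded number of first occurrences can terminate within any window of length four --- by using the height-path description to rule out too many simultaneous midpoint coincidences.

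For the upper bound I would start from the extremal words found experimentally in~\cite{FiSa14}. The natural candidates (suitable prefixes of a fixed eventually periodic or morphic word, to be read off from the small cases, e.g.\ $0010$ for $n=4$, whose only abelian square is $00$) should be analysed directly: describe the height path explicitly, enumerate all triples $(i,m,j)$ satisfying $2h(m)=h(i)+h(j)$, and then collapse these occurrences to distinct factors, verifying that exactly $\lfloor n/4\rfloor$ remain. Since the construction is explicit, this direction is mainly bookkeeping, although proving \emph{exactly} $\lfloor n/4\rfloor$ (rather than merely $O(n)$) requires showing that the many occurrences of abelian squares in the extremal word collapse to very few distinct words.

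The main obstacle lies entirely on the lower-bound side. Abelian squares of small length can be globally avoided --- the alternating word $0101\cdots$ contains no factor $00$ or $11$ at all --- so the $\lfloor n/4\rfloor$ required squares cannot be pinned to any fixed length and must instead be harvested across many lengths while guaranteeing that the harvested factors are pairwise distinct words. Bridging this gap between counting \emph{occurrences} (which the midpoint criterion makes transparent) and counting \emph{distinct factors} is exactly where an elementary block-induction breaks down, and is, in my view, the reason the statement has so far resisted proof.
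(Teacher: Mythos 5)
You correctly recognized the status of this statement: it is an open conjecture, and the paper contains no proof of it --- only the remark that it is supported by computer experiments, plus the conjectured extremal family. So there is no paper proof to compare against, and your submission can only be judged as a plan of attack. Within that frame, your midpoint criterion is sound (for binary words, equal length plus an equal number of $1$s in the two halves forces abelian equivalence, so $w_{i+1}\cdots w_j$ is an abelian square iff $2h(m)=h(i)+h(j)$ for the midpoint $m$), and so is the warm-up fact that every binary word of length $4$ contains an abelian square. However, you overestimate the work needed for the upper bound: the paper already exhibits the extremal family $0^{\lfloor n/2\rfloor}10^{n-\lfloor n/2\rfloor-1}$, and for such a word the count is immediate --- no factor containing the single $1$ can be an abelian square (its halves would contain different numbers of $1$s), so the abelian-square factors are exactly $0^{2k}$ with $2k\leq\lfloor n/2\rfloor$, and there are precisely $\lfloor n/4\rfloor$ of these. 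In particular there is no nontrivial ``collapsing of occurrences'' to carry out, and your guess that the extremal words are prefixes of a fixed morphic or eventually periodic word is not consistent with this family (the position of the $1$ moves with $n$), even though your example $0010$ for $n=4$ happens to coincide with it. The entire content of the conjecture is the lower bound $f_2(n)\geq\lfloor n/4\rfloor$.

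On that lower bound, your proposed mechanism has a genuine directional error. You say you would show that ending positions of leftmost occurrences ``cannot cluster,'' i.e., at most a bounded number of first occurrences end in any window of length four. But first occurrences are in bijection with distinct factors, so an upper bound on their density per window yields an \emph{upper} bound on the number of distinct abelian-square factors --- the opposite of what the conjecture needs. What you would actually need is a density \emph{lower} bound (on average at least one first occurrence per four positions), and this cannot be established window-locally: in the extremal word $0^{\lfloor n/2\rfloor}10^{n-\lfloor n/2\rfloor-1}$ itself, every first occurrence ends within the first $\lfloor n/2\rfloor$ positions, so the last half of the word contains no first occurrence at all. Hence any block-by-block induction claiming ``one new factor per four letters'' is false as a local statement, and only a global, amortized count can be true --- which is exactly the part for which you (and everyone else) have no mechanism. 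For calibration, note that the best lower bound implicit in the paper, via the Entringer--Jackson--Schatz result quoted in the same subsection, is only of order $\sqrt{n}$: a word of length $n$ contains an abelian square of length $2k$ for every $k$ with $k^2+6k\leq n$, and squares of different lengths are distinct. Your plan, as described, contains no idea that would improve on this $\Omega(\sqrt{n})$ bound, let alone reach $\lfloor n/4\rfloor$.
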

Should this conjecture be true, the bound is realized by  words of the form $0^{\lfloor n/2 \rfloor}10^{n-\lfloor n/2 \rfloor-1}$.

\bigskip

Abelian square factors give a characterization of a property related to shuffling. For finite words $u$ and $v$ we say that $v$ is a shuffle of $u$ with its reversal $u^R$ if there exist sequences of finite words $(U_i)_{i=0}^{n}$ and  $(V_i)_{i=0}^{n}$ such that $v=\prod_{i=0}^{n} U_iV_i$, $u=\prod_{i=0}^{n} U_i$, $u^R=\prod_{i=0}^{\infty} V_i$.  The following proposition gives a necessary condition for a word to be a shuffle of another word with its reversal:

\begin{theorem} \cite{DBLP:journals/eatcs/HenshallRS12} A binary word $v$ is an abelian square if and only if there exists a word $u$ such that $v$ is a shuffle of $u$ with its reversal $u^R$.\end{theorem}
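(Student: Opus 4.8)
The plan is to prove both implications by reducing the shuffle condition to a statement about $2$-colourings of the positions of $v$. Write $v=v_1v_2\cdots v_{2m}$, and observe first that, since the blocks $U_i,V_i$ in the definition may be empty, saying that $v$ is a shuffle of $u$ with $u^R$ is the same as exhibiting a partition of $\{1,\dots,2m\}$ into a \emph{red} set $R=\{\rho_1<\cdots<\rho_m\}$ and a \emph{blue} set $B=\{\beta_1<\cdots<\beta_m\}$ such that reading the red positions left to right spells $u$ and reading the blue positions left to right spells $u^R$. Thus $u=v_{\rho_1}\cdots v_{\rho_m}$ is forced by $R$, and the only genuine constraint is that the blue word be the reversal of the red word, i.e. $v_{\rho_j}=v_{\beta_{m+1-j}}$ for every $j$.

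For the direction ``shuffle $\Rightarrow$ abelian square'' I would start from such a colouring and consider the pairing that matches $\rho_j$ with $\beta_{m+1-j}$. By the reversal identity both endpoints carry the same letter, namely $u_j$. The key step is a short counting lemma showing that every pair \emph{straddles the midpoint}: if the first half $v_1\cdots v_m$ contains exactly $a$ red positions, these must be $\rho_1,\dots,\rho_a$ and the $m-a$ blue positions in the first half must be $\beta_1,\dots,\beta_{m-a}$; hence $\rho_j\le m$ holds exactly when $j\le a$, while $\beta_{m+1-j}\le m$ holds exactly when $j\ge a+1$, so precisely one endpoint of each pair lies in each half. Since the two halves then receive, letter by letter, the same multiset of symbols (one from each pair), $v_1\cdots v_m$ and $v_{m+1}\cdots v_{2m}$ have equal Parikh vectors, i.e. $v$ is an abelian square.

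For the converse I would argue by induction on $m$, using that an abelian square satisfies $|v_1\cdots v_m|_1=|v_{m+1}\cdots v_{2m}|_1$. When $v_1=v_{2m}=:c$ the induction is clean: colour position $1$ red and position $2m$ blue, delete them, and note that the remaining word $v_2\cdots v_{2m-1}$ is again an abelian square, since one occurrence of $c$ is removed from each half. A colouring of this shorter word produced by the inductive hypothesis extends by making $1$ the leftmost red and $2m$ the rightmost blue, which prepends $c$ to the red word and appends $c$ to the blue word, preserving the reversal identity because $(c\,w)^R=w^R\,c$.

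I expect the case $v_1\neq v_{2m}$ to be the main obstacle. Here positions $1$ and $2m$ cannot form an equal-letter pair, and any assignment forces both into the same colour, which breaks the self-similar recursion above. The natural remedy is to peel an \emph{interior} matched pair instead — for instance pairing position $1$ (as the leftmost red, hence $u_1$) with a suitably chosen occurrence of the same letter in the second half (to serve as the rightmost blue, hence the last letter of $u^R$) — while guaranteeing that the positions to its right can all be coloured red. Making this choice compatible with the colouring returned by induction is exactly the reversal-consistency condition $v_{\rho_j}=v_{\beta_{m+1-j}}$, and handling it cleanly will likely require either strengthening the inductive statement (e.g. controlling the colour of a suffix of $v$) or, equivalently, proving directly that the letter-preserving matching between the two halves always admits an orientation that realises the desired shuffle.
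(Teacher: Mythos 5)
Your reduction of the shuffle condition to a two-colouring subject to the reversal constraint $v_{\rho_j}=v_{\beta_{m+1-j}}$ is sound, and your proof of the direction ``shuffle $\Rightarrow$ abelian square'' (via the straddling lemma for the pairs $(\rho_j,\beta_{m+1-j})$) is correct and complete; note that it nowhere uses that the alphabet is binary, which is consistent with the fact that this direction holds over arbitrary alphabets. The converse, however, contains a genuine gap, and it sits exactly where the content of the theorem lies. Your induction disposes of the case $v_1=v_{2m}$, but that is the easy case; for $v_1\neq v_{2m}$ you only list candidate strategies (peel an interior pair, strengthen the induction, or ``prove directly that the letter-preserving matching admits an orientation realising the shuffle'') without carrying any of them out. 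The gap is not routine bookkeeping: your argument for the equal-ends case works verbatim over any alphabet, yet the converse is \emph{false} over three letters ($012012$ is an abelian square that is no shuffle of any word with its reversal), so whatever completes the unequal-ends case must use binarity in an essential way --- and nothing in your sketch does so yet.

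The missing observation is that no induction is needed at all, because $u$ may be chosen of a special sorted form. If each half of $v$ contains $k$ zeros and $l$ ones, set $u=0^k1^l$ and colour red the positions of the $0$'s in the first half together with the positions of the $1$'s in the second half: read left to right these spell $0^k1^l=u$, while the remaining positions (the $1$'s of the first half followed by the $0$'s of the second half) spell $1^l0^k=u^R$. This one-line construction finishes the converse, and it also shows precisely where two letters matter: in your matching language it pairs the $r$-th $0$ of the first half with the $(k+1-r)$-th $0$ of the second half, and likewise for the $1$'s, producing two nested chains of arcs, one per letter; a red/blue colouring can realise at most two such chains, which is why the ternary example $012012$, whose unique letter-preserving matching consists of three mutually crossing arcs, cannot be unshuffled.
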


Moreover, the ``if'' direction holds for arbitrary alphabets. However, there exist counterexamples for the ``only if'' part:  the word $012012$ is an example of a ternary abelian square that cannot
be written as the shuffle of a word with its reversal~\cite{DBLP:journals/eatcs/HenshallRS12}.

\subsection{Abelian antipowers}

Opposite to the notion of $k$-power, there is the notion of  \emph{$k$-antipower}\cite{DBLP:journals/jct/FiciRSZ18}. A $k$-{\emph{antipower}}, or antipower of order $k$, is a word of the form $v_1v_2\cdots v_k$ where all $v_i$'s have the same length and are pairwise distinct. For example, $001000111010$ is a $4$-antipower.

Fici, Restivo, Silva and Zamboni proved the following result:

\begin{theorem}\cite{DBLP:journals/jct/FiciRSZ18}\label{antipowers}
Every infinite word contains powers of every order or antipowers of every order.
\end{theorem}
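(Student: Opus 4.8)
The plan is to reduce the whole dichotomy to a single statement and then prove that statement by a Ramsey-type argument across scales. Throughout write $x=x_0x_1x_2\cdots$ and let $x[a,b)$ denote the factor $x_ax_{a+1}\cdots x_{b-1}$. I first claim it suffices to establish the following \emph{Main Lemma}: if an infinite word $x$ contains no $k$-antipower for some $k\ge 2$, then $x$ contains an $m$-power for every $m\ge 1$. Granting this, the theorem is immediate: for an arbitrary infinite word $x$, either $x$ contains a $k$-antipower for every $k$ (and we are in the second alternative), or else there is some $k$ for which $x$ has no $k$-antipower, in which case the Main Lemma supplies $m$-powers of every order (the first alternative). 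So the entire content lies in the Main Lemma.

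To attack the Main Lemma I would pass to the block description at each scale. For a scale $\ell\ge 1$ and a position $t$, the relevant data is the sequence of consecutive blocks $x[t+i\ell,\,t+(i+1)\ell)\in\Sigma_d^{\ell}$. In this language the hypothesis ``no $k$-antipower'' says exactly that, at \emph{every} scale $\ell$ and from \emph{every} starting position $t$, among any $k$ consecutive length-$\ell$ blocks at least two coincide; dually, an $m$-power of period $\ell$ is precisely a run of $m$ consecutive \emph{equal} length-$\ell$ blocks. Thus the goal is to upgrade the omnipresent local coincidences (two equal blocks in every window of $k$, at every scale) into a single long run of $m$ equal consecutive blocks at some scale. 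A convenient preliminary step is to replace $x$ by a uniformly recurrent word $z$ lying in a minimal subshift of $\Omega_x$: since $\Fact(z)\subseteq\Fact(x)$, any $m$-power found in $z$ is a factor of $x$, and $z$ still avoids $k$-antipowers, so I may assume that every factor recurs with bounded gaps.

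The engine I would use is van der Waerden's theorem. Fixing a scale $\ell$ and colouring each block position $i$ by the block value $x[i\ell,(i+1)\ell)$ (finitely many colours), a long enough prefix contains arbitrarily long monochromatic arithmetic progressions of block positions: one block $B$ recurring at positions forming an arithmetic progression of some common difference $s$. If $s=1$ these occurrences are consecutive and we already have a long power. The real work is the case $s>1$: here the decisive mechanism is that equally spaced equal blocks whose spacing $s$ does not exceed the block length $\ell$ must overlap, and the overlap of two consecutive copies of $B$ forces period $s$ on the whole spanned interval $[a,\,a+(m-1)s+\ell)$, which has length at least $ms$, hence is a genuine $m$-power of period $s$. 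Simultaneously, antipower-freeness applied at the coarser scale $s\ell$ severely restricts how the intermediate blocks can behave.

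The step I expect to be the main obstacle is exactly this conversion from ``regularly spaced equal blocks'' to ``consecutive equal blocks forming a power'', i.e.\ forcing the van der Waerden common difference to be commensurate with the block length. The difficulty is genuine rather than computational: a single scale is never enough. For instance $(001122)^{\omega}$ at scale $1$ has no $3$ consecutive equal letters and no $3$ consecutive distinct letters, so one scale reveals neither pattern; only on passing to scale $2$, where the blocks are $00,11,22,00,\dots$, does the $3$-antipower $00\,11\,22$ appear. This is the prototype of the ``ascent through scales'' that must be made rigorous: one shows that a pattern avoiding both an $m$-run and a $k$-antirun at scale $\ell$ is forced, at the scale $s\ell$ dictated by van der Waerden, either to repeat a super-block (a power) or to expose $k$ distinct super-blocks (an antipower). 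Turning this into a terminating argument with an explicit bound — equivalently, proving the finite statement that every sufficiently long word contains an $m$-power or a $k$-antipower — is where I would concentrate the effort, iterating van der Waerden across the scales $\ell, s\ell, s^2\ell,\dots$ while using the window bound ``at most $k-1$ distinct blocks among any $k$ consecutive'', coming from antipower-freeness, to keep the effective block alphabet under control at each level.
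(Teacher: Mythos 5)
Your opening reductions are fine: the restatement as a ``Main Lemma'' (no $k$-antipower $\Rightarrow$ $m$-powers of every order), the reformulation in terms of block sequences at every scale, and the passage to a uniformly recurrent word in a minimal subshift are all correct (though the last is never actually used later). But the proof stops exactly where the theorem begins. The only concrete mechanism you offer cannot fire: applying van der Waerden to the colouring of \emph{block positions} at scale $\ell$ produces occurrences of a block $B$ of length $\ell$ at letter positions $a\ell,(a+s)\ell,(a+2s)\ell,\dots$, i.e.\ at mutual distance $s\ell$. Two occurrences of a word of length $\ell$ overlap only if their distance is at most $\ell$, which here forces $s=1$; your ``decisive mechanism'' conflates the block-index gap $s$ with the letter distance $s\ell$, so for $s\geq 2$ it never applies. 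You acknowledge this (``forcing the van der Waerden common difference to be commensurate with the block length'' is named as the main obstacle), but the proposed remedy --- iterating van der Waerden at scales $\ell, s\ell, s^2\ell,\dots$ --- merely reproduces the identical situation one level up: at every scale the hypothesis is the same (any $k$ consecutive superblocks contain two equal ones), van der Waerden again returns an arithmetic progression of uncontrolled common difference, and no termination mechanism is given. So the heart of the theorem is left unproven, and what is proposed in its place is not on a path that closes it.

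For comparison, the proof in the cited paper \cite{DBLP:journals/jct/FiciRSZ18} needs no van der Waerden and no recurrence; the missing idea is a \emph{density argument across scales}, not a Ramsey argument within one scale. Assume $x$ has no $k$-antipower. For every $n$ there are $0\leq i_n<j_n\leq k-1$ with $x[i_n n,(i_n+1)n)=x[j_n n,(j_n+1)n)$. Pigeonholing over the $\binom{k}{2}$ possible pairs, some fixed pair $(i,j)$, $d=j-i$, works for a set $\Gamma$ of scales of upper density at least $1/\binom{k}{2}$; consequently there is a constant $C=C(k)$ such that $\Gamma$ contains pairs $n,\,n+e$ with $1\leq e\leq C$ and $n$ arbitrarily large. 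Now combine the two long periodicities carried by such a pair: $x_q=x_{q+dn}$ for $q\in[in,(i+1)n)$ and $x_q=x_{q+d(n+e)}$ for $q\in[i(n+e),(i+1)(n+e))$. On the (huge) intersection of these intervals both hold, and subtracting them gives $x_p=x_{p+de}$ for all $p\in[jn+ie,(j+1)n)$: a factor of period $de\leq (k-1)C$ and length at least $n-ie$, i.e.\ arbitrarily long factors of \emph{bounded} period, hence $m$-powers for every $m$ (and, after one more pigeonhole on the finitely many values of $de$, a single word occurring with arbitrarily large exponent, which is the stronger statement proved there). This ``take two nearby scales carrying the same repetition pattern and subtract the periods'' step is exactly what equally spaced equal blocks at a single scale, with van der Waerden's uncontrolled difference, cannot give you; it is the ingredient your proposal lacks.
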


By \cref{boundedAC}, we have that if a word has bounded abelian complexity, then it cannot contain abelian powers of every order, so in particular cannot contain powers of every order, therefore by  \cref{antipowers} it must contain antipowers of every order.

The abelian counterpart of an antipower is an \emph{abelian antipower}. An abelian $k$-antipower, or abelian antipower of order $k$, is a word of the form $v_1v_2\cdots v_k$ such that all $v_i$'s have the same length and pairwise distinct Parikh vectors.  For example, $010011$ is an abelian antisquare and an abelian anticube.  It is an open question whether~\cref{antipowers} can be generalized to abelian antipowers: 

\begin{problem}
 Does every infinite word contain abelian powers of every order or abelian antipowers of every order?
\end{problem}

Notice that if a word contains abelian antipowers of every order, then it must have unbounded abelian complexity.

By~\cref{paperfoldingpowers}, all paperfolding words contain abelian powers of every order. It has been proved that all paperfolding words also contain abelian antipowers of every order:

\begin{theorem}\cite{DBLP:journals/aam/FiciP019}
 All paperfolding words contain abelian $k$-antipowers for every $k>1$.
\end{theorem}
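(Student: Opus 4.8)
The plan is to use the explicit arithmetic description of paperfolding words coming from their Toeplitz construction. Recall that a paperfolding word $p=p_1p_2\cdots$ with directive sequence $(e_i)_{i\ge 0}\in\{0,1\}^{\mathbb{N}}$ is characterized by
\[ p_n = e_{\nu_2(n)} \oplus b(n), \]
where $\nu_2(n)$ is the $2$-adic valuation of $n$, $\oplus$ is addition mod $2$, and $b(n)=1$ exactly when the odd part $n/2^{\nu_2(n)}$ is $\equiv 3 \pmod 4$; the regular paperfolding word is the case $e_i\equiv 0$. Two consequences are used repeatedly: the subsequence of $p$ at odd indices is $(01)^\omega$ or $(10)^\omega$, and the subsequence at even indices is again a paperfolding word (with directive sequence $(e_{i+1})_{i\ge0}$). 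Since the alphabet is binary, the Parikh vector of a block of length $\ell$ is determined by its number of $1$'s; hence it suffices to exhibit, for each $k$, a starting position $a$ and a length $\ell$ so that the $k$ consecutive blocks $p_{a+r\ell+1}\cdots p_{a+(r+1)\ell}$, $0\le r<k$, have pairwise distinct numbers of $1$'s.

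First I would decompose the number of $1$'s in a block according to the $2$-adic level of the positions it contains. Writing $c_r$ for the number of $1$'s in the $r$-th block, the contribution of level $i$ counts the odd integers in a fixed residue class modulo $4$ lying in an interval of length $\ell/2^i$. The point of this bookkeeping is that whenever $2^{i+2}\mid \ell$ and $2^i\mid a$, level $i$ contributes to every block the same constant $\ell/2^{i+2}$, independently both of $r$ and of the directive bit $e_i$. Choosing $\ell$ with $\nu_2(\ell)=j$ and $a$ divisible by a large power of $2$ therefore freezes all levels $i\le j-2$ into a constant, and only the finitely many \emph{active} levels $i\ge j-1$ near the top of the block can make $c_r$ vary with $r$. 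This reduces the problem to a bounded-variation, self-similar fluctuation function $r\mapsto c_r$ and to arranging that it take $k$ distinct values on $\{0,\dots,k-1\}$.

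The cleanest way to force distinctness is to make the fluctuation strictly monotone: equivalently, to choose $a$ and $\ell$ so that the sampled summatory function $r\mapsto S(a+r\ell)$, with $S(n)=|p_1\cdots p_n|_1$, is strictly convex on $\{0,1,\dots,k\}$, since then its first differences satisfy $c_0<c_1<\cdots<c_{k-1}$ and are automatically distinct. I would realize this by letting the scale grow with $k$: taking $\ell$ a fixed odd multiple of a power of two and $a$ a large multiple of a high power of $2$, the active levels produce a ruler-type staircase whose net effect over $k$ consecutive windows is a strict increase of one unit per step. Uniformity over the directive sequence is handled on two fronts: the bulk of the count comes from the frozen low levels, which are insensitive to $(e_i)$, and the contribution of each of the finitely many active levels is a bounded periodic oscillation whose possible shapes can be enumerated over the relevant bits $e_{j-1},e_j,\dots$; alternatively, the even-index self-similarity lets one lift an antipower of even block length from the even subsequence back to $p$ while adding only a constant odd-position contribution, thereby stripping off the lowest directive bits one at a time.

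The main obstacle is precisely this last step: guaranteeing that the count walk $r\mapsto c_r$ is injective --- ideally monotone --- over a full run of $k$ consecutive windows, uniformly in the directive sequence. Because the fluctuations of $S$ about the mean-density line $n/2$ grow only logarithmically, the scale $\ell$ must be taken of order $2^k$ for a strictly increasing stretch of height $k$ to exist at all, and one must then certify that along the chosen arithmetic progression the increments themselves increase (convexity), despite the oscillatory, self-affine graph of the summatory function. Controlling this convexity simultaneously for every directive sequence is the crux; once it is established, distinctness of the $k$ block counts, and hence the existence of an abelian $k$-antipower, follows immediately for every $k>1$ and every paperfolding word. As a consistency check, this forces unbounded abelian complexity, which is already known to hold for all paperfolding words.
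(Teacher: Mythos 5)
Your structural groundwork is correct: the arithmetic description $p_n = e_{\nu_2(n)}\oplus b(n)$ is the right characterization, on a binary alphabet it does suffice to exhibit $k$ consecutive equal-length blocks with pairwise distinct numbers of $1$'s, and your freezing claim is true (in fact the hypothesis $2^i\mid a$ is superfluous: a half-open interval whose length is divisible by $2^{i+2}$ contains exactly $\ell/2^{i+2}$ integers of $2$-adic level $i$ whose odd part lies in a prescribed residue class modulo $4$, independently of the endpoint and of $e_i$). But what you have written is a plan, not a proof, and you concede this yourself: the step you call ``the crux'' --- producing $a$ and $\ell$ for which the counts $c_0,\dots,c_{k-1}$ are pairwise distinct (let alone for which $r\mapsto S(a+r\ell)$ is strictly convex), \emph{uniformly over all directive sequences} $(e_i)$ --- is never established. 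Everything before that step is bookkeeping; everything after it (``once it is established, distinctness follows'') is trivial. The unproven step \emph{is} the theorem.

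It is also worth recording why this step cannot be waved through. With $\nu_2(\ell)=j$, the active levels are those $i$ with $j-1\le i\lesssim \log_2(k\ell)$, and each contributes a deviation of absolute value less than $1$ to each block count; so the total range available to the $c_r$ is roughly $\log_2 k+\log_2(\ell/2^j)+O(1)$, while a strictly monotone run of $k$ counts needs range at least $k-1$. Hence essentially every active level must deliver its maximal jump, at the correct step and with the correct sign --- and the directive bits $e_i$, over which you have no control, flip the sign of each level's contribution, so a staircase built for the regular word $e_i\equiv 0$ need not survive for other paperfolding words. Your two proposed remedies do not close this: ``enumerating the possible shapes over the relevant bits'' is not a finite check, since the number of active levels grows linearly in $k$ and there are $2^{\Theta(k)}$ bit patterns to handle; and the self-similarity reduction only trades an antipower of block length $q2^j$ in $p$ for one of odd block length $q$ in the derived word $p^{(j)}$ with a shifted directive sequence, i.e., it reproduces the same unsolved problem one level down. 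Note finally that the survey states this theorem with a citation only and contains no proof to compare against; the published Fici--Postic--Silva argument supplies precisely the explicit construction and verification that your outline stops short of.
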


\section{Abelian avoidability}\label{sec:avoidability}

In this section we give a short overview of results and problems related to abelian avoidability. We do not go into details due to two recent excellent book chapters on abelian avoidance and related questions \cite{Ochem2018,RaSha_chapter}. 

\subsection{Avoidability of abelian powers}

Avoidability of powers and patterns is a well-studied area in
combinatorics on words. In this subsection, we provide some
results on avoidability of abelian powers. The study of abelian avoidance
started with a question of Erd\H{o}s, who asked whether it is
possible to construct an infinite word containing no abelian
square factor \cite{Erdos}. 

A $k$-power is a particular case of an abelian $k$-power. So,
unavoidability of $k$-powers implies unavoidability of  abelian
$k$-powers (but not vice versa).  So, for example, since every sufficiently long binary word contains a square, it is not possible to construct infinite binary words without abelian squares. Notice that by~\cref{boundedAC}, if a word avoids abelian powers, then it must have unbounded abelian complexity.

The
following theorem gives the minimal sizes of the alphabet for avoiding abelian
powers:

\begin{theorem}\label{avoid_global} \cite{Dekking1979181,Ker92} 

1. There exists an infinite word over an alphabet of size $4$ with
no abelian square factor.

2. There exists an infinite ternary word over with no abelian cube
factor.

3. There exists an infinite binary word with no abelian $4$-power factor.

The sizes of the alphabets are optimal.
\end{theorem}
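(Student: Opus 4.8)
The plan is to split the statement into the three \emph{existence} assertions (each abelian power can be avoided over an alphabet of the stated size) and the accompanying \emph{optimality} assertion (no strictly smaller alphabet works), and to treat these by entirely different techniques: explicit morphic constructions for existence, and finiteness arguments for optimality.

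For each existence claim I would produce an explicit substitution $h$ and argue that its fixed point avoids the relevant abelian power. The logically clean route is to prove that $h$ maps \emph{every} word avoiding the abelian $k$-power to another such word: since a single letter trivially avoids it, iteration gives that every $h^n(a)$ avoids it, and since each finite factor of the fixed point lies in some $h^n(a)$, the whole fixed point avoids it as well. The engine behind ``$h$ preserves avoidance'' is a \emph{finite test-set} criterion. Because the Parikh vector is linear under $h$, i.e.\ $\PV(h(w)) = M\,\PV(w)$ for the incidence matrix $M$ (so that $u\sim_{ab} v$ forces $h(u)\sim_{ab} h(v)$), any hypothetical abelian power inside an image $h(w)$ can be localized to a bounded configuration of blocks, reducing the verification to checking all avoiding inputs up to a length bounded in terms of the image length $L$ of $h$ and of $k$. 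For parts~2 and~3 one takes Dekking's morphisms on three and two letters respectively; for part~1 one takes Ker\"anen's uniform morphism of length $L=85$ on $\{0,1,2,3\}$.

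For optimality I would argue power by power, the unifying observation being that it suffices to prove there are only \emph{finitely many} finite words over the smaller alphabet avoiding the power, since an infinite avoiding word would have an avoiding prefix of every length. Part~3 is immediate: over a one-letter alphabet every word of length at least $4$ is itself an abelian $4$-power, so two letters are necessary. For abelian squares over three letters and abelian cubes over two letters I would establish the required finiteness by a terminating exhaustive backtracking search, pruning a prefix as soon as it creates the forbidden abelian power; termination of the search is exactly the statement that the maximal avoiding length is finite. That such unavoidability results admit explicit length bounds is illustrated by the Entringer--Jackson--Schatz estimate for binary abelian squares quoted earlier, though for the ternary-square and binary-cube thresholds the sharp constants come from the search rather than from a direct sliding-window computation.

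The main obstacle is the existence result for abelian squares, part~1. Unlike the cube and $4$-power cases, abelian-square-freeness cannot be achieved by a short morphism, and verifying Ker\"anen's large uniform morphism is genuinely heavy: the localization step must exclude abelian squares straddling block boundaries at every possible alignment, so the residual finite check, though bounded, is enormous. The real conceptual work is therefore in proving the test-set and localization lemma in a form sharp enough to keep that check tractable; by comparison the optimality directions are routine once the relevant search trees are known to terminate.
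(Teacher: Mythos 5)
Your split into existence-by-explicit-morphisms and optimality-by-finiteness is exactly the survey's, and the optimality half of your plan is correct and matches the paper: a terminating exhaustive search shows that every ternary word of length $8$ contains an abelian square and every binary word of length $10$ contains an abelian cube, and the unary case of part~3 is immediate (note only that $a^n$ with $n\geq 4$ \emph{contains} the abelian $4$-power $a^4$; it need not \emph{be} one). The genuine gap is the engine of your existence proofs, the ``finite test-set criterion''. Linearity of Parikh vectors under $h$ does \emph{not} localize a hypothetical abelian power in $h(w)$ to a bounded configuration. Concretely, if an abelian square $uv$ of $h(w)$ has its midpoint inside a block $h(w_m)$, write $u=s\,h(x)\,p_1$ and $v=s_1\,h(y)\,p$ with $p_1s_1=h(w_m)$; then $\PV(u)=\PV(v)$ yields only $M(\PV(x)-\PV(y))=\delta$ with $\delta$ of bounded norm, which bounds neither $|x|$ nor $|y|$ (and these lengths need not be equal). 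Worse, configurations $x\,w_m\,y$ with $\PV(x)-\PV(y)$ small are perfectly compatible with $w$ being abelian-square-free (already $0\cdot 1\cdot 0$ is one), so they occur in arbitrarily long avoiding words and no check of bounded-length avoiding inputs can exclude them. What must actually be proved is that the equation above has no solution when $\delta$ ranges over the finitely many values coming from prefixes and suffixes of images --- an arithmetic condition on $h$ itself, not a property one can read off a finite set of preimages. Indeed, if a test set bounded in terms of $L$ and $k$ alone sufficed, abelian $k$-power-freeness of morphisms would be decidable by brute force, whereas, as the survey notes, only sufficient conditions are known (Carpi), and that they characterize the property is conjectural.

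This is also why treating parts 2 and 3 as easier instances of the same machinery misrepresents how they are proved. Dekking's ``short elegant proof'' does pull an abelian power in $\psi'(w)$ back to one in $w$, but the pull-back is enabled by a group-theoretic ingredient absent from your proposal: assign integer weights to the letters (e.g.\ $1,2,3$, so that all three images $0012$, $112$, $022$ have weight $7$) and work in $\Z_7$, where the weights of proper prefixes of the images form the set $\{0,1,2,4\}$, which contains no nontrivial $4$-term arithmetic progression modulo $7$; an abelian cube in the image would force such a progression at its four boundary points, so the boundaries must sit at matching offsets, and then invertibility of the incidence matrix transfers the cube to $w$. (For part~3 the same scheme runs with weights $1,2$, the group $\Z_5$, and the prefix-weight set $\{0,1,2,3\}$, which meets no nontrivial $5$-term progression modulo $5$.) Ker\"anen's part~1 is likewise a computer-assisted verification of a nontrivial \emph{sufficient} criterion with side conditions on the Parikh vectors of prefixes and suffixes of the $85$-letter images, not a generic bounded-length check. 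So your plan is structurally right --- preservation plus iteration, exactly the shape of the cited proofs --- but the lemma carrying all the weight is unproven, false in the generality you state it, and the ideas that replace it (progression-free prefix weights, Carpi-type conditions) are missing.
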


The first statement of the theorem has been proved by Ker\"{a}nen in
1992 \cite{Ker92}, who improved a previous bound of $5$ given by  Pleasants~\cite{Pleasants} and the first bound of $25$ given by Evdokimov~\cite{Evdokimov}; the  two other statements by Dekking in 1979
\cite{Dekking1979181}. It is worth mentioning that Ker\"{a}nen's result relies on computer verification, while the two results by Dekking have a short elegant proof. We also refer to \cite{DBLP:journals/tcs/Keranen09} for more on abelian square-free morphisms and to \cite{DBLP:journals/ijac/Carpi93} for a shorter proof of item 1 in the theorem.

Moreover, it is known that the number of abelian square-free words of length $n$ on a four-letter alphabet grows exponentially in $n$ \cite{DBLP:journals/dam/Carpi98}. 
The same is true for ternary abelian-cube-free and binary abelian-4-free languages
\cite{ACR04,DBLP:journals/tcs/Currie04}.

The summary
of results on avoidability of (abelian) $k$-powers is provided in
Table \ref{table1}.

\begin{table}
\begin{center}\begin{tabular}{c|c|c}
 & usual & abelian \\ \hline
 squares & 3 & 4
 \\ \hline
 cubes & 2 & 3  \\ \hline
 4-powers & 2 & 2
\end{tabular}
\end{center}
\caption{Minimal sizes of the alphabets over which the
 corresponding powers are avoidable.} \label{table1}
\end{table}

To prove the avoidability results, it is enough to construct a word avoiding the corresponding power. An example of 
 an infinite word over  $\Sigma_4$ with no abelian square is given by a fixed point of the $85$-uniform substitution
 \[\begin{small}
 \psi:
\begin{cases}
 0\mapsto 0120232123203231301020103101213121021232021013010203212320231210212320232132303132120\\
 1 \mapsto  1231303230310302012131210212320232132303132120121310323031302321323031303203010203231\\
 2 \mapsto  2302010301021013123202321323031303203010203231232021030102013032030102010310121310302  \\
 3 \mapsto  3013121012132120230313032030102010310121310302303132101213120103101213121021232021013   \\
\end{cases}\end{small}
 \]
where the image of the letter $i$ is obtained from the image of the letter $i-1$ by adding $1$ to each letter modulo $4$.

The example of  a ternary word with no abelian cube factor is the fixed point of the  substitution 
 \[
 \psi':
\begin{cases}
 0\mapsto 0012\\
 1 \mapsto  112\\
 2 \mapsto  022  \\
\end{cases}
 \]

The example of a binary word with no abelian $4$-power factor can also be constructed as the fixed point of a substitution:

 \[
 \psi'':
\begin{cases}
 0\mapsto 011\\
 1 \mapsto  0001\\
\end{cases}
 \]

It is easy to see the optimality for the size of the alphabet: indeed, one can simply show, for example using a search tree, that there are only finitely many words without corresponding abelian powers. For example, for the three-letter alphabet we have the following:

\begin{proposition}
 Every ternary word of length $8$ contains an abelian square. 
\end{proposition}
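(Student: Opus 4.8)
The plan is to translate ``contains an abelian square'' into a statement about prefix Parikh vectors and then reduce everything to a short finite check. For a word $w=w_1\cdots w_n$ put $P_0=(0,0,0)$ and $P_i=\PV(w_1\cdots w_i)$. By definition a factor $w_{p+1}\cdots w_{p+2m}$ is an abelian square exactly when its two halves have equal Parikh vectors, i.e. $P_{p+m}-P_p=P_{p+2m}-P_{p+m}$, equivalently $P_p+P_{p+2m}=2P_{p+m}$. Since the coordinate-sum of $P_i$ equals $i$, summing coordinates in this identity forces $p+(p+2m)=2(p+m)$, so the three indices are automatically equally spaced; moreover the $P_i$ are pairwise distinct because their coordinate-sums are distinct. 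Hence $w$ contains an abelian square if and only if its prefix Parikh vectors contain a nontrivial three-term arithmetic progression. This viewpoint makes transparent that for a word of length $8$ the only abelian squares to worry about have half-length $m\in\{1,2,3,4\}$, and it is these four cases I would organize the search around.

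Next I would assume, for contradiction, that some $w\in\Sigma_3^8$ is abelian-square-free, and prune the search using the shortest two cases. The case $m=1$ says no two consecutive letters are equal, so $w$ is a walk on the triangle $\{0,1,2\}$; I encode it by its $7$ successive steps $d_1\cdots d_7$, where each next letter is the current one $\pm 1 \bmod 3$, and I write $+$ or $-$ accordingly. Relabelling the alphabet preserves abelian equivalence, so I may assume $w_1w_2=01$, which fixes $d_1=+$. For a triangle walk an abelian square of length $4$ (the case $m=2$) is precisely a factor of the form $abab$, which in the step encoding is an occurrence of $+-+$ or $-+-$; avoiding all of them means every \emph{interior} run of $d_1\cdots d_7$ has length at least $2$. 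Counting run-compositions of $7$ with first symbol $+$ and interior runs $\ge 2$ leaves exactly $21$ step sequences, hence only $21$ candidate words.

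Finally I would check that each of these $21$ candidates contains an abelian square of length $6$ ($m=3$) or length $8$ ($m=4$), contradicting abelian-square-freeness and thereby proving the proposition. Both remaining lengths are genuinely needed: the sequence $+^{7}$ produces $01201201$, killed by the length-$6$ square $012\,012$ in positions $1$--$6$, whereas the sequence $(+,+,-,-,+,+,-)$ produces $01210121$, which has no length-$6$ abelian square but is itself the length-$8$ square $0121\,0121$. I expect the main obstacle to be purely this last exhaustive verification: one must confirm that the list of $21$ survivors is complete and that every single one is eliminated by a length-$6$ or length-$8$ square, so the difficulty is careful bookkeeping rather than any isolated hard step. (For comparison, an abelian-square-free ternary word of length $7$ does exist, so the bound $8$ is optimal, though this is not required for the statement.)
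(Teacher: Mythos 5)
Your proposal is correct, and it takes a genuinely more structured route than the paper, which in fact offers no written-out proof at all: there the proposition is justified only by the preceding remark that one can show, ``for example using a search tree,'' that only finitely many ternary words avoid abelian squares --- i.e., by brute-force search with pruning. You organize that search mathematically: after the prefix-Parikh-vector reformulation, avoiding half-lengths $m=1,2$ turns any counterexample into a $\pm 1$ walk on the triangle whose step sequence has all interior runs of length at least $2$, and the relabelling symmetry lets you fix the prefix $01$. Your count of $21$ survivors is right (by number of runs it is $1+6+10+4$), and the final elimination does go through: of the $21$ words, nineteen contain an abelian square of length $6$, and exactly two, $01020102$ and $01210121$, escape length $6$ but are themselves squares of length $8$ --- confirming your observation that both residual lengths are genuinely needed. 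What your route buys is a hand-checkable proof ($21$ short words instead of a search tree over $3^8$ candidates), plus as a by-product the sharpness remark that $0102010$ is an abelian-square-free ternary word of length $7$; what the paper's route buys is brevity, and the fact that the same one-line appeal also covers the companion statement about abelian cubes in binary words. The only caveat is that you assert rather than tabulate the final $21$-case check; but since the check is finite, your reduction to it is sound, and the paper itself carries out none of it explicitly, this is bookkeeping rather than a gap.
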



To prove that abelian cubes are not avoidable over a binary alphabet, one has simply to increase the length:

\begin{proposition}
 Every binary word of length $10$ contains an abelian cube.
\end{proposition}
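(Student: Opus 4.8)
The plan is to treat this as a finite verification, organized so that only a short search is needed. First I would note that an abelian cube $u_1u_2u_3$ has length $3p$, where $p=|u_1|$ is the common length of the three blocks, so inside a word of length $10$ the only possible periods are $p\in\{1,2,3\}$. It is convenient to encode a binary word $w=w_1\cdots w_{10}$ by its sequence of partial sums $S_0=0$, $S_i=|w_1\cdots w_i|_1$; then $w$ contains an abelian cube of period $p$ starting at position $i+1$ exactly when the four values $S_i,S_{i+p},S_{i+2p},S_{i+3p}$ form an arithmetic progression. For $p=1$ this is precisely the occurrence of a factor $000$ or $111$. For $p=2$ and $p=3$ the degenerate progressions whose common difference is $0$ or maximal (i.e.\ all blocks equal to $00\cdots0$ or $11\cdots1$) already contain such a monochromatic triple, so they are subsumed by the case $p=1$; the genuinely new forbidden patterns are the \emph{balanced} ones, where every length-$2$ (resp.\ length-$3$) block contains exactly one $1$ (resp.\ one or two $1$s).

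Next I would run a backtracking search on the tree of abelian-cube-free binary words. Starting from the empty word, extend the current word one letter at a time, and prune a branch as soon as the newly created suffix completes an abelian cube of period $1$, $2$ or $3$ (which only requires inspecting the last $3$, $6$, and $9$ letters, respectively). To keep the tree small and human-checkable I would exploit the two symmetries that preserve abelian cubes: complementation $0\leftrightarrow 1$ and reversal. Fixing the first letter to be $0$ already halves the work. The claim is that every branch of this tree dies at depth at most $9$; equivalently, no abelian-cube-free binary word has length $10$, which is exactly the statement.

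To see that the bound $10$ is sharp---and to confirm the search has been carried out correctly---I would exhibit an explicit abelian-cube-free word of length $9$, namely $w=001101101$. Its maximal runs all have length at most $2$, so it contains neither $000$ nor $111$ and hence no abelian cube of period $1$; the four length-$6$ windows have block-sum triples $(0,2,1)$, $(1,1,2)$, $(2,1,1)$, $(1,2,1)$ and the single length-$9$ window has block-sum triple $(1,2,2)$, none of them constant, so there is no abelian cube of period $2$ or $3$ either. Thus $001101101$ avoids abelian cubes, showing that no bound of $9$ would suffice and hence that $10$ is optimal; this complements, rather than replaces, the search.

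The only real obstacle is that the statement is inherently a finite case analysis rather than something admitting a slick closed-form argument, so the entire effort lies in bounding the search to a size one can check by hand. The period reduction to $p\le 3$, the observation that the all-$0$/all-$1$ block patterns are redundant, and the complement/reversal symmetries are exactly what cut the a priori $2^{10}=1024$ words down to a tree with only a handful of surviving branches.
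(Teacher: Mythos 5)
Your proposal is correct and takes essentially the same approach as the paper: the paper offers no written-out argument, only the remark that one can verify via a search tree that there are only finitely many abelian-cube-free binary words, which is exactly the pruned backtracking search you describe (your reduction to periods $p\le 3$, the partial-sum criterion, the symmetry reductions, and the verified length-$9$ witness $001101101$ are all correct). The only caveat, which applies equally to the paper, is that the proposition ultimately rests on actually executing the finite search to depth $10$, which is described rather than displayed.
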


For more on constructions of abelian power-free words we refer to paragraph 4.6 in \cite{RaSha_chapter}.
For avoiding abelian powers and their generalizations see \cite{Ochem2018}.

Although abelian squares are unavoidable over a binary alphabet, one can ask whether
it is possible to construct an infinite binary word containing
only a finite number of abelian squares (as in the case of
ordinary squares, where there exists an infinite binary word
containing only $00$, $11$ and $0101$ as square factors). The answer to
this question is known, and it is negative; however, in the
ternary case, it is possible to construct infinite words
containing only a finite number of abelian squares:

\begin{theorem}\cite{EJS74,DBLP:journals/siamdm/RaoR18}\label{thm:avoidance}
 The following holds true:
\begin{enumerate}
\item Every infinite binary word contains arbitrarily long abelian squares.
 \item There exists an infinite ternary word with no abelian square of length $12$ or greater.
\end{enumerate}
\end{theorem}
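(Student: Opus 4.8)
The two parts have very different character, so I would treat them separately.

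For part 1, the first step is a reformulation: a factor $x_{i+1}\cdots x_{i+2n}$ of a binary word $x$ is an abelian square of length $2n$ precisely when the two adjacent windows $x_{i+1}\cdots x_{i+n}$ and $x_{i+n+1}\cdots x_{i+2n}$ carry the same number of $0$'s, since over $\Sigma_2$ the Parikh vector of a length-$n$ block is determined by its number of $0$'s. Thus part 1 asks, for every $n$, for some position where two adjacent length-$n$ windows balance, and this is exactly the quantitative statement of Entringer, Jackson and Schatz recalled above \cite{EJS74}: every binary word of length $n^2+6n$ contains an abelian square of length $2n$. Taking this as given, part 1 is immediate: an infinite binary word $x$ has a prefix of length $n^2+6n$ for every $n\ge 1$, hence contains an abelian square of length $2n$ for every $n$, and since $2n\to\infty$ these are arbitrarily long. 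If one wanted to reprove the EJS input from scratch, the route is a pigeonhole argument on the sliding-window count $\beta(i)=|x_{i+1}\cdots x_{i+n}|_0$, which lies in $\{0,\dots,n\}$ and changes by at most $1$ per step; one tracks the sign of $\beta(i+n)-\beta(i)$ and uses boundedness of $\beta$ to force a zero once the word is long enough.

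Part 2 is an existence statement, and I would prove it the way such abelian-avoidance results are usually established: exhibit an explicit infinite ternary word and certify the avoidance property. Concretely, the plan is to take a morphic word $x$ over $\Sigma_3$ — the fixed point of a carefully chosen substitution $h\colon\Sigma_3^*\to\Sigma_3^*$, or the letter-to-letter image of such a fixed point over a larger alphabet — and to show it contains no abelian square of length $\ge 12$. Unwinding the definition, I must show that every factor $uv$ of $x$ with $|u|=|v|=\ell$ and $2\ell\ge 12$ satisfies $\PV(u)\ne\PV(v)$; equivalently, every abelian square of $x$ has length in $\{2,4,6,8,10\}$. In particular this already shows $f_3(n)=O(1)$, the precise value $f_3(n)\le 34$ being the refined count of Rao and Rosenfeld recalled above, since only finitely many lengths and hence finitely many Parikh classes of abelian squares can occur.

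The heart of the matter — and the step I expect to be the main obstacle — is turning the infinitary condition ``no abelian square of length $\ge 12$ anywhere in $x$'' into a finite, verifiable check. The standard device is a renormalization (desubstitution) argument driven by the structure of $h$: given a long abelian square $uv$ inside an image $h(w)$, one aligns its two halves with the block decomposition induced by $h$ and uses the incidence matrix of $h$ to translate the balance $\PV(u)=\PV(v)$ into a constraint on a shorter factor, either isolating a bounded-size ``core'' that can be inspected directly or producing a strictly smaller abelian square in a preimage and hence an infinite descent. Because there are only finitely many factors of each bounded length, and (for a fixed point of a primitive substitution, which is linearly recurrent) inspecting a finite prefix suffices, the whole verification collapses to finitely many cases. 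In practice that case analysis is large, so it is carried out by computer — this is exactly where Rao and Rosenfeld's argument relies on machine verification — and the genuine difficulty is twofold: first, searching for a morphism whose fixed point actually works, and second, making the reduction to a finite certificate tight enough that the resulting check is feasible.
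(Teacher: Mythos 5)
Your proposal follows essentially the same route as the paper, which (being a survey) also handles part 1 by invoking the Entringer--Jackson--Schatz bound that every binary word of length $n^2+6n$ contains an abelian square of length $2n$, and part 2 by deferring to Rao and Rosenfeld's computer-verified morphic construction. The only difference is that the paper exhibits the explicit witness for part 2 --- the image under a $10$-uniform morphism $g$ from a six-letter alphabet to $\Sigma_3$ of the fixed point of a substitution $h$ on six letters, a word containing exactly $34$ distinct abelian squares, the longest of length $10$ --- whereas you describe that construction and its finite (desubstitution-based, machine-checked) verification only generically, which is all a blind attempt could do.
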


We refer to \cite{EJS74} for a proof of the first part of the theorem, and to
\cite{DBLP:journals/siamdm/RaoR18} for the second part. The ternary word showed in \cite{DBLP:journals/siamdm/RaoR18} can be obtained by applying the morphism
  \[
g:
\begin{cases}
 0\mapsto 1110010002\\
 1 \mapsto 1220222122\\
 2 \mapsto 2222111212\\
 3 \mapsto 2222222200\\
 4 \mapsto 1111120100\\
 5 \mapsto 0000000100\\
\end{cases}
 \]
  to the fixed point of the substitution
 \[
h:
\begin{cases}
 0 \mapsto  024\\
 1 \mapsto  035\\
 2 \mapsto  135  \\
 3 \mapsto  132  \\
 4 \mapsto  054  \\
 5 \mapsto  124  \\
\end{cases}
 \]
This word contains precisely $34$ distinct abelian squares, the longest of which has length $10$.
 
The following conjecture is believed to be true, but is still
unproved:

\begin{conjecture}[M\"{a}kel\"{a}, \cite{keranen1}]
 There exists an infinite ternary word whose only abelian squares are $00$, $11$, $22$.
\end{conjecture}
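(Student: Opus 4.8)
The plan is first to reduce the conjecture to a cleaner avoidance statement. Since any abelian square of length $2$ is necessarily one of $00$, $11$, $22$ (two single letters with equal Parikh vector must be equal), a ternary word has the conjectured property exactly when it contains \emph{no abelian square of length $4$ or greater}; arranging in addition that all three of $00,11,22$ actually occur is a trivial addendum. So the real target is an infinite word over $\Sigma_3$ avoiding every abelian square $uv$ with $|u|=|v|=m\ge 2$. This is a strengthening of Theorem~\ref{thm:avoidance}(2): the Rao--Rosenfeld word still admits abelian squares of lengths up to $10$, and the task is to push the excluded lengths all the way down to $4$.

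Following the pattern of the known construction, I would seek the target word as the image under a letter-to-letter coding $g$ of a fixed point $x$ of a uniform substitution $h$, with $h$ and $g$ located by computer search over morphisms of modest length and, ideally, with $h$ chosen so that its incidence matrix $M$ (where $M\,\PV(p)=\PV(h(p))$) is invertible over $\Q$. The verification that $w=g(x)$ avoids long abelian squares is the heart of the matter, and I would carry it out by a desubstitution/template argument. Fix the length $\ell$ of $h$, and suppose for contradiction that $w$ contains an abelian square $uv$ with $m=|u|=|v|$ large. Cutting $uv$ along the block boundaries of $h$, each of $u,v$ decomposes into a central part that is an exact image $h(p)$, $h(q)$ together with two boundary fragments of length below $\ell$. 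Passing to Parikh vectors, $\PV(u)=\PV(v)$ becomes $M\,\PV(p)-M\,\PV(q)=e$, where the \emph{defect} $e$ comes solely from the four boundary fragments and hence ranges over a finite set; with $M$ invertible this reads $\PV(p)-\PV(q)=M^{-1}e$, so $(p,q)$ is an abelian square perturbed by one of finitely many defect vectors, i.e.\ a \emph{template}. One then shows that the set of realizable templates is finite, that pulling back through $h$ maps templates to templates while dividing the relevant length by roughly $\ell$, and therefore that a sufficiently long offending square forces an ever-shorter one. A finite computer check that no abelian square of length between $4$ and some explicit bound $N_0$ occurs in $w$ then closes the descent and yields the contradiction.

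The main obstacle is precisely the \emph{non-local} nature of abelian squares: unlike ordinary square-freeness, the condition $\PV(u)=\PV(v)$ couples letter counts across a window whose endpoints need not align with morphism blocks, and the boundary defects $e$ must be controlled uniformly for \emph{all} offsets and \emph{all} lengths at once. The delicate point is to prove that none of the finitely many surviving templates ever resolves into a genuine abelian square of length $\ge 4$ in $w$ --- equivalently, that the template pull-back map has no bad orbit bottoming out in a realizable short square. This is exactly where current techniques seem to fall short: the near-miss of Rao--Rosenfeld, with excluded lengths only down to $12$ and realized squares up to $10$, suggests that for the stringent requirement of the conjecture the family of surviving templates is hard to eliminate, which is presumably why the conjecture remains open. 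A secondary, more practical difficulty is simply locating a morphism with the required property, since the search space grows rapidly and the constraint of forbidding \emph{every} abelian square of length $\ge 4$ is far more demanding than merely bounding their lengths.
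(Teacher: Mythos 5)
This statement is M\"akel\"a's conjecture, which the paper records as an \emph{open problem}: the paper gives no proof, and explicitly states that the conjecture ``is believed to be true, but is still unproved.'' So there is no paper proof to compare against, and the only question is whether your proposal settles the conjecture. It does not, and you essentially concede this yourself. Your opening reduction is fine (avoiding all abelian squares of length $\geq 4$ is equivalent to the conjectured property, up to the trivial matter of which of $00,11,22$ occur), and the machinery you describe --- desubstitution of a candidate word $g(x)$ along block boundaries, passing to Parikh vectors, controlling the finitely many boundary defects $e$, and running a descent on templates --- is precisely the method behind the weaker result you cite, \cref{thm:avoidance}(2) of Rao and Rosenfeld. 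But a method is not a proof: you never exhibit a substitution $h$ and coding $g$, never establish that the set of realizable templates for that specific pair is finite and that each one can be eliminated, and never perform the finite check that would close the descent. Every one of these steps is stated as something one ``would'' do, and the last paragraph of your proposal acknowledges that exactly these steps are where all known attempts fail.

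The gap is therefore not a subtle logical slip but the absence of the construction itself. The conjecture is existential; any proof must produce (or at least prove the existence of) a concrete infinite ternary word and verify the avoidance property for it, and the verification is the entire difficulty. The template/descent framework is known to work when the forbidden lengths start high enough (length $\geq 12$ in \cref{thm:avoidance}), but pushing the threshold down to $4$ is precisely what no choice of morphism has been shown to achieve --- indeed the Rao--Rosenfeld word genuinely contains $34$ abelian squares, up to length $10$, so it is not merely a failure of the proof technique on that word but a property of the word itself. Until someone exhibits a pair $(h,g)$ for which the template elimination and the finite base-case check actually succeed, the argument you outline remains a research program, and the conjecture remains open.
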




Another conjecture stated by M\"{a}kel\"{a} was that there exists an
infinite binary word containing only $000$ and $111$ as abelian
cube factors, but this has been shown to be false in \cite{DBLP:journals/moc/RaoR16}.
However, the following modification of M\"{a}kel\"{a}'s question is still
open:

\begin{problem}
 Is it possible to construct an infinite binary word containing only a finite number of abelian cubes?
\end{problem}

Finally, Peltom\"aki and Whiteland~\cite{DBLP:journals/aam/PeltomakiW20} considered \emph{cyclic abelian avoidance}. A finite word $w$ avoids abelian $k$-powers cyclically if for each
abelian $k$-power of period $m$ occurring in the infinite word $w^\omega$, one has $m\geq |w|$. For example, let $w = 1000100$. Then both $w$ and $w^2$ avoid abelian $5$-powers. However, the
word $w^3$ has the abelian $5$-power $100 \cdot 010 \cdot 010 \cdot 001 \cdot 001$ of period 3 as a prefix. Therefore, $w$ does not avoid abelian $5$-powers cyclically. It does not avoid abelian $6$-powers cyclically either, since $w^4$
contains an abelian $6$-power of period $4$
beginning from the second letter. However, it avoids abelian $7$-powers cyclically~\cite{DBLP:journals/aam/PeltomakiW20}. Let $A(d)$ be the least integer $k$ such that for all $n$ there exists a word of length $n$
over a $d$-letter alphabet that avoids abelian $k$-powers cyclically. Similarly, let $A_\infty(d)$ be
the least integer $k$ such that there exist arbitrarily long words over a $d$-letter alphabet
that avoid abelian $k$-powers cyclically.

\begin{theorem}\cite{DBLP:journals/aam/PeltomakiW20}
One has $5 \leq A(2) \leq 8$, $3 \leq A(3) \leq 4$, $2 \leq A(4) \leq 3$, and $A(d) = 2$ for every $d \geq 5$.

Moreover, $A_\infty(2) = 4$, $A_\infty(3) = 3,$ and $A_\infty(4) = 2.$
\end{theorem}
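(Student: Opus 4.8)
The plan is to treat the eight inequalities as four matched pairs of an upper and a lower bound, exploiting throughout the trivial monotonicity $A_\infty(d)\le A(d)$ (any family witnessing that words of \emph{every} length avoid abelian $k$-powers cyclically in particular witnesses arbitrarily long ones) and the trivial floor $A(d),A_\infty(d)\ge 2$ (abelian $k$-powers being defined only for $k>1$). A convenient common tool is a reformulation in terms of prefix Parikh vectors: writing $S(j)=\PV(\Pref_j(w^\omega))$ and extending periodically by $S(j+|w|)=S(j)+\PV(w)$, an abelian $k$-power of period $m$ at position $i$ is exactly a configuration in which $S(i),S(i+m),\dots,S(i+km)$ are equally spaced collinear points of $\Z^d$. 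Thus $w$ avoids abelian $k$-powers cyclically iff no such arithmetic progression of $k+1$ prefix-Parikh vectors occurs with common gap $m<|w|$; in the binary case this collapses to a statement about the single counting function $f(j)=|\Pref_j(w^\omega)|_1$, namely that no $k+1$ of the plane points $(j,f(j))$ lie on a line at equally spaced abscissae with gap $m<|w|$.

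For the upper bounds I would exhibit explicit avoiding families. The bounds $A(d)\le k$ demand a word of \emph{every} length $n$, whereas $A_\infty(d)\le k$ only requires infinitely many lengths, so the latter are obtained from a single infinite word by reading off prefixes at a suitable (say geometrically growing) sequence of lengths, the former from a length-parametrised construction. For the small alphabets this is where known abelian-square-free material enters: for $A_\infty(4)=2$ one would adapt a Ker\"anen-type abelian-square-free word so that the cyclic wrap-around of $w^\omega$ introduces no new short-period abelian square, while for $A(d)=2$ with $d\ge 5$ the extra letters give enough slack to write down a length-$n$ word directly for each $n$. In every case correctness is checked through the collinearity reformulation: using the structure of the generating morphism, one verifies that no equally-spaced collinear run of the required length appears among the prefix Parikh vectors with gap below the word length.

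The lower bounds split sharply in difficulty. Each lower bound on $A(d)$ — for instance $A(2)\ge 5$ or $A(3)\ge 3$ — asserts only that a \emph{particular} length $n$ defeats the smaller exponent, i.e.\ that every $d$-ary word of that single length $n$ has an abelian $k$-power of period $<n$ in its $\omega$-power; this is a finite exhaustive backtracking search over $\Sigma_d^{\,n}$ and poses no conceptual difficulty. The lower bounds on $A_\infty(d)$ — namely $A_\infty(2)\ge 4$ and $A_\infty(3)\ge 3$ — are the genuine heart of the statement: they assert that \emph{beyond some length no word whatsoever} avoids the relevant abelian power cyclically, so that the cyclically-avoiding words for the smaller exponent are only finitely many.

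This last point is where I expect the real obstacle. The usual device for ``only finitely many avoiding words''—grow a prefix tree and invoke K\"onig's lemma to extract an infinite avoider—is unavailable here, because cyclic avoidance is \emph{not} prefix-closed: appending a letter both changes the length threshold $|w|$ below which periods are forbidden and can simultaneously destroy old abelian powers and create new ones. I would therefore replace the tree argument by a direct forcing statement in the spirit of Entringer--Jackson--Schatz: via a pigeonhole argument on the counting function $f$ (respectively on the $\Z^d$-valued prefix Parikh vectors), show that once $|w|$ exceeds an explicit threshold the points $(j,f(j))$ are so constrained that an equally-spaced collinear run of length $k+1$ with gap $m<|w|$ must appear. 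Combined with a bounded computer search certifying exhaustiveness below the threshold, this would pin down $A_\infty(d)$ exactly. Proving that the threshold is finite and effective — that long binary words genuinely cannot dodge short-period abelian cubes cyclically — is the step I would budget the most effort for.
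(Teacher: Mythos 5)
The survey quotes this theorem from \cite{DBLP:journals/aam/PeltomakiW20} without reproducing a proof, so your plan has to be measured against what a correct proof requires; judged that way, it has the difficulty exactly inverted, and both places where you defer the work are the places where the proof actually happens. The lower bounds you call ``the genuine heart,'' $A_\infty(2)\geq 4$ and $A_\infty(3)\geq 3$, need no forcing argument, no pigeonhole on the points $(j,f(j))$, and no threshold analysis: they follow in one line from the classical \emph{factor}-unavoidability facts recalled in Sec.~\ref{sec:avoidability} of this survey, namely that every binary word of length $10$ contains an abelian cube and every ternary word of length $8$ contains an abelian square. Indeed, an abelian $k$-power occurring as a factor of $w$ has period $m\leq |w|/k<|w|$ and of course occurs in $w^\omega$, so it by itself destroys cyclic avoidance. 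Consequently every binary word avoiding abelian cubes cyclically has length at most $9$, and every ternary word avoiding abelian squares cyclically has length at most $7$; both families are finite, which is precisely $A_\infty(2)\geq 4$ and $A_\infty(3)\geq 3$. Your (correct) observation that cyclic avoidance is not prefix-closed, so that K\"onig's lemma is unavailable, is therefore moot: no compactness argument is needed anywhere, and the Entringer--Jackson--Schatz-style program you say you would ``budget the most effort for'' is effort spent on a nonproblem.

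Conversely, the upper bounds, which your proposal treats as routine (``exhibit explicit avoiding families''), are the actual mathematical content, and the specific routes you sketch would fail as stated. Reading off prefixes of a known abelian-power-free infinite word (Dekking's word \cite{Dekking1979181} in the binary case, Ker\"anen's word \cite{Ker92} on four letters) does not give cyclic avoidance: factor-freeness of the infinite word says nothing about the abelian powers of $w^\omega$ that straddle the boundaries between consecutive copies of $w$; their periods can be as large as $|w|-1$, they are contained in no single copy of $w$, and controlling them is exactly where the difficulty lies (it is also why the theorem obtains only $5\leq A(2)\leq 8$ rather than an exact value). You acknowledge this wrap-around issue but propose no mechanism for it --- ``one verifies using the collinearity reformulation'' restates the goal rather than supplying a step. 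Likewise, for $A(d)=2$ with $d\geq 5$ a construction is needed for \emph{every} length $n$, and ``the extra letters give enough slack'' is not an argument: appending a single fresh letter to a Ker\"anen word of length $n-1$, say, still leaves possible wrap-around abelian squares of period $m$ with $n/2<m<n$ in which each half contains exactly one occurrence of the new letter, and excluding those requires genuine analysis. In sum, what survives of the proposal is the correct Parikh-prefix reformulation, the monotonicity remarks, and the finite-search plan for $A(2)\geq 5$; none of the upper bounds is proved, and the two $A_\infty$ lower bounds are established only modulo a program that is, in fact, unnecessary.
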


\subsection{Avoiding fractional abelian repetitions and other generalizations of abelian powers}

In the classical (non-abelian) sense a fractional repetition is defined as a word of the form $w^nv$, $n>0$, where $w$ is primitive and $v$ is a prefix of $w$. The exponent of the repetition is then $n+\frac{|v|}{|w|}.$  For example, the word $0010010$ has exponent $7/3$ so it is a $7/3$-power.

For a $d$-letter alphabet ($d \geq  2$), the repetition threshold is the number $RT(d)$
which separates $d$-unavoidable and $d$-avoidable repetitions. For example, the Thue--Morse word shows that $RT(2)=2$. The famous Dejean’s
conjecture dating back to 1972 \cite{DBLP:journals/jct/Dejean72} stated that $RT(3) = 7/4$, $RT(4) = 7/5$, and $RT(d) = d/(d - 1)$
for every $d>5$. The conjecture has been proved in a series of papers --- the last cases have been proved independently by Rampersad and Currie~\cite{DBLP:journals/moc/CurrieR11}, and Rao~\cite{DBLP:journals/tcs/Rao11}. 

In analogy with avoiding of fractional powers, one can wonder whether one can avoid fractional abelian powers. 

\begin{theorem}\cite{DBLP:journals/ejc/CassaigneC99} Let $\beta$ be a real number, $1<\beta <2$. There exists an infinite word over a finite alphabet which contains no factor of the form $xyz$ with $|x yz|/|x y|\geq \beta$ and where $z$ is abelian equivalent to $x$. 
\end{theorem}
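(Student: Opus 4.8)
The plan is to first strip the statement down to a purely combinatorial condition on windows, and then produce the word probabilistically. Write a forbidden factor as $xyz$ with $a := |x| = |z| \ge 1$ (the case $x = \varepsilon$ is vacuous, since then the exponent is $1 < \beta$), $b := |y| \ge 0$, and set $p := a + b$. Clearing denominators, the exponent condition $|xyz|/|xy| \ge \beta$ is exactly $b \le r_\beta\, a$ with $r_\beta := (2-\beta)/(\beta-1) > 0$. Thus a forbidden factor is a pair of \emph{disjoint} length-$a$ blocks at distance $p = a+b \in [a, (1+r_\beta)a]$ whose Parikh vectors coincide, and the hypothesis $\beta > 1$ is precisely what keeps the scale ratio $p/a$ bounded by $1/(\beta-1)$. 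Two remarks orient the construction: taking $b = 0$ shows that avoiding these factors in particular forbids abelian squares, so by \cref{avoid_global} the alphabet must have size at least $4$; and letting $\beta \to 1^+$ (so $r_\beta \to \infty$) forces already at $a = 1$ that any window of $\Theta(1/(\beta-1))$ consecutive letters be pairwise distinct, so the alphabet must grow as $\beta \to 1$. Both are consistent with the conclusion, which only asks for \emph{some} finite alphabet.

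I would then build the word by the probabilistic method over an alphabet $\Sigma_k$ with $k = k(\beta)$ to be fixed large. Take independent uniformly random letters and, for each triple $(i,a,b)$ with $a \ge 1$ and $0 \le b \le r_\beta a$, let $B_{i,a,b}$ be the bad event that the length-$a$ blocks on positions $[i+1,i+a]$ and $[i+p+1,i+p+a]$ (with $p=a+b$) are abelian equivalent. Since $p \ge a$ these blocks are disjoint, hence independent, so $\Pr[B_{i,a,b}] = q_k(a)$, the abelian collision probability of two independent uniform words of length $a$ over $\Sigma_k$. The goal is to apply the Lovász Local Lemma (asymmetric form) to conclude that with positive probability no $B_{i,a,b}$ occurs on a given finite prefix; since the bound is uniform in the prefix length, good prefixes of every length exist, and König's lemma then yields an infinite word avoiding every forbidden factor.

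Making the local lemma close requires two quantitative inputs. First, a local central limit estimate for the multinomial gives $q_k(a) \le C_k\, a^{-(k-1)/2}$, since the most likely Parikh vector of a length-$a$ word has probability $\Theta(a^{-(k-1)/2})$ and the collision probability is of the same order. Second, a counting step: $B_{i,a,b}$ reads only positions inside an interval of length $2a+b \le (2+r_\beta)a$, so the number of scale-$a$ bad events whose support contains a fixed position is $O\big((1+r_\beta)^2 a^2\big)$. Feeding these into the weighted local-lemma criterion, everything is governed by the single sum $S(k,\beta) = \sum_{a\ge 1}(1+r_\beta)^2 a^2\, q_k(a)$, which converges once $k \ge 8$ and, for each fixed $\beta$, tends to $0$ as $k \to \infty$ (for fixed $a$, two random words over a growing alphabet almost surely use distinct letters, so $q_k(a)\to 0$). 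Choosing $k(\beta)$ large enough that $S(k,\beta)$ falls below the absolute constant demanded by the lemma finishes the argument.

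The main obstacle is this last balancing act, and it is where the hypothesis $\beta > 1$ is felt: as $\beta \to 1^+$ both $r_\beta$ and the overlap count blow up (like $(\beta-1)^{-1}$ and $(\beta-1)^{-2}$), so $k(\beta)$ must be taken correspondingly large, and one must verify that the decay of $q_k(a)$ in $k$ genuinely overwhelms these factors \emph{uniformly} in $a$ — this, together with care over the local-CLT constant $C_k$, is the technical heart. An alternative I would keep in reserve is an explicit self-similar (morphic or Toeplitz) construction, guided by the parallelogram reading of the bad event: writing $\Phi(n)=\PV(w_1\cdots w_n)$, the relation $z \sim_{ab} x$ is equivalent to $\Phi(i)+\Phi(i+a+p)=\Phi(i+a)+\Phi(i+p)$, so avoidance amounts to making the Parikh walk $\Phi$ strictly convex at every scale ratio $p/a \le 1/(\beta-1)$. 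Genuine global convexity is impossible because the increments of $\Phi$ are bounded unit vectors, which is exactly why only the bounded-ratio (hence $\beta>1$) version can hold and why a hierarchical word, convex within each scale band and resetting between bands, is the natural explicit route; its verification would split into within-block and boundary-straddling configurations handled by induction on the scale.
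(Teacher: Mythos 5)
Your reduction is correct and cleanly done: a forbidden factor is exactly a pair of disjoint length-$a$ blocks with equal Parikh vectors whose gap $b$ satisfies $b\le r_\beta a$, $r_\beta=(2-\beta)/(\beta-1)$; the collision probability over a $k$-letter alphabet is indeed $q_k(a)=\Theta\bigl(a^{-(k-1)/2}\bigr)$ (in both directions), and your count of events through a fixed position is right. The genuine gap is the step you yourself flag as the ``balancing act'': the Local Lemma cannot be closed, for \emph{any} choice of $k$, and the failure is structural, not a matter of constants. The asymmetric LLL demands weights $x_B\in[0,1)$ with $\Pr[B]\le x_B\prod_{B'\sim B}(1-x_{B'})$ for \emph{every} event $B$; it is not governed by your sum $S(k,\beta)$. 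Since each product factor is at most $1$, every scale-$1$ event (two equal adjacent letters, $a'=1$, $b'=0$) is forced to have $x_1\ge \Pr[B_1]=1/k$. A scale-$a$ event contains at least $2a-1$ such scale-$1$ events inside its support, all of them neighbors in the variable-sharing dependency graph you use, so its own inequality forces $x_a\ \ge\ \Pr[B_a]\,(1-1/k)^{-(2a-1)}\ \ge\ c_k\,a^{-(k-1)/2}\,e^{(2a-1)/k}$, which exceeds $1$ once $a\gg k^2\log k$. So the LLL hypothesis is violated at large scales for every fixed $k$; enlarging the alphabet merely postpones the failing scale. The exponential-weight variant dies the same way: with $x_B=\Pr[B]e^{\mu|{\rm supp}(B)|}$ the criterion needs $\sum_{a'}(1+r_\beta a')\,q_k(a')\,e^{\mu(2+r_\beta)a'}\lesssim\mu$, and this sum diverges for every $\mu>0$ because $q_k$ decays only polynomially. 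That polynomial-versus-exponential mismatch is the root cause: an event's probability shrinks polynomially in its support length, while any LLL bookkeeping taxes it exponentially in that length through the linearly many constant-probability small events it overlaps. This is precisely why ordinary powers (probability $k^{-a}$) yield to LLL and entropy-compression arguments, while no probabilistic proof is known for abelian avoidance; the results of Ker\"anen \cite{Ker92}, Dekking \cite{Dekking1979181}, and the cited theorem \cite{DBLP:journals/ejc/CassaigneC99} all rest on explicit constructions.

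Your fallback sketch --- a hierarchical word whose Parikh walk is ``convex within each scale band'' --- points in the right direction, and is much closer in spirit to how the cited result is actually proved (a deterministic construction over an alphabet whose size grows as $\beta\to 1^+$, verified scale by scale). But as written it is a program, not a proof: no word is specified, and the key lemma (that band-wise convexity plus control of configurations straddling band boundaries excludes every forbidden pair $(a,b)$ with $b\le r_\beta a$) is neither formulated nor proved; your own observation that global convexity is impossible shows the naive invariant must be repaired, but not how. As it stands, the main route provably fails and the reserve route is only a heuristic, so the proposal does not establish the theorem.
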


This kind of factor can be regarded as a fractional abelian power of exponent $\beta$. For example, $01110$ has abelian exponent $\frac{5}{3}$ in this sense, with $x=01$, $y=1$, $z=10$. 

There are several other natural generalizations of the notion of a fractional power to the abelian case. For two Parikh vectors $\PV(u)$ and $\PV(v)$, we write $\PV(u)\subseteq \PV(v)$ if $\PV(u)$ is component-wise smaller than or equal to $\PV(v)$. A word $uv$ is called an \emph{abelian inclusion} if $\PV(u)\subseteq \PV(v)$. Consider a word of the form $w = w_1 \cdots w_m v$, where $w_1\sim_{ab} \cdots \sim_{ab} w_m$, and $\PV(v)\subseteq \PV(w_1)$ (hence $\PV(v)\subseteq \PV(w_i)$ for every $i$). A word of this form can be considered as a fractional abelian repetition of exponent $m+\frac{|v|}{|w_1|}$. 

In  \cite{DBLP:journals/ita/SamsonovS12}, three versions of the notion of fractional abelian repetition are considered: in a weak form, i.e., without additional restrictions; in a strong form, i.e., with a requirement that $\Pref_{|v|}(w_1)\sim_{ab} v$; and in a semi-strong form, i.e., with a requirement that $\mathcal{P}(v)\subseteq \bigvee_{i=1}^m \mathcal{P}(\Pref_{|v|}(w_i))$, where $\bigvee$ is the operation of taking the maximum componentwise.  The authors found lower and upper bounds for abelian repetition thresholds, some of which are conjectured to be tight. 

In \cite{ DBLP:journals/jalc/AvgustinovichF02}, the authors considered avoiding abelian inclusions. For two words $u$ and $v$, we say that $v$ {\emph{majorizes}} $u$ if for each letter $a\in \Sigma$, $|u|_a\leq |v|_a$, i.e., if $\PV(u)\subseteq \PV(v)$.
Let us fix a function $f(l): \mathbb{N} \to \mathbb{R}$ and call a word $w = uv$ an $f(l)$-inclusion if
$v$ majorizes $u$ and $|v| \leq |u|+f(|u|)$. As usual, we say that a word avoids $f(l)$-inclusions if none of its factors is an $f(l)$-inclusion.

\begin{theorem}{\cite{ DBLP:journals/jalc/AvgustinovichF02}} For every arbitrarily small
constant $c>0$, $cl$-inclusions are unavoidable.\end{theorem}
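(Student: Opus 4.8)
The plan is to first reduce to an infinite word with well-defined, strictly positive letter frequencies, and then to exhibit an explicit $cl$-inclusion by playing a linear ``drift'' against a sublinear error term in the Parikh vectors of prefixes. For a word $y$ it is convenient to record the prefix Parikh vectors $P_n=\PV(\Pref_n(y))$, so that a factor occupying positions $(i,j]$ has Parikh vector $P_j-P_i$. If $u$ and $v$ are the consecutive blocks occupying $(i,j]$ and $(j,k]$, then the statement that $v$ majorizes $u$ is exactly that every coordinate of $P_k+P_i-2P_j=\PV(v)-\PV(u)$ is nonnegative, while $|v|\le(1+c)|u|$ reads $k-j\le(1+c)(j-i)$. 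The goal is thus to produce indices $i<j<k$ enjoying these two properties.

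The reduction goes through the orbit closure $\Omega_x$, which by the excerpt satisfies $\Fact(\Omega_x)=\Fact(x)$. By the Krylov--Bogolyubov theorem together with ergodic decomposition, the subshift $\Omega_x$ carries an ergodic shift-invariant probability measure, and passing to its topological support I may assume it is a fully supported ergodic measure $\nu$, so that $\nu([a])>0$ for every letter $a$ occurring in the support. By Birkhoff's pointwise ergodic theorem there is a $\nu$-generic point $y\in\Omega_x$: each letter $a$ then has frequency $\mu_a=\nu([a])$, and any letter $b$ with $\nu([b])=0$ cannot occur in $y$ (an occurrence would force $[b]$ to meet the support). Hence, after discarding unused letters, $y$ is an infinite word whose frequency vector $\mu=(\mu_a)_a$ satisfies $\mu_a\ge\mu_{\min}>0$ for every $a$, and crucially $\Fact(y)\subseteq\Fact(\Omega_x)=\Fact(x)$, so any $cl$-inclusion found in $y$ is automatically a factor of $x$.

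Now I work inside $y$. Writing $P_n=n\mu+r_n$, Birkhoff's theorem gives that each coordinate of $r_n$ is $o(n)$. I take $u=\Pref_m(y)$ and let $v$ be the following block of length $\lceil(1+c)m\rceil$, that is $i=0$, $j=m$, $k=m+\lceil(1+c)m\rceil$, so that $P_i=0$ and
\[
P_k+P_i-2P_j \;=\; P_k-2P_m \;=\; c\,m\,\mu+\bigl(r_k-2r_m\bigr)+O(1),
\]
where the $O(1)$ absorbs the rounding in $k$. Since $k=\Theta(m)$, the error $\bigl(r_k-2r_m\bigr)+O(1)$ is $o(m)$ in each coordinate, whereas the drift $c\,m\,\mu$ has every coordinate at least $c\,\mu_{\min}\,m=\Theta(m)$. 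Therefore, for all sufficiently large $m$, every coordinate of $P_k+P_i-2P_j$ is positive, i.e.\ $v$ majorizes $u$; and $|v|=\lceil(1+c)m\rceil\le|u|+c|u|$ up to the harmless rounding term, so $uv$ is the desired $cl$-inclusion and, by the previous paragraph, a factor of $x$.

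The only genuine obstacle is the very first step: for an arbitrary word the letter frequencies need neither exist nor be positive, and a letter of vanishing frequency receives no help from the drift term and could perfectly well be more frequent in $u$ than in $v$. The passage to a generic point of a fully supported ergodic measure is exactly what removes this difficulty, since it makes every surviving frequency strictly positive while preserving factors. I would also emphasize why the \emph{linear} slack is indispensable: the drift is $\Theta(m)$ and the error is $o(m)$, so any positive slope $c$ suffices; by contrast a sublinear bound $f(l)=o(l)$ would render the drift $o(m)$ as well and could not in general dominate the fluctuations $r_n$. This is consistent with the threshold phenomenon behind the statement, in which the constant $c$ may be arbitrarily small but its strict positivity is essential.
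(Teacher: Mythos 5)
The survey states this theorem without proof, giving only the citation to Avgustinovich and Frid, so there is no in-paper argument to compare yours against; judged on its own merits, your proof is correct after one small repair. The reduction is sound: $\Omega_x$ preserves factors, Krylov--Bogolyubov plus ergodic decomposition produce an ergodic measure $\nu$ on $\Omega_x$, its support is closed and shift-invariant, and a Birkhoff-generic point $y$ in that support is an infinite word in which every occurring letter has frequency $\nu([a])>0$ and whose factors are all factors of $x$. Then for consecutive blocks $u,v$ of $y$ one has $\PV(v)-\PV(u)=(|v|-|u|)\mu+o(m)$ coordinate-wise, so every coordinate is positive as soon as $|v|-|u|=\Theta(m)$, which is exactly what the linear slack provides. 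The needed repair is the rounding: with $|v|=\lceil(1+c)m\rceil$ the length condition $|v|\le |u|+c|u|$ can fail by a fractional amount, and the definition of an $f(l)$-inclusion requires it exactly, so the rounding is not ``harmless'' as written. Take $|v|=\lfloor(1+c)m\rfloor$ instead: for $m>1/c$ the drift coefficient $|v|-|u|\ge cm-1$ is still positive and of order $m$, so the estimate goes through unchanged (alternatively, run the whole argument with $c/2$ in place of $c$). This is a one-line fix rather than a gap. As for the route: invoking invariant measures and the pointwise ergodic theorem is heavier machinery than the elementary combinatorial style of the cited original, but it buys a short, conceptually transparent argument, and your closing observation --- that the drift dominates the error only because $c>0$ is a fixed constant, so no sublinear bound $f(l)=o(l)$ could be handled this way --- correctly isolates where strict positivity of $c$ enters.
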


\begin{theorem}{\cite{ DBLP:journals/jalc/AvgustinovichF02}} For every arbitrarily large
constant $n$, there exists a word on $4(n + 1)$
letters avoiding $n$-inclusions.\end{theorem}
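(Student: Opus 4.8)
The plan is to realize the word as a letter-to-block coding of an abelian-square-free word over four letters, whose existence is exactly the first item of \cref{avoid_global} (Ker\"anen's theorem). Write $N=n+1$ and fix an infinite abelian-square-free word $s=s_0s_1s_2\cdots$ over $\{0,1,2,3\}$. Over the alphabet $\Sigma=\{0,1,2,3\}\times\{0,1,\dots,n\}$, which has exactly $4(n+1)$ letters, define the morphism $\phi(a)=(a,0)(a,1)\cdots(a,n)$ and set $w=\phi(s)$. The key structural feature is that the letter in position $j=tN+i$ of $w$ (with $0\le i<N$) is $(s_t,i)$; in particular, for each fixed residue $i$, reading the first coordinates of $w$ along the positions $j\equiv i\pmod N$ reproduces $s$ exactly. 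I would then show that $w$ avoids $n$-inclusions.

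Suppose toward a contradiction that some factor $uv$ of $w$ is an $n$-inclusion, with $u$ occupying an interval $U$ of length $m\ge 1$ and $v$ the adjacent interval $V$ of length $m+k$, where $0\le k\le n$ (majorization gives $|v|\ge|u|$, and the inclusion condition gives $|v|\le|u|+n$). For each residue $i$ let $U_i$ (resp.\ $V_i$) be the word of first coordinates read off the positions of $U$ (resp.\ $V$) that are $\equiv i\pmod N$, and put $c_U(i)=|U_i|$, $c_V(i)=|V_i|$. Since a letter of $w$ records both its first coordinate and its residue, the majorization $\PV(u)\subseteq\PV(v)$ is equivalent to the $N$ simultaneous four-letter majorizations $\PV(U_i)\subseteq\PV(V_i)$; in particular $c_V(i)\ge c_U(i)$ for every $i$, and summing yields $\sum_i\bigl(c_V(i)-c_U(i)\bigr)=k\le N-1$.

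The crux is to produce a residue $i^*$ for which $U_{i^*}$ and $V_{i^*}$ are both nonempty and have the same Parikh vector: then, because $U$ and $V$ are adjacent intervals, the pieces $U_{i^*}$ and $V_{i^*}$ are consecutive in the residue-$i^*$ subsequence, so $U_{i^*}V_{i^*}$ is a genuine factor of $s$, and it is a nonempty abelian square, contradicting the abelian-square-freeness of $s$. Let $R^+=\{i:c_U(i)\ge 1\}$ be the residues met by $U$ and $S=\{i:c_V(i)>c_U(i)\}$ the ``strict'' residues; each strict residue contributes at least $1$ to the sum, so $|S|\le k\le N-1$. If $m\ge N$ then $R^+$ is all of $\Z/N\Z$, whence $|R^+|=N>|S|$ and some $i^*\in R^+\setminus S$ works. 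If $m<N$ then each residue of $U$ has $c_U(i)=1$, so a strict residue lying in $R^+$ must have $c_V(i)\ge 2$; but an interval of length $m+k\le 2N-2$ meets each residue class at most twice and hits at most $\max(0,m+k-N)$ classes twice, whence $|R^+\cap S|\le\max(0,m+k-N)<m=|R^+|$ precisely because $k\le N-1$ and $m\ge 1$. In either case $R^+\setminus S\neq\emptyset$, and any $i^*\in R^+\setminus S$ satisfies $c_V(i^*)=c_U(i^*)\ge 1$, forcing $\PV(U_{i^*})=\PV(V_{i^*})$ and completing the contradiction.

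The routine verifications I am suppressing are the bookkeeping that identifies $|f|_{(c,i)}$ over a factor $f$ with the number of $c$'s in its residue-$i$ slice, and the elementary count that an interval of length $L$ with $N\le L<2N$ meets exactly $L-N$ residue classes twice. The main obstacle is the final case of the pigeonhole: when $u$ is short and $k$ is comparatively large ($m\le k$), the naive bound $|S|\le k$ is too weak, and one must instead exploit that inside the short interval $V$ only $m+k-N$ residue classes can recur. This is the step where the hypothesis $k\le n=N-1$, and hence the alphabet size $4(n+1)$, enters essentially.
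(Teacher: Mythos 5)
Your proof is correct, and I can verify every step: the block coding $\phi(a)=(a,0)(a,1)\cdots(a,n)$ of Ker\"anen's abelian-square-free word (item 1 of \cref{avoid_global}) does make each residue slice of a factor a genuine factor of $s$, the majorization over the $4(n+1)$-letter alphabet does decompose into the $N$ slice-wise majorizations, and your two-case pigeonhole (using $|S|\le k\le N-1$ when $m\ge N$, and the count of doubly-hit residue classes when $m<N$) correctly produces a residue whose slices form a nonempty abelian square in $s$. Note that the survey itself states this theorem by citation only, with no proof to compare against; your construction — tagging each letter of an abelian-square-free $4$-letter word with a counter running through $n+1$ values — is exactly the kind of product construction used in the cited paper of Avgustinovich and Frid, so you have in effect reconstructed the intended argument rather than found a divergent one.
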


\subsection{Abelian pattern avoidance}


 For two words $P$ and $w$, we say that
$w$ avoids the pattern $P$ if there is no non-erasing morphism $h$ such that $h(P)$ is a factor of $w$,
or equivalently if there is no factor $w_1w_2 \cdots w_{|P|}$
in $w$ such that for every $i$ and $j$ $P_i = P_j$ implies $w_i = w_j$.

Abelian pattern avoidance in defined similarly to usual pattern avoidance. Let $P = P_1P_2 \cdots P_n$ be a pattern, where the $P_i$ are letters. Then we say that a word $w \in \Sigma_d^*$
\emph{realizes $P$ in the abelian sense} if there exist $w_1, \dots, w_n \in \Sigma_d^+$  such that $w = w_1w_2 \cdots w_n$ and
for every $i$ and $j$ $P_i = P_j$ implies $w_i \sim_{ab} w_j$.  

We say that a pattern is \emph{$d$-avoidable} (resp., \emph{$d$-abelian avoidable}) if it is avoidable (resp., abelian avoidable) over $\Sigma_d$.


Pattern avoidance in the usual sense is a well-studied topic. There is an explicit characterization of patterns that are avoidable in the usual sense (Bean, Ehrenfeucht, McNulty  \cite {BEM79}, and independently Zimin~\cite{Zimin84}); see also Chapter 3 in  \cite{LothaireAlg}. However, the problem of finding the avoidability index of a pattern, i.e., the minimal size of the alphabet for which it is avoidable, is still unsolved, 
and not as much is known about  avoidability of abelian patterns.
For example, it has been shown in \cite{DBLP:journals/tcs/CurrieV08} that all long enough binary abelian patterns are  2-abelian avoidable, and the bound has been improved in \cite{DBLP:conf/mfcs/Rosenfeld16}:

\begin{theorem}\cite{DBLP:conf/mfcs/Rosenfeld16} Binary patterns of length greater than 14 are  2-abelian avoidable.\end{theorem}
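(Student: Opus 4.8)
The plan is to reduce the statement to the $2$-abelian avoidability of a \emph{finite} family of short patterns, and then to establish avoidability for each member of that family. The reduction rests on the following monotonicity under factors: if a pattern $Q$ occurs as a factor of a pattern $P$ (as words over the variable alphabet), then every abelian realization of $P$ contains, as a factor, an abelian realization of $Q$. Indeed, writing $P=\alpha Q\beta$ and taking any factorization $w=a\,q\,b$ that witnesses an abelian realization of $P$, the subword $q$ must abelian-realize $Q$: two positions carrying the same variable in $Q$ carry the same variable in $P$, and hence receive abelian-equivalent blocks. Consequently, an infinite binary word avoiding abelian realizations of $Q$ also avoids those of $P$. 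So it suffices to produce a finite set $\mathcal F$ of $2$-abelian-avoidable binary patterns with the property that \emph{every} binary pattern of length greater than $14$ has some member of $\mathcal F$ as a factor.

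One-variable patterns $x^n$ with $n>14$ are handled immediately: by \cref{avoid_global}(3) there is an infinite binary word with no abelian $4$-power, and since an abelian $n$-power with $n\ge 4$ begins with an abelian $4$-power, this word also avoids $x^n$. Thus $xxxx$ already belongs to $\mathcal F$, and through the factor monotonicity it absorbs every binary pattern containing a run $x^4$ or $y^4$. The patterns still to be caught are the two-variable ones in which every maximal run of a single variable has length at most $3$; these form a regular language over $\{x,y\}$, and the remaining task becomes the purely finite combinatorial check that every such word of length greater than $14$ contains a factor lying in a suitable finite list of short two-variable patterns. Selecting this list correctly is where the threshold $14$ is pinned down, and the check itself is finite, carried out by a search over the relevant de Bruijn-type automaton.

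The crux is to prove that each short two-variable member $Q\in\mathcal F$ is itself $2$-abelian avoidable, and this is the main obstacle: no earlier result applies, and the abelian constraint (blocks need only agree up to Parikh vector) makes realizations far more plentiful than in the ordinary case. The plan here is a counting argument. Let $S_n$ denote the set of binary words of length $n$ having no factor that abelian-realizes $Q$, and estimate how many words in $S_n$ extend to length $n+1$ while remaining in $S_{n+1}$. Appending a letter can create a new abelian realization of $Q$ ending at the new position; one bounds the number of such \emph{dangerous} extensions by summing over the admissible block lengths of the realization and over the earlier blocks with which the final block must be abelian-balanced. If this bound stays below the number of available extensions, a potential-function and induction argument yields $|S_n|\ge C^n$ for some $C>1$, so $S_n$ is nonempty for all $n$ and $Q$ is avoidable. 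The delicate point is keeping the dangerous-extension count small enough; this is precisely where the length of $Q$ helps, since a longer pattern imposes more equalities and hence admits fewer abelian realizations, and balancing these estimates across the few members of $\mathcal F$ is the technical heart of the argument.
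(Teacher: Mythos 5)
Your reduction step is fine and in fact matches the structure of the cited proof (the survey states this theorem without proof, so the comparison is with Rosenfeld's argument): the factor-monotonicity of abelian realizations is correct, $AAAA$ is disposed of by Dekking's word from \cref{avoid_global}, you read ``2-abelian avoidable'' correctly as abelian avoidability over $\Sigma_2$, and the problem genuinely does reduce to showing that each member of a finite list of short two-variable patterns is abelian avoidable over the binary alphabet. The gap is in what you yourself call the crux, and it is not a ``delicate point'' but a structural failure. In your accounting, the dangerous extensions ending in a realization whose final block has length $q$ are bounded by (number of admissible words before the final block) $\times$ (number of fillings of the final block with the prescribed Parikh vector). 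The second factor is the size of a Parikh class, which is $\Theta(2^q/\sqrt{q})$: an abelian-equality constraint is only \emph{polynomially} restrictive, unlike an equality constraint in ordinary pattern avoidance. Meanwhile, the language $S_n$ avoids in particular the letter-realizations of $Q$ (all blocks of length $1$), i.e.\ it excludes some fixed finite factor, so its growth rate $C$ satisfies $C<2$ strictly. Hence the terms of your sum are of order $C^{n+1-q}\cdot 2^q/\sqrt{q}=C^{n+1}(2/C)^q/\sqrt{q}$, which \emph{increase} exponentially in $q$; the sum is dominated by $q$ of order $n$ and is roughly $2^n/\sqrt{n}$, exponentially larger than $|S_n|\approx C^n$ and larger even than the trivial bound $2|S_n|$. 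No potential function or induction can be run on an estimate that exceeds the quantity it must undercut by an exponential factor. This is precisely why every known result in abelian avoidance (Dekking, Ker\"anen, Rao--Rosenfeld, and the theorem at hand) is proved by explicit construction rather than by counting or entropy-compression arguments.

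What the cited proof does instead, for each pattern in the finite list, is exhibit a concrete candidate word --- the image under a suitable coding of a fixed point of a morphism $h$ --- and verify abelian avoidance by the template method of Rao and Rosenfeld: if the word contained an abelian realization of $Q$, the Parikh vectors of its blocks would satisfy difference constraints that pull back under the incidence matrix of $h$, so every realized ``template'' is a descendant of an ancestor template from a finite computable set, and a (computer-assisted) check that none of these ancestors is realized in short prefixes rules them all out. So to repair your plan you would keep step 1 as is, but replace the counting in step 2 by a construction-plus-verification argument of this algebraic kind; as written, step 2 cannot be made to work over a binary alphabet.
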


The best known lower bound is 7 \cite{DBLP:conf/mfcs/Rosenfeld16}.
A similar fact has been proved for avoidance over a three-letter alphabet:

\begin{theorem}\cite{DBLP:conf/mfcs/Rosenfeld16} Binary patterns of length greater than 8 are 3-abelian avoidable.\end{theorem}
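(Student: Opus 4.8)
The plan is to combine an elementary \emph{factor-reduction} principle with the certification of a finite base set of patterns, the latter via a counting (growth-rate) argument. The reduction principle is the following observation: if $u$ is a factor of a binary pattern $P$, say $P=xuy$, and $u$ is 3-abelian avoidable, then so is $P$. Indeed, if some $w=w_1\cdots w_{|P|}$ realized $P$ in the abelian sense, then the corresponding consecutive block of factors would realize $u$ in the abelian sense, so any infinite ternary word avoiding $u$ automatically avoids $P$. Consequently it suffices to prove that \emph{every} binary pattern of length exactly $9$ is 3-abelian avoidable: a pattern of length greater than $9$ contains a length-$9$ factor, which by the reduction is avoidable, hence the whole pattern is avoidable. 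This collapses the infinite family into a finite check. Throughout I would also exploit the symmetry $P\mapsto\bar P$ exchanging the two pattern variables to halve the case analysis.

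First I would organize this finite check downward, so as to minimize the number of analytic certifications needed. The pattern $000$ realized in the abelian sense is exactly an abelian cube $w_1w_2w_3$ with $w_1\sim_{ab}w_2\sim_{ab}w_3$; by \cref{avoid_global} there is an infinite ternary word with no abelian cube, so $000$ (and by symmetry $111$) is 3-abelian avoidable. Hence every length-$9$ pattern containing $000$ or $111$ as a factor is already handled, and only the cube-free binary patterns remain. For those I would enumerate, up to complementation, the finitely many cube-free binary words of length $9$ and determine a small base set $L$ of short 3-abelian-avoidable patterns that covers all of them as factors, so that each remaining length-$9$ pattern either contains an element of $L$ or is itself placed in $L$ and certified directly. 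Selecting an economical $L$ and verifying the factor-covering is a purely finite combinatorial task, best carried out by computer.

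The analytic ingredient is to certify that each base pattern $P\in L$ is genuinely 3-abelian avoidable. For this I would use the counting method: let $N(n)$ be the number of words in $\Sigma_3^n$ that avoid $P$ in the abelian sense, and bound how many one-letter extensions of such a word create an abelian occurrence of $P$ as a suffix. The goal is an inductive inequality of the shape $N(n+1)\geq 3N(n)-R(n)$, where $R(n)$ is the number of ``bad'' extensions; estimating $R(n)$ by summing over the admissible window lengths and factorizations of a completing occurrence, and re-expressing these in terms of earlier values of $N$, one aims to conclude $N(n)\geq\beta^n$ for some $\beta>1$. An infinite ternary word avoiding $P$ then follows by a compactness (König's lemma) argument on the tree of avoiding words.

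The hard part is making this growth estimate work under \emph{abelian} rather than exact matching. Because $\sim_{ab}$ constrains only Parikh vectors and not the order of letters, a single appended letter can simultaneously complete many candidate realizations of $P$, so one must count, for each window length, how many factorizations of the suffix have their Parikh-vector constraints \emph{newly} satisfied, and show that the total, summed over all window lengths, is a small enough fraction of $N(n)$ for the recurrence to remain supercritical ($\beta>1$). Pushing this bound far enough to reach the threshold $9$, rather than a weaker value, is precisely where the estimate is delicate and where Rosenfeld's refinement improves on the cruder bound behind the length-$>14$ result over two letters; in practice the supercriticality inequality is verified pattern-by-pattern for the finitely many $P\in L$ with computer assistance, while the sharpness of the threshold (the existence of an unavoidable binary pattern of length $8$) is a separate matter not needed for the positive statement proved here.
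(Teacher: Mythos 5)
The survey states this theorem without proof --- it is a citation of \cite{DBLP:conf/mfcs/Rosenfeld16} --- so your proposal has to be judged against the proof in that reference. Your outer framework is correct and coincides with Rosenfeld's: abelian realizability is inherited by factors of a pattern (your reduction principle is right), so it suffices to treat the finitely many binary patterns of length exactly $9$; those containing $AAA$ or $BBB$ are disposed of by Dekking's abelian-cube-free ternary word (Theorem~\ref{avoid_global}); and the remaining cube-free patterns are to be covered, up to exchanging the two variables, by a finite base set of patterns certified to be 3-abelian avoidable. All of that matches the structure of the cited proof.

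The genuine gap is your certification step. You propose the recurrence $N(n+1)\geq 3N(n)-R(n)$ and defer its supercriticality to ``computer assistance,'' but this is not a finite check --- it is an inequality that must hold for all $n$ --- and there is a concrete reason to believe it fails in the form you set it up. In deletion-based counting for ordinary powers, the deleted block is an exact copy of surviving material, so each bad word maps injectively to a shorter avoiding word; that is what makes $R(n)\leq\sum_{\ell}N(n+1-\ell)$ usable. In the abelian setting the completing block is constrained only through its Parikh vector, so with the prefix fixed it can still be chosen in roughly $3^{\ell}/\ell$ ways ($\ell$ its length), and your bound becomes $R(n)\leq\sum_{\ell\geq 1}N(n+1-\ell)\cdot c\,3^{\ell}/\ell$. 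With $N(n)\asymp\beta^{n}$ the relevant series is $\sum_{\ell}(3/\beta)^{\ell}/\ell$, which diverges for every $\beta\leq 3$, and extra polynomial factors coming from Parikh constraints inside the prefix cannot compensate the exponential factor $(3/\beta)^{\ell}$ once $\beta<3$. So the inequality you want to verify is most likely simply false for every admissible parameter choice; this anagram multiplicity is exactly why known proofs of abelian avoidability, including the one cited here, do not use counting. Rosenfeld's actual certification is entirely different: each base pattern is shown to be avoided by the fixed point of an explicit substitution, via a decision algorithm on ``templates'' that propagates Parikh-vector constraints through the substitution matrix --- the generalization of the Currie--Rampersad decidability method that this survey itself mentions in Section~\ref{sec:avoidability}. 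To complete your proof you would need to replace the counting step by such a morphic-word certification, or by a genuinely new idea that controls the anagram multiplicity.
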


It is easy to see that all binary patterns except for short ones ($A$, $AB$ and $ABA$, up to renaming letters) are avoidable over $4$ letters. This follows from the fact that abelian squares are avoidable over $4$ letters, and all other binary patterns must contain a square.

In \cite{CurrieLinek2001}, the authors classify ternary patterns which are abelian avoidable. As in the ordinary case, the problem of determining whether a given pattern is avoidable in the abelian sense over an alphabet of a given size is yet unsolved. Moreover, no algorithm is known, even if we do not restrict the size of the alphabet, although in the ordinary sense the solution is given by Zimin algorithm~\cite{Zimin84}.


Since words avoiding patterns in the abelian or in the usual sense are often constructed as fixed points of substitutions, it is reasonable to consider the following decision problem: Given a substitution $h$ with an infinite fixed point $w$ and an integer $k\geq 2$, determine if $w$ is (abelian) $k$-power free. The decidability of this problem for usual powers has been studied in several papers and proved in general by Mignosi and S\'e\'ebold~\cite{DBLP:conf/icalp/MignosiS93}. Currie and Rampersad showed that the problem is also decidable for abelian powers in the case of morphisms satisfying certain conditions \cite{CURRIE2012942}. The result has been further generalized in \cite{DBLP:journals/siamdm/RaoR18} and \cite{DBLP:conf/mfcs/Rosenfeld16} for wide classes of patterns and other types of repetitions.

The related problem of determining if a morphism is $k$-power free (i.e., maps $k$-power free words to $k$-power free words) has also been examined previously. This is not quite the same question as the one posed above, since it is possible for a morphism to generate a $k$-power free word without being $k$-power free. Carpi gave sufficient conditions for a morphism to preserve abelian $k$-power freeness, which is
conjectured to be a characterization \cite{DBLP:journals/ijac/Carpi93}.


\section{Abelian periods and borders}\label{sec:borders}

The notion of a period can be naturally generalized to the abelian case, and many classical results on periodicity are generalized to the abelian case. However, in some cases the problem becomes harder (or easier!), and sometimes there is no clear generalization.

There are several possible ways to define an abelian period of a word: either we can require a period to start from the very beginning of the word, or we can admit a preperiod. Depending on the question, one or another definition is more natural. In this section we make a survey of abelian versions of some classical results on abelian periods, such as the Fine and Wilf lemma, primitive words, the Critical Factorization theorem and some others.

\subsection{Abelian versions of classical periodicity theorems}

In this subsection we discuss how classical periodicity theorems (the Fine and Wilf periodicity lemma and the Critical Factorization theorem) can be generalized to the abelian setting.

Constantinescu and Ilie~\cite{CI2006} introduced the following generalization  of the notion of a period of a finite word to the abelian case. Recall that for a nonempty word $u$ over a fixed ordered alphabet we let $\PV(u)$ denote its Parikh vector.  We let $|\PV(u)|$ denote the norm of $\PV(u)$, that is, the sum of its components. We further write $\PV(u)\subset \PV(v)$ if $\PV(u)$ is component-wise smaller than or equal to $\PV(v)$ and $|\PV(u)|<|\PV(v)|$.


%

\begin{definition}\label{def:abper}
A word $w$ has an \emph{abelian period} $p$, with preperiod $h$, if
$w=u_0u_1 \cdots u_{m-1}u_m$ for some words $u_0, \dots , u_m$ such that:

\begin{itemize}
 \item $\PV(u_{0})\subset \PV(u_{1})=\cdots =\PV(u_{m-1})\supset \PV(u_{m})$,
 \item $|\PV(u_{0})|=h$, $|\PV(u_{1})|=p$.
\end{itemize}

\end{definition}

The words $u_0$ and $u_m$ are called resp.\ the \emph{head} and the
 \emph{tail} of the abelian period.
 Notice that the length $t=|u_m|$ of the tail is uniquely determined
 by $h$, $p$ and $|w|$, namely $t=(|w|-h) \bmod p$.


The following lemma gives an upper bound on the number of distinct
pairs $(p,h)$ of abelian periods with preperiods
 of a word:

\begin{lemma}
\label{lemma-max} A word of length $n$ can have $\Theta(n^2)$ different pairs
$(p,h)$ of abelian periods with preperiods.
\end{lemma}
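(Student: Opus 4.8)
The claim is that a word of length $n$ can have $\Theta(n^2)$ different pairs $(p,h)$ of abelian periods with preperiods. To establish $\Theta(n^2)$, I need both an upper bound $O(n^2)$ and a matching lower bound $\Omega(n^2)$.

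The plan is to prove the two bounds separately. The upper bound $O(n^2)$ is essentially trivial: any abelian period $p$ and preperiod $h$ must satisfy $1 \le p \le n$ and $0 \le h < p \le n$ (since the head $u_0$ satisfies $\PV(u_0) \subset \PV(u_1)$, forcing $h < p$), so there are at most $n$ choices for $p$ and at most $n$ choices for $h$, giving at most $n^2$ pairs. More carefully, the number of valid pairs $(p,h)$ with $0 \le h < p \le n$ is $\binom{n+1}{2} = O(n^2)$, so no word can have more than $O(n^2)$ such pairs. This half is purely a counting argument and requires no construction.

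The harder and more interesting half is the lower bound: I must exhibit a family of words, one of each length $n$ (or at least infinitely many lengths), that actually achieve $\Omega(n^2)$ distinct pairs. First I would look for a word that is highly uniform in its Parikh structure so that many choices of $(p,h)$ simultaneously satisfy Definition~\ref{def:abper}. The natural candidate is a power of a single letter, say $w = 0^n$. For such a word every factorization into blocks has the obvious property that all interior blocks of a fixed length share the same Parikh vector, and the conditions $\PV(u_0) \subset \PV(u_1)$ and $\PV(u_m) \subset \PV(u_{m-1})$ reduce to simple length inequalities $h < p$ and $t < p$. The main thing to verify is that for $w = 0^n$, any pair $(p,h)$ with $0 \le h < p \le n$ that also leaves a strictly shorter tail (automatically satisfied since $t = (n-h) \bmod p < p$) yields a genuine abelian period. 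Counting the admissible pairs then gives on the order of $n^2/2$ pairs, establishing $\Omega(n^2)$.

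The main obstacle I anticipate is the boundary bookkeeping around the head and tail conditions: Definition~\ref{def:abper} requires $\PV(u_0) \subset \PV(u_1)$ (strict component-wise inequality forcing $h < p$) and $\PV(u_m) \subset \PV(u_{m-1})$, and I must ensure there are at least two full interior blocks (so that $m \ge 2$ and the chain $\PV(u_1) = \cdots = \PV(u_{m-1})$ is nonvacuous) — or else handle degenerate factorizations correctly. For $w=0^n$ this is clean because unary words make every Parikh comparison a length comparison, but I would need to restrict to pairs where enough length remains after the head to fit the required interior blocks and the tail, e.g. requiring $h + 2p \le n$ or similar. Even after discarding a linear-in-$n$ band of boundary cases, a positive fraction of the $\binom{n+1}{2}$ pairs survive, so the count remains $\Theta(n^2)$. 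I would then conclude by combining the $O(n^2)$ upper bound with this $\Omega(n^2)$ construction.
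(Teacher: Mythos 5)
Your proposal is correct and follows essentially the same route as the paper: the paper's proof also exhibits a maximally uniform periodic word, namely $(12\cdots d)^{n/d}$, and counts the admissible pairs $(p,h)$ with $p\equiv 0 \bmod d$ and $0\leq h\leq\min(p-1,n-p)$; your unary word $0^n$ is just the $d=1$ instance of that family, and your boundary bookkeeping matches the paper's constraint $h\leq\min(p-1,n-p)$. The only difference is that you also spell out the trivial $O(n^2)$ upper bound, which the paper leaves implicit.
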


\begin{proof}
For every $d$, the word $w=(12\cdots d)^{n/d}$ has abelian period
 $p$ with preperiod $h$ for any $p\equiv 0 \bmod d$ and every $h$ such that
 $0 \leqslant h \leqslant \min(p-1,n-p)$. Therefore, $w$ has $\Theta(n^2)$ different pairs $(p,h)$ of the lengths of abelian periods with prepriods.
\end{proof}

Often, we are only interested in the integer $p$ and not in the
length $h$ of the head.

Let us recall the following classical result dating back to 1965, known as the Periodicity Lemma or Fine and Wilf's Lemma.

\begin{lemma}[\cite{FW65}]\label{FW}
 Let $w$ be a word. If $p$ and $q$ are periods of $w$ and $|w|\geq p+q-\gcd(p,q)$, then $\gcd(p,q)$ is a period of $w$.
\end{lemma}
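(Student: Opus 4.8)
The plan is to prove the contrapositive-free statement directly, by showing that $g:=\gcd(p,q)$ is a period, i.e.\ that $w_i=w_j$ whenever $i\equiv j \pmod g$. Writing $n=|w|$, I would model the situation as a graph $G$ on the position set $\{1,\dots,n\}$, placing an edge between $i$ and $i+p$ (resp.\ between $i$ and $i+q$) whenever both endpoints lie in $\{1,\dots,n\}$. Because $p$ and $q$ are periods of $w$, the two endpoints of every edge carry the same letter, so $w$ is constant on each connected component of $G$. Hence the whole statement reduces to the purely combinatorial claim that, under the hypothesis $n\ge p+q-g$, any two positions lying in the same residue class modulo $g$ belong to the same connected component of $G$ (equivalently, $G$ has exactly $g$ components, the residue classes mod $g$).

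To establish this connectivity I would argue by induction on $p+q$, following the Euclidean algorithm. Assume without loss of generality $p\ge q$; if $p=q$ then $g=p$ and there is nothing to prove. When $p>q$, I would reduce the pair $(p,q)$ to $(p-q,q)$: one checks that $\gcd(p-q,q)=g$ and that the length bound transfers, since $n\ge p+q-g\ge (p-q)+q-\gcd(p-q,q)$. The content of this reduction is to show that each ``short step'' $i\leftrightarrow i+(p-q)$ is already realized by a path in $G$. For positions with $i\le n-p$ this is immediate: travel $i\to i+p\to i+p-q$ using a $p$-edge then a $q$-edge. Symmetrically, for positions with $i\ge q+1$ one descends, $i\to i-q\to i-q+p$, using a $q$-edge then a $p$-edge. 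Together these cover all $i$ except those satisfying simultaneously $i> n-p$ and $i\le q$.

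The \emph{main obstacle} is exactly this boundary window $n-p+1\le i\le q$. The hypotheses squeeze it to width $p+q-n\le g$, so it contains at most $g$ positions, and here the bound $n\ge p+q-g$ is tight; no slack is available, as witnessed by the classical Fine--Wilf extremal words of length $p+q-g-1$ that admit periods $p$ and $q$ but not $g$. To dispatch these positions I would use Bézout: writing $g=\alpha p-\beta q$ with suitably bounded nonnegative $\alpha,\beta$, one builds an explicit alternating walk (repeatedly adding $p$ and subtracting $q$) from such a boundary position to its class-mate, and the delicate point is to verify that every partial position of this walk stays inside $\{1,\dots,n\}$ — which is precisely what the inequality $n\ge p+q-g$ guarantees. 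I expect this control of the partial sums of the Bézout walk to be the only genuinely technical step; everything else is routine chaining of the two periods.
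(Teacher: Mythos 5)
The survey states Lemma~\ref{FW} without proof (it is quoted from \cite{FW65}), so your argument has to stand on its own, and it does not. The sound parts: the reduction to graph connectivity is correct (letters are constant on components, and every edge preserves the residue mod $g$), the Euclid-style induction $(p,q)\to(p-q,q)$ with the length check $n\ge p+q-g\ge(p-q)+q-g$ is fine, and so are the two two-step realizations of a $(p-q)$-edge and the identification of the uncovered window $n-p+1\le i\le q$ of width $p+q-n\le g$. The genuine gap is the treatment of that window, which is not a technicality but the entire content of the theorem. Concretely: (a) the walk you describe, ``repeatedly adding $p$ and subtracting $q$,'' cannot even take its first step from a window position $i$, since such a position admits neither a $+p$ move ($i>n-p$) nor a $-q$ move ($i\le q$); any realizing walk must instead use $+q$ and $-p$ steps, e.g.\ writing $p-q=(p/g-1)\,q-(q/g-1)\,p$. (b) Much more seriously, your assertion that the partial positions of such a walk can be kept inside $[1,n]$ ``is precisely what $n\ge p+q-g$ guarantees'' is never argued, and it is exactly the nontrivial statement: in the tight case $n=p+q-g$ the walk has $p/g+q/g-2$ steps threading through essentially every position of the residue class, and proving that some ordering of the steps stays in range (say via the map $x\mapsto x-p$ if $x>p$, else $x\mapsto x+q$, whose injectivity and unique sink per residue class one must establish) is already a complete proof of Fine--Wilf. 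So the induction as you set it up buys nothing: its hardest sub-case is the theorem itself.

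The repair is to shrink the interval instead of keeping it, which is how the standard inductive proof runs. Apply the inductive hypothesis not on $\{1,\dots,n\}$ but on $\{1,\dots,n-q\}$ with step sizes $p-q$ and $q$; the length hypothesis transfers because $n-q\ge p-g=(p-q)+q-\gcd(p-q,q)$. Every $(p-q)$-edge with both endpoints in $\{1,\dots,n-q\}$ satisfies $i\le n-p$, hence is realized in $G$ by your first two-step path $i\to i+p\to i+p-q$ (the path may use positions up to $n$, which is allowed); the problematic window never arises. Finally, every position $j>n-q$ satisfies $j-q\ge n-2q+1\ge1$ (since $p\ge q+g$ forces $n\ge 2q$), so repeated $-q$ edges pull $j$ into $\{1,\dots,n-q\}$ within its residue class mod $g$. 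This gives connectivity of each residue class, hence the lemma, with no Bézout walks needed.
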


The value $p+q-\gcd(p,q)$ in the statement of~\cref{FW} is optimal, in the sense that for any $p$ and $q$ it is
possible to construct a word with periods $p$ and $q$ and length
$|w|=p+q-\gcd(p,q)-1$ such that $\gcd(p,q)$ is not a period of $w$. In fact, a word is called \emph{central} if it has two coprime periods $p$ and $q$ and length equal to $p+q-2$. For example, $010$ and $010010$ are central words. Every central word is a binary palindrome (but there are binary palindromes that are not central). Moreover, central words are rich.

We now present a generalization of the Fine and Wilf's lemma to the case of abelian periods.

Let $\alph(w)$ be the set of distinct letters appearing in $w$.
By~\cref{FW}, if a word $w$ has two coprime periods
$p$ and $q$ and length $|w|\geq p+q-1$, then $|\alph(w)|=1$.

\begin{theorem}{\cite{CI2006}}
 If a word $w$ has coprime abelian periods $p$ and $q$ and length $|w|\geq 2pq-1$, then $|\alph(w)|=1$, that is, $w$ is a power of a single letter.
\end{theorem}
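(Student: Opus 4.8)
The plan is to track, for each letter $a\in\alph(w)$, the prefix-counting function $f_a(i)=|w_1\cdots w_i|_a$, and to read the two abelian periods as statements about how $f_a$ grows along two arithmetic progressions of positions. Write the witnessing factorizations as $w=u_0u_1\cdots u_{m-1}u_m$ (period $p$, head length $h_p$) and $w=u_0'u_1'\cdots u_{m'-1}'u_{m'}'$ (period $q$, head length $h_q$), where $0\le h_p<p$ and $0\le h_q<q$. Since the inner blocks $u_1,\dots,u_{m-1}$ all share one Parikh vector, of norm $p$, there are integers $c_a$ with $\sum_a c_a=p$ such that $f_a(h_p+jp)=f_a(h_p)+j\,c_a$ for every grid point $h_p+jp$ lying inside the periodic part; likewise there are integers $d_a$ with $\sum_a d_a=q$ and $f_a(h_q+kq)=f_a(h_q)+k\,d_a$ along the $q$-grid. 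I stress that these linear relations hold \emph{only} at the grid points $h_p+jp$ and $h_q+kq$, not at arbitrary windows of length $p$ or $q$.

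The heart of the argument is to evaluate $f_a$ at positions lying on both grids. A position $P$ is common to both grids iff $P\equiv h_p\pmod p$ and $P\equiv h_q\pmod q$; since $\gcd(p,q)=1$, the Chinese Remainder Theorem gives a unique residue modulo $pq$, so the common grid points form an arithmetic progression of step $pq$. Suppose two of them, $P_1<P_2=P_1+pq$, both lie in the periodic region of both periods. Counting occurrences of $a$ between them along the $p$-grid gives $q$ steps, so $f_a(P_2)-f_a(P_1)=q\,c_a$; counting along the $q$-grid gives $p$ steps, so $f_a(P_2)-f_a(P_1)=p\,d_a$. Hence $q\,c_a=p\,d_a$ for every letter $a$, and coprimality forces $p\mid c_a$. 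As $0\le c_a\le p$ and $\sum_a c_a=p>0$, there is exactly one letter $a^*$ with $c_{a^*}=p$ and $c_a=0$ for $a\neq a^*$; thus every inner $p$-block equals $(a^*)^p$.

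It then remains to propagate this to the whole word. From $\PV(u_0)\subset\PV(u_1)=p\,e_{a^*}$ (the Parikh vector with $p$ in coordinate $a^*$ and $0$ elsewhere) the head contains only $a^*$, and from $\PV(u_m)\subset\PV(u_{m-1})=p\,e_{a^*}$ so does the tail; hence $w=(a^*)^{|w|}$ and $|\alph(w)|=1$.

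The step I expect to be the main obstacle is guaranteeing the existence of two common grid points $P_1<P_2=P_1+pq$ that both lie strictly inside the periodic parts of both factorizations: this is exactly where the hypothesis $|w|\ge 2pq-1$ is consumed. The smallest common grid point $r$ satisfies $\max(h_p,h_q)\le r<pq$, so $P_2=r+pq<2pq$; one must check, using $h_p<p$, $h_q<q$ and that the two tails have lengths $<p$ and $<q$ respectively, that $P_2$ does not overshoot the last valid grid point on either side once $|w|\ge 2pq-1$. This bookkeeping of offsets is elementary but fussy, and together with the CRT step it is the only place where coprimality is used; it is also what pins down the constant $2pq-1$.
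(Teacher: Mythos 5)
The survey states this theorem with only a citation to Constantinescu and Ilie and gives no proof of its own, so your argument has to stand on its own merits. Its architecture is sound and does prove the theorem once completed: the prefix-counting functions $f_a$, the restriction of the linear relations to the two grids of block boundaries, the CRT fact that common grid points form a progression of step $pq$, the identity $q\,c_a=p\,d_a$ together with coprimality forcing $c_a\in\{0,p\}$, and the propagation to the head and tail through $\PV(u_0)\subset\PV(u_1)$ and $\PV(u_m)\subset\PV(u_{m-1})$ are all correct.

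The genuine gap is in the step you deferred, and it is not mere fussiness: the check cannot be done, as you propose, ``using $h_p<p$, $h_q<q$ and that the two tails have lengths $<p$ and $<q$'' --- that is, by magnitude bounds alone. From $r\le pq-1$, $t_p\le p-1$ and $|w|\ge 2pq-1$ one cannot conclude $r+pq\le |w|-t_p$: the values $r=pq-1$, $t_p=p-1$, $|w|=2pq-1$ satisfy all three bounds yet violate the desired inequality for every $p\ge 2$. What closes the gap is a phase relation linking $r$ and $t_p$ that the magnitude bounds ignore: since $t_p=(|w|-h_p)\bmod p$ and $r\equiv h_p \pmod{p}$, you have $t_p=(|w|-r-pq)\bmod p$; and since $r\le pq-1$ and $|w|\ge 2pq-1$ give $|w|-r-pq\ge 0$, the residue is at most the number itself, so $t_p\le |w|-r-pq$, i.e., $P_2=r+pq\le |w|-t_p$, which is exactly the last grid point of the $p$-factorization. (Equivalently: $|w|-t_p-(r+pq)$ is a multiple of $p$ that is greater than $-p$, hence nonnegative.) The same computation with $q$ in place of $p$ gives $P_2\le |w|-t_q$, while $P_1=r\ge\max(h_p,h_q)$ follows, as you say, from $r\equiv h_p\pmod{p}$, $0\le h_p<p$, and likewise for $q$. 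With this congruence argument substituted for the purely magnitude-based bookkeeping you anticipated, your proof is complete, and it shows that the hypothesis $|w|\ge 2pq-1$ is consumed with no slack at all.
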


The latter result has been generalized by Simpson to the case when the abelian periods $p$ and $q$ are not coprime:

\begin{theorem}{\cite{DBLP:journals/tcs/Simpson16}}\label{th:simpson}
 If a word $w$ has abelian periods $p=p'd$ and $q=q'd$ and length $|w|\geq 2p'q'd-1$ for integers $d$, $p'$, $q'$, then $|\alph(w)|\leq d$. 

 Moreover, if the difference $||v_0|-|u_0||$ of the lengths of the heads of the two periods  $p$ and $q$ is not a multiple of $d$, then the previous bound  can be reduced to $2p'q'd-2$.
\end{theorem}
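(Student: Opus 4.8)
The plan is to reduce the whole statement to a single \emph{density-matching} identity between the Parikh vectors of the two kinds of full blocks, and then to read off the bound on $|\alph(w)|$ almost for free. Write $d=\gcd(p,q)$, so that $p'=p/d$ and $q'=q/d$ are coprime; set $\ell=\mathrm{lcm}(p,q)=p'q'd$, so that the hypothesis reads $|w|\ge 2\ell-1$. Decompose $w$ according to each abelian period as in \cref{def:abper}: for the period $p$ let $P=\PV(u_1)=\cdots=\PV(u_{m-1})$ (so $|P|=p$) with head $u_0$ and tail $u_m$, and for $q$ let $Q=\PV(v_1)=\cdots$ (so $|Q|=q$) with head $v_0$; set $h_p=|u_0|$ and $h_q=|v_0|$. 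Recall that by definition $\PV(u_0),\PV(u_m)\subseteq P$ componentwise.

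The key observation is that it suffices to prove the vector identity $q'P=p'Q$ (equivalently, that the letter densities $P/p$ and $Q/q$ coincide). Indeed, if $q'P=p'Q$ then, since $\gcd(p',q')=1$, we get $p'\mid P$ and $q'\mid Q$ componentwise, so $P=p'v$ and $Q=q'v$ for a common nonnegative integer vector $v$ with $|v|=|P|/p'=d$. Such a $v$ has at most $|v|=d$ nonzero coordinates, i.e.\ $|\mathrm{supp}(v)|\le d$. Now every factor $u_0,u_1,\dots,u_m$ of the period-$p$ decomposition has Parikh vector supported on $\mathrm{supp}(v)$: the full blocks have Parikh vector $P=p'v$, while $\PV(u_0),\PV(u_m)\subseteq P$ forces the head and tail to use only letters of $\mathrm{supp}(v)$. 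Since $w=u_0u_1\cdots u_m$, we conclude $\alph(w)\subseteq\mathrm{supp}(v)$ and hence $|\alph(w)|\le d$. (The case $d=1$ is exactly the coprime theorem of Constantinescu and Ilie stated above, where $v$ is a single unit vector and $w$ is unary.)

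To prove $q'P=p'Q$ it is enough to prove the scalar identity $q'P_a=p'Q_a$ for each fixed letter $a$, and for this I would pass to the binary word $w^{(a)}$ obtained from $w$ by replacing every occurrence of $a$ by $1$ and every other letter by $0$. Because $\PV(u_0),\PV(u_m)\subseteq P$ bound both the $a$-count and the non-$a$-count of each head and tail by those of a full block, $w^{(a)}$ still has abelian periods $p$ (head $h_p$) and $q$ (head $h_q$), now with block $1$-counts $r:=P_a$ and $s:=Q_a$. So the problem becomes: a binary word of length $\ge 2\ell-1$ with abelian periods $p$ and $q$ must satisfy $rq'=sp'$. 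Consider the prefix-count $f(i)=|\Pref_i(w^{(a)})|_1$; along the $p$-grid $\{h_p+kp\}$ it is arithmetic with common difference $r$, and along the $q$-grid $\{h_q+kq\}$ with common difference $s$. The crux is the following: over any segment of length $\ell$ whose two endpoints are grid points of \emph{both} periods, the $p$-period forces the increment of $f$ to be $q'r$ (there are $\ell/p=q'$ full $p$-blocks) while the $q$-period forces it to be $p's$, whence $q'r=p's$. By the Chinese remainder theorem such common grid points exist precisely when $h_p\equiv h_q\pmod d$, and they recur with period $\ell$; the hypothesis $|w|\ge 2\ell-1$ is exactly calibrated so that two of them fall in the region covered by full blocks of both periods, after subtracting the slack due to the heads $h_p<p$, $h_q<q$ and the induced tails.

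The conceptual reduction above is short; the genuine obstacle is the exact calibration of the threshold. One must verify, for \emph{arbitrary} head lengths $h_p<p$ and $h_q<q$, that $2\ell-1$ positions always suffice to place such a common length-$\ell$ segment, and to confirm by an extremal example that in general $2\ell-1$ cannot be lowered. For the \emph{Moreover} clause, observe that $\bigl||v_0|-|u_0|\bigr|=|h_q-h_p|$ is a multiple of $d$ if and only if the two grids share a point; when it is \emph{not} a multiple of $d$ the grids are disjoint, and one instead compares $f$ at nearest $p$- and $q$-grid points, which lie at some distance $\delta\ge 1$, bounding the resulting discrepancy by unit-step increments of $f$. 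The tight extremal configuration requires $h_p\equiv h_q\pmod d$, so in the disjoint case it is excluded and the length bound improves by one to $2\ell-2$; carrying out this discrepancy estimate with the correct constant is the delicate remaining step.
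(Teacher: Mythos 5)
Your opening reduction is sound as an implication, and the projection to binary words is a genuine simplification. Moreover, in the case $h_p \equiv h_q \pmod d$ your common-grid-point argument really can be calibrated to the stated threshold: one may assume $|w| = 2\ell-1$ (a prefix inherits both abelian periods with the same heads, since a prefix of a block dominates componentwise into $P$ or $Q$), and then $h_p+t_p = p-1$ and $h_q+t_q = q-1$, so the unique point of the common residue class modulo $\ell$ lying in $[h_p,\,h_p+\ell-p]$ and the unique one lying in $[h_q,\,h_q+\ell-q]$ both belong to $[0,\ell-1]$ and hence coincide; calling it $\alpha$, both $\alpha$ and $\alpha+\ell$ are block boundaries of both decompositions, and comparing the increments of $f$ over $[\alpha,\alpha+\ell]$ gives $q'r = p's$. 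So the congruent-head case is completable along your lines.

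The genuine gap is the non-congruent case, which you cannot defer to the ``Moreover'' clause: nothing in the hypothesis $|w|\ge 2p'q'd-1$ of the main statement forces $h_p\equiv h_q\pmod d$, so this case is part of what you must prove — and there the identity $q'P=p'Q$ that your whole plan rests on is simply false. Explicit counterexample: $w=0110101$, $p=2$, $q=4$, so $d=2$, $p'=1$, $q'=2$, $\ell=4$, and $|w|=7=2p'q'd-1$. The factorization $w=\varepsilon\cdot 01\cdot 10\cdot 10\cdot 1$ witnesses abelian period $2$ with empty head and $P=(1,1)$, while $w=0\cdot 1101\cdot 01$ witnesses abelian period $4$ with head of length $1$ and $Q=(1,3)$ (head $(1,0)$ and tail $(1,1)$ are componentwise dominated by $Q$, as Definition~\ref{def:abper} requires). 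The head lengths differ by $1$, not a multiple of $d$, and $q'P=(2,2)\neq(1,3)=p'Q$, even though the conclusion $|\alph(w)|\le 2$ of course holds. So density matching is strictly stronger than Simpson's conclusion and fails under the theorem's hypotheses whenever the heads are non-congruent; no ``discrepancy estimate with the correct constant'' can establish it, because it is not true. To finish a proof in your framework you would need either to show that any word satisfying the hypotheses admits \emph{some} pair of decompositions with congruent heads (which is essentially the hard content of the theorem), or to give an argument for the non-congruent case that bounds the alphabet without equalizing densities; as written, your proposal proves the theorem only in the congruent-head case.
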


\begin{example}
 Let $w=010201001201020102001$ of length $21$. Since $w$ can be factored as
\begin{align*}
w & = u_0u_1u_2u_3u_4u_5 = 010\cdot 2010\cdot 0120\cdot 1020\cdot 1020\cdot 01 \\
&  = v_0v_1v_2v_3 = 0102 \cdot 010012 \cdot 010201 \cdot 02001
\end{align*}
it follows that $w$ has abelian periods $4=2\cdot 2$ and $6=3\cdot 2$, and we have $||v_0|-|u_0||=1$, which is not a multiple of $d=2$. One can see that $w$ cannot be extended to the left nor to the right keeping the same abelian periods with this factorization.
Nevertheless, $w$ can also be factored as
\begin{align*}
w & = v'_0v'_1v'_2v'_3 = 01020\cdot 100120 \cdot 102010 \cdot 2001
\end{align*}
and now $||v'_0|-|u_0||=2=d$. One can verify that with these factorizations $w$ can be extended to the right with the letter $0$ keeping the abelian periods $4$ and $6$, resulting in a word of length $22=2\cdot 3\cdot 2 -2$. In accordance with Theorem \ref{th:simpson}, the word $w0$ cannot be extended to the left nor to the right to a word of length $23=2\cdot 3\cdot 2 -1$ having abelian periods $4$ and $6$.
\end{example}

Interestingly enough, in the classical version, the Fine and Wilf’s theorem basically says that if a word has two periods $p$ and $q$ and is long enough, then it also has period $\gcd(p,q)$. This fact cannot be extended to
abelian periods which are not relatively prime. That is, if $\gcd(p, q) = d > 2$, then
the two abelian periods $p$ and $q$ cannot impose the abelian period $d$, no matter how
long the word is. In~\cite{CI2006}, the authors exhibited an infinite word,  $w = (001110100011)^{\omega}$,
which has abelian periods 4 and 6, but not 2.

We now discuss abelian versions of another classical periodicity result, a central factorization theorem. This result relates global periodicity of a word with its local periods, defined as the length of the shortest square centered at each position. This relation can be stated for finite, infinite or biinfinite words (a biinfinite word is a map from $\Z$ to $\Sigma_d$). For example, for biinfinite words the following holds:

\begin{theorem}[\cite{CV78}] A biinfinite word $x$ is periodic if and only if there exists an integer
$l$ such that $x$ has at every position a centered square with period at most $l$.\end{theorem}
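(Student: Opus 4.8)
The plan is to treat the two implications separately; the forward one is immediate and the converse carries all the difficulty. Fix notation: for the cut lying between positions $i-1$ and $i$, say that $x$ has a \emph{centered square of period $p$} at $i$ if $x_{i-p}x_{i-p+1}\cdots x_{i+p-1}$ is a square $uu$ with $|u|=p$, that is, if $x_j=x_{j+p}$ for all $j$ with $i-p\le j\le i-1$; write $p_i$ for the least such $p$, the \emph{local period} at $i$. The hypothesis that $x$ has at every position a centered square with period at most $l$ is then exactly the statement $p_i\le l$ for every $i\in\Z$. For the direction ``periodic $\Rightarrow$ bounded local periods'', suppose $x$ has period $P$, so $x_j=x_{j+P}$ for all $j\in\Z$. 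Then for every cut $i$ the block $x_{i-P}\cdots x_{i+P-1}$ is a square of period $P$, since $x_{i-P+t}=x_{i+t}$ for $0\le t\le P-1$; hence $p_i\le P$ for all $i$ and one may take $l=P$.

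For the converse, assume $p_i\le l$ for all $i$. The first step is to read each centered square as a local period: the square at $i$ shows that $x$ has period $p_i$ on the window $[\,i-p_i,\ i+p_i-1\,]$ of length $2p_i$ (in particular $x_i=x_{i-p_i}$, so every letter duplicates one lying at distance at most $l$ to its left). Since the $p_i$ range over the finite set $\{1,\dots,l\}$, the quantity $P=\sup_i p_i$ is a value actually attained, say $p_{i_0}=P$, and period $P$ holds on $[\,i_0-P,\ i_0+P-1\,]$. The goal is to promote this to a global statement, namely that $P$ is a period of $x$; this is the biinfinite instance of the Critical Factorization Theorem, whose content is precisely that the largest local period equals the global period. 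The natural tool is \cref{FW}: whenever two windows on which periods $p$ and $q$ hold overlap in a segment of length at least $p+q-\gcd(p,q)$, the overlap acquires the period $\gcd(p,q)$, and one would like to chain such reconciliations outward from $i_0$ to force period $P$ on all of $\Z$.

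The main obstacle is exactly this propagation. One cannot simply glue the windows of consecutive cuts, because when $p_i$ and $p_{i+1}$ differ widely their windows overlap in only about $2\min(p_i,p_{i+1})$ positions, which falls short of the $p_i+p_{i+1}-\gcd(p_i,p_{i+1})$ needed to invoke \cref{FW}; moreover, a cut sitting next to a would-be break of period $P$ may still carry a genuinely short centered square, so the critical cut witnessing the global period need not be where the period first fails. The correct argument must therefore locate the critical position more carefully and exploit the maximality of $P$: assuming $P$ is not a period, one takes the first position $m$ to the right of the good window with $x_m\ne x_{m+P}$ and shows --- by confronting the long stretch on which period $P$ already holds with the short centered squares available near $m$, through a Euclidean (gcd) descent built from repeated applications of \cref{FW} --- that some cut must then fail to admit any centered square of period $\le P$, contradicting $P=\sup_i p_i$. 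Carrying out this boundary bookkeeping rigorously, rather than the (routine) individual Fine and Wilf reconciliations, is where the real work of the theorem lies; a symmetric argument extends period $P$ to the left, yielding the periodicity of $x$.
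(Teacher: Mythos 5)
The forward implication in your write-up is correct and complete, but the converse --- which is the entire content of the theorem --- is never actually proved: it is a plan. You yourself state that carrying out the ``boundary bookkeeping rigorously \dots is where the real work of the theorem lies,'' and that work is not done. Concretely, after locating the minimal failure $m$ of period $P$ to the right of the window $[\,i_0-P,\,i_0+P-1\,]$, your argument needs to exhibit a specific cut near $m$ at which \emph{every} centered square has period exceeding $P$, contradicting $P=\sup_i p_i$; the proposal never identifies that cut, nor shows why a short centered square there would be incompatible with $x_m\ne x_{m+P}$. The ``Euclidean descent built from repeated applications of \cref{FW}'' is exactly the step that fails in the naive form: as you observe yourself, the windows on which the two periods hold may overlap in only about $2\min(p,q)$ positions, short of the $p+q-\gcd(p,q)$ threshold, and maximality of $P$ alone does not repair this. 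So what remains is an unproved claim, not a routine verification.

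It is worth knowing that this converse is precisely the biinfinite form of the Critical Factorization Theorem, and none of the known proofs (the original one of C\'esari and Vincent \cite{CV78}, Duval's, or the Crochemore--Perrin proof via maximal suffixes with respect to two lexicographic orders) proceeds by iterated Fine--Wilf reconciliations alone; each needs a genuinely different ingredient, such as unbordered words or maximal-suffix arguments. (The survey itself gives no proof --- the result is only cited --- so there is no paper argument to compare against; judged on its own, your text establishes only the trivial direction.) If you want a complete and honest route, one option is a reduction to the finite-word theorem: every finite factor $w$ of $x$ has all its local periods (allowing overhanging repetitions) bounded by $l$, so the finite Critical Factorization Theorem gives $\pi(w)\le l$ for every factor; applying this to the nested factors $x_{[-n,n]}$ and pigeonholing over the finitely many possible periods in $\{1,\dots,l\}$ yields a single $p\le l$ with $x_j=x_{j+p}$ for all $j\in\Z$. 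But then the finite theorem must itself be proved or cited, since it is the true source of difficulty here.
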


A similar result holds for powers to the left of each position, although in this case a square is not enough to guarantee periodicity, but the threshold is given by the golden ratio:

\begin{theorem}[\cite{DBLP:journals/tcs/MignosiRS98}]  A right-infinite word $x$ is ultimately periodic if and only if there
exists $n_0$ such that for every $n \geq  n_0$ the word $\Pref_n(x)$ has a $\varphi^2$-suffix, where
$\varphi = (1+\sqrt{5})/2$.\end{theorem}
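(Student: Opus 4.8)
The plan is to prove the two implications separately: the \emph{only if} direction is routine, while the \emph{if} direction is the substantial part and is where the golden ratio enters. For the \emph{only if} direction, suppose $x$ is ultimately periodic, say $x=uv^{\omega}$ with $h=|u|$ and $q=|v|$. Then for every $n>h$ the suffix of $\Pref_n(x)$ obtained by deleting the first $h$ letters has period $q$ and length $n-h$, hence exponent at least $(n-h)/q$. Choosing $n_0$ so that $(n_0-h)/q\ge \varphi^2$ guarantees that for all $n\ge n_0$ the prefix $\Pref_n(x)$ has a suffix of exponent $\ge \varphi^2$, as required.

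For the \emph{if} direction I would argue by contraposition, tracking repetitions. For each $n\ge n_0$ attach to $\Pref_n(x)$ its longest suffix $S_n$ of exponent $\ge\varphi^2$, and let $p_n=\pi(S_n)$ be the corresponding minimal period, so that $|S_n|\ge \varphi^2 p_n$. If the period equals a fixed $p$ from some point on, then $x$ has eventual period $p$ and we are done; so assume $x$ is aperiodic and examine the \emph{breakpoints}, i.e.\ the positions $n+1$ at which the current period fails to extend, $x_{n+1}\ne x_{n+1-p_n}$, forcing a new period $p_{n+1}\ne p_n$. Aperiodicity forces infinitely many breakpoints, and one then checks that the periods must grow, $p_n\to\infty$.

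The heart of the argument is a Fine--Wilf analysis at a breakpoint $n+1$, using \cref{FW}. Writing $p=p_n$ and $q=p_{n+1}$, the suffix $S_n$ (period $p$, ending at position $n$) and the suffix $S_{n+1}$ (period $q$, ending at $n+1$) share a long common suffix $U$ carrying both periods, of length at least about $\varphi^2 p$. When $U$ is long enough that \cref{FW} applies, $U$ acquires the period $\gcd(p,q)$; since $p\equiv q\pmod{\gcd(p,q)}$, the positions $n+1-p$ and $n+1-q$ then carry the same letter, and combining this with the period $q$ of $S_{n+1}$ yields $x_{n+1}=x_{n+1-q}=x_{n+1-p}$, contradicting the defining property of a breakpoint. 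The precise interval of periods on which this works is controlled by the identities $\varphi^2-1=\varphi$ and $\varphi^2=\varphi+1$: the threshold $\varphi^2$ is exactly what makes the overlap meet the Fine--Wilf bound $p+q-\gcd(p,q)$ in the relevant range.

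The main obstacle is the boundary regime where the period jumps by more than a factor of $\varphi$ (in the coprime case), since then $U$ is too short for \cref{FW} to bite directly and the position $n+1-q$ escapes the periodic block $S_n$. Handling this requires bounding $|S_n|$ in terms of the previous breakpoints and showing that successive period jumps obey a Fibonacci-type lower bound, which is incompatible with maintaining a suffix of exponent $\ge\varphi^2$ at \emph{every} intermediate position. This is precisely where the constant cannot be lowered: optimality is witnessed by the Fibonacci word $f$, whose longest repetition suffixes have exponents approaching but never reaching $\varphi^2$ along infinitely many prefixes, so that replacing $\varphi^2$ by any smaller constant would make the aperiodic word $f$ satisfy the hypothesis.
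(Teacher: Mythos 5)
The survey itself contains no proof of this theorem (it is quoted directly from Mignosi, Restivo and Salemi), so your argument must stand entirely on its own, and as written it does not. The ``only if'' half is correct and complete. In the ``if'' half, however, the decisive step is missing precisely at the point you yourself label ``the main obstacle''. Your Fine--Wilf analysis at a breakpoint only closes the case where the consecutive minimal periods $p=p_n$ and $q=p_{n+1}$ are within a factor of roughly $\varphi$ of one another: the common suffix $U$ is only guaranteed to have length about $\min(\varphi^2 p,\varphi^2 q-1)$, and this reaches the threshold $p+q-\gcd(p,q)$ of \cref{FW} exactly when $p/\varphi \lesssim q \lesssim \varphi p$ (this is the content of $\varphi^2-1=\varphi$). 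Outside that range you derive no contradiction; you only state that one ``requires bounding $|S_n|$ in terms of the previous breakpoints and showing that successive period jumps obey a Fibonacci-type lower bound'' incompatible with the hypothesis. That unexecuted step is the entire substance of the theorem --- it is where the constant $\varphi^2=\varphi+1$ actually does its work, and it is the delicate induction of the original Mignosi--Restivo--Salemi paper. Describing what a proof must accomplish is not a proof, so the aperiodic case remains open in your write-up.

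There is also a structural flaw earlier in the reduction. Your dichotomy ``either $p_n$ is eventually constant, or there are infinitely many breakpoints'' is not exhaustive. The minimal period of the longest $\varphi^2$-suffix can change ($p_{n+1}\ne p_n$) even when no breakpoint occurs ($x_{n+1}=x_{n+1-p_n}$), because the longest $\varphi^2$-suffix of $\Pref_{n+1}(x)$ need not be $S_n x_{n+1}$; it may be a longer suffix with an unrelated minimal period. Consequently ``aperiodicity forces infinitely many breakpoints'' is a claim requiring proof, not an observation: one must show that if breakpoints eventually cease, then $|S_{n+1}|\ge |S_n|+1$ for all large $n$, so the suffixes $S_n$ cover all but a bounded prefix of $x$ while $p_n\le |S_n|/\varphi^2$, and then repeated applications of \cref{FW} to the overlaps collapse the minimal periods to a single eventual value. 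This argument is absent. Finally, your closing remark about the Fibonacci word addresses optimality of the constant $\varphi^2$, which is a true and relevant comment (the survey makes it too) but is not part of the statement to be proved and cannot substitute for the missing case.
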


This bound is optimal; an example of an aperiodic word with $(\varphi^2-\varepsilon)$-suffix at each position is given by the Fibonacci word.

These properties do not seem to generalize well for abelian powers. In particular, for each $k$, there exist aperiodic words with an abelian $2k$-power centered at each position:

\begin{theorem}\cite{AKP2012} For every integer $k$, there exists a bi-infinite aperiodic word with an abelian $2k$-power with period of length at most $2(k+1)^2$ centered at each position. \end{theorem}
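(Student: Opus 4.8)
The plan is to prove the theorem constructively: I would exhibit an explicit bi-infinite word and verify the two requirements (aperiodicity, and a short centered abelian $2k$-power at every position) by hand. The first step is a convenient reformulation in terms of counting functions. For each letter $\ell$ let $S_\ell(n)$ denote the number of occurrences of $\ell$ in the factor from position $0$ to position $n$, extended to an integer function on all of $\Z$. The factor occupying positions $i-kp,\dots,i+kp-1$ is an abelian $2k$-power of period $p$ centered at $i$ exactly when, for every $\ell$, the $2k$ window counts $S_\ell(i+(s+1)p)-S_\ell(i+sp)$, $s=-k,\dots,k-1$, are all equal; equivalently, the $2k+1$ sampled values $S_\ell(i+sp)$, $s=-k,\dots,k$, lie on a single arithmetic progression (vanishing second difference of step $p$). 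Thus the theorem reduces to producing an aperiodic word whose letter-counting functions are, at every index $i$, simultaneously affine along some arithmetic progression of indices of step $p\le 2(k+1)^2$.

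Next I would impose a block structure. Take blocks of a common length $L=\Theta(k)$, all sharing a single Parikh vector $\mu$, and set $w=\prod_{j\in\Z}B_j$. At a position $i$ lying on a block boundary the claim is immediate: with $p=L$ the $2k$ windows are $2k$ whole blocks, which are abelian equivalent by construction, so a period-$L$ centered $2k$-power exists. All the content is at interior positions. For a position at offset $r$ inside the block grid and a candidate period $p=cL$, each window is the suffix of one block, then $c-1$ whole blocks, then the prefix of another; a direct computation (using $\PV(\Suff_{L-r}(B))=\mu-\PV(\Pref_{r}(B))$) shows the $2k$ windows are abelian equivalent if and only if the vector $\PV(\Pref_{r}(B_{t_0+sc}))$ is an arithmetic progression in $s$ over $s=-k,\dots,k$. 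Hence the interior requirement becomes: for every pair $(t_0,r)$ there is a step $c$ with $cL\le 2(k+1)^2$ along which the prefix-Parikh vectors of the sampled blocks are affine.

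Finally I would choose the blocks and, crucially, their ordering. The ordering must be aperiodic yet must present, around every index, a short arithmetic progression of blocks with affine prefix-Parikh vectors. The natural way to realize affine (rather than merely constant) prefix counts is to make each block a fixed balanced pattern whose internal phase is translated by a varying amount, so that prefix counts depend \emph{linearly} on the phase, and then to drive the phases by an aperiodic but strongly self-similar directive sequence chosen so that along suitable bounded-step progressions the phases themselves form arithmetic progressions. Taking $L\le 2(k+1)$ and step $c\le k+1$ gives period $cL\le 2(k+1)^2$, and aperiodicity of $w$ follows from aperiodicity of the directive sequence.

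The main obstacle is exactly the tension in the last paragraph. If one tried to make the sampled prefix counts \emph{constant} along bounded-step progressions, a boundedness/pigeonhole argument forces periodicity (a bounded integer sequence that is constant on a centered short progression through every point cannot be aperiodic), so interior windows cannot be balanced by local repetition alone. The genuine difficulty is therefore to use the across-letters cancellation that \emph{affine} nonconstant prefix counts permit (the per-letter slopes sum to zero because total prefix length is fixed), while keeping the directive sequence aperiodic \emph{and} the step uniformly $O(k)$. This is where the self-similar choice of ordering does the real work and where the explicit constant $2(k+1)^2$ is pinned down; I expect verifying that this ordering simultaneously meets all three demands to be the crux of the argument.
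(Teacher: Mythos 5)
Your first two paragraphs are fine, and in fact they set up exactly the right framework: a centered abelian $2k$-power of period $p$ amounts to the $2k+1$ sampled counting values lying in arithmetic progression, and for a concatenation of abelian-equivalent blocks of length $L$ the interior condition at offset $r$ with step $c$ is precisely that the vectors $\PV(\Pref_r(B_{t_0+sc}))$, $s=-k,\dots,k$, form an arithmetic progression; the budget $cL\le 2(k+1)^2$ is also the right arithmetic. (For the record, the survey states this theorem without proof, citing \cite{AKP2012}, so I am comparing your plan against what a complete argument needs.) The problem is that this reduction is the routine half of the theorem: the entire content lies in exhibiting an aperiodic block ordering with the required local-progression property, and you never produce one --- your last paragraph explicitly defers it as ``the crux.'' As written, this is a programme, not a proof.

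Worse, the strategic analysis meant to guide that missing construction is wrong in both directions, so the programme as stated could not be completed. First, your parenthetical claim that constant sampled prefix counts force periodicity is false, and the constant route is exactly the one that yields the stated bound. Put $b=k+1$ and let $T_t=\nu_b(t)\bmod 2$ for $t\neq 0$ (with $T_0=0$), where $\nu_b$ is the $b$-adic valuation; $T$ is aperiodic, since a period $P$ would force $\nu_b(P)$ to be both even and odd (compare positions $b^{2m}$ and $b^{2m+1}$, $P+b^{2m}$ and $P+b^{2m+1}$ for large $m$). Because $\nu_b(s)=0$ for $1\le |s|\le k<b$, one can always choose $c\in\{1,b,b^2\}$ with $\nu_b(c)\neq\nu_b(t)$ and $\min(\nu_b(t),\nu_b(c))\equiv\nu_b(t)\pmod 2$; then $\nu_b(t+sc)=\min(\nu_b(t),\nu_b(c))$ for all $0<|s|\le k$, so $T_{t+sc}=T_t$ on the whole centered progression of step $c\le (k+1)^2$. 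Concatenating the blocks $01$ and $10$ according to $T$ gives an aperiodic bi-infinite word with a centered abelian $2k$-power of period $2$ at every block boundary and of period $2c\le 2(k+1)^2$ at every interior position --- precisely the theorem, constant included (for $k=1$ this $T$ is essentially the period-doubling word). Second, your alternative mechanism --- prefix counts depending \emph{linearly} on a phase, with phases driven along arithmetic progressions --- can never produce non-constant samples, over any alphabet: prefix Parikh vectors grow by unit vectors, and three unit vectors in arithmetic progression must coincide (from $2e_j=e_i+e_l$, the $j$-th coordinate forces $i=j=l$); hence if $\PV(\Pref_r(B_j))$ is affine in $j$ for every $r$ across at least three equally spaced phases, all increments are phase-independent and the blocks are letter-by-letter identical. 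So ``affine but non-constant'' phase dependence is unrealizable, and your plan, if pushed through, collapses to the constant case you rejected; the tension you identify as the crux is resolved not by affine cancellation but by a Toeplitz-type choice of the directive sequence.
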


An infinite word $x$ is called \emph{abelian 
periodic} if $x = v_0 v_1 \cdots$, where $v_k\in\Sigma_d^*$ for
$k\geq0$, and $v_i \sim_{ab} v_j$ for all integers $i, j \geq 1$; or \emph{abelian aperiodic} otherwise. There exist words that are not abelian periodic, but contain a centered abelian square of bounded length at each position \cite{DBLP:journals/ejc/CharlierHPZ16}. Consider the  family of infinite words of the following form:
$$(000101010111000111000(111000)^*111010101)^{\omega}$$
where $w^*$ denotes zero or more repetitions of $w$ and $w^\omega=www\cdots$ denotes an infinite concatenation of copies of $w$.
Words of this form have an abelian square of length at most
$12$ at each position. It is not hard to see that this family contains abelian aperiodic words.

\subsection{Abelian primitive words}

An abelian $k$-power is a nonempty word of the form $w=w_1w_2\cdots w_k$, where all $w_1,w_2,\ldots, w_k$ have the same Parikh vector. A word is called {\emph{abelian primitive}} if it is not an abelian $k$-power for any $k$. A word $w$ has an \emph{abelian root} $u$ if $u$ is a prefix of $w$ and $w$ is an abelian $|w|/|u|$-power. If $u$ is an abelian root of length $\ell$ of a word $w$ of length $n$, then clearly $w$ has also abelian roots of length $\ell'$ for each $\ell'$ multiple of $\ell$ that divides $n$. If $u$ is abelian primitive, then it is called an \emph{abelian primitive root}. Recall that in the classical case the primitive root of a word is unique. 
On the contrary, in the abelian case a word can have more than one abelian primitive root. Indeed, the example from the previous section (due to \cite{CI2006}) gives  an infinite word with two distinct abelian periods not dividing each other, namely  $w = (001110100011)^{\omega}$ which has abelian periods $4$ and $6$.
The situation has been studied in~\cite{DBLP:journals/ijfcs/DomaratzkiR12}, where it has been proved that if $u$ and $v$ are distinct abelian primitive roots of the same word, then $\gcd(|u|,|v|)\geq 2$. The authors also gave upper and lower bounds on the number of distinct abelian primitive roots of a word.

Another natural question is related to the generalization of the classical Lyndon--Sch\"utzenberger lemma:

\begin{lemma}\label{LS}
 Let $u,v$ be two words. Then $uv=vu$ if and only if $u$ and $v$ have the same primitive root.
\end{lemma}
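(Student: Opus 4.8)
The plan is to prove the classical Lyndon--Sch\"utzenberger lemma in its standard form, namely that two words commute if and only if they are powers of a common word. I would first dispose of the trivial direction and degenerate cases, then attack the substantive direction by induction on the combined length $|u|+|v|$, using a length-comparison argument that reduces a commuting pair to a strictly shorter commuting pair.

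First I would handle the easy implication. If $u$ and $v$ have the same primitive root $r$, say $u=r^i$ and $v=r^j$, then $uv=r^ir^j=r^{i+j}=r^jr^i=vu$, so they commute. This is immediate from associativity of concatenation. I would also dispose of the cases where $u$ or $v$ is empty: if $u=\varepsilon$ then $uv=vu$ holds trivially, and $v$ (if nonempty) shares its own primitive root, while the convention for the empty word is handled by noting $\varepsilon=r^0$ for any $r$; so I would state the lemma for nonempty $u,v$ and treat the empty case as a remark.

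For the forward direction, I would induct on $n=|u|+|v|$ with nonempty $u,v$. By symmetry assume $|u|\le |v|$. From $uv=vu$, comparing the two expressions as strings of the same length $|u|+|v|$, the prefix of length $|u|$ of $vu$ is $u$ itself, but it is also the prefix of length $|u|$ of $v$ (reading $uv$). Hence $u$ is a prefix of $v$, so we may write $v=uw$ for some word $w$ with $|w|<|v|$. Substituting into $uv=vu$ gives $u(uw)=(uw)u$, i.e.\ $uuw=uwu$; cancelling the common prefix $u$ on the left yields $uw=wu$. If $w=\varepsilon$, then $v=u$ and they share the common primitive root of $u$. Otherwise $u$ and $w$ form a strictly shorter commuting pair, since $|u|+|w|<|u|+|v|=n$, so by the inductive hypothesis $u$ and $w$ have a common primitive root $r$; then $v=uw$ is also a power of $r$, and $r$ is the (unique) common primitive root of $u$ and $v$.

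The main obstacle, and the step I would take the most care over, is the length-matching and cancellation argument: I must justify rigorously that equality of the two concatenations as length-$(|u|+|v|)$ strings forces $u$ to be a prefix of $v$ (using $|u|\le|v|$), and that left-cancellation of the common factor $u$ is legitimate in the free monoid $\Sigma_d^*$. Both follow from the fact that $\Sigma_d^*$ is free, hence left- and right-cancellative, but this is precisely where the noncommutative structure enters and where an abelian analogue would fail---so it is worth flagging that the uniqueness of the primitive root and the cancellativity used here have no direct counterpart in the abelian setting, which is exactly the phenomenon the surrounding discussion highlights.
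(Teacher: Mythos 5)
The paper recalls this classical Lyndon--Sch\"utzenberger lemma without giving a proof (it is stated as background for its abelian analogue), so there is no paper proof to compare against; judged on its own, your proof is correct and is the standard argument: induction on $|u|+|v|$, using $uv=vu$ and $|u|\le|v|$ to show $u$ is a prefix of $v$, writing $v=uw$, cancelling to get the shorter commuting pair $uw=wu$, and concluding by the inductive hypothesis. One small step you pass over quickly: at the end you need that $r$ is the primitive root of $v=r^{a+b}$, i.e.\ that a power of a primitive word has that word as its primitive root. With the paper's definition (the primitive root is the prefix of length $\pi(v)$), this needs a one-line justification: if $v$ had a period $q<|r|$, then since $|v|\geq |r|+q-\gcd(|r|,q)$, the Fine--Wilf lemma would make $\gcd(|r|,q)$ a period of $v$ dividing $|r|$, forcing $r$ to be a proper power, contradicting its primitivity. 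Your closing remark correctly identifies why the argument has no abelian counterpart, which is exactly the point the surrounding text of the paper makes.
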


Let us write $u \approx_n v$ if $u$ and $v$ can be decomposed in the same number of contiguous blocks of length $n$ all having the same Parikh vector. For example, $012 021 012 \approx_3 210 120 120$. The following generalization of Lemma~\ref{LS} has been proved in~\cite{DBLP:journals/ijfcs/DomaratzkiR12}:

\begin{lemma}
 Let $u,v$ be two words such that $uv \approx_n vu$. If $u$ has an abelian primitive root of length $n$, then $v$ does as well, and these abelian primitive roots are the same.
\end{lemma}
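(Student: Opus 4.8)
The plan is to mimic the classical proof of \cref{LS}, replacing equalities of factors by equalities of Parikh vectors and keeping careful track of how the length-$n$ blocks of $uv$ and $vu$ are matched by the relation $\approx_n$.

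First I would fix the block structure. Since $u$ has an abelian root of length $n$ we have $n \mid |u|$, and since $uv \approx_n vu$ forces $n \mid |uv|$, we get $n \mid |v|$; write $|u| = an$ and $|v| = bn$ and set $u = u_1 \cdots u_a$, $v = v_1 \cdots v_b$ with $|u_i| = |v_j| = n$. Because $n$ divides both $|u|$ and $|v|$, the block boundaries respect the $u$/$v$ junction, so the length-$n$ decomposition of $uv$ is $u_1 \cdots u_a v_1 \cdots v_b$ while that of $vu$ is $v_1 \cdots v_b u_1 \cdots u_a$; in particular the two block sequences are cyclic shifts of one another.

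Next I would propagate the Parikh vector. Let $P = \PV(u_1) = \cdots = \PV(u_a)$ be the common Parikh vector of the root of $u$. Matching the $j$-th block of $uv$ with the $j$-th block of $vu$ via $\approx_n$ gives $\PV(v_j) = \PV(u_j) = P$ for every $j \le \min(a,b)$, and for the remaining positions it yields the recurrence $\PV(v_j) = \PV(v_{j-a})$. Starting from the base values $\PV(v_1) = \cdots = \PV(v_{\min(a,b)}) = P$ and iterating, one obtains $\PV(v_1) = \cdots = \PV(v_b) = P$. Hence $v$ splits into length-$n$ blocks all of Parikh vector $P$, so $v$ has an abelian root of length $n$ whose Parikh vector coincides with that of the root of $u$; this already delivers the ``same root'' conclusion in the abelian sense.

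It remains to show that this length-$n$ root of $v$ is abelian \emph{primitive}, and this is the step I expect to be the main obstacle. I would argue by induction on $|u|+|v|$ (equivalently, by a minimal-counterexample argument). If the root $v_1$ of $v$ were an abelian $\ell$-power for some $\ell \ge 2$ with $\ell \mid n$, I would refine every block to length $n/\ell$ and feed the refined decomposition back through the block matching of $\approx_n$, aiming to force the same refinement on the blocks of $u$ and thereby exhibit the root of $u$ as an abelian $\ell$-power, contradicting its primitivity. The delicate point is that $\approx_n$ only records Parikh vectors of length-$n$ blocks and need not, a priori, respect a finer block structure; to close the argument I would invoke the structural result of~\cite{DBLP:journals/ijfcs/DomaratzkiR12} that two distinct abelian primitive roots of one and the same word have lengths with $\gcd \ge 2$, applied to $v$ (which now carries two candidate roots), in order to exclude a strictly shorter primitive root and pin the abelian primitive root of $v$ to length exactly $n$.
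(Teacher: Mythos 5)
A preliminary remark: the survey offers no proof of this lemma --- it is quoted from Domaratzki and Rampersad~\cite{DBLP:journals/ijfcs/DomaratzkiR12} --- so your proposal has to be judged on its own merits. Your Steps 1--2 are correct: $n$ divides $|u|$ and $|uv|$, hence $|v|$; the length-$n$ blocks of $uv$ and $vu$ are $u_1,\ldots,u_a,v_1,\ldots,v_b$ in two cyclic orders; and matching them through $\approx_n$ propagates the common Parikh vector $P$ of the $u_i$ to all the $v_j$. This shows that $v$ has an abelian root of length $n$ that is abelian equivalent to the root of $u$.

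Step 3, however, is a genuine gap, and with the definitions as stated in this survey it cannot be closed, because the assertion it aims at is false. Take $n=4$, $u=(0011)^2=00110011$, $v=0110$. Then $uv=0011\cdot 0011\cdot 0110$ and $vu=0110\cdot 0011\cdot 0011$, so all length-$4$ blocks of both words have Parikh vector $(2,2)$ and $uv\approx_4 vu$; moreover $u$ has the abelian primitive root $0011$ of length $4$ (one checks $0011$ is not an abelian $k$-power for any $k\geq 2$). Yet the unique length-$4$ root of $v$ is $v=01\cdot 10$ itself, an abelian square; the abelian primitive root of $v$ is $01$, of length $2$. The root's Parikh vector propagates, but primitivity does not, because abelian primitivity is \emph{not} invariant under abelian equivalence ($0011$ is abelian primitive, $0110$ is not), and $\approx_n$ carries no information whatsoever below the block scale $n$. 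This is exactly why your refinement plan fails (here $v_1$ refines into an abelian square while $u_1$ does not refine at all), and why the $\gcd\geq 2$ theorem cannot rescue it: in the example $v$ has a primitive root of length $2$ and a non-primitive root of length $4$, a configuration that theorem does not forbid, since it only concerns pairs of \emph{primitive} roots. The lemma is a theorem only under a class-level reading --- e.g., ``primitive root'' taken to mean that the root's Parikh vector is not of the form $\ell Q$ with $\ell\geq 2$ for a Parikh vector $Q$, and ``the same'' meaning abelian equivalent --- and under such a reading your Steps 1--2 already constitute the entire proof, so Step 3 should be dropped rather than completed.
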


Finally, in~\cite{DBLP:journals/ijfcs/DomaratzkiR12} it has been proved that the language of abelian primitive words is not context-free, while an analogous result for primitive words is a longstanding open question~(see~\cite{Primitive}).

\subsection{Abelian borders}\label{subsec:borders}

A finite word is called \emph{bordered} if it has a border, i.e., a
proper prefix which is also a suffix, and \emph{unbordered}
otherwise. A natural generalization is therefore: a finite word has an \emph{abelian border} if it has a proper prefix that is abelian equivalent to the suffix of the same length. If a word does not have any abelian border, it is called \emph{abelian unbordered}. Of course, if a word has a border then it has an abelian border, but there exist unbordered words having an abelian border, e.g. the unbordered abelian square $00110101$. Clearly, a word of length $n$ has an abelian border of length $\ell\leq n/2$ if and only if it has an abelian border of length $n-\ell$.

Remember that if a word $w$ has a border of length $\ell$, then $|w|-\ell$ is a period of $w$. With the definition of abelian period given in Definition~\ref{def:abper}, it is not always true that if $w$ has an abelian border of length $\ell$, then $|w|-\ell$ is an abelian period of $w$.

In~\cite{DBLP:journals/ijfcs/GocRRS14}, the authors counted binary abelian bordered words via a bijection with irreducible symmetric Motzkin paths. Besides that, the lengths of the abelian unbordered factors occurring in the Thue–Morse word are characterized using the automatic theorem-proving tool Walnut, a software package that implements a mechanical decision procedure for deciding certain combinatorial properties of automatic sequences. We refer to the recent book of J. Shallit for more results obtained using Walnut \cite{shallit_2022}.

Concurrently and independently of~\cite{DBLP:journals/ijfcs/GocRRS14},  in~\cite{DBLP:journals/dam/ChristodoulakisCCI14} the authors proved the following result:

\begin{theorem}\cite{DBLP:journals/dam/ChristodoulakisCCI14}
The number of binary words of length $n$ with shortest abelian border of length $k$ is $\Theta(\frac{2^n}{k\sqrt{k}})$. In fact, that number is $2\sqrt{2}\frac{2^n}{k\sqrt{\pi k}}+o(\frac{2^n}{k\sqrt{\pi k}})$.
\end{theorem}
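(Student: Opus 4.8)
```latex
\textbf{Proof proposal.}
The plan is to count, for each fixed shortest-abelian-border length $k$, the number of binary words $w$ of length $n$ whose shortest abelian border has length exactly $k$, and then extract the claimed asymptotics via a local central limit estimate. First I would set up the combinatorial reformulation: a word $w=w_1\cdots w_n$ has an abelian border of length $k\le n/2$ precisely when the prefix $\Pref_k(w)$ and the suffix $\Suff_k(w)$ have the same number of $1$'s. Writing $\PV(\Pref_k(w))=(k-a,a)$ and $\PV(\Suff_k(w))=(k-b,b)$, the abelian-border condition for length $k$ is simply $a=b$. Thus the middle block $w_{k+1}\cdots w_{n-k}$ of length $n-2k$ is completely free (contributing a factor $2^{n-2k}$), and the prefix and suffix of length $k$ are constrained only by having equal $1$-counts. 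The number of ways to choose a pair (prefix, suffix) with a common value $a$ of the $1$-count is $\sum_{a=0}^{k}\binom{k}{a}^2=\binom{2k}{k}$, by Vandermonde's identity.

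Next I would address the word ``shortest''. Let $N_{\ge}(n,k)$ denote the number of words of length $n$ with \emph{some} abelian border of length $\le k$ (equivalently, with an abelian border of length exactly $k$, after accounting for the symmetry between length $\ell$ and length $n-\ell$). The count just described, $2^{n-2k}\binom{2k}{k}$, tallies words having an abelian border of length $k$ without the minimality requirement, so an inclusion--exclusion (or a direct first-occurrence) argument over the smaller border lengths $1,2,\dots,k-1$ is required to pass from ``has a border of length $k$'' to ``shortest border has length $k$''. The key observation that makes this tractable is that the dominant contribution comes from the factor $\binom{2k}{k}$ attached to the shortest border, while the correction terms involve shorter borders whose contributions decay; one must verify that the leading-order term survives after removing words with a strictly shorter abelian border. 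I expect the bookkeeping here to be delicate but ultimately to leave the leading asymptotic unchanged, because the probability that a random length-$k$ prefix and suffix share a $1$-count \emph{and} admit a shorter matching is of strictly smaller order.

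The analytic heart is then the estimate of $\binom{2k}{k}$. By the central binomial asymptotic (a direct consequence of Stirling's formula),
\[
\binom{2k}{k}=\frac{4^k}{\sqrt{\pi k}}\bigl(1+o(1)\bigr),
\]
so that $2^{n-2k}\binom{2k}{k}=\dfrac{2^{n-2k}\,4^{k}}{\sqrt{\pi k}}(1+o(1))=\dfrac{2^{n}}{\sqrt{\pi k}}(1+o(1))$. This already gives the order $\Theta(2^n/\sqrt{k})$ for words \emph{having} an abelian border of length $k$; the extra factor $1/k$ in the statement arises from the minimality constraint, i.e.\ from the requirement that no border of length $1,\dots,k-1$ occurs. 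Concretely, conditioning on the prefix $1$-count $a$, the chance that the shortest abelian border is exactly $k$ (rather than some smaller value) contributes a further $\Theta(1/k)$ factor, coming from a discrete ballot-type / first-return probability for the centered random walk that tracks the running difference of $1$-counts between matched prefix and suffix positions. Assembling these, the count becomes $\dfrac{2^n}{k\sqrt{k}}$ up to constants, and tracking the precise constant $2\sqrt{2}/\sqrt{\pi}$ through the first-return probability yields the refined form $2\sqrt{2}\,\dfrac{2^n}{k\sqrt{\pi k}}+o\!\bigl(\dfrac{2^n}{k\sqrt{\pi k}}\bigr)$.

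The main obstacle, in my view, is exactly the minimality step: converting ``has an abelian border of length $k$'' into ``shortest abelian border has length $k$'' requires understanding the joint behavior of the matched-prefix/suffix $1$-count differences as a path, and showing that demanding no earlier match contributes precisely the first-return asymptotic $\Theta(1/k)$ with the correct constant. I would model this as a lattice path (the partial difference between $|\Pref_j|_1$ and $|\Suff_j|_1$ as $j$ runs from $1$ to $k$) and invoke a reflection / cycle-lemma argument together with a local central limit theorem to pin down the $2\sqrt{2}/\sqrt{\pi}$ constant; the connection to irreducible symmetric Motzkin paths mentioned in the adjacent reference is precisely what encodes this first-return structure, and I would expect to borrow that bijective framework to make the path-counting rigorous.
```
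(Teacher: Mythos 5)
The survey states this theorem by citation to \cite{DBLP:journals/dam/ChristodoulakisCCI14} and gives no proof, so your proposal must stand on its own. Your combinatorial setup is the right one: encoding a word by the pairs $(w_j,w_{n+1-j})$, the imbalance $D_j=|\Pref_j(w)|_1-|\Suff_j(w)|_1$ performs a lazy walk (step $0$ in two ways, steps $\pm 1$ in one way each), the middle block is free and contributes $2^{n-2k}$, and ``shortest abelian border of length $k$'' is exactly ``first return of $D$ to $0$ at time $k$'' (borders of length $\ell>n/2$ are subsumed by the symmetry $\ell\leftrightarrow n-\ell$). But your second paragraph contains a false step: you claim that imposing minimality ``leaves the leading asymptotic unchanged'' because words admitting a shorter matching are ``of strictly smaller order''. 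The opposite is true: almost every word with an abelian border of length $k$ has a shorter one, and minimality costs a factor of exact size $\frac{1}{2k-1}$. Your third paragraph then asserts (correctly) that minimality contributes the factor $\Theta(1/k)$, flatly contradicting the second paragraph.

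The genuine gap is that the constant --- the whole point of the refined statement --- is never computed, only promised, and the computation does not produce $2\sqrt{2}$. Carrying out your own plan: the return series of the lazy walk is $U(z)=\sum_{k\ge 0}4^{-k}\binom{2k}{k}z^k=(1-z)^{-1/2}$, so the first-return series is $F(z)=1-1/U(z)=1-\sqrt{1-z}$, and the number of choices of the $k$ outer pairs realizing a first return at time $k$ is $4^k[z^k]F(z)=2C_{k-1}$, a Catalan number (equivalently $\frac{1}{2k-1}\binom{2k}{k}$, i.e.\ your $\binom{2k}{k}$ damped by the first-return factor $\frac{1}{2k-1}$). Hence for $1\le k\le n/2$ the exact count is
\[
2^{n-2k}\cdot 2C_{k-1}\;=\;\frac{2^{n-2k}}{2k-1}\binom{2k}{k}\;=\;\frac{1}{2}\cdot\frac{2^{n}}{k\sqrt{\pi k}}\bigl(1+o(1)\bigr),
\]
so your method does prove the $\Theta\bigl(2^n/(k\sqrt{k})\bigr)$ claim, but it yields the constant $\tfrac12$, not $2\sqrt{2}$ --- a discrepancy of $4\sqrt{2}$. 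Moreover $\tfrac12$ is the correct value: it agrees with the exact formula of \cite{Sadri22} quoted immediately after this theorem in the survey, where $\frac1n\binom{2n-2}{n-1}$ is evidently a typo for $\frac1k\binom{2k-2}{k-1}=C_{k-1}$. So the refined constant in the statement you were asked to prove is itself inconsistent with the exact count, and no correct completion of your walk/cycle-lemma argument can ``pin down the $2\sqrt{2}/\sqrt{\pi}$ constant''; it will pin down $1/(2\sqrt{\pi})$ instead.
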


The exact number, however, has been recently found by Blanchet-Sadri, Chen and Hawes:

\begin{theorem}\cite{Sadri22}
The number of binary words of length $n$ with shortest abelian border of length $k$ is $2^{n-2k+1}\cdot \frac{1}{n}\binom{2n-2}{n-1}$.
\end{theorem}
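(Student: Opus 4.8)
The plan is to recast abelian borders as zeros of a single lattice walk and reduce the count to a Catalan enumeration. For a binary word $w=w_1\cdots w_n$ and $0\le\ell\le n$, set the discrepancy $f(\ell)=|\Pref_\ell(w)|_1-|\Suff_\ell(w)|_1$. Since two equal-length binary words are abelian equivalent iff they carry the same number of $1$'s, $w$ has an abelian border of length $\ell$ exactly when $f(\ell)=0$; hence $w$ has \emph{shortest} abelian border $k$ iff the sequence $f(0)=0,f(1),\dots$ first returns to $0$ at time $k$. First I would record the reflection identity $f(n-\ell)=f(\ell)$ (immediate from $|\Pref_{n-\ell}(w)|_1=|w|_1-|\Suff_\ell(w)|_1$ and its mirror), from which minimality forces $k\le\lfloor n/2\rfloor$: if $f(k)=0$ with $k>n/2$ then $f(n-k)=0$ with $1\le n-k<k$, contradicting that $k$ is shortest. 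Consequently $f(1),\dots,f(k)$ depend only on the $2k$ pairwise-distinct boundary positions $1,\dots,k$ and $n-k+1,\dots,n$.

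Next I would read off the increments $s_j=f(j)-f(j-1)=w_j-w_{n-j+1}$: the bit pair $(w_j,w_{n-j+1})$ equal to $(1,0)$ or $(0,1)$ gives $s_j=\pm1$ (one realization each), while $(0,0)$ or $(1,1)$ both give $s_j=0$ (two realizations). Thus the boundary-bit choices producing first return at $k$ correspond, weight for weight, to length-$k$ lattice paths with up/down steps of weight $1$ and level steps of weight $2$ that start at $0$, return to $0$ at time $k$, and avoid $0$ strictly in between. Because the shortest-border condition constrains only $f(1),\dots,f(k)$, and hence only the boundary bits, the $n-2k$ middle positions $k+1,\dots,n-k$ are entirely free, contributing a factor $2^{\,n-2k}$. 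The count therefore factors as $2^{\,n-2k}N_k$, where $N_k$ is the number of these weighted first-return paths.

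To evaluate $N_k$, note that since all steps lie in $\{-1,0,+1\}$ a first-return path of length $k\ge2$ stays strictly positive, or strictly negative, on $1,\dots,k-1$; the bit-complementation $w\mapsto 1-w$ negates every step and is weight-preserving, so it matches the two cases and $N_k=2P_k$, with $P_k$ the number of positive excursions. Peeling the forced opening $U$ and closing $D$ identifies $P_k$ with the weighted Motzkin paths of length $k-2$ (nonnegative, level weight $2$). If $M(x)$ is the generating function of the latter, the first-return decomposition gives $M=1+2xM+x^2M^2$, whence $M(x)=\frac{1-2x-\sqrt{1-4x}}{2x^2}=\frac{C(x)-1}{x}$ for $C(x)=\frac{1-\sqrt{1-4x}}{2x}$ the Catalan series; comparing coefficients yields $P_k=C_{k-1}=\frac1k\binom{2k-2}{k-1}$. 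The base case $k=1$ (a single level step, $N_1=2=2C_0$) fits the identity $\sum_{k\ge1}N_k x^k=2x+2x^2M(x)$, which gives $N_k=2C_{k-1}$ for all $k\ge1$. Hence the total is $2^{\,n-2k}\cdot 2C_{k-1}=2^{\,n-2k+1}\frac1k\binom{2k-2}{k-1}$, the asserted value (the Catalan factor is naturally indexed by $k$, as the case $k=1$, with count $2^{\,n-1}$, already forces).

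I expect the main obstacle to be pinning down the path model exactly, namely getting the level-step weight $2$ right and proving the weighted-Motzkin-equals-Catalan identity, together with the clean justification that the middle positions are genuinely unconstrained and that no shorter border can intrude. Once the discrepancy walk, the reflection symmetry $f(\ell)=f(n-\ell)$, and the complementation involution are in place, the remaining steps are routine generating-function bookkeeping, and a brute-force check for small $n$ (e.g.\ $n=4$ gives $8$ words for $k=1$ and $2$ for $k=2$) provides a useful sanity test.
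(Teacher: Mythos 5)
Your argument is correct, and since the survey gives no proof of this theorem (it only cites Blanchet-Sadri, Chen and Hawes, whose paper is precisely about encoding abelian borders by lattice paths), the relevant comparison is with that source; your discrepancy-walk reduction is the same kind of lattice-path argument used there. All the key steps check out: the reflection identity $f(n-\ell)=f(\ell)$ forcing $k\le\lfloor n/2\rfloor$, the step weights ($\pm 1$ with weight $1$, level step with weight $2$), the independence of the $n-2k$ middle letters, the complementation involution giving $N_k=2P_k$, and the weighted-Motzkin computation $M=1+2xM+x^2M^2$, $M_m=C_{m+1}$, yielding $N_k=2C_{k-1}$.

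The one thing you should state loudly, rather than in a closing parenthesis, is that the formula as printed in the survey cannot be correct and must be a transcription error. The factor $\frac{1}{n}\binom{2n-2}{n-1}=C_{n-1}$ grows like $4^n/n^{3/2}$, so $2^{n-2k+1}\cdot C_{n-1}$ exceeds $2^n$, the total number of binary words of length $n$, for all large $n$ with $k$ fixed; already for $n=4$, $k=1$ it gives $40$, while the true count is $8$, as your own sanity check shows. What your proof establishes (for $1\le k\le\lfloor n/2\rfloor$; the count is $0$ otherwise) is
\[
2^{n-2k+1}\cdot\frac{1}{k}\binom{2k-2}{k-1},
\]
i.e., the same expression with the Catalan factor indexed by $k$ instead of $n$. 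So, strictly speaking, you have not proved the statement as written --- nothing can, since it is false --- but you have proved the intended theorem and, in doing so, identified the typo in the survey.
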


 A classical result of Ehrenfeucht and  Silberger \cite{EHRENFEUCHT1979101} gives a relation between  periodicity and bordered factors; it states that 
 an infinite word is purely periodic
    if and only if it contains only finitely many unbordered
    factors:

    \begin{theorem}\emph{\cite{EHRENFEUCHT1979101}} \label{th:borders}
    An infinite word $x$ is purely periodic if and only if
    there exists a constant $C$ such that every factor $v$ of $x$ with $|v|\geq C$ is bordered.
\end{theorem}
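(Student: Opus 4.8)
The plan is to prove the two implications separately, with essentially all the difficulty in the converse.

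For the forward direction, suppose $x=u^{\omega}$ is purely periodic with $|u|=p$, and set $C=p+1$. Any factor $v$ of $x$ with $|v|\ge C$ inherits the period $p$ from $x$: writing $v=x_jx_{j+1}\cdots x_{j+|v|-1}$ we have $v_k=v_{k+p}$ for $1\le k\le |v|-p$, so the prefix and the suffix of $v$ of length $|v|-p\ge 1$ coincide, giving a proper border. Hence every factor of length at least $C$ is bordered.

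For the converse, assume there is a constant $C$ such that every factor of length $\ge C$ is bordered; equivalently, the set of unbordered factors is finite, of maximal length $L<C$. Write $p_n=\Pref_n(x)$ and let $\pi(p_n)$ be its smallest period. The key reduction is that it suffices to show the sequence $\pi(p_n)$ is bounded. Indeed, $\pi(p_n)$ is non-decreasing: if $p_{n+1}$ has period $q$ then so does its prefix $p_n$, whence $\pi(p_n)\le q=\pi(p_{n+1})$. A non-decreasing bounded integer sequence is eventually constant, say $\pi(p_n)=p^{*}$ for all large $n$; then for every index $i$, choosing $n>i+p^{*}$ gives $x_i=x_{i+p^{*}}$, so $x$ has global period $p^{*}$ and is purely periodic.

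It remains to rule out that $\pi(p_n)$ is unbounded; being non-decreasing, it would then satisfy $\pi(p_n)\to\infty$. Here I would invoke a Duval-type lemma, provable from a critical-factorization/Fine--Wilf argument (\cref{FW}): a finite word $w$ with $|w|\ge 2\pi(w)-1$ contains an unbordered factor of length at least $\pi(w)$. Let $[a_p,b_p]$ be the interval of lengths $n$ with $\pi(p_n)=p$, and consider the longest prefix $p_{b_p}$ of period $p$. If $b_p\ge 2p-1$ for arbitrarily large values of $p$, the lemma yields unbordered factors of length $\ge p$ in $x$, hence eventually longer than $L$, a contradiction.

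The remaining, and genuinely hard, case is when $b_p< 2p-1$ for all large $p$: the smallest periods of prefixes then grow by a factor approaching the golden ratio (as for the Fibonacci word), so the crude threshold $2\pi(w)-1$ is never reached at the period-maximal prefixes. In this regime one must argue directly at a jump $\pi(p_{b_p})=p\to\pi(p_{b_p+1})=p'>p$: the new period satisfies $p'\le b_p+1$, and one shows that $p_{b_p+1}$ itself (unbordered exactly when $p'=b_p+1$) or a carefully chosen factor straddling the critical position is unbordered of length tending to infinity. Making this extraction uniform and quantitative is exactly where the sharp Ehrenfeucht--Silberger estimates enter, and is the technical heart of the argument; the Critical Factorization Theorem is the natural tool, since at a critical position the local and global periods coincide, pinning down an unbordered factor of length equal to the period. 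I expect this period-close-to-length analysis to be the main obstacle.
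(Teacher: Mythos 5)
The survey does not actually prove \cref{th:borders} --- it only cites Ehrenfeucht and Silberger --- so your attempt can only be compared with the known proofs. Your forward direction is correct, and so is the first half of your converse: the reduction to boundedness of the prefix periods $\pi(\Pref_n(x))$ is valid, and the ``Duval-type lemma'' you invoke is true and even elementary --- if $\pi(w)=p$ and $|w|\geq 2p-1$, the prefix of length $2p-1$ contains all $p$ conjugates of the primitive word $\Pref_p(w)$, one of which is a Lyndon word and hence unbordered. The genuine gap is the case you explicitly leave open, $b_p<2p-1$ for all large $p$: this is not a technical remainder, it is the whole theorem. What you offer there (``one shows that \dots a carefully chosen factor straddling the critical position is unbordered of length tending to infinity'') restates the desired conclusion rather than proving it; appealing to ``the sharp Ehrenfeucht--Silberger estimates'' is circular, since those estimates are precisely the finite core of the theorem being proved; and the Critical Factorization Theorem does not supply the missing step, because the square witnessing the local period at a critical point need not contain any unbordered factor of length comparable to $\pi(w)$. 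Your picture of the hard regime is also off: for the Fibonacci word the longest prefix of minimal period $F_k$ has length $F_{k+1}+F_k-2\geq 2F_k-1$, so the Fibonacci word is entirely disposed of by your easy case; the words that genuinely stay in the regime $b_p\leq 2p-2$ are those whose prefix exponents are bounded away from $2$, e.g.\ infinite $(7/4)^+$-free ternary words (which exist by Dejean's theorem), and for them your outline says nothing.

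The way the literature closes this gap is structurally different from your plan: one stops looking at prefixes of $x$. Let $L$ be the maximal length of an unbordered factor of $x$, fix such a factor $u$ with $|u|=L$, and let $y$ be the suffix of $x$ beginning at an occurrence of $u$. Duval's theorem --- the nontrivial finite ingredient that your outline would in effect have to reprove --- states that if the maximal unbordered-factor length $\mu(w)$ of a finite word $w$ is attained by a \emph{prefix} of $w$ and $|w|\geq 2\mu(w)$, then $\pi(w)=\mu(w)$. Every prefix $w$ of $y$ with $|w|\geq 2L$ satisfies these hypotheses (all its unbordered factors have length at most $L$, and its prefix $u$ is unbordered of length exactly $L$), so $\pi(w)=L$ for every such $w$, hence $y=u^{\omega}$ is purely periodic; thus $x$ is ultimately periodic, and a separate elementary analysis of the last position at which the preperiod violates the tail period (an ultimately but not purely periodic word always has unbordered factors of unbounded length) upgrades this to pure periodicity. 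The decisive idea --- pass to a suffix in which a maximum-length unbordered factor sits at the very beginning, so that a statement about unbordered \emph{prefixes} becomes applicable --- is absent from your proposal, and without Duval's theorem or an equivalent of it your plan cannot be completed.
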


If we replace periodic with abelian periodic, 
 an analogous assertion does not hold: abelian
periodicity does not imply a finite number of unbordered
factors. For example, the Thue--Morse word has abelian period 2, but contains unbordered factors of unbounded lengths since it is aperiodic. If we replace borders with abelian borders, the reciprocal does not hold even in a  stronger form: even if all long factors have short abelian  borders, the word does not have to be periodic.

\begin{proposition}\cite{DBLP:journals/ejc/CharlierHPZ16} There exist an infinite aperiodic word $x$ and constants $C$, $D$ such that every factor $v$ of $x$ with $|v| \geq C$ has an abelian border of length at most $D$.
\end{proposition}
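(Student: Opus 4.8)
The plan is to reduce the statement to a one‑dimensional counting question and then to exhibit an explicit word. Recall that a factor $v=v_1\cdots v_n$ has an abelian border of length $\ell$ exactly when $\PV(\Pref_\ell(v))=\PV(\Suff_\ell(v))$. Over a binary alphabet this is a single scalar condition: setting
\[h(\ell)=|\Pref_\ell(v)|_1-|\Suff_\ell(v)|_1,\]
we have $h(0)=0$, the increments satisfy $h(\ell)-h(\ell-1)=v_\ell-v_{n-\ell+1}\in\{-1,0,1\}$, and $v$ has an abelian border of length $\ell$ if and only if $h(\ell)=0$. Thus the conclusion ``every factor of length $\geq C$ has an abelian border of length $\leq D$'' becomes: for every factor the walk $h$ returns to $0$ within its first $D$ steps. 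First I would record this reformulation, noting that the symmetric case (first letter $=$ last letter) already gives $h(1)=0$.

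The conceptual engine is balance. If $x$ is $C$-balanced, then $\Pref_\ell(v)$ and $\Suff_\ell(v)$ are two factors of $x$ of the same length $\ell$, so $|h(\ell)|\leq C$ for every $\ell$. Hence $h$ is a walk confined to $[-C,C]$ with steps in $\{-1,0,1\}$ and $h(0)=0$. If $h$ failed to vanish on $\{1,\dots,D\}$, then, since the steps are at most $1$ in absolute value, $h$ would keep a constant sign there, say $1\leq h(\ell)\leq C$ for all $\ell\leq D$; equivalently $\Pref_\ell(v)$ is strictly $1$-heavier than $\Suff_\ell(v)$ for every $\ell\leq D$. The task therefore reduces to constructing an aperiodic word of bounded balance in which no factor has its length-$D$ prefix uniformly heavier (in either letter) than its length-$D$ suffix --- a persistent end-to-end asymmetry that I must rule out.

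To build such a word I would use the parametrised family already exploited in this paper. Concretely, I would concatenate blocks drawn from a small finite set, all sharing a common Parikh vector $\PV$ (so that $x$ is abelian periodic, hence $C$-balanced for a fixed $C$), each block consisting of a fixed \emph{frame} on its two ends together with a \emph{count-neutral} variable middle --- a sub-block whose multiplicity may vary and whose Parikh vector is a multiple of $\PV$, so that inserting copies of it disturbs neither the block's Parikh vector nor the balance. Letting the sequence of multiplicities be non-eventually-periodic makes $x$ aperiodic, exactly as in the family $(\dots(111000)^*\dots)^{\omega}$ used above. The verification that every long factor has an abelian border of length $\leq D$ is then a finite case analysis on how the endpoints $s,t$ of the factor sit relative to the block grid: according to the residues of $s$ and $t$ modulo the block length and to whether each endpoint falls inside a frame or inside a variable middle, I would exhibit in each case an explicit $\ell\leq D$ for which the prefix and suffix counts coincide, i.e.\ $h(\ell)=0$.

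The main obstacle is making this last verification uniform. The dangerous factors are those with one endpoint buried deep inside a long run of the count-neutral sub-block, where the local letter counts drift the most and where the end-to-end asymmetry of the previous paragraph could in principle build up; the whole point of the fixed frame is that, regardless of how long the variable middles are, the prefix and suffix counts are forced to re-synchronise within a window whose size is controlled only by the frame, giving a uniform $D$. I expect the bulk of the work to be precisely this bookkeeping --- choosing the frame so that the walk $h$ is driven back to $0$ within a bounded number of steps in every alignment --- while aperiodicity is immediate from the non-periodic exponent sequence and the constant $C$ is read off from the common Parikh vector of the blocks.
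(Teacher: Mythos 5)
Your reformulation via the walk $h(\ell)=|\Pref_\ell(v)|_1-|\Suff_\ell(v)|_1$ is correct, and the template you describe --- blocks sharing a Parikh vector, a fixed frame at both ends, a count-neutral filler repeated a non-eventually-periodic number of times --- is exactly the family $(000101010111000111000(111000)^*111010101)^{\omega}$ that this survey displays (in connection with centered abelian squares) and that the cited paper of Charlier, Harju, Puzynina and Zamboni works with; the survey itself proves nothing here, it only cites, so this is the construction to compare against. The genuine gap is that your proposal stops exactly where the proposition begins: you never fix the frame, never name $D$, and the decisive step --- that for \emph{every} alignment of a factor's endpoints with the block grid the walk $h$ returns to $0$ within $D$ steps --- is deferred to bookkeeping you ``expect'' to succeed after ``choosing the frame'' suitably. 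By your own reduction, the existence of a frame with that property \emph{is} the proposition; asserting that one can be chosen, without exhibiting it and verifying it, is circular. (Two smaller points: the balance observation is never actually used, since constant sign of $h$ already follows from the unit step size; and aperiodicity is not literally ``immediate'' from a non-periodic exponent sequence --- one needs the frames to act as synchronizing markers so that the block parsing, hence the exponent sequence, is recoverable from the word.)

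The missing verification is finite but it is the whole content. To finish along your lines you would, for instance, take multiplicities $m_i\in\{2,3\}$ along a non-eventually-periodic sequence, so that every block $000101010(111000)^{m_i}111010101$ has length at most $36$, and then check the finitely many start/end offset pairs (inside the fillers only the offset modulo $6$ matters), producing in each case an explicit $\ell\le D$ with $h(\ell)=0$. That this cannot be waved away is shown by the alignment in which the factor begins with $EA=111010101\,000101010$ (every $E$ is followed by $A$) and ends with $EA$ (every $A$ is preceded by $E$): a direct count gives $h(\ell)\ge 1$ for all $1\le\ell\le 17$ and $h(18)=0$, so the ``persistent asymmetry'' you must rule out really does survive across an entire frame, and $D$ cannot be small. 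Until such a case analysis (or an equivalent structural argument) is carried out for a pinned-down word, the proposition has not been proved.
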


Whether it holds for abelian periodicity is an open question:

\begin{problem}\cite{DBLP:journals/ejc/CharlierHPZ16}
 Let $x$ be an infinite word and $C$ a constant such that every factor v of $x$ with $|v| \geq C$ is
abelian bordered. Does it follow that $x$ is abelian periodic?
\end{problem}

However, there exists an abelian analogue of the following weaker
version of Theorem \ref{th:borders}:

\begin{theorem} Let $x$ be an infinite word having only finitely many unbordered factors. Then there exists a
constant $N$ such that $x$ contains at most $N$ factors of each given length $n \geq 1$. In other words, $x$ has bounded
factor complexity.
\end{theorem}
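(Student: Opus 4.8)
The plan is to deduce bounded factor complexity from a uniform bound on the minimal \emph{period} of long factors. Write $U$ for the (finite) set of unbordered factors of $x$ and let $L=\max_{u\in U}|u|$; since every single letter is unbordered, only finitely many letters occur in $x$, so $p_x(n)$ is finite for every $n$. By the choice of $L$, every factor of $x$ of length greater than $L$ is bordered. Because $p_x$ is non-decreasing (every factor of length $n$ extends inside the infinite word $x$ to a factor of length $n+1$, so taking prefixes maps the length-$(n+1)$ factors onto the length-$n$ factors), it suffices to bound $p_x(n)$ for all large $n$.

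The key step is the following claim: there is a constant $B$ such that every factor $w$ of $x$ with $|w|\ge B$ has minimal period $\pi(w)\le L$. Granting this, the counting is immediate. A factor $w$ of length $n\ge B$ with $\pi(w)=p\le L$ is completely determined by the pair consisting of $p$ and its prefix $\Pref_p(w)$, which is itself a factor of $x$ of length at most $L$; hence the map $w\mapsto(p,\Pref_p(w))$ is injective on the factors of length $n$, and therefore
\[
p_x(n)\ \le\ \sum_{p=1}^{L}p_x(p)\ =:\ N_0 ,
\]
a finite constant independent of $n$. Taking $N=\max\bigl(N_0,\ \max_{1\le n<B}p_x(n)\bigr)$ then gives $p_x(n)\le N$ for all $n\ge 1$, which is the assertion.

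The main obstacle is thus the period claim, which is exactly the phenomenon that the length of the longest unbordered factor controls the minimal period. This is the content of Duval's theorem: for any finite word $w$, once $|w|$ is at least a fixed multiple of the length $\mu(w)$ of its longest unbordered factor, one has $\pi(w)=\mu(w)$. Applying this with $B$ a suitable multiple of $L$ and using $\mu(w)\le L$ yields $\pi(w)=\mu(w)\le L$ for all factors of length at least $B$, as required. I would present (or at least sketch) this period bound through the standard analysis of how the shortest and longest borders of $w$ interact, in the spirit of \cref{FW}, since this is the one place where genuine combinatorial input is needed; everything else is bookkeeping.

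Finally, I would note the short alternative that stays entirely within the results quoted above: the hypothesis ``finitely many unbordered factors'' is precisely the statement that every factor of length at least $C:=L+1$ is bordered, so by the Ehrenfeucht--Silberger theorem (\cref{th:borders}) the word $x$ is purely periodic, say $x=v^{\omega}$, whence $p_x(n)\le|v|$ for every $n$. The reason to prefer the direct period-and-counting argument is that, unlike the appeal to \cref{th:borders}, it is the version that admits an abelian analogue: periodicity need not survive the passage to abelian borders, but the bound on the number of factors does.
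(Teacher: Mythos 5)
Your proof is correct, but your primary route differs from the paper's. The paper gives no standalone argument: it presents the statement as a ``weaker version'' of \cref{th:borders}, i.e., exactly your closing alternative --- over a finite alphabet the hypothesis means every factor of length at least $L+1$ is bordered, so Ehrenfeucht--Silberger gives $x=v^{\omega}$ and hence $p_x(n)\le |v|$ for all $n$. Your main argument instead invokes Duval's theorem (if $|w|\ge 3\mu(w)$, where $\mu(w)$ is the maximal length of an unbordered factor of $w$, then $\pi(w)=\mu(w)$), deduces $\pi(w)\le L$ for every factor $w$ with $|w|\ge 3L$ since $\mu(w)\le L$, and then counts via the injective map $w\mapsto(\pi(w),\Pref_{\pi(w)}(w))$, giving $p_x(n)\le\sum_{p\le L}p_x(p)$; this is sound, and it never passes through periodicity of $x$ itself. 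The trade-off: the paper's route is a one-line corollary of a black box whose conclusion (pure periodicity) is strictly stronger than what is being proved, while yours bounds the complexity directly --- which, as you observe, is the shape of statement that survives abelianization \cite{DBLP:journals/ejc/CharlierHPZ16}, whereas periodicity does not (the paper even records this as an open problem). One caveat: your black box is not lighter than the paper's. Duval's theorem belongs to the same circle of results as \cref{th:borders} and is genuinely hard (the optimal multiple $7/3$ is a theorem of Holub and Nowotka), so your plan to obtain it by a border analysis ``in the spirit of \cref{FW}'' understates the difficulty; it should be cited as a known theorem rather than promised as a sketch.
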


Notice that boundedly many unbordered factors  implies ultimate periodicity but not, in general, pure periodicity. Take for example the word $01^\omega$, which in fact has infinitely many unbordered factors.

\begin{theorem}\cite{DBLP:journals/ejc/CharlierHPZ16} Let $x$ be an infinite word having only finitely many abelian unbordered factors. Then there
exists a constant $N$ such that $x$ contains at most $N$ abelian equivalence classes of factors of each given
length $n \geq 1$. In other words, $x$ has bounded abelian complexity.\end{theorem}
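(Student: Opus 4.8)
The plan is to prove the contrapositive: \emph{if $x$ has unbounded abelian complexity, then $x$ has abelian unbordered factors of arbitrarily large length} (equivalently, infinitely many abelian unbordered factors). By the proposition relating bounded abelian complexity and balance, unbounded abelian complexity means that $x$ is not $C$-balanced for any $C$; since $\Sigma_d$ is finite, I can fix a single letter $a$ such that for every $C$ there are two factors of $x$ of equal length whose numbers of occurrences of $a$ differ by more than $C$. Writing $f(k)=|\Pref_k(x)|_a$ for the number of $a$'s among the first $k$ letters of $x$, the guiding observation is a Parikh-vector reading of abelian borders: a factor occupying positions $[p,q)$, of length $M=q-p$, has an abelian border of length $\ell$ exactly when the length-$\ell$ prefix and the length-$\ell$ suffix have the same Parikh vector, and in coordinate $a$ this reads $f(p+\ell)+f(q-\ell)=f(p)+f(q)$. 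Since a genuine abelian border requires equality in \emph{every} coordinate, it suffices to produce, for every target, a long factor $[p,q)$ for which this equality fails in coordinate $a$ for all $\ell\in(0,M)$.

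First I would convert the imbalance into a large deviation from a chord. Given two length-$t$ windows with $a$-counts differing by at least $C$, let $R=[p_0,q_0)$ be the smallest interval containing both, and let $L_R$ be the affine chord of $f$ on $R$, i.e.\ the line through $(p_0,f(p_0))$ and $(q_0,f(q_0))$. Comparing the increments of $f$ and of $L_R$ across the two windows, a short computation with the four relevant values of $f-L_R$ shows that the oscillation $\max_R(f-L_R)-\min_R(f-L_R)$ is at least $C/2$. Hence either $f$ rises at least $C/4$ above $L_R$ at some point $k^\ast\in R$, or it falls at least $C/4$ below it; the two cases are symmetric, producing respectively ``front-loaded'' and ``back-loaded'' factors, so I assume the former.

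Next I would localize this bump to a \emph{strictly} concave sub-factor. Consider the lower convex hull of $\{(k,f(k)):p_0\le k\le q_0\}$. Since $L_R$ lies above this hull, the hull edge $e=[p,q]$ whose horizontal span contains $k^\ast$ satisfies $f\ge\mathrm{chord}_e$ on $[p,q]$ and $f(k^\ast)-\mathrm{chord}_e(k^\ast)\ge C/4$. On $e$ the function $f$ may still \emph{touch} $\mathrm{chord}_e$ at interior integer points, which is precisely the delicate point, because a symmetric pair of contact points would reinstate the coordinate-$a$ equality and thus allow an abelian border. I resolve this by passing to the sub-interval $[t_i,t_{i+1}]$ delimited by the two contacts of $f$ with $\mathrm{chord}_e$ that are consecutive around $k^\ast$. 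As $f$ agrees with $\mathrm{chord}_e$ at both $t_i$ and $t_{i+1}$, the chord of $f$ on $[t_i,t_{i+1}]$ coincides with $\mathrm{chord}_e$ there, while $f$ is strictly above it on the open interval; and since $f-\mathrm{chord}_e$ changes by at most $1$ per step, growing from $0$ at $t_i$ to at least $C/4$ at $k^\ast$, we get $t_{i+1}-t_i\ge C/4$. Putting $M=t_{i+1}-t_i$ and $\psi(\ell)=f(t_i+\ell)-\mathrm{chord}_e(t_i+\ell)>0$ for $\ell\in(0,M)$, strict positivity gives $\psi(\ell)+\psi(M-\ell)>0$, i.e.\ $f(t_i+\ell)+f(t_{i+1}-\ell)>f(t_i)+f(t_{i+1})$ for every $\ell\in(0,M)$. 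Thus coordinate $a$ already forbids every abelian border of $[t_i,t_{i+1})$, so this factor is abelian unbordered and has length at least $C/4$. Letting $C\to\infty$ yields abelian unbordered factors of unbounded length, establishing the contrapositive.

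I expect the main obstacle to be exactly this interior-contact phenomenon. Unlike ordinary borders, abelian borders do not telescope, so the usual reduction is unavailable: the shortest abelian border of a word need not itself be abelian unbordered, and one cannot simply bound the shortest abelian border by the maximal length of an unbordered factor. The decisive insight is therefore twofold: that a profile lying \emph{strictly} above its own chord in a \emph{single} coordinate already rules out all abelian borders (since borders demand agreement in all coordinates simultaneously), and that working with the contrapositive, rather than bounding $a_x(n)$ directly, is what exposes the convexity structure needed to extract such a profile. The remaining ingredients are routine: the four-point oscillation estimate, and the verification that the contact-point refinement yields a strictly convex segment of the required length.
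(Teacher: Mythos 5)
Your proof is correct; the only caveat on the comparison is that this survey states the theorem with a citation to Charlier--Harju--Puzynina--Zamboni and reproduces no proof, so your argument can only be measured against the cited source, which works in the same geometric spirit (via the path representation of words, as in Section 6.2 of the survey: a factor whose path stays strictly on one side of the chord joining its endpoints is abelian unbordered). All the essential steps of your plan check out. Unbounded abelian complexity is equivalent to unboundedness of the balance function, and pigeonhole over the finite alphabet legitimately fixes a single letter $a$ with unbounded imbalance; since an abelian border must match in \emph{every} coordinate, it is enough to kill all borders in coordinate $a$ alone. The four-point estimate is right: the slope terms cancel because the two windows have equal length, so the difference of window counts is at most twice the oscillation of $f-L_R$, giving oscillation at least $C/2$ and (since $f-L_R$ vanishes at both ends of $R$) a deviation of at least $C/4$ on one side. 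The convex-hull refinement is also sound: the hull is a convex minorant agreeing with $L_R$'s values at the endpoints of $R$, hence lies below $L_R$, so the deviation $\ge C/4$ survives relative to $\mathrm{chord}_e$; the slope of $\mathrm{chord}_e$ lies in $[0,1]$ because $f$ is nondecreasing with unit steps, so $f-\mathrm{chord}_e$ changes by at most $1$ per step and $t_{i+1}-t_i\ge C/4$ follows; and strict positivity of $\psi$ on the open interval gives $f(t_i+\ell)+f(t_{i+1}-\ell)>f(t_i)+f(t_{i+1})$ for all $0<\ell<M$, which is exactly the negation of the coordinate-$a$ border condition for every admissible border length. Your diagnosis of the delicate point is accurate as well: without passing to consecutive contact points, interior touchings of the chord could produce a symmetric pair $\ell$, $M-\ell$ with $\psi(\ell)=\psi(M-\ell)=0$ and hence an abelian border; the contact-point extraction is a clean, self-contained way to rule this out, and it is the only place where strictness, rather than mere inequality, is needed.
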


See also Subsection \ref{subsec:WAP} for other generalizations of Theorem \ref{th:borders} in the abelian setting.

Abelian borders turn out to be a useful instrument for studying some  combinatorial properties of words which do not seem to be directly related at the first glance. For example, they give a necessary condition for a word to be self-shuffling. An infinite word $x$ is called self-shuffling if there exist sequences of finite words $(U_i)_{i=0}^{\infty}$ and  $(V_i)_{i=0}^{\infty}$ such that $x=\prod_{i=0}^{\infty} U_iV_i=\prod_{i=0}^{\infty} U_i=\prod_{i=0}^{\infty} V_i$. In other words, $x$ is a shuffle of two copies of itself. The following proposition gives a necessary condition for a word to be self-shuffling:

\begin{proposition} \cite{DBLP:journals/jct/CharlierKPZ14}
If $x$ is self-shuffling, then for every positive integer $N$ there exists a positive integer $M$ such that every prefix $u$ of $x$ with $|u| \geq M$ has an abelian border $v$ with $|u|/2 \geq |v| \geq N$. In particular, $x$ must begin in only a finite number of abelian unbordered words.\end{proposition}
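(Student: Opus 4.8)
The plan is to exploit the three simultaneous factorizations of $x$ that self-shuffling provides. Writing $x = \prod_{i\geq 0} U_i V_i = \prod_{i\geq 0} U_i = \prod_{i\geq 0} V_i$, I would fix an arbitrary prefix $u = \Pref_L(x)$ and, inside the shuffle factorization $U_0 V_0 U_1 V_1 \cdots$, record how many of the first $L$ letters are contributed by the $U$-blocks and how many by the $V$-blocks; call these $\alpha$ and $\beta$, so that $\alpha + \beta = L$. The observation I would build everything on is that the $U$-contributions, read left to right, form precisely the prefix $\Pref_\alpha(x)$ --- this is because $\prod_i U_i = x$ --- and symmetrically the $V$-contributions form $\Pref_\beta(x)$. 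Since the multiset of letters of $u$ is the disjoint union of its $U$-part and its $V$-part, this gives the Parikh-vector identity
$$\PV(\Pref_L(x)) = \PV(\Pref_\alpha(x)) + \PV(\Pref_\beta(x)),$$
valid for every $L$.

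From here I would assume without loss of generality that $\alpha \leq \beta$ (otherwise swap the roles of $U$ and $V$), so that $\alpha \leq L/2$, and take $v = \Pref_\alpha(x)$ as the candidate abelian border. To verify it, note that the suffix of $u$ of length $\alpha$ occupies positions $\beta+1,\ldots,L$, so its Parikh vector is $\PV(\Pref_L(x)) - \PV(\Pref_\beta(x))$, which the identity turns into $\PV(\Pref_\alpha(x)) = \PV(v)$. Hence $v$ is a proper prefix of $u$ that is abelian equivalent to the suffix of $u$ of the same length, i.e.\ an abelian border, and its length is $\min(\alpha,\beta) \leq |u|/2$.

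It then remains to force $|v| \geq N$. Because $x$ is infinite and equals both $\prod_i U_i$ and $\prod_i V_i$, we have $\sum_i |U_i| = \sum_i|V_i| = \infty$, so both $\alpha(L)$ and $\beta(L)$ increase to infinity with $L$, and therefore so does $\min(\alpha(L),\beta(L))$. Choosing $M$ large enough that $\min(\alpha(L),\beta(L)) \geq N$ for all $L \geq M$ then yields, for every prefix $u$ with $|u|\geq M$, an abelian border $v$ with $N \leq |v| \leq |u|/2$. For the final sentence I would apply this with $N = 1$: every prefix of length at least the corresponding $M$ is abelian bordered, so all abelian unbordered prefixes have length below $M$, of which there are only finitely many.

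The single load-bearing step --- and the only place I expect genuine friction --- is the claim that the $U$-material and $V$-material inside a prefix of the shuffle are themselves \emph{exact} prefixes of $x$, rather than merely abelian equivalent to prefixes; this is precisely what converts the self-shuffling hypothesis into the additive Parikh identity, after which the argument is a one-line computation plus the elementary divergence $\alpha(L),\beta(L)\to\infty$. The remaining care is bookkeeping: allowing empty blocks and prefixes $u$ that end in the middle of a block, both of which are absorbed cleanly by defining $\alpha$ and $\beta$ as the lengths of contributed material rather than as block counts.
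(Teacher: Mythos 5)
Your proof is correct. The survey states this proposition without proof (it only cites Charlier, Kamae, Puzynina and Zamboni), and your argument --- splitting the length-$L$ prefix of the shuffle into its $U$-material and $V$-material, observing that each is an \emph{exact} prefix of $x$ so that $\PV(\Pref_L(x))=\PV(\Pref_\alpha(x))+\PV(\Pref_\beta(x))$, and taking the shorter of the two as the abelian border --- is precisely the mechanism behind the cited result; the supporting details (monotone divergence of $\alpha(L)$ and $\beta(L)$, hence of their minimum, and the nonemptiness/properness of the border guaranteed by $N\geq 1$ and $|v|\leq |u|/2$) are all handled correctly.
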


We discussed shuffling in relation with abelian squares in Subsection~\ref{subsec:squares} .

Furthermore, abelian borders turn out to be useful for studying certain palindromicity properties. For example, in \cite{DBLP:journals/dam/HolubS09} a notion of a minimal palindromic word has been introduced. Let $w = w_1 \cdots w_{n}$ be a word of length $n$ over
$\Sigma_d$, and let $l\leq n$. Let $s: \mathbb{N} \to \mathbb{N}$ be an
increasing map such that $s(l) < n$. Then the word
$w_{s(1)}\cdots w_{s(l)}$ is a \emph{scattered subword} of length $l$ of
$w$. Clearly, every binary word  contains a palindromic scattered subword of length at least half of
its length -- a power of the prevalent letter. A  word is called \emph{minimal palindromic} if it contains a palindromic scattered subword longer than half of its length. Holub and Saari~\cite{DBLP:journals/dam/HolubS09} proved that minimal palindromic binary words are abelian unbordered.  
This has been recently generalized to any size of the alphabet  by Ago and Basic:

\begin{theorem}\cite{DBLP:journals/dam/AgoB21}
Minimal palindromic words are abelian unbordered. 
\end{theorem}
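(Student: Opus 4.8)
The plan is to argue by contraposition: assuming $w$ (of length $n$) has an abelian border, I would produce a palindromic scattered subword of $w$ that is too long for $w$ to be minimal palindromic. Recall that the longest palindromic scattered subword of a minimal palindromic word is a power of a single letter and has length at most $\lceil n/2\rceil$; so it suffices to violate either horn. I would write the border as $w=LMR$ with $L=\Pref_k(w)$, $R=\Suff_k(w)$, $\PV(L)=\PV(R)$ and $k\le n/2$, and set $m=\max_{a}|w|_a$. If $m>\lceil n/2\rceil$, the scattered subword formed by all occurrences of that one letter is already a palindrome longer than half of $w$, and $w$ is not minimal palindromic; so the real work is the case where no letter is a strict majority.

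In that case the idea is to \emph{fold} $L$ onto $R$. Because $\PV(L)=\PV(R)$, the reversed suffix $R^{R}$ has the same Parikh vector as $L$, so $L$ and $R^{R}$ admit a common scattered subword $c$; reading $c$ left to right inside $L$ and $c^{R}$ left to right inside $R$, and using that $k\le n/2$ forces every position drawn from $L$ to precede every position drawn from $R$, the word $c\,P\,c^{R}$ is palindromic, where $P$ is any palindromic scattered subword of the region enclosed between the last letter taken from $L$ and the first letter taken from $R$. This enclosed region still carries essentially the whole Parikh vector of $w$, and the construction can be iterated by peeling off one matched pair at a time: match an occurrence of a letter $a$ in $L$ with one in $R$, recurse on the enclosed factor, and wrap the pair around the result.

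The heart of the matter, and what I expect to be the main obstacle, is to show that this folding produces a palindrome \emph{strictly} longer than the best unary palindrome $a^{m}$. The naive move — reflecting all of $L$ across the centre — is too weak over alphabets of size $\ge 3$, since $L$ and $R^{R}$ can be anagrams whose longest common scattered subword has length only $1$ (e.g.\ $L=012$, $R^{R}=210$); so one cannot simply make $|c|$ close to $k$. The decisive point must be the no-majority hypothesis: after wrapping a maximal power of $a$, the enclosed region cannot consist of $a$'s alone, so a second letter can be placed symmetrically and one gains at least one unit over $|a^{m}|$. Making this gain rigorous requires careful bookkeeping of how each letter splits among $L$, $M$ and $R$ through $\PV(L)=\PV(R)$, together with a clean induction (on $n$, or on the number of pairs peeled) ensuring the nesting never degenerates; the delicate interaction is that the pairs taken from $L$ and $R$ must stay non-crossing, which the left-to-right readings enforce but which must be reconciled with the choice of the enclosed palindrome $P$.

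Finally I would note that the argument specialises to the binary theorem of Holub and Saari: over a two-letter alphabet two words with equal Parikh vectors share a common scattered subword of length at least half their length (match the prevalent letter), so the folded palindrome already exceeds $\lceil n/2\rceil$, which is exactly the minimal palindromic threshold in the binary case.
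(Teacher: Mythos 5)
First, a remark on the comparison you were asked for: the survey does not prove this theorem at all --- it is stated as a cited result of Ago and Basic --- so there is no proof in the paper to measure your plan against, and it has to stand on its own. It does not, and the gap is precisely the step you yourself call ``the heart of the matter.'' Showing that, in the no-majority case, an abelian border forces a palindromic scattered subword \emph{strictly} longer than the best unary palindrome $a^{m}$ is essentially the entire theorem; your plan postpones it to ``careful bookkeeping'' and an induction that is never set up, and the one concrete mechanism you offer for the gain is a non sequitur. Any palindromic subsequence formed by $a$'s together with a \emph{single} occurrence of a letter $b\neq a$ has the shape $a^{s}ba^{s}$ with $s\leq\min(\alpha_{-},\alpha_{+})$, where $\alpha_{-}$ and $\alpha_{+}$ are the numbers of $a$'s of $w$ before and after that occurrence; since $\alpha_{-}+\alpha_{+}=m$, it beats $a^{m}$ only when $\alpha_{-}=\alpha_{+}$, i.e.\ only when $b$ sits exactly at the median gap of the occurrences of $a$. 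Your hypothesis (``the enclosed region cannot consist of $a$'s alone'') guarantees that some $b\neq a$ exists in the region, but says nothing about its position: if the region is $b\,a^{r}$ with $r\geq 1$, no such insertion gains anything. So the inference ``a second letter exists, hence it can be placed symmetrically and one gains a unit'' is invalid, and the positional information that actually forces a gain must be extracted from $\PV(L)=\PV(R)$ in some global way that the plan never identifies. That extraction is not a detail to be filled in; it \emph{is} the proof.

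Two further problems. Your claim that the binary theorem of Holub and Saari falls out of the folding is unjustified: matching the prevalent letter of $L$ gives a common scattered subword of length $\lceil k/2\rceil$, so the bound you state is $2\lceil k/2\rceil+\lceil (n-2k)/2\rceil$, which for even $k$ equals $\lceil n/2\rceil$ exactly and does not exceed it. For $w=010101$ with $L=R=01$ and $M=01$ your bound yields $3=\lceil 6/2\rceil$; the construction can in fact reach $5$ (take $c=0$ on the first and last positions, so that the enclosed region $1010$ contributes $101$), but only because the enclosed region swallows the unused parts of $L$ and $R$ --- a contribution your accounting ignores and which again depends on where letters sit, not just on Parikh vectors. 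Finally, your target inequality is calibrated to the binary definition: for the $d$-ary theorem, ``minimal palindromic'' has to be taken relative to the alphabet size (no palindromic scattered subword longer than $\lceil n/d\rceil$), not relative to $\lceil n/2\rceil$. The word $012012$ shows why this matters: it is abelian bordered (indeed bordered), yet its longest palindromic scattered subword has length exactly $3=\lceil 6/2\rceil$, so over non-binary alphabets the ``longer than $\lceil n/2\rceil$'' horn of your dichotomy can be unattainable, and the whole burden falls on the unary-beating step --- exactly the one that is missing.
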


\section{Abelian properties of Sturmian words}\label{sec:sturmian}




A Sturmian word can be defined as an infinite word that has $n+1$ distinct factors of each length $n\geq 0$. There exists a vast literature on Sturmian words (see, e.g., Chapter 2 in \cite{LothaireAlg} for a presentation of the topic). We now give some basic notions that are needed to present the results on their abelian combinatorics.

An infinite aperiodic binary word is Sturmian if and only if it is balanced, in the sense of Definition~\ref{def:balance}. Therefore, for every  $n\geq 0$, the $n+1$ factors of length $n$ of a Sturmian word are partitioned in two abelian equivalence classes (often called \emph{light} factors and \emph{heavy} factors, depending on the number of $1$s they contain).  For example, the 5 factors of length 4 of the Fibonacci word are: $0010$, $0100$ (light factors), $0101$, $1001$ and $1010$ (heavy factors).

A useful description of Sturmian words is the following.
Given an irrational number $0<\alpha<1$ and a real number $\rho$, the Sturmian word $\underline{s}_{\alpha,\rho}$ (resp.,~$\overline{s}_{\alpha,\rho}$) with \emph{slope} $\alpha$ and \emph{intercept} $\rho$ is the infinite word 
\[
\underline{s}_n=\lfloor \alpha(n+1)+\rho \rfloor - \lfloor \alpha n+\rho \rfloor
\]
resp.,
\[
\overline{s}_n=\lceil \alpha(n+1)+\rho \rceil - \lceil\alpha n+\rho \rceil,
\]
for every $n\geq 0$. 
We let $\underline{I}_0$ denote the interval $[0,1-\alpha)$,  $\underline{I}_1$ the interval $[1-\alpha,1)$.  Denoting by $\{\theta\}$  the fractional part $\theta -\lfloor \theta\rfloor$ of a real number $\theta$, we have that for every $n\geq 0$
$$\underline{s}_{n} =
\left\{
	\begin{array}{ll}
		0  & \mbox{if } \{ \rho + n\alpha \}\in \underline{I}_0,\\
		1  & \mbox{if } \{ \rho + n\alpha \}\in \underline{I}_1.
	\end{array}
\right.$$ 
The same expression can be written for $\overline{s}_{n}$ with  $\overline{I}_0=(0,1-\alpha]$ and $\overline{I}_1=(1-\alpha,1]$. 

Notice that $\underline{s}_{\alpha,\rho}=\overline{s}_{\alpha,\rho}$ except when $\rho + n \alpha$ is an integer for some $n\geq 0$, that is, $\rho$ is congruent to $-n\alpha$ modulo $1$, in which case the two words differ at position $n$, and also at position $n-1$ if $n>0$. In particular, when $\rho=0$,  we have $\underline{s}_{\alpha,0}=0s_{\alpha,\alpha}$ and $\overline{s}_{\alpha,0}=1s_{\alpha,\alpha}$. If $\rho=\alpha$, the Sturmian word $s_{\alpha,\alpha}$ is called \emph{characteristic} or \emph{standard}. 

Recall that the (simple) \emph{continued fraction} of an irrational number $\alpha$, $0<\alpha<1$, is
\begin{equation}\label{cf}
  \alpha = \dfrac{1}{a_1 + \dfrac{1}{a_2 + \ldots}}
\end{equation}
and is usually denoted by its sequence of \emph{partial quotients} as follows: $\alpha=[0;a_{1},a_{2},\ldots ]$.
Each  finite truncation $[0;a_{1},a_{2},\ldots,a_{i}]$ is a rational number $p_{i}/q_{i}$ (we take $p_{i}$ and $q_{i}$ coprime) called the $i$-th \emph{convergent} to $\alpha$.  The sequence $(q_i)_{i\geq 0}$ can be defined  by: $q_{-1}=0$, $q_0=1$ and $q_n=a_nq_{n-1}+q_{n-2}$ for $n\geq 1$.
We say that $\alpha=[0;a_{1},a_{2},\ldots ]$ has bounded partial quotients if the sequence $(a_{i})_{i\geq 0}$ is bounded. 

For example, one has $\phi-1=[0;1,1,\ldots]$, where $\phi$ is the golden ratio, and the sequence $(q_i)_{i\geq 0}$ is the sequence of Fibonacci numbers.

The characteristic Sturmian word $s_{\alpha,\alpha}$ of slope $\alpha$, $0<\alpha<1$, can be obtained as the limit of the sequence of words $(s_n)_{n \ge 0}$ defined recursively as follows:
Let $[0; d_0+1, d_1,d_2, \ldots]$ be the continued fraction expansion of $\alpha$, and define
$s_{-1}=1$, 
$s_0=0$ and 
$s_{n+1}=s_{n}^{d_{n}}s_{n-1}$ for every $n\geq 0$.
Note that  $s_{\alpha,\alpha}$ starts with letter $1$ if and only if $\alpha > 1/2$, i.e., if and only if $d_0 = 0$. In this case, $[0; d_1 + 1, d_2, \ldots]$ is the continued fraction expansion of $1-\alpha$, and $s_{1-\alpha,1-\alpha}$ is the
word obtained from $s_{\alpha,\alpha}$ by exchanging $0$'s and $1$'s. 

The finite words $s_n$ are called \emph{standard words}. A standard word $s_n$, $n\geq 1$, is always of the form $s_n=c01$ or $s_n=c10$, where $c$ is a central word (recall from Sec.~\ref{sec:borders} that a central word is a word that has two coprime periods $p$ and $q$ and length equal to $p+q-2$). 
 


It is known that two Sturmian words have the same set of finite factors if and only if they have the same slope $\alpha$. Hence, in what follows, we will write $s_\alpha$ to denote any Sturmian word of slope $\alpha$.

\subsection{Abelian powers in Sturmian words}

Richomme, Saari and Zamboni~\cite{Richomme201179} proved that in every Sturmian
 word, for any position and for every positive integer $k$, there is
an abelian $k$-power starting at that position. 

Recall that  $\|\alpha\|$ denotes the distance between a real number $\alpha$ and the nearest integer, i.e., $\|\alpha\|=\min(\{\alpha\},\{-\alpha\})$. 
In~\cite{tcs16}, the following result is proved:

\begin{theorem}\label{the:main1}
  Let $s_\alpha$ be a Sturmian word of slope $\alpha$ and $m$ be a positive integer. Then $s_\alpha$ contains an
  abelian power of period $m$ and exponent $k \geq 2$ if and only if $\|m\alpha\| < \frac{1}{k}$. In particular,
  the maximum exponent $k_m$ of an abelian power of period $m$ in $s_\alpha$ is the largest integer $k$ such that
  $\|m\alpha\|< \frac{1}{k}$, i.e.,
  \begin{equation*}
    k_{m}=\left \lfloor \frac{1}{ \|m\alpha\| } \right \rfloor.
  \end{equation*}
\end{theorem}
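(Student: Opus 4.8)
The plan is to characterize, via the interval description of Sturmian words, exactly which factors are abelian $k$-powers of period $m$, reducing everything to the arithmetic of the rotation $x\mapsto \{x+\alpha\}$. Recall that by the balance property, for a fixed length $m$ every factor of $s_\alpha$ has the same number of $1$'s up to $\pm 1$; more precisely, if a factor of length $m$ starts at position $n$, then using the interval representation $\underline{s}_n = \lfloor \alpha(n+1)+\rho\rfloor - \lfloor \alpha n + \rho\rfloor$, the number of $1$'s in the length-$m$ factor starting at position $n$ is $\lfloor \alpha(n+m)+\rho\rfloor - \lfloor \alpha n + \rho\rfloor$, which equals either $\lfloor m\alpha\rfloor$ or $\lceil m\alpha\rceil$. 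The first step is therefore to fix a starting position, write $v_1 v_2\cdots v_k$ for the candidate abelian power with each $|v_i|=m$, and express the number of $1$'s in the $i$-th block in terms of $\lfloor \alpha n_i + \rho\rfloor$, where $n_i$ is the starting position of block $v_i$.

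Second, I would exploit the telescoping structure. Writing $N_i = \lfloor \alpha n_i + \rho\rfloor$ with $n_i = n_1 + (i-1)m$, the number of $1$'s in block $v_i$ is $N_{i+1} - N_i$, and the blocks are pairwise abelian equivalent precisely when $N_{i+1}-N_i$ is constant in $i$, i.e. when $N_1, N_2, \ldots, N_{k+1}$ is an arithmetic progression. The key observation is that $N_{i+1}-N_i \in \{\lfloor m\alpha\rfloor, \lceil m\alpha\rceil\}$, so the sequence of block-counts is a binary-valued sequence that is itself governed by whether the rotating point $\{\rho + \alpha n_i\}$ lands in a certain subinterval; the abelian-power condition is that all $k$ of these differences coincide. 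The heart of the argument is thus to show that one can find a starting point making all $k$ consecutive differences equal \emph{if and only if} $\|m\alpha\| < 1/k$.

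Third, the arithmetic core. Set $\gamma = \{m\alpha\}$, so that $\|m\alpha\| = \min(\gamma, 1-\gamma)$. Advancing the block index by one advances the relevant point by $m\alpha$, i.e. by $\gamma$ modulo $1$. Whether successive counts agree is controlled by whether $k$ successive iterates of the rotation by $\gamma$ all avoid a ``boundary'' interval of length $\gamma$ (or $1-\gamma$). A standard three-distance / pigeonhole style estimate shows that one can keep all $k$ points $\{\beta + j\gamma\}$, $j=0,\ldots,k-1$, inside an interval on which the floor increments agree exactly when $k\,\|m\alpha\| < 1$, which rearranges to $\|m\alpha\| < 1/k$. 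For the converse, if $\|m\alpha\| \geq 1/k$ then the total drift over $k$ steps forces at least one ``extra'' increment, so no starting position yields $k$ equal blocks. Combining the two directions gives that an abelian power of period $m$ and exponent $k$ occurs iff $\|m\alpha\|<1/k$, and maximizing over $k$ yields $k_m = \lfloor 1/\|m\alpha\|\rfloor$.

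The main obstacle I anticipate is the precise handling of the floor/ceiling boundary cases, i.e. making the interval bookkeeping for $\|m\alpha\|$ airtight: one must treat $\gamma < 1/2$ and $\gamma > 1/2$ symmetrically (the roles of light and heavy blocks swap), account for the intercept $\rho$ being free so that the starting point of the rotation can be chosen optimally, and verify that the endpoints of the relevant intervals never cause an off-by-one error in the strict inequality $\|m\alpha\|<1/k$ versus the non-strict failure case. Because $\alpha$ is irrational the rotation orbit never hits a boundary exactly, which should let me keep all inequalities strict and avoid degenerate cases, but confirming that the extremal configuration is simultaneously realizable by an \emph{actual} factor of $s_\alpha$ (rather than merely by some bi-infinite rotation sequence) is the step requiring the most care.
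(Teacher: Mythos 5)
The paper itself does not prove this theorem: it states it with a citation to~\cite{tcs16}, so the only meaningful comparison is with that original argument. Your proposal follows essentially the same route as~\cite{tcs16} and is correct in outline: telescoping $s_n=\lfloor(n+1)\alpha+\rho\rfloor-\lfloor n\alpha+\rho\rfloor$ to express the number of $1$'s in each length-$m$ block, reducing abelian equivalence of the $k$ consecutive blocks to the condition that the points $\{n\alpha+\rho+jm\alpha\}$, $0\le j<k$, all lie on the same side of the threshold $1-\{m\alpha\}$ (equivalently, avoid an interval of length $\|m\alpha\|$), deducing via the no-wrap computation that this is possible exactly when $k\|m\alpha\|<1$, and using density of the orbit of the irrational rotation to realize such a starting position by an actual factor.
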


\begin{example}
In Table \ref{tab:fici2} we give the first values of the sequence $k_{m}$ for the Fibonacci word $f$. We have $k_{2}=4$, since $\{2(\phi-1)\}\approx 0.236$, so the largest $k$ such that $\{2(\phi-1)\}< 1/k$ is $4$. Indeed, $10100101$ is an abelian power of period $2$ and exponent $4$, and the reader can verify that no factor of $f$ of length $10$ is an abelian power of period $2$.

For $m=3$, since $\{-3(\phi-1)\}\approx 0.146$, the largest $k$ such that $\{-3(\phi-1)\}< 1/k$ is $6$. Indeed, $001001010010010100$ is an abelian power of period $3$ and exponent $6$, and the reader can verify that no factor of $f$ of length $21$ is an abelian power of period $3$.
\end{example}

\begin{table}
\centering
\begin{small}
\begin{raggedright}
\begin{tabular}{c *{30}{@{\hspace{3.1mm}}c}}
$m$\hspace{2mm} & \textbf{1} & \textbf{2} & \textbf{3} & 4 & \textbf{5} & 6 & 7 & \textbf{8} & 9 & 10 & 11 & 12 & \textbf{13} & 14 & 15 & 16 & 17 & 18 & 19 & 20 & \textbf{21}
\\
\hline \\
$k_{m}$\hspace{2mm} & \textbf{2} & \textbf{4} & \textbf{6} & 2 & \textbf{11} & 3 & 3 & \textbf{17} & 2 & 5 & 4 & 2 & \textbf{29} & 2 & 3 & 8 & 2 & 8 & 3 & 2 & \textbf{46}
\\
\hline \rule[0pt]{0pt}{12pt}
\end{tabular}
\end{raggedright}\caption{\label{tab:fici2} The first few values of the maximum exponent $k_{m}$ of an abelian power of period $m$ in the Fibonacci word $f$. The values corresponding to the Fibonacci numbers are in bold.}
\end{small}
\end{table}

\subsection{Abelian critical exponent of Sturmian words}\label{sec:critStur}

Mignosi and Pirillo proved that the critical exponent of the Fibonacci word is $2+\phi$~\cite{MignosiPirillo}. In general, the critical exponent of a Sturmian word can be finite or infinite. The following theorem gives a characterization of Sturmian words with finite critical exponent.

\begin{theorem}\cite{Mi89,durand_2003}\label{thm:betapf}
 Let $s_{\alpha}$ be a Sturmian word of slope $\alpha$. The following are equivalent:
\begin{enumerate}
  \item $s_{\alpha}$ is $\beta$-free for some $\beta$;
 \item $\alpha$ has bounded partial quotients; 
 \item $s_{\alpha}$ is linearly recurrent. 
\end{enumerate}
\end{theorem}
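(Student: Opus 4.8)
The plan is to prove the three-way equivalence cyclically, establishing $(2)\Rightarrow(3)\Rightarrow(1)\Rightarrow(2)$, since each of these implications has a natural combinatorial or number-theoretic engine.

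First I would tackle $(2)\Rightarrow(3)$, which is the cleanest. Recall that the characteristic Sturmian word $s_\alpha$ is built from the standard words $s_n$ via $s_{n+1}=s_n^{d_n}s_{n-1}$, so $|s_{n+1}|=d_n|s_n|+|s_{n-1}|$ and the lengths grow essentially like the continued-fraction denominators $q_n$. If the partial quotients are bounded by some $K$, then $|s_{n+1}|\leq (K+1)|s_n|$, so consecutive standard-word lengths grow by at most a bounded factor. Since every factor of $s_\alpha$ occurs within a bounded number of consecutive standard blocks and the return words to any factor have length controlled by the $q_n$, the gap between consecutive occurrences of a factor $u$ is $\Theta(|u|)$ with a uniform constant; this is exactly linear recurrence. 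I would make this precise using the $S$-adic/standard decomposition, invoking that bounded partial quotients force the ratio $q_{n+1}/q_n$ to stay bounded.

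Next, for $(3)\Rightarrow(1)$, I would use the characterization of linear recurrence stated in the preliminaries: fixed points of primitive substitutions are linearly recurrent, and more relevantly, linear recurrence bounds the gaps between occurrences of any factor by $m|u|$ for a fixed $m$. The key observation is that a $\beta$-power $u^\beta$ with very large $\beta$ forces a factor $u$ whose period $|u|$ is tiny compared to the length of the repetition $\beta|u|$, meaning $u$ reoccurs immediately (gap $|u|$) over a stretch of length roughly $\beta|u|$. If $\beta$ could be arbitrarily large, we could find factors reoccurring with gap $|u|$ over arbitrarily long stretches, i.e. the recurrence function would exceed $m|u|$, contradicting linear recurrence. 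Thus linear recurrence caps the critical exponent, giving $\beta$-freeness for some finite $\beta$.

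Finally, for $(1)\Rightarrow(2)$, I would argue the contrapositive: if $\alpha$ has \emph{unbounded} partial quotients, then $s_\alpha$ has arbitrarily high powers, so it is not $\beta$-free for any $\beta$. The main obstacle — and the technical heart of the whole proof — lives here: one must exhibit explicit high-exponent factors from large partial quotients. The standard fact is that the $n$-th standard word $s_n$ satisfies $s_{n+1}=s_n^{d_n}s_{n-1}$, so when $d_n$ is large the word $s_n^{d_n}$ (a $d_n$-power) appears, and in fact one can sharpen this: the factor $s_n^{d_n+1}s_{n-1}$ (or a closely related block, using that $s_n$ and $s_{n-1}$ share a long common prefix of length $|s_n|+|s_{n-1}|-2$, the central-word structure recalled in Sec.~\ref{sec:borders}) yields an exponent of roughly $d_n+1+|s_{n-1}|/|s_n|$, which $\to\infty$ along the subsequence where $d_n\to\infty$. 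Controlling exactly how much of $s_{n-1}$ extends the power — this is precisely the Mignosi–Pirillo type computation that gives $2+\phi$ for Fibonacci — is the delicate step, and I would lean on the known formula expressing the critical exponent of $s_\alpha$ in terms of $\limsup$ of the partial quotients (essentially $\chi(s_\alpha)=2+\limsup_n [a_n;a_{n-1},\ldots,a_1]$) to close the loop, citing \cite{Mi89,durand_2003}.
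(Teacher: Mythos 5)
The paper itself gives no proof of \cref{thm:betapf}; it is quoted from \cite{Mi89,durand_2003}, so your proposal can only be compared with the standard arguments in those references. Your cyclic scheme $(2)\Rightarrow(3)\Rightarrow(1)\Rightarrow(2)$ matches that route in outline, and two of the three legs are fine as sketches: $(2)\Rightarrow(3)$ is the usual argument (bounded partial quotients give $q_{k+1}\le (K+1)q_k$, and return times of a factor of length $n\in[q_k,q_{k+1})$ are $O(q_{k+1})$, hence $O(n)$ uniformly), and $(1)\Rightarrow(2)$ is correct but simpler than you make it --- for the contrapositive you only need that $s_n^{d_n}$ is a factor of $s_\alpha$, so unbounded $d_n$ already defeats $\beta$-freeness for every $\beta$; the Mignosi--Pirillo sharpening and the $\limsup$ formula for $\chi(s_\alpha)$ are not needed for the implication.

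The genuine gap is in $(3)\Rightarrow(1)$, where your stated contradiction runs backwards. Linear recurrence is an \emph{upper} bound on gaps between occurrences: a factor $u$ ``reoccurring immediately with gap $|u|$ over a long stretch'' is perfectly consistent with it, and rapid reoccurrence does not make the recurrence function ``exceed $m|u|$''. What a long power actually provides is a long window that is factor-poor, and the contradiction needs aperiodicity. Concretely: suppose $w$ is a factor of $s_\alpha$ of exponent $\beta\ge m$, with period $p$ and primitive root $u$, so $|w|\ge mp$. Every factor of $w$ of length $p$ is a conjugate of $u$, so $w$ contains at most $p$ distinct factors of length $p$; but $s_\alpha$ is aperiodic, so by Morse--Hedlund \cite{MoHe38} it has at least $p+1$ factors of length $p$ (exactly $p+1$, being Sturmian). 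Hence some factor $v$ of length $p$ of $s_\alpha$ does not occur in $w$ at all, while linear recurrence with constant $m$ requires $v$ to occur in every factor of length $m|v|=mp\le |w|$ --- a contradiction, giving $\chi(s_\alpha)\le m$, i.e.\ $m$-freeness. Equivalently, you could invoke Durand's lemma that in an aperiodic linearly recurrent word every return word $w'$ to a factor $v$ satisfies $|w'|>|v|/m$, but that lemma \emph{is} this counting argument; it is not, as your write-up implicitly assumes, a formal consequence of the definition of linear recurrence.
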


Let $\alpha=[0;a_1,a_2,\ldots]$ and suppose that the sequence $(a_i)$ of partial quotients of $\alpha$ is bounded. Let $p_i/q_i=[0;a_1,a_2,\ldots,a_i]$ be the sequence of convergents of $\alpha$. Then the critical exponent $\chi(s_{\alpha})$ of $s_{\alpha}$ is given by (see~\cite{Carpi3,DBLP:journals/tcs/Vandeth00,DBLP:journals/ejc/DamanikL02})
\[
\chi(s_{\alpha})=\max \left\{ a_1, 2+\sup_{i\ge 2}\{a_i+(q_{i-1}-2)/q_i\}\right\}
\]
Thus, the critical exponent of the  Fibonacci word is the least critical exponent a Sturmian word can have.



Before studying the abelian critical exponent (see Definition~\ref{def:ace}) of Sturmian words further, we explore its connection to a number-theoretical concept known as
the Lagrange spectrum.

\begin{definition}
  Let $\alpha$ be a real number. The Lagrange constant of $\alpha$ is defined as
  \begin{equation*}
    \lambda(\alpha) = \limsup_{m\to\infty} (m\|m\alpha\|)^{-1}.
  \end{equation*}
\end{definition}

Let us briefly motivate the definition of the Lagrange constants. The famous Hurwitz's Theorem states that for every irrational $\alpha$ there exists infinitely many rational numbers $n/m$ such that $$\left|\alpha - \frac{n}{m}\right| < \frac{1}{\sqrt{5}m^2}$$ and, moreover, the constant $\sqrt{5}$ is best possible. Indeed, if $\alpha=\phi-1$, then for every $k>\sqrt{5}$ the inequality
$$\left|\frac{n}{m}-\alpha\right|< \frac{1}{km^2}$$
has only a finite number of solutions $n/m$.

For a general irrational $\alpha$, the infimum of the real numbers $\lambda$ such that for every $k>\lambda$ the inequality
$\left|n/m-\alpha\right|< 1/km^2$
has only a finite number of solutions $n/m$, is indeed the Lagrange constant $\lambda(\alpha)$ of $\alpha$. The set of all finite Lagrange
constants of irrationals is called the \emph{Lagrange spectrum} $L$. The Lagrange spectrum has been
extensively studied, yet its structure is still not completely understood. Markov  proved that
$L \cap (-\infty, 3) = \{\ell_1 = \sqrt{5} < \ell_2 = \sqrt{8} < \ell_3 = \sqrt{221}/5 < \ldots\}$
where $\ell_n$ is a sequence of quadratic irrational numbers  converging to $3$ (so the beginning of $L$ is discrete). Then Hall  proved  that $L$ contains a whole half line, and Freiman  determined the biggest half line that is contained
in $L$, which is $[c_F, +\infty)$, with
$$c_F=\frac{2221564096+283748\sqrt{462}}{491993569}=4.5278295661\ldots $$

Using the terminology of Lagrange constants, we have the following direct consequence of Theorem \ref{the:main1}.

\begin{theorem}\cite{tcs16}\label{thm:lag}
  Let $s_\alpha$ be a Sturmian word of slope $\alpha$. Then $\act(s_\alpha) = \lambda(\alpha)$. In other words, the
  abelian critical exponent of a Sturmian word is the Lagrange constant of its slope.
\end{theorem}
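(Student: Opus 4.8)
The plan is to read the result straight off Theorem~\ref{the:main1}, which already supplies the exact value of $k_m$, and then to reconcile a floor function with a $\limsup$. First I would place the two definitions side by side: by Definition~\ref{def:ace} we have $\act(s_\alpha)=\limsup_{m\to\infty} k_m/m$, while by definition $\lambda(\alpha)=\limsup_{m\to\infty}(m\|m\alpha\|)^{-1}$. Theorem~\ref{the:main1} tells us that $k_m=\lfloor 1/\|m\alpha\|\rfloor$, so the entire statement collapses to comparing the $\limsup$'s of the two sequences $\lfloor 1/\|m\alpha\|\rfloor/m$ and $1/(m\|m\alpha\|)$.

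The second step is to strip off the floor. Since $\alpha$ is irrational, $\|m\alpha\|>0$ for every $m\geq 1$, so $1/\|m\alpha\|$ is a well-defined positive real and the elementary floor estimate gives
\[
\frac{1}{\|m\alpha\|}-1 < \left\lfloor\frac{1}{\|m\alpha\|}\right\rfloor \leq \frac{1}{\|m\alpha\|}.
\]
Dividing by $m$ yields
\[
\frac{1}{m\|m\alpha\|}-\frac{1}{m} < \frac{k_m}{m} \leq \frac{1}{m\|m\alpha\|},
\]
so the two sequences $k_m/m$ and $1/(m\|m\alpha\|)$ differ by a quantity bounded in absolute value by $1/m$, which tends to $0$.

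The third step is the standard fact that two sequences of extended reals differing by a null sequence share the same $\limsup$: for any $\varepsilon>0$ one eventually has $|k_m/m - 1/(m\|m\alpha\|)|<\varepsilon$, so the two $\limsup$'s lie within $\varepsilon$ of each other, and letting $\varepsilon\to 0$ gives $\act(s_\alpha)=\lambda(\alpha)$. I would emphasise that this works uniformly in both the finite and the infinite case: by Hurwitz's theorem recalled above, $\lambda(\alpha)\geq\sqrt{5}$ for every irrational $\alpha$, so the common value is bounded away from $0$ and the $1/m$ correction is genuinely negligible, while if $\lambda(\alpha)=+\infty$ the lower bound $k_m/m > 1/(m\|m\alpha\|)-1/m$ forces $\limsup k_m/m=+\infty$ as well.

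I do not expect a real obstacle, since the analytic content — the closed form $k_m=\lfloor 1/\|m\alpha\|\rfloor$ — is already carried by Theorem~\ref{the:main1}; the only delicate point is the interaction of the floor with the $\limsup$, which the two-sided estimate above resolves cleanly. The one bit of bookkeeping I would flag is that $k_m$ is meaningful as an abelian-power exponent only when $k_m\geq 2$; but the (per period) values $k_m\in\{0,1\}$ contribute at most $1/m\to 0$ to $k_m/m$, so they cannot affect the $\limsup$ and leave the reduction intact.
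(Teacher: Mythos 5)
Your proof is correct and takes essentially the same route as the paper, which states Theorem~\ref{thm:lag} as a direct consequence of Theorem~\ref{the:main1}: one substitutes $k_m=\lfloor 1/\|m\alpha\|\rfloor$ into $\act(s_\alpha)=\limsup_{m\to\infty}k_m/m$ and observes that removing the floor changes each term by less than $1/m$, which cannot affect the $\limsup$ (in either the finite or the infinite case). Your two-sided estimate and the remark on the infinite case simply make explicit the bookkeeping the paper leaves to the reader; note also that since $\|m\alpha\|<1/2$ for irrational $\alpha$, one in fact always has $k_m\geq 2$, so the degenerate exponents you flag never occur.
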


The abelian critical  exponent of the Fibonacci word is $\sqrt{5}$. It is the smallest possible. Indeed, from Theorem \ref{thm:lag} one gets the following result.

\begin{theorem}\cite{tcs16}\label{theor:sqrt5}
  For every Sturmian word $s_{\alpha}$ of slope $\alpha$, we have $\act(s_\alpha) \geq \sqrt{5}$. \end{theorem}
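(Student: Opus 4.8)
The plan is to deduce the bound directly from Theorem~\ref{thm:lag}, which identifies the abelian critical exponent with the Lagrange constant of the slope. Since that theorem gives $\act(s_\alpha) = \lambda(\alpha)$ independently of the intercept, it suffices to prove that $\lambda(\alpha) \geq \sqrt{5}$ for every irrational $\alpha$. For this I would invoke Hurwitz's Theorem, as recalled before Theorem~\ref{thm:lag}: for every irrational $\alpha$ there exist infinitely many rationals $n/m$ with $|\alpha - n/m| < 1/(\sqrt{5}\,m^2)$.

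Concretely, unwinding the definition $\lambda(\alpha) = \limsup_{m\to\infty}(m\|m\alpha\|)^{-1}$, I first translate a good rational approximation into a bound on $\|m\alpha\|$. Since $\|m\alpha\|$ is the distance from $m\alpha$ to the nearest integer and $n$ is an integer, any $n/m$ satisfying the Hurwitz inequality gives
\[
\|m\alpha\| \leq |m\alpha - n| = m\left|\alpha - \frac{n}{m}\right| < \frac{1}{\sqrt{5}\,m},
\]
and therefore $(m\|m\alpha\|)^{-1} > \sqrt{5}$.

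Next, Hurwitz's Theorem supplies infinitely many distinct rationals $n/m$ satisfying the displayed inequality. For each fixed denominator $m$ only finitely many integers $n$ can satisfy it, so these approximants involve infinitely many distinct denominators $m$. Hence $(m\|m\alpha\|)^{-1} > \sqrt{5}$ holds for infinitely many $m$, which forces $\lambda(\alpha) = \limsup_{m\to\infty}(m\|m\alpha\|)^{-1} \geq \sqrt{5}$. Combined with Theorem~\ref{thm:lag}, this yields $\act(s_\alpha) = \lambda(\alpha) \geq \sqrt{5}$, as claimed.

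There is no substantial obstacle here: the result is essentially immediate once Theorem~\ref{thm:lag} is available, and the only point requiring care is that the infinitely many Hurwitz approximants cannot all share finitely many denominators, which is exactly what pins the \emph{limsup} (rather than a mere liminf) from below by $\sqrt{5}$. I would also note that the bound is sharp: it is attained by $\alpha = \phi - 1$, the slope of the Fibonacci word, in agreement with $\sqrt{5}$ being the least element of the Lagrange spectrum recalled from Markov's theorem.
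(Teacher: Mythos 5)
Your proof is correct and is essentially the paper's own argument: the paper likewise deduces the bound directly from Theorem~\ref{thm:lag} together with Hurwitz's Theorem (equivalently, the fact that $\sqrt{5}$ is the least element of the Lagrange spectrum), both recalled just before the statement. Your additional care in checking that the Hurwitz approximants give infinitely many distinct denominators $m$, so that the $\limsup$ defining $\lambda(\alpha)$ is genuinely bounded below by $\sqrt{5}$, is exactly the detail the paper leaves implicit.
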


Actually, thanks to Theorem \ref{thm:lag}, one can obtain a formula to compute the  abelian critical exponent of a Sturmian word, as in the classical case:

\begin{proposition}\label{prp:act_formula}
  Let $s_\alpha$ be a Sturmian word of slope $\alpha$. Then the abelian critical exponent of $s_\alpha$ is
  \begin{equation*}
    \act(s_\alpha) = \limsup_{i\to +\infty} \left([a_{i+1};a_{i+2},\ldots]+[0;a_i,a_{i-1},\ldots,a_1]\right).
  \end{equation*}
\end{proposition}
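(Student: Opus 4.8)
The plan is to invoke \cref{thm:lag}, which reduces the statement to a purely number-theoretic identity: since $\act(s_\alpha)=\lambda(\alpha)=\limsup_{m\to\infty}(m\|m\alpha\|)^{-1}$, it suffices to show that this Lagrange constant equals the announced limsup of the continued fraction expression. I would establish this in two steps: first, evaluate $(q_n\|q_n\alpha\|)^{-1}$ at the convergent denominators $q_n$ in terms of the continued fraction data; second, show that the limsup over all $m$ is realized along the subsequence $m=q_n$.

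For the first step, I would recall the basic identities of continued fraction theory. Writing $\alpha_{n+1}=[a_{n+1};a_{n+2},\ldots]$ for the $(n+1)$-th complete quotient, one has $\alpha=(\alpha_{n+1}p_n+p_{n-1})/(\alpha_{n+1}q_n+q_{n-1})$, whence
\[
|q_n\alpha-p_n|=\frac{1}{\alpha_{n+1}q_n+q_{n-1}}.
\]
Combining this with the standard fact that $q_{n-1}/q_n=[0;a_n,a_{n-1},\ldots,a_1]$ (the ratio of consecutive denominators is the reversed continued fraction), I would obtain
\[
(q_n|q_n\alpha-p_n|)^{-1}=\alpha_{n+1}+\frac{q_{n-1}}{q_n}=[a_{n+1};a_{n+2},\ldots]+[0;a_n,a_{n-1},\ldots,a_1].
\]
Since $|q_n\alpha-p_n|<1/q_{n+1}\le 1/2$ for every $n\ge 1$, the integer $p_n$ is the nearest integer to $q_n\alpha$, so $\|q_n\alpha\|=|q_n\alpha-p_n|$ and the left-hand side is exactly $(q_n\|q_n\alpha\|)^{-1}$.

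For the second step, I would use the best-approximation property of convergents: for every $m$ with $q_n\le m<q_{n+1}$ one has $\|q_n\alpha\|\le\|m\alpha\|$. Consequently $m\|m\alpha\|\ge q_n\|q_n\alpha\|$ throughout this block, so the maximum of $(m\|m\alpha\|)^{-1}$ over $q_n\le m<q_{n+1}$ is attained at the left endpoint $m=q_n$. Grouping the index $m$ into the consecutive blocks $[q_n,q_{n+1})$ therefore gives
\[
\limsup_{m\to\infty}(m\|m\alpha\|)^{-1}=\limsup_{n\to\infty}(q_n\|q_n\alpha\|)^{-1},
\]
and substituting the expression from the first step (with $i=n$) yields the claimed formula.

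The routine part is the continued fraction manipulation of the first step; the step requiring care is the second, namely the reduction of the limsup over all $m$ to the subsequence of convergent denominators. This hinges on the fact that convergents are the best approximations of the second kind, so that within each block $[q_n,q_{n+1})$ the quantity $m\|m\alpha\|$ is minimized---and hence its reciprocal maximized---precisely at $m=q_n$. Once this grouping is justified, the two limsups coincide and the formula follows immediately.
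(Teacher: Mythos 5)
Your proposal is correct and takes essentially the same route as the paper: \cref{thm:lag} reduces the statement to the classical (Perron) formula for the Lagrange constant $\lambda(\alpha)=\limsup_{m\to\infty}(m\|m\alpha\|)^{-1}$ in terms of the continued fraction expansion, which the paper invokes implicitly and which you prove in full. Both your identity $(q_n\|q_n\alpha\|)^{-1}=\alpha_{n+1}+q_{n-1}/q_n$ and the reduction of the limsup to the subsequence of convergent denominators via the best-approximation property of convergents are sound, so the argument is complete.
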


In conclusion, one has the following generalization of Theorem \ref{thm:betapf} to the abelian case:

\begin{theorem}\cite{tcs16}\label{theor:act_finite}
  Let $s_\alpha$ be a Sturmian word of slope $\alpha$. The following are equivalent:
  \begin{enumerate}
    \item $\act(s_\alpha)$ is finite;
    \item $\alpha$ has bounded partial quotients;
 \item $s_\alpha$ is $\beta$-free for some $\beta$.
  \end{enumerate}
\end{theorem}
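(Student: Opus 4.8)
The plan is to reduce everything to results already established earlier, so that almost no new argument is needed. The equivalence of (2) and (3) requires nothing new: it is precisely the equivalence of items~2 and~1 in \cref{thm:betapf}, so I would simply invoke that theorem. Consequently the only genuinely new content is the equivalence of (1) and (2), namely that $\act(s_\alpha)$ is finite if and only if the partial quotients of $\alpha$ are bounded.

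For this I would start from the explicit formula of \cref{prp:act_formula},
\[
\act(s_\alpha) = \limsup_{i\to +\infty} \left([a_{i+1};a_{i+2},\ldots]+[0;a_i,a_{i-1},\ldots,a_1]\right),
\]
and control the summand by elementary continued-fraction estimates. Since every partial quotient is a positive integer and the expansions are infinite, one has $[a_{i+1};a_{i+2},\ldots] = a_{i+1} + 1/[a_{i+2};a_{i+3},\ldots] \in (a_{i+1}, a_{i+1}+1)$ and likewise $[0;a_i,\ldots,a_1] \in (0,1)$. Hence the $i$-th term lies in $(a_{i+1}, a_{i+1}+2)$, and taking the limit superior yields the two-sided bound
\[
\limsup_{i\to\infty} a_i \;\le\; \act(s_\alpha) \;\le\; \limsup_{i\to\infty} a_i + 2.
\]

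From this sandwich the desired equivalence is immediate: since the $a_i$ are positive integers, $\limsup_i a_i$ is finite exactly when the sequence $(a_i)_{i\ge1}$ is bounded, i.e.\ exactly when $\alpha$ has bounded partial quotients, and the bound shows $\act(s_\alpha)$ is finite precisely in that case. This proves (1)\,$\Leftrightarrow$\,(2) and, together with \cref{thm:betapf}, closes the cycle of equivalences. I do not expect a serious obstacle here, since the real work has already been done in deriving \cref{prp:act_formula} (equivalently \cref{thm:lag}, which identifies $\act(s_\alpha)$ with the Lagrange constant $\lambda(\alpha)$). The only point demanding care is the passage to the limit superior of a sum: one must argue directly that the additive slack in $(0,2)$ cannot affect finiteness, rather than naively splitting the $\limsup$ of a sum into a sum of limits superior.
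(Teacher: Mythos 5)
Your proposal is correct and follows essentially the same route as the paper, which presents \cref{theor:act_finite} as a direct consequence of the earlier results: \cref{thm:betapf} for the equivalence of bounded partial quotients with $\beta$-freeness, and the identification of $\act(s_\alpha)$ with the Lagrange constant $\lambda(\alpha)$ (\cref{thm:lag}, made explicit in \cref{prp:act_formula}) for the equivalence with finiteness of $\act(s_\alpha)$. Your sandwich bound $\limsup_i a_i \le \act(s_\alpha) \le \limsup_i a_i + 2$, obtained by monotonicity of $\limsup$ rather than by splitting the $\limsup$ of a sum, correctly fills in the elementary continued-fraction estimate that the survey leaves implicit.
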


\subsection{Abelian periods of factors of Sturmian words}

The Fibonacci word has another remarkable property:  the smallest period of any of its finite factors is a Fibonacci number: 

\begin{proposition}\cite{DBLP:journals/ita/CurrieS09}
 The set of smallest periods of factors of the Fibonacci infinite word is the set of Fibonacci numbers.
\end{proposition}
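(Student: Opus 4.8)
The statement asserts an equality of two sets, so the plan is to prove two inclusions: (i) every Fibonacci number occurs as the least period $\pi(w)$ of some factor $w$ of $f$, and (ii) conversely every $\pi(w)$ is a Fibonacci number. For (i) I would use the standard words $s_n$, which are factors of $f$ satisfying $s_{n+1}=s_ns_{n-1}$ and whose lengths are exactly the Fibonacci numbers. Writing $s_n=c\,xy$ with $c$ a central word and $xy\in\{01,10\}$, the central word $c$ has the two coprime periods $|s_{n-1}|$ and $|s_{n-2}|$ (consecutive Fibonacci numbers); appending $xy$ destroys the shorter period but preserves the longer one, so $\pi(s_n)=|s_{n-1}|$. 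As $n$ varies this produces every Fibonacci number $\ge 3$, while the factors $0$ and $01$ give $1$ and $2$. That $|s_{n-1}|$ is genuinely the least period is routine from \cref{FW} together with the coprimality of consecutive $|s_{n-1}|,|s_{n-2}|$.

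For the harder inclusion (ii) I would prove, by induction on $|w|$, the stronger statement: for every factor $w$ of $f$ the least period $\pi(w)$ is a Fibonacci number \emph{and} the prefix $\Pref_{\pi(w)}(w)$ is a conjugate of a standard word. After checking the finitely many short factors by hand, the inductive step rests on desubstituting the Fibonacci substitution $\mu\colon 0\mapsto 01,\,1\mapsto 0$, for which $f=\mu(f)$. Because $f$ contains neither $11$ nor $000$, the morphism $\mu$ is recognizable: a sufficiently long factor $w$ has a unique preimage, i.e.\ $w=r\,\mu(u)\,t$ where $u$ is a shorter factor of $f$ and $r,t$ are boundary pieces of length at most $2$. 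I would then show that, modulo the bounded boundary shift, periods correspond under $\mu$, so that $\pi(w)=|\mu(\Pref_{\pi(u)}(u))|=\pi(u)+|\Pref_{\pi(u)}(u)|_0$, using $|\mu(z)|=|z|+|z|_0$.

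The induction closes provided the $0$-count in this identity is exactly right, and this is precisely where the main difficulty sits. Since $\mu$ is \emph{non-uniform} ($|\mu(0)|=2$, $|\mu(1)|=1$), the period does not simply scale, and the balance of $f$ (\cref{def:balance}) only controls $|\Pref_{\pi(u)}(u)|_0$ up to an additive $1$; an off-by-one here would produce a spurious period differing from a Fibonacci number by $1$. The resolution is the strengthened hypothesis: $\Pref_{\pi(u)}(u)$ is a conjugate of a standard word, hence has \emph{exactly} the $0$-count of that standard word, so writing $|\Pref_{\pi(u)}(u)|=|s_k|$ one gets $|\Pref_{\pi(u)}(u)|_0=|s_{k-1}|$ and therefore $\pi(w)=|s_k|+|s_{k-1}|=|s_{k+1}|$, again Fibonacci; moreover $\mu$ sends standard words to standard words ($\mu(s_k)=s_{k+1}$), so after re-synchronization $\Pref_{\pi(w)}(w)$ is again a conjugate of a standard word and the strengthened hypothesis is preserved. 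The two points I expect to be genuinely delicate are (a) establishing this rigidity—equivalently, that $f$ has a factor of least period $p$ only when $p$ is a Fibonacci number, with the period block forced to be a conjugate of a standard word—and (b) the bookkeeping of the boundary pieces $r,t$ in the desubstitution, especially for factors shorter than twice their least period, where no repeated block is available and one must argue entirely through the border and conjugacy structure.
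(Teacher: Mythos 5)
The survey does not actually prove this proposition: it is quoted from Currie and Saari, and within the survey it is also an immediate corollary of \cref{the:csPer} (from the same reference), since for the slope of the Fibonacci word all partial quotients $a_{k+1}$ equal $1$ and the denominators $q_k$ are exactly the Fibonacci numbers, so the set $\{\ell q_k+q_{k-1}\}$ collapses to $\{q_k+q_{k-1}\}=\{q_{k+1}\}$. Your proposal must therefore stand on its own. Your inclusion (i) is essentially correct: $\pi(s_n)=|s_{n-1}|$ does follow from the central-word structure plus \cref{FW}, and together with the factors $0$ and $01$ this realizes every Fibonacci number.

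The hard inclusion (ii), however, rests on the identity $\pi(w)=|\mu(\Pref_{\pi(u)}(u))|=\pi(u)+|\Pref_{\pi(u)}(u)|_0$ for the desubstituted word $u$, and this identity is false, even for recognizability-respecting parsings and even when your strengthened hypothesis holds for $u$. Concretely, $w=1010010=f_5\cdots f_{11}$ is a factor of the Fibonacci word, and its only factorizations $w=r\,\mu(u)\,t$ with $r$ a proper suffix of a $\mu$-image, $t$ a proper prefix of one, and $u$ a factor of $f$ are $(r,u,t)=(1,\,0101,\,\varepsilon)$ and $(1,\,010,\,0)$. In both cases $\pi(u)=2$ with period root $01=s_1$ (a standard word, so the strengthened hypothesis is available), and your formula predicts $\pi(w)=|\mu(01)|=3$; but in fact $\pi(w)=5$: the boundary letter $r=1$ destroys the lifted period, since $\mu(01)=010$ ends in $0$ while $w_1=1$, so $w_1\neq w_4$. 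Thus the least period upstairs need not come from the least period downstairs (here it is not of the form $|\mu(\Pref_q(u))|$ for \emph{any} period $q$ of $u=0101$), and an induction that tracks only least periods does not close. The same obstruction appears in your deferred point (b): when $\pi(w)>|w|-C$ (in the extreme, unbordered factors such as $00101$, where $\pi(w)=|w|$) there is no overlap for recognizability to act on at all, yet the theorem still asserts these lengths are Fibonacci. So the two points you set aside as delicate are not bookkeeping; they are where the entire content of the theorem lies. The known argument of Currie and Saari closes exactly these holes with different tools: the Critical Factorization Theorem converts an arbitrary least period into a (local) square, and the classification of squares in Sturmian words --- every square factor has its primitive root conjugate to a standard word --- then yields your rigidity statement (a) globally, with no induction on desubstitution.
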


This result can be generalized to abelian periods, in the sense of Definition~\ref{def:abper}.  For example, the smallest abelian period of $01001010=0\cdot 10 \cdot 01 \cdot 01 \cdot 0$ is $2$.

\begin{proposition}\cite{tcs16}\label{the:abperFib}
 The set of smallest abelian periods of factors of the Fibonacci infinite word is the set of Fibonacci numbers.\end{proposition}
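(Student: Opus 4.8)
The plan is to lean on \cref{the:main1}, which already reduces the existence of abelian powers in a Sturmian word $s_\alpha$ to the size of $\|m\alpha\|$, and to feed in the continued fraction structure of the slope $\alpha=\phi-1=[0;1,1,1,\ldots]$ of the Fibonacci word $f$, whose convergent denominators $q_i$ are exactly the Fibonacci numbers. The starting observation is that the interior blocks $u_1,\dots,u_{m-1}$ of Definition~\ref{def:abper} all share one Parikh vector of norm $p$, so they form an abelian power of period $p$ and exponent $m-1$; hence, whenever a factor $w$ of $f$ realizes an abelian period $p$ with at least two interior blocks, \cref{the:main1} forces $\|p\alpha\|<1/(m-1)$. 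Writing $k_p=\lfloor 1/\|p\alpha\|\rfloor$ and using $|w|=h+(m-1)p+t$ with $h,t<p$, this gives the basic length bound
\[
|w|<(k_p+2)\,p.
\]
Throughout I will use the standard fact that the $q_i$ are precisely the best-approximation denominators of $\alpha$: the sequence $m\mapsto\|m\alpha\|$ attains its successive record minima exactly at the $q_i$, and $\|q_j\alpha\|<\|m\alpha\|$ whenever $q_j<m<q_{j+1}$. For $\alpha=\phi-1$ one moreover has $q_j\|q_j\alpha\|\to 1/\sqrt5$, so $q_jk_{q_j}\sim\sqrt5\,q_j^2$.

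For the inclusion that every Fibonacci number is realized, I would fix $n$ and let $W$ be a \emph{maximal} abelian power of period $q_n$, i.e.\ of exponent $k_{q_n}$, which exists by \cref{the:main1}; it is a factor of $f$ of length $L=k_{q_n}q_n$ and has abelian period $q_n$, so its smallest abelian period is at most $q_n$. To see that it equals $q_n$, suppose $W$ had an abelian period $p'<q_n$; for $n$ large the corresponding factorization has many interior blocks, so the length bound applies and yields $L\le(k_{p'}+2)p'$. But among $p'<q_n$ the quantity $(k_{p'}+2)p'$ is largest at the previous Fibonacci number $q_{n-1}$ (by the best-approximation inequality), and $q_{n-1}k_{q_{n-1}}\sim\sqrt5\,q_{n-1}^2\approx L/\phi^2<L$, a contradiction. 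Hence each $q_n$ occurs as a smallest abelian period, the finitely many small cases being checked by hand.

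For the reverse inclusion it suffices to prove: if $p$ is an abelian period of a factor $w$ and $q_j$ is the largest Fibonacci number with $q_j\le p$, then $q_j$ is also an abelian period of $w$. Granting this, the smallest abelian period $p_0$ of any $w$ admits a Fibonacci number $q_j\le p_0$ that is itself an abelian period, so minimality forces $q_j=p_0$, i.e.\ $p_0$ is Fibonacci. To establish the claim when $q_j<p$, I would re-factor $w$ into blocks of length $q_j$. Since $\|q_j\alpha\|<\|p\alpha\|$, the count $C(x)$ of $1$s in the length-$q_j$ factor of $f$ beginning at position $x$ changes value only at positions where the letters of $f$ at $x$ and $x+q_j$ differ, a set of density $2\|q_j\alpha\|$; thus $C$ is constant on long plateaus. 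Combined with $|w|<(k_p+2)p$ and the slack $q_jk_{q_j}\sim\sqrt5\,q_j^2\gg(k_p+2)p$, there should be room to place a balanced run of period $q_j$ covering all of $w$ except a head and tail of length $<q_j$, which is exactly an abelian period $q_j$ of $w$.

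The main obstacle is precisely this last positional step. \cref{the:main1} only guarantees that long abelian powers of period $q_j$ occur \emph{somewhere} in $f$, whereas here such a run must be aligned with the \emph{specific} occurrence of $w$ and must satisfy the head/tail Parikh-domination conditions of Definition~\ref{def:abper}. Handling every position requires the three-distance theorem (equivalently, the Ostrowski representation relative to $\alpha$) to bound the gaps between consecutive sign changes of $C$ and to select a head length $h'<q_j$ placing $w$'s block boundaries inside a single plateau; verifying that the genuine factors of $f$ of length $<q_j$ forming the head and tail are Parikh-dominated by a full block is a further short case analysis that again reduces to the $1$-balancedness of $f$. The classical analogue for ordinary periods, stated above for $f$, indicates that the same Fibonacci skeleton governs both problems, and since abelian flexibility only enlarges the family of admissible factorizations, I expect the Fibonacci answer to be robust.
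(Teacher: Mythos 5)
Two remarks before the main point. The survey itself states this proposition without proof (it is quoted from~\cite{tcs16}, whose argument runs through the rotation representation of the Fibonacci word), so your attempt has to be judged against that reference. Your first inclusion --- every Fibonacci number is the smallest abelian period of some factor --- is essentially sound: for $W$ an abelian power of period $q_n$ and maximal exponent $k_{q_n}$, any abelian period $p'<q_n$ of $W$ would force $L=k_{q_n}q_n<(k_{p'}+2)p'$, while the best-approximation inequality $\|p'\alpha\|\geq\|q_{n-1}\alpha\|$ together with $p'<q_n$ gives $(k_{p'}+2)p'\lesssim L/\phi<L$, a contradiction for large $n$. (Your stronger claim that $(k_{p'}+2)p'$ is \emph{maximized} at $p'=q_{n-1}$ is unjustified, but it is also not needed.)

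The genuine gap is the reverse inclusion, which is the heart of the proposition, and your own text concedes it (``there should be room'', ``I expect the Fibonacci answer to be robust''). Your reduction --- it suffices to show that if $p$ is an abelian period of $w$ and $q_j$ is the largest Fibonacci number with $q_j\leq p$, then $q_j$ is also an abelian period of $w$ --- is valid, but nothing in the sketch proves that claim. \cref{the:main1} is an existential statement about abelian powers occurring \emph{somewhere} in $f$, whereas here you must produce, for the \emph{given} occurrence of $w$ and for \emph{some} offset $h'<q_j$, a cut into $q_j$-blocks that are pairwise abelian equivalent \emph{and} Parikh-dominate the forced head and tail; domination can genuinely fail for particular offsets (e.g.\ $1010$ is not dominated by $00100$, both factors of $f$), so it is a real constraint, not a formality. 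Worse, the quantitative cushion you invoke is false: it is not true that $q_jk_{q_j}\gg(k_p+2)p$. Taking $p=q_j+q_{j-3}$ one gets $\|p\alpha\|=(\phi^3-1)\|q_j\alpha\|=2\phi\|q_j\alpha\|$, hence $(k_p+2)p\approx q_jk_{q_j}/\phi^2$, a constant fraction; concretely, for $q_j=21$ and $p=26$ one has $k_{26}\cdot 26=364$ against $k_{21}\cdot 21=966$. So when you re-cut such a $w$ into $q_j$-blocks, the associated rotation points drift across roughly $1/\phi^2$ of the circle, and one must \emph{prove} --- via the three-distance theorem or explicit golden-ratio inequalities, together with a finite check for small $j$ and a separate treatment of head/tail domination --- that an admissible offset always survives in the remaining arc. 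That positional analysis is precisely the content of the proof in~\cite{tcs16} (and of its Sturmian generalization, \cref{the:csPerAB}, which requires the full Ostrowski machinery~\cite{PELTOMAKI2020251}); without it, your proposal establishes only one of the two inclusions.
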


For a general Sturmian word, Currie and Saari~\cite{DBLP:journals/ita/CurrieS09} characterized the set of the smallest periods of factors:

\begin{theorem}\cite{DBLP:journals/ita/CurrieS09}\label{the:csPer}
The set of smallest periods of factors of a Sturmian word of slope $\alpha$ having continued fraction expansion $[0;a_1,a_2,\ldots]$ is $\{\ell q_k+q_{k-1}\mid k\geq 0, \ell=1,2,\ldots,a_{k+1}\}$, where the sequence $(q_k)$ is the sequence of denominators of convergents of $\alpha$.
\end{theorem}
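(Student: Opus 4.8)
The plan is to prove the two inclusions separately, using the recursive structure of the standard words $s_n$ (recall that $|s_n|=q_n$, that $s_{n+1}=s_n^{a_{n+1}}s_{n-1}$ for $n\ge 1$, and that each $s_n$ occurs in $s_\alpha$) together with the Periodicity Lemma (\cref{FW}). I first record that the target set is exactly the set of denominators of the intermediate convergents $\tfrac{\ell p_k+p_{k-1}}{\ell q_k+q_{k-1}}$ ($1\le \ell\le a_{k+1}$) of $\alpha$, a classical reformulation that drives both directions; and a harmless reduction: a least period is always realized by a \emph{maximal} factor of that period, since extending a factor $w$ with $\pi(w)=p$ to the longest factor $W$ of period $p$ keeps $\pi(W)=p$ (otherwise $w$, of length $\ge p$, would inherit the smaller period of $W$).

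For the inclusion $\supseteq$, fix $k\ge 1$ and $1\le \ell\le a_{k+1}$ and set $u=s_k^{\ell}s_{k-1}$, so that $|u|=\ell q_k+q_{k-1}=:p$ and, being a prefix of $s_k^{\omega}$, $u$ has period $q_k$. Starting $\ell$ copies of $s_k$ before the final $s_{k-1}$ of an occurrence of $s_{k+1}=s_k^{a_{k+1}}s_{k-1}$, one reads $s_k^{\ell}s_{k-1}$ followed by $s_k^{\ge a_{k+1}}s_{k-1}$ (the continuation of $s_\alpha$ after $s_{k+1}$ is $s_{k+1}$ or $s_ks_{k+1}$); since $\ell\le a_{k+1}$ and $s_{k-1}$ is a prefix of $s_k$, the length-$2p$ word read is exactly $u^2$, whence $u^2\in\Fact(s_\alpha)$. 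To see that $\pi(u^2)=p$, observe that $u^2$ has period $p$ while a short computation (for $k\ge 2$, using that $s_k$ has period $q_{k-1}$) shows that its prefix of length $L=p+(q_k-q_{k-1})$ has period $q_k$. Were $p'<p$ also a period of $u^2$, then \cref{FW} would give it the period $g=\gcd(p,p')$; but $\gcd(p,q_k)=\gcd(q_{k-1},q_k)=1$, so $\gcd(g,q_k)=1$, and since $g\le p/2\le L-q_k+1$ a second application of \cref{FW} to the length-$L$ prefix (which then carries the coprime periods $g$ and $q_k$) would force it to be a power of a single letter, contradicting that its prefix $s_k$ contains both letters. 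Hence $\pi(u^2)=p$. The remaining values $p\in\{1,\dots,q_1\}$ (the case $k=0$) are realized directly by the factors $0^{\ell-1}1\,0^{\ell-1}$, of least period $\ell$, which occur in $s_\alpha$ by balance; the case $k=1$ needs only the same argument with $L$ computed by hand.

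For the inclusion $\subseteq$, I must show that no $p$ outside $S:=\{\ell q_k+q_{k-1}\}$ is a least period, and this is where the real work lies. The approach is through the renormalization (S-adic) description of $s_\alpha$: for each $k$, $s_\alpha$ is a concatenation of the two blocks $s_k$ and $s_{k-1}$ whose arrangement is itself a Sturmian word, of slope $[0;a_{k+1},a_{k+2},\dots]$, with $s_k^{a_{k+1}}s_{k-1}$ as its standard block. At this scale the only genuinely repeated block-lengths are the $|s_k^{\ell}s_{k-1}|=\ell q_k+q_{k-1}$, i.e.\ the elements of $S$ in $[q_k,q_{k+1})$. Given a factor $w$ with $\pi(w)=p$ and $q_k\le p<q_{k+1}$, the goal is to show that if $p\notin S$ --- so that $p$ lies strictly between two consecutive elements of $S$ at distance $q_k$ --- then the balance of the block-arrangement together with \cref{FW} forces any long enough period-$p$ factor to acquire the nearest smaller semiconvergent period, or the period $q_k$, contradicting minimality. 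The main obstacle is exactly this reduction step: converting the qualitative block/balance picture into a quantitative statement that a non-semiconvergent length cannot persist as a least period. I expect the cleanest route to run through the three-distance theorem, comparing $\|p\alpha\|$ with the one-sided record values $\|(\ell q_k+q_{k-1})\alpha\|$ (the semiconvergent denominators being precisely those achieving such records), with the boundary range $p\in(q_k,q_k+q_{k-1})$ and the transition between scales $k$ and $k+1$ demanding the most care. As a consistency check, when every $a_i=1$ the set $S$ collapses to $\{q_k+q_{k-1}\}=\{q_{k+1}\}$, the Fibonacci numbers, recovering the known description of least periods of the Fibonacci word.
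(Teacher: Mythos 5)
The survey states this theorem only as a citation to Currie and Saari and gives no proof, so there is nothing in the paper to compare your argument against; judging it on its own merits, it is incomplete. Your $\supseteq$ half is essentially sound: taking $u=s_k^{\ell}s_{k-1}$, the block structure of $s_\alpha$ over $\{s_{k+1},s_k\}$ does put $u^2$ in $\Fact(s_\alpha)$, and the double application of Lemma~\ref{FW} (first to $u^2$ to replace a hypothetical smaller period $p'$ by $g=\gcd(p,p')$ dividing $p$, then to the prefix of length $(\ell+1)q_k$, which carries the coprime periods $g$ and $q_k$) correctly forces a one-letter word and a contradiction; the small-$k$ cases you defer are routine. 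So every element of $\{\ell q_k+q_{k-1}\}$ is indeed a least period of some factor.

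The genuine gap is the $\subseteq$ half, which you yourself identify as ``where the real work lies'' and then do not carry out. What is written there is a plan, not a proof: phrases like ``the goal is to show'' and ``I expect the cleanest route to run through the three-distance theorem'' replace the one lemma that constitutes the actual content of the theorem, namely that a length $p$ lying strictly between two consecutive semiconvergent denominators can never be the \emph{least} period of any factor. Nothing in your text shows why a factor with period $p\notin S$ must inherit a smaller period; the reduction to maximal period-$p$ factors and the appeal to balance plus records of $\|p\alpha\|$ are plausible scaffolding, but the quantitative step connecting them is exactly what is missing. Note how much is hiding there: since a word of length $n$ has least period $n$ if and only if it is unbordered, the $\subseteq$ inclusion already implies that every unbordered factor of $s_\alpha$ has length in $S$ --- a nontrivial statement that your sketch would have to yield as a special case. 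Until that reduction lemma is proved (for instance by analyzing, for each root $r$ of length $p$, how far $r$ can be extended with period $p$ inside $s_\alpha$, in terms of $\|p\alpha\|$, and then handling factors shorter than $2p$), the proposal establishes only one of the two inclusions.
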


Peltom\"aki~\cite{PELTOMAKI2020251} gave a generalization of the latter result to the case of abelian periods, even though a full characterization seems more involved in this case:

\begin{theorem}\cite{PELTOMAKI2020251}\label{the:csPerAB}
If $m$ is the smallest abelian period of a nonempty factor of a Sturmian word of slope $\alpha$  having continued fraction expansion $[0;a_1,a_2,\ldots]$, then either $m=tq_k$ for some $k\geq 0$ and $1\leq t\leq a_{k+1}$ or $m=\ell q_k+q_{k-1}$ for some $k\geq 1$ and some $1\leq \ell\leq a_{k+1}$, where the sequence $(q_k)$ is the sequence of denominators of convergents of $\alpha$.
\end{theorem}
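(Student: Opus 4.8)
The plan is to translate abelian periodicity of a factor into a statement about the one-counts of its length-$p$ blocks, and then to extract the admissible values of the smallest such $p$ from the continued fraction of $\alpha$. Recalling that $\underline s_n=\lfloor\alpha(n+1)+\rho\rfloor-\lfloor\alpha n+\rho\rfloor$, the number of $1$s in the length-$p$ window at position $i$ equals $\lfloor p\alpha\rfloor+\lfloor\{i\alpha+\rho\}+\{p\alpha\}\rfloor$; hence such a window is \emph{heavy} (carrying one extra $1$) exactly when $\{i\alpha+\rho\}$ falls in an arc of length $\{p\alpha\}$, and \emph{light} otherwise. Since the alphabet is binary, two equal-length factors are abelian equivalent iff their one-counts agree, so having abelian period $p$ unwinds to the following: for some head length $h<p$, the points $\{(i+h+jp)\alpha+\rho\}$ attached to the consecutive full length-$p$ blocks all lie on one side of that arc (so the blocks share a common one-count), together with the mild requirement that the head and tail one-counts be dominated by it. This is precisely the mechanism behind Theorem~\ref{the:main1}: the orbit of the starting point under rotation by $p\alpha$ remains in a single arc for at most $\lfloor 1/\|p\alpha\|\rfloor$ steps, which caps the number of equal blocks.

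Fix a factor $w$ whose smallest abelian period is $m$, and let $k$ be determined by $q_k\le m<q_{k+1}$. Writing $m=tq_k+r$ with $0\le r<q_k$ forces $1\le t\le a_{k+1}$, so the theorem is equivalent to the single assertion $r\in\{0,q_{k-1}\}$: the value $r=0$ gives the multiples $m=tq_k$, while $r=q_{k-1}$ gives the semiconvergents $m=tq_k+q_{k-1}$. The semiconvergent family is anchored by the classical result of Currie and Saari: every classical period is in particular an abelian period (take $h=0$, so the full blocks are literally equal and the tail is a prefix of a block, hence Parikh-dominated), whence the smallest abelian period never exceeds the smallest classical period; and if $m$ is itself a classical period of $w$, this squeeze makes $m$ the smallest classical period, so Theorem~\ref{the:csPer} yields $m=tq_k+q_{k-1}$ directly. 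It therefore remains to treat the genuinely abelian situation, in which the length-$m$ blocks of $w$ share a common one-count but are not all equal.

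In this situation I would prove the contrapositive: if $r\notin\{0,q_{k-1}\}$, then $w$ also admits an abelian period strictly smaller than $m$, contradicting minimality. The tool is the best-approximation theory of continued fractions together with the three-distance theorem. One uses that $q_k$ is the record denominator for $\|q\alpha\|$ below $q_{k+1}$, that $\|tq_k\alpha\|=t\,\|q_k\alpha\|$ throughout the relevant range, and that for $q_k\le p<q_{k+1}$ the gaps in the rotation orbit take only the values controlled by $\|q_k\alpha\|$ and $\|q_{k-1}\alpha\|$. Comparing the arc occupied by the length-$m$ blocks of $w$ with the finer arcs produced by the shorter periods $tq_k$ and $tq_k+q_{k-1}$, the goal is to show that when the offset $r$ is neither $0$ nor $q_{k-1}$ the same occurrence of $w$ is tiled, with constant one-count, by blocks of one of these shorter lengths; that shorter length is then an abelian period of $w$ below $m$.

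The hard part will be exactly this re-tiling, because whether a shorter period tiles a given factor is \emph{alignment-sensitive} and cannot be decided by length alone. For instance, for $\alpha=\sqrt2-1$ a factor may have smallest abelian period $q_k+q_{k-1}$ even though factors of the smaller period $q_k$ are strictly longer, the reason being that the chosen occurrence of $w$ does not sit inside a uniform run of the shorter blocks. I would control this by following the starting point $\{(i+h)\alpha+\rho\}$ through its Ostrowski expansion in the scale $(q_k)$ and using the nested structure of the standard words $s_k$, where $|s_k|=q_k$: this lets one locate the occurrence of $w$ relative to the block boundaries and show that an offset $r\notin\{0,q_{k-1}\}$ always leaves room to refine the uniform run to period $tq_k$ or $tq_k+q_{k-1}$, whereas the two exceptional offsets are rigid. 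Matching these two rigid offsets with the two families completes the proof.
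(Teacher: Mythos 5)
Your framing is sound, and it matches the machinery the literature actually uses: the rotation/one-count translation of abelian equivalence of length-$p$ windows is exactly the mechanism behind \cref{the:main1}; the normalization $q_k\le m<q_{k+1}$, $m=tq_k+r$, $0\le r<q_k$ correctly reduces the conclusion of \cref{the:csPerAB} to the single assertion $r\in\{0,q_{k-1}\}$; and the observation that a classical period is an abelian period (empty head, prefix tail), so that if the minimal abelian period $m$ is itself a classical period of $w$ then \cref{the:csPer} applies, correctly disposes of that subcase. Note also that the survey gives no proof of this statement --- it is quoted from Peltom\"aki's paper --- so your argument must stand entirely on its own.

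It does not, because the whole content of the theorem sits in the step you defer. The claim that $r\notin\{0,q_{k-1}\}$ forces a strictly smaller abelian period of the \emph{same} factor $w$ is precisely what has to be proved, and your text asserts the desired conclusion (``the same occurrence of $w$ is tiled, with constant one-count, by blocks of one of these shorter lengths'') while only naming tools --- the three-distance theorem, Ostrowski expansions, nested standard words --- that might prove it. You yourself flag that the re-tiling is alignment-sensitive, yet no lemma is formulated that converts a bad offset $r$ into an actual factorization of $w$ into shorter blocks with equal Parikh vectors \emph{and} with the head/tail domination demanded by \cref{def:abper}; this forces one to control simultaneously the phase $\{(i+h)\alpha+\rho\}$ of the occurrence and the arc condition for the candidate shorter period, and that quantitative comparison (relating $\|m\alpha\|$ for $m=tq_k+r$ to $\|tq_k\alpha\|$ and $\|(tq_k+q_{k-1})\alpha\|$ at a compatible phase) never appears. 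Even the continued-fraction identities you plan to lean on need care: $\|tq_k\alpha\|=t\|q_k\alpha\|$ can fail at the top of your range, e.g.\ for $k=0$ and $t=a_1\ge 2$ one has $t\|\alpha\|=a_1\alpha>1/2$ while $\|a_1\alpha\|=1-a_1\alpha$. As it stands, the proposal is a correct reduction plus a plan; the missing re-tiling lemma \emph{is} the theorem, and it is where Peltom\"aki's actual proof spends essentially all of its technical effort.
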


\subsection{Abelian returns}

\begin{definition}
 Let $x$ be an infinite recurrent word. A word $w$ is a \emph{first return} (or simply a \emph{return}) to a factor $u$ of $x$ if $wu$ is a factor of $x$ and $u$ occurs only twice in $wu$, as its prefix and as its suffix.
\end{definition}

In other words, given a factor $u$ of a recurrent word $x$, we know that $u$ must eventually reoccur  in $x$, and we consider the factors of $x$ between two consecutive occurrences of $u$ (which may overlap) in $w$. For example, in the Fibonacci word $f$, the returns to $101$ are $10100$ and $10100100$. 

\begin{theorem}\cite{DBLP:journals/ejc/Vuillon01}\label{thm:return}
 An infinite word is Sturmian if and only if each of its factors has exactly two returns.
\end{theorem}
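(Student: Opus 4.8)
\noindent The plan is to prove the two implications separately, after two reductions that hold on both sides. Returns only make sense for recurrent words, and the hypothesis ``exactly two returns'' already pins down a lot: since the set of return words to any factor has size two, every factor reoccurs within a bounded gap, so $x$ is uniformly recurrent; and $x$ must be aperiodic, because a (purely or ultimately) periodic word has, for every factor $u$ longer than its period, a \emph{single} return word, namely a conjugate of the period, contradicting the count two. In the other direction, every Sturmian word is by definition aperiodic and, being balanced, uniformly recurrent. Thus on both sides we may work inside the class of uniformly recurrent aperiodic words, where returns are well defined and finite in number.

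\medskip

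\noindent For the direction ``Sturmian $\Rightarrow$ two returns'' I would use the representation of a Sturmian word as the coding of an irrational rotation $R_\alpha\colon t\mapsto t+\alpha \bmod 1$ with respect to the partition $\{\underline I_0,\underline I_1\}$ of Section~\ref{sec:sturmian}. A factor $u=u_0\cdots u_{n-1}$ is the itinerary of every point of a single arc $J_u=\bigcap_{i=0}^{n-1}R_\alpha^{-i}\underline I_{u_i}$, and the arcs $J_u$ over all length-$n$ factors partition the circle into exactly $n+1$ pieces (matching $p_x(n)=n+1$). The occurrences of $u$ in $x$ are the visits of the orbit of the intercept to $J_u$, so the return words to $u$ are exactly the itineraries read during the excursions of the first-return (induced) map of $R_\alpha$ to $J_u$. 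It is classical that the first-return map of an irrational rotation to an arc is again conjugate to a rotation, i.e.\ an exchange of exactly two subintervals; on each of these two atoms the excursion, hence the return word, is constant, and both atoms are nonempty since $\alpha$ is irrational. This gives exactly two return words to every factor.

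\medskip

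\noindent For the converse I would translate returns into the language of \emph{Rauzy graphs}. The Rauzy graph $G_n$ of $x$ has the length-$n$ factors as vertices and the length-$(n+1)$ factors as edges, the edge $z$ running from its length-$n$ prefix to its length-$n$ suffix; uniform recurrence makes every $G_n$ strongly connected. Reading labels along paths turns occurrences of $u$ into visits to the vertex $u$, so the return words to $u$ correspond bijectively to the first-return paths at $u$ (closed paths from $u$ to $u$ not meeting $u$ internally). The hypothesis therefore says that \emph{every} vertex of \emph{every} $G_n$ carries exactly two first-return paths. Applying this to $u=\varepsilon$ is especially clean: by the stated definition, $\varepsilon$ occurs exactly twice in $w$ precisely when $w$ is a single letter, so the return words to $\varepsilon$ are the letters of $x$; the count two forces $x$ to be \emph{binary}, whence all in- and out-degrees in every $G_n$ are at most $2$. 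It then remains to show that each $G_n$ ($n\ge 1$) has exactly one right-special vertex (a vertex of out-degree $2$): this gives $p_x(n+1)-p_x(n)=1$ for every $n$, and together with $p_x(1)=2$ and aperiodicity it yields $p_x(n)=n+1$, i.e.\ $x$ is Sturmian.

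\medskip

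\noindent The main obstacle is exactly this last graph-structural step of the converse: upgrading the \emph{local} condition ``two first-return paths at each vertex'' to the \emph{global} statement ``exactly one right-special vertex per level.'' The case of zero right-special vertices is easy (all out-degrees one makes $G_n$ a single cycle, giving one first-return path per vertex, hence periodicity, which is excluded); the delicate part is ruling out two or more right-special vertices, since an excursion passing through several branchings before its first return could split into more than two first-return paths at some vertex, yet one must exclude \emph{every} strongly connected binary configuration for which coincidences of path lengths might still leave precisely two loops at every vertex. The cleanest route is probably an induction on $n$ that tracks how the unique right-special and left-special factors of $G_n$ evolve into those of $G_{n+1}$, using the excursion description to control the degrees. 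By contrast, the only subtle point of the forward direction is verifying that the excursion itinerary really is constant on each atom of the induced rotation despite intermediate crossings of the partition boundary, which is precisely the content of the classical ``first-return map of a rotation is a rotation'' fact invoked above.
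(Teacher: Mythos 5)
Your proposal has genuine gaps in both directions (note that the survey itself gives no proof of this theorem --- it is cited to Vuillon --- so your argument must stand on its own). In the forward direction, the ``classical fact'' you invoke is false: the first-return map of an irrational rotation $R_\alpha$ to an \emph{arbitrary} arc is in general an exchange of \emph{three} intervals, not two (this is the three-distance/three-gap phenomenon; the return time generically takes three values, the largest being the sum of the other two). Your claim proves too much: if it were true, then every binary coding of $R_\alpha$ by a two-interval partition $\{[0,\beta),[\beta,1)\}$ --- including the case $\beta\notin\alpha\mathbb{Z}+\mathbb{Z}$, which yields aperiodic, uniformly recurrent, non-Sturmian binary words of factor complexity $2n$ --- would have exactly two returns to every factor, contradicting the ``only if'' direction of the very theorem you are proving. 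What actually saves the argument for Sturmian words is the special structure of the cylinders: the endpoints of every $J_u$ with $|u|=n$ lie in $\{\,\{-k\alpha\} : 0\le k\le n\,\}$, i.e.\ on a \emph{single} $R_\alpha$-orbit, and the induced map on an interval whose two endpoints lie on one orbit is an exchange of exactly two intervals. This orbit condition is precisely where Sturmianness enters; it is absent from your proof, and it is also what you need for the point you flagged at the end (that each atom of the induced map is a single cylinder $J_{wu}$, so that the return word is constant on it).

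In the converse direction you do not have a proof at all but a plan, as you acknowledge: the crucial step --- excluding two or more right-special factors of the same length --- is left as ``the cleanest route is probably an induction on $n$,'' and that step is the entire content of this implication. Worse, the Rauzy-graph translation you base it on is not the hypothesis the theorem gives you: return words correspond only to first-return paths whose labels are \emph{factors of $x$}, not to all first-return paths of $G_n$. For example, in the graph $G_1$ of the Fibonacci word the vertex $1$ has infinitely many first-return paths, with labels $1\cdot 0^k$ for every $k\ge 1$, yet only $10$ and $100$ are return words, because $000$ is not a factor; so ``every vertex of every $G_n$ carries exactly two first-return paths'' is simply not what you may assume, and the intended induction must be set up with the weaker, word-constrained count. (One step of your reduction is both correct and essential: deducing binarity from the two returns to $\varepsilon$. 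With nonempty factors only, the statement over arbitrary alphabets would be false --- the image of the Fibonacci word under $0\mapsto ab$, $1\mapsto ac$ is a ternary word in which every nonempty factor has exactly two return words, since every such factor occurs in a single parity class of positions and its returns biject with those of the corresponding Fibonacci factor.)
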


This is once again tight, because if a factor of an infinite recurrent word $x$ has only one return, then $x$ is ultimately periodic. We now present an extension of this result
to the abelian case.

We consider two abelian modifications of the notion of return
word. Given a factor $u$ of an infinite word $x$, let $n_1 < n_2 <
n_3 < \ldots$ be all the integers $n_i$ such that $w_{n_i}\cdots
w_{n_{i+|u|-1}}$ is abelian equivalent to $u$. Then we call each
$w_{n_i}\cdots w_{n_{i+1}-1}$ a \emph{semi-abelian return} to
$u$. By an \emph{abelian return} to $u$ we mean the abelian class
of $w_{n_i}\cdots w_{n_{i+1}-1}$. We note that in both cases these
definitions depend only on the abelian class of $u$. For example, in the Fibonacci word, the word abelian class of $010$ has three abelian ands semi-abelian returns: $0$ (in the factors $0100$ and $0010$), $1$ (in the factor $1001$) and $01$ (in the factor $01010$).

Each of these notions of abelian returns gives rise to a
characterization of Sturmian words. Moreover, the
characterizations are the same in terms of abelian and
semi-abelian returns:

\begin{theorem}
 A binary recurrent infinite word $x$ is Sturmian if and only if each factor $u$ of $x$
has two or three (semi-)abelian returns in $x$.
\end{theorem}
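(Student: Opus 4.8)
The plan is to prove both implications and to route Sturmianicity through the abelian-complexity criterion recorded above, namely that an aperiodic binary word is Sturmian precisely when $a_x(n)=2$ for all $n\ge 1$. Throughout I identify an abelian class of binary factors of length $\ell$ with the number of $0$'s it contains.

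For the ``only if'' direction I would represent a Sturmian word $s_\alpha$ of slope $\alpha$ and intercept $\rho$ by the rotation $R_\alpha\colon\theta\mapsto\theta+\alpha$ on the circle: the length-$\ell$ factor at position $n$ is the $\{\underline I_0,\underline I_1\}$-itinerary of $\theta_n=\{\rho+n\alpha\}$, and balance forces its Parikh vector to take only the two values ``light'' and ``heavy''. Hence the set of positions whose length-$\ell$ factor lies in a fixed abelian class is exactly $\{n:\theta_n\in J\}$ for a proper half-open arc $J$ depending only on that class, and the positions abelian-equivalent to $u$ are the return times of the orbit of $\theta_0$ to $J$. I would then invoke the three-distance theorem in its first-return form: the first-return map of an irrational rotation to a proper arc is an exchange of two or three subintervals, on each of which the first-return time is constant, the three possible times being $t_1,t_2,t_1+t_2$; in particular there are always at least two, since a single interval would make the return map a rotation and force $\alpha$ rational.

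The length of a semi-abelian return equals one of these return times, so the return lengths---and hence, since distinct lengths force distinct Parikh vectors, the abelian return classes as well---number exactly the number of subintervals, either two or three; this is also why the semi-abelian and abelian counts coincide. What remains for this direction is to verify that the return \emph{word}, not merely its length, is constant on each subinterval, i.e.\ that each subinterval is uncut by the level-$L$ coding boundaries $\{-j\alpha\}$ for $0\le j\le L$, where $L$ is the return time there. This compatibility of the three-distance partition with the coding partition is the technical core of the forward direction and is where the continued-fraction arithmetic of $\alpha$ (the convergent denominators $q_k$) enters; granting it, each subinterval contributes exactly one semi-abelian return, giving two or three.

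For the converse I would argue the contrapositive by showing that the hypothesis forces $a_x(n)=2$ for every $n$, after which $x$ is aperiodic (its abelian complexity is never $1$) and Sturmian by the quoted criterion. Two possibilities must be excluded. If $a_x(n_0)=1$ for some $n_0$, then by the lemma relating $a_x(n)=1$ to pure periodicity we have $x=v^\omega$ with $v$ of minimal length $p$, and for a suitable window length at which some abelian class occurs only along one residue modulo $p$ (which minimality of $p$ makes available) one obtains a factor with a single abelian return. If instead $a_x(n_0)\ge 3$, then, reading the number of $0$'s in the sliding window of length $n_0$ as a walk with steps in $\{-1,0,+1\}$, its range is at least $2$, and I would locate a non-extremal class whose occurrences are separated by excursions of this walk realizing at least four distinct return lengths, hence at least four abelian returns. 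Either outcome contradicts the hypothesis. The main obstacle is precisely this last step: with no rotation available, one must prove directly that an unbalanced aperiodic binary word cannot keep the occurrence-gaps of every abelian class within three values. I expect the cleanest route to be a ``converse three-distance'' statement---that bounding the number of distinct return lengths of every factor by three is equivalent to balance (range at most one for the count of $0$'s over windows of each fixed length)---which would yield both directions at once and again explain the coincidence of the semi-abelian and abelian versions.
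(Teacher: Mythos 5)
The survey states this theorem (due to Puzynina and Zamboni) without giving a proof, so your sketch has to be judged on its own merits; its skeleton is reasonable, but each direction is missing its central step, and one step is actually refuted by an example appearing in the paper immediately before the theorem. In the forward direction, your counting rests on the claim that distinct subintervals of the first-return partition carry distinct return times, so that the number of abelian returns equals the number of distinct return lengths, which equals the number of subintervals. This is false: in the Fibonacci word the abelian class of $010$ has the three returns $0$, $1$ and $01$, so the induced exchange has three intervals with return times $1,1,2$ --- two returns of equal length with \emph{different} Parikh vectors. Thus ``distinct lengths force distinct Parikh vectors'' cannot deliver the count, and the asserted coincidence of the semi-abelian and abelian counts does not follow. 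What survives is only the chain $2 \le (\text{number of abelian returns}) \le (\text{number of semi-abelian returns}) \le (\text{number of exchange intervals}) \le 3$, whose lower bound needs the fact that not all exchange intervals can share one return time (otherwise some power of the rotation would map the arc onto itself), and whose upper bound needs exactly the statement you concede you have not proved: that the return word is constant on each exchange interval, i.e.\ that the three-distance partition is not refined by the coding boundaries. That compatibility is the entire content of the forward direction, and ``granting it'' leaves the direction unestablished.

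For the converse, the reduction to showing $a_x(n)=2$ for all $n$ and then citing the Coven--Hedlund reformulation is sound, and the periodic case is plausibly fillable, though your parenthetical ``which minimality of $p$ makes available'' hides a real argument: naive choices fail, since for any length that is a multiple of the period the unique abelian class occurs at \emph{every} position and has exactly two abelian returns, so one must genuinely construct a class supported on a single residue modulo $p$. The serious gap is the unbalanced case $a_x(n_0)\ge 3$: you must prove that an unbalanced recurrent binary word has some factor with at least four (or exactly one) abelian returns, and you offer only the hope of a ``converse three-distance'' lemma, neither stated precisely nor proved. Since no rotation represents a non-Sturmian word, nothing from the forward machinery applies here; this case is where most of the work in the actual proof lies, and it proceeds by a delicate combinatorial analysis of minimal witnesses of imbalance. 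As it stands, both implications have their crux missing, so the proposal is a plan rather than a proof.
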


In \cite{Rigo13}, the authors define the set $\mathcal{APR}_x$ as the set of all semi-abelian returns to all prefixes of an infinite word $x$. This definition gives a characterization of Sturmian words of intercept 0 among all other Sturmian words:

\begin{theorem}{\cite{Rigo13}}\label{RSV13:finite} Let $x$ be a Sturmian word. The set $\mathcal{APR}_x$ is finite if and only if $x$ does not
have a null intercept. \end{theorem}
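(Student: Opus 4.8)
The plan is to recast the statement inside the rotation picture that defines Sturmian words and to reduce the finiteness of $\mathcal{APR}_x$ to a \emph{uniform} bound on return times to a family of arcs. Write $x=\underline{s}_{\alpha,\rho}$ with $\{\rho\}\in[0,1)$ (the case $\overline{s}_{\alpha,\rho}$ being symmetric), and set $x_j=\{\rho+j\alpha\}$. Telescoping the floor definition of $\underline{s}$ gives that the length-$n$ factor of $x$ starting at position $j$ contains exactly $\lfloor n\alpha\rfloor+\lfloor\,\{\rho+j\alpha\}+\{n\alpha\}\,\rfloor$ ones; hence, by balance, it is abelian equivalent to the prefix $p_n$ of $x$ precisely when $x_j$ lies in the same one of the two arcs $B_n=[0,1-\{n\alpha\})$ and $A_n=[1-\{n\alpha\},1)$ as $x_0=\{\rho\}$. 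Let $J_n\in\{A_n,B_n\}$ be the arc containing $\{\rho\}$. Then the abelian occurrences of $p_n$ are exactly the visits of the orbit $(x_j)_{j\ge0}$ to $J_n$, and each semi-abelian return to $p_n$ is the factor of $x$ read between two consecutive visits; in particular its \emph{length} is the corresponding return time to $J_n$. Since a set of words is finite iff their lengths are bounded, the theorem reduces to controlling return times to the arcs $J_n$, uniformly over all $n$.

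For the finiteness direction, assume $\rho\neq0$, so $\{\rho\}\in(0,1)$ and $\delta:=\min(\{\rho\},1-\{\rho\})>0$. The key observation is that $|J_n|\ge\delta$ for \emph{every} $n$: if $\{\rho\}\in B_n$ then $\{\rho\}<1-\{n\alpha\}=|B_n|$, while if $\{\rho\}\in A_n$ then $\{n\alpha\}\ge1-\{\rho\}$, so $|A_n|\ge1-\{\rho\}$; in both cases $|J_n|\ge\delta$. Now I would invoke uniform recurrence of the (minimal, uniquely ergodic) rotation by $\alpha$: there is a constant $N_0=N_0(\alpha,\delta)$ such that any window of $N_0$ consecutive orbit points is $\delta$-dense, whence every arc of length $\ge\delta$ is visited within any $N_0$ consecutive steps. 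Consequently all return times to all the $J_n$ are at most $N_0$, so every element of $\mathcal{APR}_x$ has length $\le N_0$, and there are only finitely many such words.

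For the converse, assume $\rho=0$. Then $x_0=0$ always lies in $B_n=[0,1-\{n\alpha\})$, so $J_n=B_n$. Choosing $k$ odd, one has $\{q_k\alpha\}=1-\|q_k\alpha\|$, so $J_{q_k}=B_{q_k}=[0,\|q_k\alpha\|)$ is a tiny arc with $x_0=0$ sitting at its left endpoint. By the best-approximation property of convergents, $\|r\alpha\|\ge\|q_k\alpha\|$ for all $0<r<q_{k+1}$, which means $x_r=\{r\alpha\}\notin[0,\|q_k\alpha\|)=B_{q_k}$ for every such $r$; hence the first return from $0$ to $B_{q_k}$ has length at least $q_{k+1}$. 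As $k$ ranges over the odd integers, these first-return words have length tending to infinity, so $\mathcal{APR}_x$ contains words of unbounded length and is infinite.

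The crux — and the step I expect to demand the most care — is the \emph{uniformity} in the finiteness direction: each individual prefix trivially has only two or three semi-abelian returns, so the content of the theorem lies entirely in bounding their lengths simultaneously over the whole family $(J_n)_{n\ge1}$, which is exactly what the uniform lower bound $|J_n|\ge\delta$ combined with uniform recurrence delivers. A secondary, purely bookkeeping difficulty is the treatment of the half-open boundary conventions precisely at $\rho=0$, where $\underline{s}_{\alpha,0}$ and $\overline{s}_{\alpha,0}$ differ and $x_0$ sits on an endpoint of the partition; these affect only finitely many short return words and therefore do not disturb the finite/infinite dichotomy.
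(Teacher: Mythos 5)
The survey you were handed does not actually prove this theorem: it states it with a citation to \cite{Rigo13}, so there is no in-paper argument to measure your proposal against, and it must be judged on its own. Judged that way, it is correct. The reduction is exactly right: telescoping the mechanical definition shows the length-$n$ factor at position $j$ has $\lfloor n\alpha\rfloor+\lfloor x_j+\{n\alpha\}\rfloor$ ones, so abelian occurrences of the prefix $p_n$ are precisely the visits of the orbit $(x_j)$ to the arc $J_n$ containing $x_0$, and finiteness of $\mathcal{APR}_x$ is equivalent to a uniform bound on the return times to the whole family $(J_n)$. Your two key estimates are both sound: when $\{\rho\}\neq 0$ the bound $|J_n|\ge\delta=\min(\{\rho\},1-\{\rho\})$ holds in both cases of the dichotomy, and the three-distance/equidistribution fact that some $N_0(\alpha,\delta)$ consecutive orbit points meet every arc of length $\ge\delta$ then caps all return times by $N_0$; when $\rho=0$, taking odd-index convergents makes $J_{q_k}=[0,\|q_k\alpha\|)$, and the best-approximation inequality $\|r\alpha\|\ge\|q_k\alpha\|$ for $0<r<q_{k+1}$ forces the first return word to $p_{q_k}$ to have length at least $q_{k+1}$, giving infinitely many returns of unbounded length. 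One caveat: your closing remark that the $\overline{s}_{\alpha,0}$ boundary conventions ``affect only finitely many short return words'' is not accurate as stated. For $\overline{s}_{\alpha,0}=1\,s_{\alpha,\alpha}$ the abelian occurrences of prefixes are the visits to $A_n=[1-\{n\alpha\},1)$ rather than $B_n$ --- a systematic swap of the relevant arcs, not a finite perturbation of the return set. The correct fix is simply to rerun your convergent argument with \emph{even} $k$, so that $A_{q_k}=[1-\|q_k\alpha\|,1)$ is the shrinking arc; best approximation then excludes visits before time $q_{k+1}$ exactly as in your case. This is a cosmetic repair, not a gap: the dichotomy and both of its halves survive the change of convention.
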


In \cite{DBLP:conf/cwords/MasakovaP13}, the authors provide explicit formulas for the cardinality
of the set $\mathcal{APR}_x$  of abelian returns of all prefixes of a Sturmian
word $x$ in terms of the partial quotients of its slope,
depending on the intercept. They also provide a complete description of the set $\mathcal{APR}_x$   for characteristic Sturmian words. 

In \cite{DBLP:journals/tcs/RampersadRS14},  the result from Theorem \ref{RSV13:finite} is generalized to rotation words, another generalization of Sturmian words.
 Given $\alpha,\beta \in (0,1)$ and  $\rho \in [0,1)$, the \emph{rotation word } $r=r(\alpha,\beta,\rho)$ is the word $r=r_0r_1\cdots $ satisfying, for all $i\geq 0$,

$$r_i=\begin{cases} 1, & \mbox{if } R^i_{\alpha}(\rho)\in [1-\beta,0);\\ 0, & \mbox{otherwise}.\end{cases}$$ 

\begin{theorem}{\cite{DBLP:journals/tcs/RampersadRS14}} Let $\alpha$ be irrational. Let $m$ be an integer. Let $r=r(\alpha,\{m \alpha\},\rho)$ be a rotation word. The set $\mathcal{APR}_r$ is finite if and only if $\rho \notin \{\{-i\alpha \}|0\leq i<m\}$.\end{theorem}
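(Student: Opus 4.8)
The plan is to recast everything through the rotation $R_\alpha\colon x\mapsto\{x+\alpha\}$ on $\mathbb{R}/\mathbb{Z}$ and the half-open coding interval $J=[1-\{m\alpha\},1)=[\{-m\alpha\},0)$, so that $r_i=\mathbf{1}_{J}(R^i_\alpha(\rho))$. For each length $n$, set $S_n(x)=\#\{0\le i<n:R^i_\alpha(x)\in J\}$; then the number of $1$'s in the factor of $r$ of length $n$ starting at position $j$ equals $S_n(R^j_\alpha(\rho))$, so position $j$ is an abelian occurrence of $\Pref_n(r)$ exactly when $R^j_\alpha(\rho)$ lies in the level set $E_n:=S_n^{-1}(S_n(\rho))$, which contains $\rho$. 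Consequently the semi-abelian returns to $\Pref_n(r)$ are precisely the factors of $r$ coding the excursions of the orbit of $\rho$ between consecutive visits to $E_n$, and $\mathcal{APR}_r$ is the union of these factors over all $n$. First I would record that $S_n$ is a step function whose down-jumps (as $x$ increases through an endpoint of $J$) occur at the right-endpoint preimages $\{-j\alpha\}$, $0\le j<n$, and whose up-jumps occur at the left-endpoint preimages $\{-(j+m)\alpha\}$, $0\le j<n$.

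The conceptual heart is the following dichotomy for the base point $\rho$. At a point of the form $\{-i\alpha\}$ the down-jump coming from the index $j=i$ and the up-jump coming from the index $j+m=i$ occur simultaneously, and a direct check of the half-open interval $J$ shows that their contributions to $S_n$ cancel exactly when $i\ge m$ (both indices being admissible once $n$ is large). Hence, for all large $n$, the point $\rho$ is a genuine boundary point of its level set $E_n$ if $\rho\in\{\{-i\alpha\}:0\le i<m\}$ (an uncancelled down-jump), and $\rho$ is interior to $E_n$ otherwise (either $\rho$ avoids the orbit of $0$ altogether, or $\rho=\{-i\alpha\}$ with $i\ge m$ and the two jumps cancel). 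This is exactly the threshold in the statement, and it specializes correctly to the Sturmian case $m=1$, where $J=[1-\alpha,1)$, $r=s_{\alpha,\rho}$, and the only boundary value is $\rho=0$, i.e. the null intercept of Theorem~\ref{RSV13:finite}.

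It then remains to show that the interior case yields a finite $\mathcal{APR}_r$ and the boundary case an infinite one. For the interior case I would use the three-distance theorem to describe the component arcs of $E_n$ and the induced first-return map of $R_\alpha$ on $E_n$: when $\rho$ is interior with a controlled margin, the excursion words coding the returns eventually repeat, so only finitely many distinct return words arise over all $n$; for $m=1$ this is precisely Theorem~\ref{RSV13:finite}, which I would invoke directly, and for general $m$ I would adapt its argument. For the boundary case I would exhibit a subsequence $n_k\to\infty$, governed by the convergents $q_k$ of $\alpha$ (where $\|n_k m\alpha\|$ is small and $E_{n_k}$ acquires an arc with $\rho$ exactly at its endpoint), along which the return starting at $\rho$ is a factor of strictly increasing length; these are pairwise distinct, forcing $\mathcal{APR}_r$ to be infinite.

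The main obstacle is the finiteness (interior) direction for general $m$: because $\{m\alpha\}$ is typically neither $\alpha$ nor $1-\alpha$, the word $r$ is not balanced, so $S_n$ takes more than two values and $E_n$ is a union of many short arcs rather than a single arc as in the Sturmian setting. Controlling the induced first-return map on this increasingly fragmented set — and proving that interior position of $\rho$ really does stabilize the excursion words despite their possibly growing lengths — is the delicate step; the Sturmian special case $m=1$ (Theorem~\ref{RSV13:finite}) is the model I would follow and extend.
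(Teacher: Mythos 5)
The survey itself gives this theorem as a quoted result from \cite{DBLP:journals/tcs/RampersadRS14} and contains no proof of it, so your attempt can only be judged on its own terms. Your setup is correct and your ``conceptual heart'' is genuinely the right one: coding by $S_n(x)=\#\{0\le i<n: R_\alpha^i(x)\in J\}$, observing that abelian occurrences of $\Pref_n(r)$ correspond to visits of the orbit to the level set $E_n$, and the jump-cancellation computation showing that the down-jump of $S_n$ at $\{-i\alpha\}$ (from index $i$) is cancelled by the up-jump from index $i-m$ precisely when $i\ge m$. This correctly explains why the threshold is the set $\{\{-i\alpha\}\mid 0\le i<m\}$ and specializes to the null-intercept condition of Theorem~\ref{RSV13:finite} when $m=1$.

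However, the dichotomy at $\rho$ is only the starting point, and neither implication of the equivalence is actually established. For the finiteness direction you concede this yourself: ``adapt the Sturmian argument'' is a plan, not a proof, and the difficulty you name is real, since for $m\ge 2$ the level set $E_n$ is a union of up to linearly many arcs and the induced return structure is not controlled by anything you prove. For the infiniteness direction there is a concrete logical gap: $\rho$ being a boundary point of $E_n$ for every large $n$ does not by itself force unbounded return times. For instance, if $E_n$ always contained a fixed arc $[\rho,\rho+c)$ with $c>0$, the first return from $\rho$ to $E_n$ would be bounded uniformly in $n$, and $\mathcal{APR}_r$ would be finite even though $\rho$ is a boundary point throughout. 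What you actually need, and never establish, is a quantitative statement: along some subsequence $n_k$, the first-return time from $\rho$ to the \emph{whole} set $E_{n_k}$ tends to infinity. This is delicate because $J$ has length $\{m\alpha\}$, hence is a bounded remainder set, so the abelian complexity is bounded and $E_n$ is one of boundedly many level sets whose total measure need not shrink; you must show that the portion of $E_{n_k}$ reachable quickly from $\rho$ degenerates, not merely that $\rho$ sits at an endpoint. Naming the convergents $q_k$ gestures at the mechanism that works in the Sturmian case (where $E_n$ is a single arc of length $1-\{n\alpha\}$), but for general $m$ this step is exactly where the proof content lies, and it is missing.
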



\subsection{Minimal abelian squares}

A square is called minimal if it does not have square prefixes. For example, $0101$ is a minimal square, while $001001$ is not.

\begin{theorem}\label{thm:squares}~\cite{SAARI2010177}
Any aperiodic word contains at least $6$ minimal squares.
\end{theorem}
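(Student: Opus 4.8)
The plan is to recast minimal squares dynamically. A factor is a minimal square if and only if it is the shortest square beginning at some position: if $S_i$ denotes the shortest square starting at position $i$ (when one exists), then a square $ww$ is minimal exactly when it has no proper square prefix, and for an occurrence starting at $i$ this says $ww=S_i$. Hence the set of minimal squares is precisely $\{S_i : i\ge 0\}$, and I must show it has at least six elements. First I would record the shape of a root: a minimal square with root of length one is $00$ or $11$, while if its root $u$ has length at least two then $u$ is primitive (otherwise $u$ begins with a square, a shorter square prefix) and $u$ begins with $01$ or $10$ (a repeated first letter would give the length-two square prefix $00$ or $11$). Consequently every minimal square other than $00,11$ begins at a position $i$ with $x_i\neq x_{i+1}$. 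Throughout, $x$ is binary: over a larger alphabet an aperiodic word can be square-free and contain no minimal square at all, so the binary setting (that of Sturmian words) is the intended one, and an aperiodic binary $x$ contains both letters infinitely often.

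The core is a one-sided counting lemma: every aperiodic binary word has at least three minimal squares whose root begins with the letter $0$. Granting this, I would apply it to the complementary word $\overline{x}$ (obtained by exchanging $0$ and $1$, still aperiodic): its three minimal squares with root beginning in $0$ correspond, after exchanging letters back, to three minimal squares of $x$ with root beginning in $1$. The two families are disjoint, since the first letter of the root distinguishes them, so $x$ has at least six minimal squares. Note that aperiodicity rules out $x=(01)^\omega$ and $x=(10)^\omega$, so at least one of $00,11$ occurs, a fact the lemma will use.

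To prove the one-sided lemma I would examine the positions $i$ with $x_i=0$, whose shortest squares are exactly the minimal squares with root beginning in $0$, and split on $x_{i+1}$. When $x_{i+1}=0$ the shortest square is $00$. When $x_{i+1}=1$ — at the right end of a maximal block of $0$'s — the shortest square has a root beginning $01$, and the configuration that follows (the ensuing run of $1$'s together with the next block of $0$'s) fixes an initial segment of that root, so distinct local configurations force distinct shortest squares. It then remains to show that aperiodicity supplies enough such configurations: two further ones when $00$ occurs, or three when the $0$'s are isolated (so $00$ is absent). Here I would use that the pertinent run-length sequence of $x$ is not eventually periodic, so that the least run length contributes one shortest square independent of what follows it — for instance, in the clean case where $1$'s are isolated and $x=\cdots 1\,0^{a_1}\,1\,0^{a_2}\,1\cdots$, the smallest gap $g$ always produces $1\,0^{g}\,1\,0^{g}$ since any gap offers $g$ zeros — while some run length admits two different successors and so contributes two more distinct shortest squares; this yields the required three.

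The main obstacle is precisely this last implication: turning ``aperiodic'' into a guarantee of enough inequivalent local continuations to force three distinct shortest squares, and verifying that the words produced are pairwise distinct. This demands a disciplined case analysis — on which of $00,11$ occurs and on the two smallest run lengths and their admissible successors — and the extremal, most economical instances are the Sturmian words, whose gap sequence is itself Sturmian so that the continuations are tightly constrained (this is exactly why one cannot do better than six). For tightness I would exhibit a word attaining the bound: computing the shortest square at each position of the Fibonacci word stabilizes to the six minimal squares $00,\ 0101,\ 1010,\ 010010,\ 100100,\ 1001010010$ — three with root beginning in $0$ and three beginning in $1$ — so the constant $6$ is optimal.
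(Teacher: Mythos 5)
There is a genuine gap here, and it is not one that a more disciplined case analysis can close: the key lemma you propose --- that \emph{every} aperiodic binary word has at least three minimal squares with root beginning in $0$ (hence six in all) --- is false. Consider the aperiodic binary word
\[
x \;=\; \prod_{n\geq 1} 0^n1^n \;=\; 01\,0011\,000111\,00001111\cdots .
\]
A square $uu$ whose root contains both letters would contain some run boundary ($01$ or $10$) at two positions exactly $|u|$ apart; in $x$ the $01$-boundaries sit at positions $n^2-1$ and the $10$-boundaries at positions $n(n+1)-1$, and matching the letters of the two copies of $u$ on both sides of these two boundaries yields contradictory bounds on $|u|$ (roughly, $|u|\leq 2n$ from the matching and $|u|\geq 2n+1$ from the spacing). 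Hence the only squares occurring in $x$ are $0^{2k}$ and $1^{2k}$, so $x$ contains exactly \emph{two} minimal squares, $00$ and $11$. The precise step of your sketch that breaks is the case $x_i=0$, $x_{i+1}=1$: you assert that ``the shortest square starting at $i$ has a root beginning $01$,'' tacitly assuming that some square starts at $i$ at all. Aperiodicity gives no such guarantee --- in $x$ above, no square whatsoever begins at a position where a $0$-run meets a $1$-run --- so $S_i$ need not exist and your count never gets started.

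The missing ingredient is the hypothesis in Saari's actual theorem, which the survey's one-line paraphrase (stated without proof, with only a citation) silently drops: the word must be \emph{squareful}, i.e.\ every position must begin with a square (Saari's ``everywhere $2$-repetitive'' condition). That hypothesis is exactly what legitimizes your dynamical recasting --- $S_i$ then exists for every $i$ and the set of minimal squares is $\{S_i : i\geq 0\}$ --- and it is the setup used by Saari and later by Peltom\"aki and Whiteland for the square root map $\sqrt{s}$ discussed right after this theorem; the survey even hints at it when it remarks that ``in each position of the Fibonacci word one of these squares starts.'' Your instinct that the literal statement fails over larger alphabets (square-free ternary words) was correct, but the repair is squarefulness, not binarity, and the example above shows binarity does not suffice. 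Finally, note that even with the correct hypothesis added, the portion of the argument you defer --- extracting six distinct minimal squares from squarefulness plus aperiodicity, with Sturmian-type words as the extremal case --- is the entire substance of Saari's proof, so the proposal both argues for a false statement and leaves the core of the true one unproved.
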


Every Sturmian word contains \emph{exactly} $6$ minimal squares --- however, there are aperiodic words with exactly $6$ minimal squares that are not Sturmian.

For example, the minimal squares of the Fibonacci word are: $00$, $0101$, $1010$, $010010$, $100100$ and $1001010010$. Moreover, in each position of the Fibonacci word  one of these squares starts.

Thus, one can also consider the decomposition of a Sturmian  word $s$ in these minimal squares. By deleting half of each square one obtains a new infinite word $\sqrt{s}$, and this word is again a Sturmian word and has the same slope of $s$~\cite{DBLP:journals/combinatorics/PeltomakiW17}. 

In the case of the Fibonacci word
\[f=010010 \cdot 100100 \cdot 1010 \cdot 0101 \cdot 00 \cdot 1001010010 \cdot 0101 \cdot 00 \cdot 1010 \cdots\]
one obtains the Sturmian word
\[
\sqrt{f}=010 \cdot 100 \cdot 10 \cdot 01 \cdot 0 \cdot 10010 \cdot 01 \cdot 0 \cdot 10 \cdots
\]

There is a generalization of Theorem \ref{thm:squares} to the abelian setting:

 \begin{theorem}~\cite{SAARI2010177}
Any aperiodic word contains at least $5$ minimal abelian squares.
\end{theorem}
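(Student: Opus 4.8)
The plan is to work over a binary alphabet, where the statement really lives: over four letters there are words with no abelian square factor at all (Theorem~\ref{avoid_global}), hence with zero minimal abelian squares, so the bound $5$ can only be a statement about $\Sigma_2=\{0,1\}$. Let $x$ be an aperiodic word over $\Sigma_2$ and encode it by its maximal runs, writing the interior blocks as $c^{v}$ with $c\in\{0,1\}$, $v\geq 1$. Both letters occur infinitely often (else $x$ is ultimately constant), and at least one of $00,11$ occurs, for otherwise $x=(01)^{\omega}$ or $(10)^{\omega}$ is periodic. Up to exchanging the two letters I may assume $00\in\Fact(x)$; this already supplies one minimal abelian square, namely $00$ (it is an abelian square and, having length $2$, has no shorter abelian-square prefix).

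The engine is a local construction attached to each run. First I would record the minimality criterion through the prefix-sum of $1$'s: a word $w=w_1\cdots w_{2m}$ is an abelian square exactly when $|w_1\cdots w_m|_1=|w_{m+1}\cdots w_{2m}|_1$, and it is \emph{minimal} exactly when this balance first occurs at $m$. Applying this to the factor $0\,1\,0^{v}\,1\,0$ surrounding an interior run $0^{v}$ shows it always contains two distinct minimal abelian squares: one beginning with $0$, namely $010^{v}1$ if $v$ is odd and $010^{v}10$ if $v$ is even; and one beginning with $1$, namely $10^{v}10$ if $v$ is odd and $10^{v}1$ if $v$ is even. Call these $Q^{+}_{v}$ and $Q^{-}_{v}$. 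They start with different letters, and for different $v$ they contain a different longest block of $0$'s, so the pairs $\{Q^{+}_v,Q^{-}_v\}$ are pairwise disjoint as $v$ varies. (For $v=1$ one recovers $0101,1010$, and for $v=2$ one recovers $1001,010010$.)

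This settles the case $11\notin\Fact(x)$ (after the possible letter exchange, the case where exactly one of $00,11$ occurs). Then every maximal $1$-run is a single $1$, so every interior run $0^{v}$ genuinely sits inside a factor $0\,1\,0^{v}\,1\,0$ and $Q^{\pm}_v$ occur in $x$. Writing $V$ for the set of $0$-run lengths that occur infinitely often, aperiodicity forces $|V|\geq 2$ (a single value occurring cofinitely often makes $x$ ultimately periodic), and $00\in\Fact(x)$ forces some $v\geq 2$ in $V$. Hence $x$ contains the $2|V|\geq 4$ pairwise distinct squares $\{Q^{\pm}_v:v\in V\}$ together with $00$, which is none of them, for a total of at least $1+2|V|\geq 5$. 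For the Fibonacci word $V=\{1,2\}$ and this yields exactly $00,\,0101,\,1010,\,1001,\,010010$.

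The remaining case $00,11\in\Fact(x)$ is the hard one, and where the main obstacle lies. Here $00,11$ are free, and the local construction still yields \emph{robustly} (independently of neighbouring run-lengths, since $1\,0^{v}\,1$ and $0\,1^{w}\,0$ always appear) the minimal abelian square $10^{v}1$ for every even $0$-run length $v\geq 2$ and $01^{w}0$ for every even $1$-run length $w\geq 2$; when enough distinct even run-lengths occur this already gives five. The trouble is that the fifth square can be forced to be long: in the example $x=0^{2}1^{3}0^{2}1^{2}0^{3}1^{2}\cdots$ built from the Fibonacci run-pattern over $\{2,3\}$ the short minimal abelian squares are exactly $00,11,1001,0110$ and the fifth one has length $12$, while if all runs have odd length $\geq 3$ then $0101,1010,1001,0110$ are all absent and \emph{every} minimal abelian square beyond $00,11$ is long. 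To produce the missing squares in this regime I would pass to the global prefix-sum walk $S_k=|x_1\cdots x_k|_1$ and argue, in parallel with the proof of the non-abelian statement Theorem~\ref{thm:squares}, that aperiodicity forces infinitely many distinct ``first-balance'' configurations $S_{i+m}-S_i=S_{i+2m}-S_{i+m}$ (with $m$ least), and that at least five of the associated minimal abelian squares are pairwise distinct. Guaranteeing the required number of \emph{distinct} long squares — rather than merely many occurrences of a few — is the crux, and I expect it to need the finer return-word/balance analysis of Saari's original argument rather than the purely local run bookkeeping that suffices in the unbalanced case.
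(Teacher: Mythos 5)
First, a point of reference: the survey does not prove this theorem --- it is quoted from Saari \cite{SAARI2010177} without proof --- so your proposal can only be judged on its own terms, and on those terms it has a genuine gap. The part you complete, the case $11\notin\Fact(x)$ for a binary aperiodic $x$, is essentially sound: your $Q_v^{+},Q_v^{-}$ are indeed minimal abelian squares, pairwise distinct as $v$ varies and distinct from $00$. (One slip there: with $V$ defined as the set of run lengths occurring \emph{infinitely often}, the implication ``$|V|=1\Rightarrow x$ ultimately periodic'' is false --- take $0$-run lengths $1,2,1,3,1,4,\ldots$ --- so you should instead let $V$ be the set of lengths of interior $0$-runs, for which aperiodicity does force $|V|\geq 2$.) The gap is the complementary case $00,11\in\Fact(x)$, which you explicitly leave open. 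There your local construction produces only $00$, $11$, and the squares $10^{v}1$, $01^{w}0$ attached to \emph{even} run lengths, and your own examples (runs in $\{2,3\}$; all runs odd and $\geq 3$) show it can stall at four. What you offer in place of a proof is a plan: pass to the prefix-sum walk and show that aperiodicity forces at least five pairwise distinct ``first-balance'' squares. But that distinctness is the entire content of the theorem in this regime: abundance of balanced configurations is free (every infinite binary word contains arbitrarily long abelian squares, Theorem~\ref{thm:avoidance}), whereas what is needed is five distinct square \emph{words}, and no argument for this is given --- you yourself defer it to ``Saari's original argument''. Since this unresolved case contains the Thue--Morse word and every word in which both letters have runs of length at least $2$, the completed portion covers only a thin slice of aperiodic binary words.

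Second, your repair of the statement is not the one under which it was actually proved, and this matters for your plan. You rightly observe that the theorem as printed fails for $d\geq 4$ (Theorem~\ref{avoid_global}), but Saari's hypothesis is not binarity: in \cite{SAARI2010177} the word is assumed to be everywhere abelian $2$-repetitive, i.e., every position begins with an abelian square of bounded length (the same hypothesis is silently dropped by the survey in Theorem~\ref{thm:squares}, which is otherwise false for ternary square-free words). Your binary statement without that hypothesis is genuinely stronger: it applies, for instance, to uniformly recurrent binary words none of whose prefixes is an abelian square \cite{cassaigne2011avoiding}, in which infinitely many positions begin no abelian square at all, so an argument run ``in parallel with'' Saari's --- whose starting point is precisely that every position begins with one of the minimal squares --- cannot be transplanted as such. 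To finish, you must either prove your stronger binary claim outright, or restore Saari's squarefulness hypothesis and carry out the hard case under it; as written, neither is done.
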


Here, a minimal abelian square is one such that none of its proper prefixes is an abelian square.

Sturmian words have exactly $5$ minimal abelian squares, and in each position one of these $5$ minimal abelian squares  starts. For example, the minimal abelian squares of the Fibonacci word are $00$, $010010$, $0101$, $1001$ and $1010$.  

One can also consider the decomposition of a Sturmian  word $s$ in these minimal abelian squares. By deleting half of each abelian square one obtains a new infinite word $\sqrt[ab]{s}$, and this word is again a Sturmian word with the same slope of $s$~\cite{peltoPHD}. 

In the case of the Fibonacci word, for example, the decomposition in minimal abelian squares is
\[f= 010010\cdot 1001\cdot 00\cdot 1010\cdot 0101\cdot 00\cdot 1001\cdot 010010\cdot 0101\cdot 00\cdot 1010\]
and one has $\sqrt[ab]{f}=\sqrt{f}$.





\section{Modifications of abelian equivalence}\label{sec:modifications}

In these section we discuss some relevant modifications of the notion of abelian equivalence.

\subsection{$k$-abelian equivalence}

Let $k$ be a positive integer. Two words $u$ and $v$ are
\emph{$k$-abelian equivalent}, denoted by $u\sim_k v$, if $|u|_t =
|v|_t$ for every word $t$ of length at most $k$, where $|w|_t$ denotes the number of occurrences of the factor $t$ in $w$. This defines a
family of equivalence relations $\sim_k$, bridging the gap between
the usual notion of abelian equivalence (when $k=1$) and equality
(when $k=\infty$).

Equivalently, $u$ and $v$ are $k$-abelian equivalent if both the following conditions hold:
\begin{itemize}
\item $|u|_t = |v|_t$ for every word $t$ of length exactly $k$;
\item $\Pref_{k - 1}(u) = \Pref_{k - 1}(v)$ and
    $\Suff_{k - 1}(u) = \Suff_{k - 1}(v)$
    (or $u = v$, if $|u| < k - 1$ or $|v| < k - 1$).
\end{itemize}

For instance, $00101\sim_2 01001$, but $00101\nsim_2 00011$. It is
clear that $k$-abelian equivalence implies $k'$-abelian
equivalence for every $k' < k$. In particular, $k$-abelian equivalence for any $k\geq 2$ implies abelian
equivalence, that is, $1$-abelian equivalence.

\subsubsection{Avoidance}

Similarly to usual and abelian powers, we naturally define
a \emph{$k$-abelian $l$-power} as a concatenation of $l$ words that are
$k$-abelian equivalent one to another. The basic problem to consider is
$k$-abelian avoidability. We ask what is the size of the smallest
alphabet where $k$-abelian squares or cubes can be avoided, for a
fixed $k$. Clearly, the size of the smallest alphabet for
$k$-abelian avoidability lies between the smallest sizes of the
alphabet necessary for avoiding abelian and usual powers. For
example, as squares are avoidable over a 3-letter alphabet and abelian squares are avoidable over a 4-letter alphabet (see Theorem \ref{avoid_global})
, we have that the smallest alphabet over which
$k$-abelian squares are avoidable consists of 3 or 4
letters. 

\begin{theorem}\cite{DBLP:journals/tcs/HuovaKS12,DBLP:journals/tcs/Rao15}
The following holds:
\begin{itemize}
\item The longest ternary word which is 2-abelian square-free has
length 537, so there does not exist an infinite
2-abelian square-free word over a ternary alphabet.

\item  2-abelian-cubes are avoidable over a binary alphabet.

\item  3-abelian-squares are avoidable over a ternary alphabet.
\end{itemize}
\end{theorem}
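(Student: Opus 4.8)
The plan is to treat the three assertions by two rather different methods: the first is a \emph{non-existence} (unavoidability) statement that I would settle by an exhaustive but finite tree search, whereas the second and third are \emph{existence} (avoidability) statements that I would settle by exhibiting an explicit infinite word — a fixed point of a suitable morphism — and certifying that it avoids the relevant $k$-abelian power. Observe first that the alphabet sizes are already pinned from both sides: since ordinary squares (resp.\ cubes) are avoidable over $3$ (resp.\ $2$) letters while their abelian versions need $4$ (resp.\ $3$) letters by \cref{avoid_global}, and since $k$-abelian equivalence sits between equality and abelian equivalence, the optimal alphabet for $2$-abelian squares is either $3$ or $4$, that for $2$-abelian cubes either $2$ or $3$, and that for $3$-abelian squares either $3$ or $4$. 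The theorem resolves each of these dichotomies.

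For the first item I would organize the ternary $2$-abelian-square-free words into a tree: the node set is the factor-closed language $L$ of all words over $\{0,1,2\}$ containing no $2$-abelian square, with $u$ a child of its length-$(|u|-1)$ prefix. Since each node has at most three children, $L$ is infinite if and only if it contains an infinite branch (K\"onig's lemma), i.e.\ if and only if an infinite ternary $2$-abelian-square-free word exists; showing $L$ is finite therefore excludes such a word. I would run a depth-first search of this tree, at each step extending the current word by one letter and testing only whether some \emph{suffix} of the new word is a $2$-abelian square — all shorter squares having been ruled out by induction — where the test uses incrementally maintained Parikh vectors together with the length-$1$ boundary data that $2$-abelian equivalence demands. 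Exploiting the letter-permutation symmetry to fix the first letters, the search terminates with every branch dying by length $537$ and at least one branch reaching it. The number $537$ cannot be argued by hand; the honest content here is a terminating computation, in the spirit of the verification underlying \cref{avoid_global}.

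For the second and third items I would fix a candidate morphism $h$ (binary for the cube, ternary for the square) and study its fixed point $w=h^{\omega}(a)$. The core is a \emph{desubstitution / infinite-descent} argument. Suppose $w$ contained a $k$-abelian $\ell$-power $u_1u_2\cdots u_\ell$ of period $p$. Short periods are handled by a finite check, so assume $p$ is large. Using a synchronization property of $h$ — that occurrences in $w$ parse essentially uniquely into $h$-images — each block $u_i$ decomposes as a partial image, a concatenation of full images, and a partial image; because the $u_i$ are $k$-abelian equivalent (equal Parikh vectors and matching length-$(k-1)$ prefixes and suffixes), these parsings are aligned up to a bounded shift, so the preimages of the blocks form a $k$-abelian $\ell$-power of strictly smaller period, modulo boundary letters of length $O(k)$. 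Iterating drives the period below the threshold covered by the finite check, a contradiction. Equivalently, one can phrase the descent through the finite set of \emph{templates} describing possible $k$-abelian powers, closed under the preimage operation, and verify that it contains no realizable genuine power.

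The main obstacle, in both positive cases, is precisely the bookkeeping of the length-$(k-1)$ boundary words that distinguish $k$-abelian from plain abelian equivalence: the Parikh-vector part of the descent is routine, but one must check that the prefix/suffix conditions survive taking preimages, reappear correctly after reassembling the images, and that the $O(k)$ boundary discrepancies accumulated over the descent never let a spurious power slip through. This is what forces $h$ to be chosen with synchronizing structure and what makes the finite base cases delicate rather than trivial. For the first item the obstacle is of a different nature and is irreducibly computational: finiteness of the search is provable in principle, but the exact extremal length $537$ can only be certified by running it.
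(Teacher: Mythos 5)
The first thing to note is that the survey contains no proof of this theorem: it is stated as a cited result, with \cite{DBLP:journals/tcs/HuovaKS12} responsible for the first item and \cite{DBLP:journals/tcs/Rao15} for the second and third, so there is no in-paper argument to compare yours against; the only meaningful comparison is with those original papers. Measured against them, your proposal reconstructs the actual methods correctly. The $537$ bound is indeed established by a finite exhaustive search, and your organization of it --- the factor-closed tree, K\"onig's lemma converting ``no infinite branch'' into ``no infinite $2$-abelian-square-free word,'' testing only suffixes after each one-letter extension --- is the standard correct way to run it; the extremal value itself is, as you admit, certifiable only by the computation. The two positive items are indeed proved by exhibiting fixed points of explicit morphisms whose $k$-abelian power-freeness is verified by a computer-assisted descent through a finite, desubstitution-closed family of templates recording Parikh-vector differences together with length-$(k-1)$ boundary words; this is Rao's method, and your identification of the prefix/suffix bookkeeping as the genuine difficulty (versus the routine Parikh part) is exactly right. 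Your bracketing of the candidate alphabet sizes via \cref{avoid_global} also reproduces the survey's own remark immediately preceding the theorem. What separates your text from a proof is precisely what the cited papers supply and a blind attempt cannot: the explicit morphisms, the verification that the template family is finite and that no realizable template survives, and the terminating searches. So: correct schema, correct placement of the irreducibly computational content, but necessarily a plan rather than a proof.
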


Similarly to Theorem \ref{thm:avoidance} in the abelian case, the
following has been shown:

\begin{theorem}\cite{DBLP:journals/moc/RaoR16,DBLP:journals/siamdm/RaoR18} One can avoid 3-abelian-squares of period at least 3 in infinite
binary words, 2-abelian-squares of period at least 2 in
infinite ternary words, and 2-abelian squares of period more than 63 in infinite binary words. \end{theorem}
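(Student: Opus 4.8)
The plan is to prove each of the three avoidance statements by the same two-stage scheme: first exhibit an explicit infinite word as the image under a coding of a fixed point of a substitution, and then certify that this word contains no forbidden $k$-abelian square by reducing the (infinitely many) potential occurrences to a finite, computer-checkable collection. For the three cases one takes three different substitutions $h_1,h_2,h_3$ over suitable working alphabets together with codings $g_i$; the candidate words are $x_i=g_i(h_i^\omega(a))$, where $h_i^\omega(a)$ is the fixed point starting at a letter $a$. The bulk of the argument is the certification step, which I would organize around the \emph{template} method of Rao and Rosenfeld.

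A template for $k$-abelian squares is a finite datum recording the structural constraints that an occurrence of a forbidden factor must satisfy: namely the length-$(k-1)$ prefixes and suffixes that the two halves $u,v$ of the square must share (since $u\sim_k v$ forces $\Pref_{k-1}(u)=\Pref_{k-1}(v)$ and $\Suff_{k-1}(u)=\Suff_{k-1}(v)$), together with a difference of factor-occurrence counts $|u|_t-|v|_t$ for all $|t|\le k$, which for a genuine $k$-abelian square must vanish identically (and in particular forces $\PV(u)=\PV(v)$). I would say that a word $w$ \emph{realizes} a template if it contains a factor matching these boundary and counting constraints. The key propagation lemma states that if $g_i(h_i(z))$ realizes a template $t$, then $z$ realizes some ``parent'' template $t'$ drawn from a computable set $\mathrm{Par}(t)$, and, crucially, the parameters of $t'$ (essentially the size of its count-difference data and the length of its boundary words) are bounded in terms of those of $t$ and the norm of $h_i$.

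Given the propagation lemma, I would iterate it: starting from the finite family of ``ancestral'' templates that encode the existence of \emph{any} $k$-abelian square of period above the stated threshold, one repeatedly takes parents. Because the bounded-growth estimate confines the reachable templates to a finite region of parameter space, the set of all templates reachable by iterated pullback is finite and can be enumerated by a fixpoint computation. The verification then consists in checking, for each $x_i$, that none of the finitely many reachable templates is actually realized by any factor of $x_i$ up to an explicit length bound; this last check is the part carried out by computer. The short-period cases excluded in the statement (periods $1,2$ for the binary $3$-abelian case, period $1$ for the ternary $2$-abelian case, and periods up to $63$ in the last case) are exactly the repetitions that are genuinely unavoidable, so they are simply not forbidden by the ancestral templates.

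I expect the main obstacle to be establishing the bounded-growth estimate in the propagation lemma with a constant small enough that the reachable template set stays manageable, and then pushing the finite verification through, especially in the third case, where the threshold $63$ signals that short $2$-abelian squares are unavoidable and that a delicate boundary analysis separating period $\le 63$ from period $>63$ governs how large the template parameters, and hence the search, must be. The interplay between the count-difference constraints and the length-$(k-1)$ boundary matching is precisely what makes the $k$-abelian case strictly harder than the purely abelian one treated in Theorem~\ref{thm:avoidance}, since both must propagate coherently under the morphism.
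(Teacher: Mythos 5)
The survey states this theorem without any proof of its own, deferring entirely to the cited papers of Rao and Rosenfeld, so the only meaningful comparison is against those sources; your proposal is, in essence, a correct reconstruction of the method used there. The two-stage scheme you describe --- candidate words given as morphic images of fixed points of substitutions, templates recording the length-$(k-1)$ boundary words forced by $k$-abelian equivalence together with occurrence-count differences, a parent/propagation lemma whose bounded-growth estimate makes the set of reachable templates finite, and a final computer check that no short factor realizes any reachable template --- is exactly the template method by which Rao and Rosenfeld establish all three avoidance results, including the period-at-least-$64$ case for binary $2$-abelian squares.
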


\subsubsection{Complexity}

Given an infinite word $x$, we consider the associated complexity
function $p_x^{(k)}=\left | ({\Fact(x) \cap \Sigma_d^{n}}) / {\sim_{k}}
\right |$, which counts the
number of $k$-abelian equivalence classes of factors of $x$ of
length $n$.

\begin{theorem}\cite{DBLP:journals/jct/KarhumakiSZ13}
Let $k$ be a positive integer and $x$ an aperiodic word. The following
conditions are equivalent:
\begin{itemize}
\item $x$ is Sturmian;
\item $p_x^{(k)}(n)=\begin{cases} n+1 & \mbox{ for } 0\leq n \leq 2k-1,
\\ 2k & \mbox{ for } n \geq 2k.\end{cases}$\end{itemize}
\end{theorem}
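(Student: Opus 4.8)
The plan is to prove the two implications separately, with a common foundational lemma that settles the short lengths $0\le n\le 2k-1$ at once. That lemma asserts that on words of length at most $2k-1$ the relation $\sim_k$ coincides with equality, so that $p_x^{(k)}(n)=p_x(n)$ in this range. To prove it I would observe that the prefix and the suffix of length $k-1$ of a word $w$ occupy positions $1,\dots,k-1$ and $|w|-k+2,\dots,|w|$; for $|w|\le 2k-2$ these two blocks already cover every position, so two $k$-abelian equivalent words of this length are equal. For the one remaining length $|w|=2k-1$ only the central letter $w_k$ is left undetermined, and it is pinned down by the multiset of length-$k$ factors: the central position lies in \emph{all} $k$ of the length-$k$ factors, so it receives a strictly positive weight in the summed letter counts, and subtracting the contributions of the already-known outer positions isolates $w_k$. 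Granting this, the short-length part of both directions is immediate: for a Sturmian word $p_x(n)=n+1$ gives $p_x^{(k)}(n)=n+1$ for $n\le 2k-1$, and conversely the hypothesis $p_x^{(k)}(1)=2$ forces a binary alphabet and yields $p_x(n)=n+1$ for $n\le 2k-1$.

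For the forward direction at the large lengths $n\ge 2k$ I would parametrise the $k$-abelian classes by boundary data, sending a factor $w$ of length $n$ to the pair $(\Pref_{k-1}(w),\Suff_{k-1}(w))$. Using the rotation (three-distance) description $\underline{s}_{\alpha,\rho}$ of a Sturmian word, I would prove two facts. First, this pair determines the entire $k$-abelian class of $w$: for $n\ge 2k$ not only the Parikh vector but the whole multiset of length-$k$ factors of a Sturmian factor of fixed length is recovered from its prefix and suffix of length $k-1$, because the set of starting phases yielding a prescribed prefix and a prescribed suffix is a single arc on which all these statistics are constant. Second, exactly $2k$ pairs are realised: each of the $p_x(k-1)=k$ factors of length $k-1$ occurs as a prefix together with precisely two distinct suffixes of length $k-1$. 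The ``exactly two'' is again a three-distance phenomenon, since fixing the prefix confines the phase to one arc, and translating that arc by $(n-k+1)\alpha$ to read off the suffix crosses exactly one partition point, splitting it into two suffix values. Multiplying gives $p_x^{(k)}(n)=k\cdot 2=2k$. The main obstacle here is the first fact: showing that the length-$k$ factor multiset, and not merely the Parikh vector, is a function of $(\Pref_{k-1}(w),\Suff_{k-1}(w))$, which is exactly where the rigidity of the rotation structure must be used.

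For the converse I start from the short-length consequences above: $x$ is binary, aperiodic, and $p_x(n)=n+1$ for $n\le 2k-1$. Since $\sim_k$ refines abelian equivalence, $a_x(n)\le p_x^{(k)}(n)=2k$ for $n\ge 2k$, so $x$ has bounded abelian complexity and is therefore $C$-balanced for some $C$. It remains to upgrade $C$-balance to $1$-balance, for by the Coven--Hedlund characterisation recalled earlier a $1$-balanced aperiodic binary word is precisely a Sturmian word. I would argue by contraposition, running the boundary-data analysis of the forward direction in reverse: if $x$ were not $1$-balanced, then at some length $n\ge 2k$ a prefix of length $k-1$ would extend to three pairwise $k$-abelian-inequivalent factors---either via a third realised suffix, or via two factors with the same prefix and suffix of length $k-1$ but different Parikh vectors---contradicting the exact value $p_x^{(k)}(n)=2k$.

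This last step is the principal difficulty of the whole proof. Because a $k$-abelian class can both split and merge when its factors are extended by one letter, the identity $p_x^{(k)}(n)\equiv 2k$ is a delicate global balance rather than a monotone count; consequently, converting a single unbalance witness into a surplus $(2k+1)$-th class that genuinely survives at some concrete length requires the full structural description of the factors, and a soft counting argument will not suffice.
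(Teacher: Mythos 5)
The survey states this theorem without proof (it is quoted from \cite{DBLP:journals/jct/KarhumakiSZ13}), so your argument has to stand on its own; its foundation does not. Your short-length lemma is correct: for $n\le 2k-2$ the prefix and suffix of length $k-1$ cover every position, and for $n=2k-1$ the single remaining central letter is recovered from the Parikh vector, so $\sim_k$ coincides with equality on words of length at most $2k-1$ and both implications are settled in that range. The fatal problem is the boundary-data parametrization on which everything past that point rests. It is \emph{false} that in a Sturmian word two factors of the same length $n\ge 2k$ with equal prefixes and equal suffixes of length $k-1$ are $k$-abelian equivalent, and it is \emph{false} that each of the $k$ factors of length $k-1$ occurs as a prefix with exactly two suffixes. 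Take the Fibonacci word, $k=2$, $n=5$: its six factors of length $5$ are $01001$, $10010$, $00101$, $01010$, $10100$, $00100$. The factors $01010$ and $00100$ both begin and end with $0$, yet they are not even abelian equivalent (Parikh vectors $(3,2)$ and $(4,1)$), let alone $2$-abelian equivalent. Moreover only three pairs $\bigl(\Pref_{1},\Suff_{1}\bigr)$ are realized --- the prefix $1$ occurs only with the suffix $0$ --- while there are four $2$-abelian classes, namely $\{01001,00101\}$, $\{10010,10100\}$, $\{01010\}$, $\{00100\}$. So the total $2k$ is right, but it does not arise as $k\cdot 2$ over boundary pairs: the number of classes above a given prefix varies with the prefix and with $n$ (here it is $3$ over the prefix $0$ and $1$ over the prefix $1$), and only the total is invariant. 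Both structural facts you isolate for the forward direction are therefore contradicted, and the multiplication argument collapses.

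The same example dismantles the converse as you set it up. You propose to contradict $p_x^{(k)}(n)=2k$ in a non-$1$-balanced word by producing either a prefix with three pairwise $k$-abelian-inequivalent extensions, or two factors with the same length-$(k-1)$ prefix and suffix but different Parikh vectors. But both phenomena occur \emph{inside} Sturmian words (the example above exhibits both simultaneously), so neither can produce the desired surplus class; the inference ``unbalanced $\Rightarrow$ more than $2k$ classes'' cannot be routed through a product structure that fails already in the Sturmian case. What a correct proof must control is precisely how classes split and merge across factors of the same length --- in the rotation picture, how the occurrence counts of length-$k$ factors change as the starting phase crosses the partition points of the circle --- so that the gains and losses cancel to the exact constant $2k$; this is the ``delicate global balance'' you yourself flag at the end, but it is needed in the forward direction just as much as in the converse. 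As it stands, your proposal establishes the statement only for $n\le 2k-1$.
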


Interestingly, the $2$-abelian complexity of the Thue-Morse word is unbounded~\cite{DBLP:journals/actaC/KarhumakiSZ17} (unlike the abelian complexity). Moreover, the $2$-abelian complexity of the Thue-Morse word, as well as the period-doubling  word, is  a $2$-regular sequence \cite{DBLP:journals/combinatorics/ParreauRRV15}.

\begin{theorem}\cite{DBLP:journals/jct/KarhumakiSZ13} Fix $k\geq 1$. Let $x$ be an infinite word over a finite alphabet $\Sigma$ having bounded
$k$-abelian complexity. Let $D \subseteq \mathbb{N}$ be a set of
positive upper density, that is
$$\limsup_{n\to\infty}\frac{|D\cap \{1, 2,\ldots,n\}|}{n}
> 0.$$
Then, for every positive integer $N$, there exist $i$ and $l$ such
that $\{i,i+l,i+2l, \dots ,i+Nl\}\in D$ and the $N$ consecutive
blocks  $(x[i+jl, i+(j+1){l-1}])_{0\leq j\leq N-1}$ of length $l$ are pairwise $k$-abelian equivalent. In particular, $x$ contains
$k$-abelian powers for arbitrarily large $k$ .\end{theorem}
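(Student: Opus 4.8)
The plan is to decouple the ``colour'' of a block from its length and then invoke Szemer\'edi's theorem. First I would record what it means for the $N$ blocks $x[i+jl,i+(j+1)l-1]$ to be pairwise $k$-abelian equivalent. By the characterization of $\sim_k$ recalled above, this happens exactly when (a) for every word $t\in\Sigma^k$ the number of occurrences of $t$ is the same in all $N$ blocks, and (b) all $N$ blocks share the same length-$(k-1)$ prefix and the same length-$(k-1)$ suffix. Condition (b) is purely local: writing $\beta(m)$ for the factor of length $2k-2$ around position $m$ (covering the $k-1$ letters on each side), the prefix of a block and the suffix of the preceding one are both read off $\beta$ at the relevant boundary point, so (b) holds as soon as $\beta(i+jl)$ is constant for $0\le j\le N$.

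For condition (a) I would pass to partial sums. Let $F(n)\in\mathbb{Z}^{\Sigma^k}$ count, coordinate by coordinate, the occurrences of each $t\in\Sigma^k$ in the length-$n$ prefix of $x$; then the number of occurrences of $t$ in the block $x[i+jl,i+(j+1)l-1]$ equals $F(i+(j+1)l)_t-F(i+jl)_t$ up to a boundary correction of bounded size that depends only on $\beta(i+jl)$. The key structural input is that bounded $k$-abelian complexity forces bounded balance: shifting a length-$n$ window by one position changes $|u|_t$ by at most $1$, so the values of $|u|_t$ over length-$n$ factors form an interval of integers; since distinct values of a single $|u|_t$ already give distinct $k$-abelian classes, this interval has at most $B$ elements, where $B$ bounds the complexity, uniformly in $n$ and $t$. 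Standard quasi-additivity then produces frequencies $\sigma=(\sigma_t)_{t\in\Sigma^k}$ and a uniformly bounded discrepancy vector $\varepsilon(n):=F(n)-n\sigma$, with $\|\varepsilon(n)\|_\infty\le C$ for all $n$.

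The decoupling happens here. I would partition the box $[-C,C]^{\Sigma^k}$ into finitely many cells of $\ell^\infty$-diameter $<1/2$ and define a single, \emph{length-independent} colouring $\Phi(m)=\bigl(\operatorname{cell}(\varepsilon(m)),\,\beta(m)\bigr)$, which uses only finitely many colours, say $r$. If the progression $\{i+jl:0\le j\le N\}$ is $\Phi$-monochromatic, then $\beta$ is constant on it, giving (b) and a constant boundary correction, and all $\varepsilon(i+jl)$ lie in one cell; consequently any two of the \emph{integer} vectors $F(i+(j+1)l)-F(i+jl)=l\sigma+\bigl(\varepsilon(i+(j+1)l)-\varepsilon(i+jl)\bigr)$ differ by less than $1$ in $\ell^\infty$-norm and hence are equal. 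Together with the constant correction this gives (a). Thus a $\Phi$-monochromatic arithmetic progression of length $N+1$ yields $N$ pairwise $k$-abelian equivalent blocks.

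It remains to place such a progression inside $D$. Since $D$ has positive upper density $\delta$ and $\Phi$ uses $r$ colours, subadditivity of upper density gives a colour $c$ with $\overline{d}\bigl(\{n\in D:\Phi(n)=c\}\bigr)\ge\delta/r>0$; by Szemer\'edi's theorem this set contains an arithmetic progression of length $N+1$, which is simultaneously contained in $D$ and $\Phi$-monochromatic, furnishing the desired $i,l$. Taking $D=\mathbb{N}$ gives the ``in particular'' clause, a $k$-abelian power of order $N$ for every $N$. I expect the heart of the argument to be precisely the decoupling step: the naive colouring ``$k$-abelian class of the length-$l$ block'' moves its palette with the common difference $l$ and is therefore useless for van der Waerden, whereas the boundedness of $\varepsilon$ (coming from complexity) combined with the integrality of $F$ lets ``same cell'' be upgraded to ``exactly equal increments'', producing the fixed finite colouring $\Phi$. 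The final ingredient, confining the progression to the density set $D$, is exactly where Szemer\'edi's theorem---rather than van der Waerden's---is indispensable.
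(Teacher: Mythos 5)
The survey states this theorem without proof---it is quoted from the paper of Karhum\"aki, Saarela and Zamboni---so there is no in-paper argument to compare against; judged on its own merits, your proof is correct, including the two points where care is genuinely needed (the integer-interval argument showing that bounded $k$-abelian complexity uniformly bounds the balance of length-$k$ factor counts, and the choice of cells of $\ell^\infty$-diameter less than $1/2$, which upgrades ``same cell'' to exactly equal integer increment vectors). It is moreover essentially the known argument for this result: bounded discrepancy $\varepsilon(n)=F(n)-n\sigma$ together with the local window $\beta(n)$ yields a finite colouring of $\mathbb{N}$, Szemer\'edi's theorem applied to the densest colour class of $D$ supplies a monochromatic $(N+1)$-term progression inside $D$, and monochromaticity converts into $N$ consecutive pairwise $k$-abelian equivalent blocks.
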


\subsubsection{$k$-abelian classes}

In this section, we deal with equivalence classes of $\Sigma_d^*$ under $k$-abelian equivalence.

\begin{theorem}\cite{DBLP:journals/jct/KarhumakiSZ13} Let $k\geq 1$ and $\Sigma_d$ a $d$-letter alphabet, $d\geq 2$. The number of $k$-abelian equivalence classes of
$\Sigma_d^n$ is $\Theta(n^{d^k - d^{k-1}})$. \end{theorem}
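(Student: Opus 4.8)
The plan is to translate $k$-abelian classes into a lattice-point counting problem on the \emph{de Bruijn graph} $B=B(d,k-1)$, whose vertices are the $d^{k-1}$ words of length $k-1$ over $\Sigma_d$ and whose edges are the $d^k$ words of length $k$, with an edge $w$ running from $\Pref_{k-1}(w)$ to $\Suff_{k-1}(w)$. First I would invoke the second characterization of $\sim_k$ recalled above: for $n\ge k-1$ the $k$-abelian class of a word $u\in\Sigma_d^n$ is determined exactly by the triple $\bigl(\Pref_{k-1}(u),\,\Suff_{k-1}(u),\,(|u|_w)_{w\in\Sigma_d^k}\bigr)$, where the last entry is the vector of length-$k$ factor multiplicities. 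Reading $u$ as a walk in $B$, that vector is precisely the edge-multiplicity vector of the walk, the walk has $m:=n-k+1$ edges, and it runs from $\Pref_{k-1}(u)$ to $\Suff_{k-1}(u)$. So counting classes reduces to counting realizable triples (start vertex, end vertex, edge-multiplicity vector) for walks with $m$ edges.

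For the upper bound I would use only the necessary flow-conservation constraint. The multiplicity vector $x\in\nats^{d^k}$ of a walk from $s$ to $t$ satisfies, at each vertex $v$ of $B$, that the outgoing flow equals the incoming flow up to a source/sink correction at $s$ and $t$. For fixed $(s,t)$ this is an affine system whose associated homogeneous system (the circulations on $B$) has the dimension of the cycle space, namely $|E|-|V|+1=d^k-d^{k-1}+1$, where the $+1$ rather than more uses that $B$ is strongly connected. Intersecting with the hyperplane $\sum_w x_w=m$ drops the dimension by one, so the nonnegative integer solutions are the lattice points of a rational polytope of dimension $d^k-d^{k-1}$; by Ehrhart's theorem their number is a quasi-polynomial in $m$ of degree $d^k-d^{k-1}$. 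Summing over the at most $d^{2(k-1)}$ choices of $(s,t)$, and noting that every class yields such a (realizable, hence conservation-satisfying) vector, gives that the number of classes is $O(m^{d^k-d^{k-1}})=O(n^{d^k-d^{k-1}})$.

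For the lower bound I must produce enough vectors that are genuinely realizable, which is where connectivity enters: by Euler's theorem a multiset of edges is the edge-set of a walk iff it satisfies the degree (conservation) conditions \emph{and} its support is connected. To obtain realizability for free I would restrict to circulations $x$ with $x_w\ge 1$ for every edge $w$. The all-ones vector is itself a circulation, since each vertex of $B$ has in-degree and out-degree $d$; writing $x=\mathbf{1}+y$ with $y\ge 0$ a circulation of weight $m-d^k$ keeps the support equal to all of $B$, hence connected, so every such $x$ admits an Eulerian circuit and is realized by some closed walk (a word with equal length-$(k-1)$ prefix and suffix). Distinct $x$ give distinct length-$k$ factor vectors, hence distinct $k$-abelian classes, so it remains to count nonnegative circulations of weight $m-d^k$; these are the lattice points of a dilate of the same $(d^k-d^{k-1})$-dimensional polytope and number $\Theta\bigl((m-d^k)^{d^k-d^{k-1}}\bigr)=\Omega(n^{d^k-d^{k-1}})$.

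The main obstacle is the interplay between the necessity and the sufficiency of the Eulerian conditions: flow conservation alone over-counts, since some balanced vectors have disconnected support and are not realizable, so it is safe only for the upper bound and must be supplemented for the lower bound — the device of forcing every edge to appear sidesteps the connectivity bookkeeping entirely. The one quantitative point requiring care is the exact exponent $d^k-d^{k-1}$, which rests on computing the rank of the conservation system as $d^{k-1}-1$ (equivalently, the cycle-space dimension $d^k-d^{k-1}+1$) from the connectedness of $B$, together with the standard fact that a $D$-dimensional rational polytope has $\Theta(t^D)$ lattice points under dilation by $t$. Finitely many short words with $n<k-1$ are irrelevant to the asymptotics.
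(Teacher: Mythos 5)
You should first be aware that the survey states this theorem without proof: it is quoted with a citation to Karhum\"aki, Saarela and Zamboni \cite{DBLP:journals/jct/KarhumakiSZ13}, so there is no in-paper argument to compare against. Judged on its own merits, your proof is correct, and it is in substance a graph-theoretic packaging of the counting that underlies the cited result. The reduction to triples $\bigl(\Pref_{k-1}(u),\Suff_{k-1}(u),(|u|_w)_{w\in\Sigma_d^k}\bigr)$ is exactly the second characterization of $\sim_k$ recalled in the survey, and your flow-conservation equations at the vertices of the de Bruijn graph are precisely the extension identities $\sum_{a\in\Sigma_d}|u|_{wa}=|u|_w-[w=\Suff_{k-1}(u)]$ and $\sum_{a\in\Sigma_d}|u|_{aw}=|u|_w-[w=\Pref_{k-1}(u)]$; the rank computation ($d^{k-1}-1$ independent constraints, by connectivity of the de Bruijn graph) gives the right exponent. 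The two halves are correctly separated: necessity of conservation yields the upper bound, and for the lower bound your device of writing $x=\mathbf{1}+y$ with $y$ a nonnegative circulation --- so that every edge occurs, the support is the whole strongly connected de Bruijn graph, and the directed Euler theorem certifies that $x$ is realized by a closed walk --- cleanly sidesteps the connectivity obstruction; distinct vectors $x$ are distinct class invariants, so they give distinct classes.

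One technical point should be repaired, though it does not affect the result. For $s\neq t$ the polytopes $\{x\geq 0:\ x\ \mbox{conservative for}\ (s,t),\ \sum_w x_w=m\}$, indexed by $m$, do \emph{not} form a dilation family of a fixed polytope (the source/sink correction on the right-hand side stays fixed while $m$ grows), so Ehrhart's theorem does not literally apply there. This costs nothing for the upper bound: all you need is the elementary fact that an affine subspace of dimension $D$ meets $[0,m]^{d^k}$ in $O(m^D)$ lattice points (for instance, choose $D$ coordinates whose restriction to the direction space of the affine subspace is injective and project onto them). In the lower bound, by contrast, the family of weight-$M$ nonnegative circulations genuinely is the $M$-th dilate of the weight-one circulation polytope, so Ehrhart applies legitimately; its leading coefficient is the relative volume, which is positive because the indicator vectors of directed cycles span the full circulation space of a strongly connected digraph, so that polytope has the full dimension $d^k-d^{k-1}$. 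With that substitution in the upper bound, your argument is complete.
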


Now we describe rewriting rules of words, which preserve
$k$-abelian equivalence classes and give a characterization of
$k$-abelian equivalence.

Let $k \geq 1$ and let $u = u_1\cdots u_n$. Suppose that
there exist indices $i,j,l$ and $m$, with $i<j \leq l<m\leq
n-k+2$, such that $u{[i,i+k-1)} = u{[l,l+k-1)} = x\in\Sigma_d^{k-1}$
and $u{[j,j+k-1)} = u{[m,m+k-1)} = y\in\Sigma_d^{k-1}$. We thus have
$$u = u{[1,i)} \cdot u{[i,j)} \cdot u{[j,l)} \cdot u{[l,m)} \cdot u{[m..]},$$
where $u{[i..]}$ and $u{[l..]}$ begin with $x$ and $u{[j..]}$ and
$u{[m..]}$ begin with $y$. Note here that we allow $l = j$ (in
this case $y = x$). We define a \emph{$k$-switching} on $u$,
denoted by $S_{u,k}(i,j,l,m)$, as
\begin{equation}\label{eq:k-switching}
    S_{u,k}(i,j,l,m) = u[1,i) \cdot u[l,m) \cdot u[j,l) \cdot u[i,j) \cdot u[m..].
\end{equation}

Roughly speaking, the idea is to switch the positions of two
factors that both begin and end with the same factors of length
$k-1$, and we allow the situation where the factors can 
overlap.

\begin{example}
    Let $u = 0010101000101$ and let $v = S_{u,4}(2,3,4,11)$. By \eqref{eq:k-switching}, we have
    $v = 0 \cdot 0101000\cdot 1\cdot 0 \cdot 101$.
    One can check that $u \sim_4 v$.
\end{example}


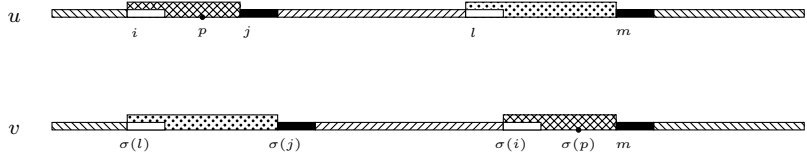
\begin{figure}
    \centering
        \begin{tikzpicture}
            \node at (-5.5,0) {\footnotesize{$u$}};
                \draw[pattern=north west lines] (-5,0) rectangle (-4,.1);
                \draw[pattern=north west lines] (3,0) rectangle (5,.1);
                \draw[pattern=north east lines] (-2,0) rectangle (.5,.1);
                \draw[pattern=crosshatch] (-4,0) rectangle (-2.5,.2);
                                \draw[fill=white] (-4,0) rectangle (-3.5,.1);
                    \node at (-3.9,-.2) {\tiny{$i$}};
                \draw[fill] (-3,0) circle (.03);
                    \node at (-3,-.2) {\tiny{$p$}};

                \draw[fill] (-2.5,0) rectangle (-2,.1);
                    \node at (-2.4,-.2) {\tiny{$j$}};

                \draw[fill] (2.5,0) rectangle (3,.1);
                \node at (2.6,-.2) {\tiny{$m$}};

                \draw[pattern=crosshatch dots] (.5,0) rectangle (2.5,.2);
                                \draw[fill=white] (.5,0) rectangle (1,.1);
                \node at (.6,-.2) {\tiny{$l$}};

            \node at (-5.5,-1.5) {\footnotesize{$v$}};
                \draw[pattern=north west lines] (-5,-1.5) rectangle (-4,-1.4);
                \draw[pattern=north west lines] (3,-1.5) rectangle (5,-1.4);
                \draw[pattern=north east lines] (-1.5,-1.5) rectangle (1,-1.4);
                \draw[pattern=crosshatch dots] (-4,-1.5) rectangle (-2,-1.3);
                                \draw[fill=white] (-4,-1.5) rectangle (-3.5,-1.4);
                    \node at (-3.9,-1.7) {\tiny{$\sigma(l)$}};
                \draw[fill] (-2,-1.5) rectangle (-1.5,-1.4);
                    \node at (-1.9,-1.7) {\tiny{$\sigma(j)$}};
                \draw[pattern=crosshatch] (1,-1.5) rectangle (2.5,-1.3);
                                \draw[fill=white] (1,-1.5) rectangle (1.5,-1.4);
                    \node at (1.1,-1.7) {\tiny{$\sigma(i)$}};
                    \draw[fill] (2,-1.5) circle (.03);
                    \node at (2,-1.7) {\tiny{$\sigma(p)$}};
                \draw[fill] (2.5,-1.5) rectangle (3,-1.4);
                    \node at (2.6,-1.7) {\tiny{$m$}};
        \end{tikzpicture}
        \caption{Illustration of a $k$-switching.}
        \label{fig:k-switching}
\end{figure}

Let us define a relation $R_k$ on words by $uR_kv$ if and only
if $u=v$ or $v = S_{u,k}$ for some $k$-switching of $u$. Now $R_k$
is clearly reflexive and symmetric. The transitive closure $R_k^*$
of $R_k$ is thus an equivalence relation. In fact, the relations
$\sim_k$ and $R_k^*$ actually coincide:


\begin{proposition}\cite{DBLP:journals/tcs/KarhumakiPRW17}\label{prop:rewritingCharacterization}
For two words $u,v$, we have $u\sim_k v$ if and only if $u
R_k^* v$.
\end{proposition}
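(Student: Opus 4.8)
The plan is to translate $k$-abelian equivalence into the language of the de Bruijn graph and then to argue by induction on the first position of disagreement. Recall from the two displayed conditions following the definition of $\sim_k$ that $u\sim_k v$ holds exactly when $u$ and $v$ have the same multiset of length-$k$ factors and agree on their prefixes and suffixes of length $k-1$. I would encode a word $w$ of length $\ge k-1$ as a walk in the directed graph $dB_k$ whose vertices are the words of length $k-1$ and whose edges are the words of length $k$, the edge $w[p,p+k-1]$ leading from the vertex $w[p,p+k-2]$ to the vertex $w[p+1,p+k-1]$. Under this bijection the multiset of length-$k$ factors of $w$ is exactly its multiset of traversed edges, and the endpoints of the walk are $\Pref_{k-1}(w)$ and $\Suff_{k-1}(w)$. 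Thus $u\sim_k v$ if and only if the two associated walks have the same start vertex, the same end vertex and the same edge-multiset; equivalently, both are Eulerian trails of one and the same multigraph $M$ from a common vertex $s$ to a common vertex $t$.

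First I would dispatch the direction $R_k^*\subseteq\,\sim_k$, for which it suffices to show that a single $k$-switching preserves the edge-multiset and the endpoints. Writing $u=ABCDE$ with $A=u[1,i)$, $B=u[i,j)$, $C=u[j,l)$, $D=u[l,m)$, $E=u[m..]$, the constraints on the indices in \eqref{eq:k-switching} guarantee that $A,B,C,D,E$ are consecutive blocks of edges realizing the vertex pattern $s\to x\to y\to x\to y\to t$. The switched word $v=ADCBE$ reconnects these same blocks along the pattern $s\to x\to y\to x\to y\to t$ as well, since the shared boundary vertices $x$ and $y$ match at every junction; hence $v$ traverses exactly the same edges and keeps the same endpoints, so $u\sim_k v$. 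Transitivity of $\sim_k$ then yields $R_k^*\subseteq\,\sim_k$.

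The substance is the converse $\sim_k\subseteq R_k^*$, which I would prove by induction on $n-\ell$, where $\ell$ is the length of the longest common prefix of $u$ and $v$ (words with $|u|<k-1$ force $u=v$, and the condition $\Pref_{k-1}(u)=\Pref_{k-1}(v)$ guarantees $\ell\ge k-1$ whenever $u\neq v$). Suppose $u\neq v$ and let $p+1$ be the first position where they differ. At the vertex $x=u[i,p]$, with $i=p-k+2$, the walk of $u$ departs along an edge $e_u$ while that of $v$ departs along $e_v\neq e_u$. Since the two trails share the edge-multiset and agree up to $x$, the edge $e_v$ (which leaves $x$) must be used later by $u$; let $l>i$ be a position where $u$ is at $x$ and takes $e_v$. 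I would then search for a vertex $y$ visited by $u$ both on the closed sub-walk $u[i,l)$ (at some time $j$ with $i<j\le l$) and strictly after $l$ (at some time $m$). This is precisely the data of a legal switching $S_{u,k}(i,j,l,m)$, whose effect is to move the block beginning with $e_v$ to the front, so that the resulting word agrees with $v$ on a prefix of length $\ge p+1$. That word is again $\sim_k v$ by the easy direction, and the induction hypothesis finishes the step.

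The main obstacle is exactly the existence of such a repeated vertex $y$. The natural choice $y=x$ works whenever $u$ revisits $x$ after time $l$; the delicate case is when it does not and, in addition, $x$ is a cut vertex of $M$ separating the edges of the nonempty closed walk $u[i,l)$ from the edges of the tail $u[l..]$. I would rule this case out rather than handle it: if $x$ were such a cut vertex, then on the side $M_2$ of $M$ containing $e_v$ the vertex $x$ would have out-degree at least one but in-degree zero, so any trail departing from $x$ along $e_v$ could never return to $x$ and hence could never reach the (nonempty) edge set on the other side. This contradicts the fact that $v$ is an Eulerian trail of all of $M$ starting with $e_v$. Consequently a suitable $y$ always exists, the required switching is available, and the induction goes through. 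The boundary case $t=x$ causes no difficulty, since then the tail returns to $x$ and the choice $y=x$ is again available.
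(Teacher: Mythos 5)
Your proposal cannot be compared against an in-paper argument, because the survey states this proposition without proof, citing Karhum\"aki--Puzynina--Rao--Whiteland; so your proof has to stand on its own, and it does: it is correct, and the de Bruijn-graph formulation you use (vertices the words of length $k-1$, edges the words of length $k$, so that $u\sim_k v$ means equal edge multisets together with equal start and end vertices) is precisely the viewpoint underlying the cited literature. Two remarks. First, in the easy direction $R_k^*\subseteq\,\sim_k$ you assert, rather than prove, the one point that actually uses the overlap conditions in the definition of a $k$-switching: that the letter-level word $S_{u,k}(i,j,l,m)=u[1,i)\,u[l,m)\,u[j,l)\,u[i,j)\,u[m..]$ is the word spelled by the rearranged walk $A\,D\,C\,B\,E$. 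This is true but not automatic when the blocks are shorter than $k$; a clean way to close it is to let $w_A,w_B,w_C,w_D,w_E$ be the words spelled by the five sub-walks and $\sigma$ the operator deleting the first $k-1$ letters, so that $w_A=u[1,i)\,x$, $x\,\sigma(w_B)=u[i,j)\,y$, $y\,\sigma(w_C)=u[j,l)\,x$, $x\,\sigma(w_D)=u[l,m)\,y$, $y\,\sigma(w_E)=u[m..]$, and then the word spelled by $A\,D\,C\,B\,E$ telescopes:
\begin{equation*}
w_A\,\sigma(w_D)\,\sigma(w_C)\,\sigma(w_B)\,\sigma(w_E)
= u[1,i)\,u[l,m)\,y\,\sigma(w_C)\,\sigma(w_B)\,\sigma(w_E)
= \cdots
= u[1,i)\,u[l,m)\,u[j,l)\,u[i,j)\,u[m..],
\end{equation*}
which is exactly $S_{u,k}(i,j,l,m)$. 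Second, the substantial converse direction is sound as written: the first disagreement forces the edge $e_v$ to occur in $u$'s walk at some later time $l$ with the same source vertex $x$; your case analysis for producing a legal pair $(j,m)$ with $i<j\le l<m\le n-k+2$ is exhaustive; and the contradiction you use in the remaining case is valid, since if the vertices of the closed walk $u[i,l)$ and of the tail after time $l$ met only in $x$, then in the sub-multigraph formed by the tail's edges the vertex $x$ would have in-degree zero and out-degree at least one, so an Eulerian trail of all remaining edges beginning with $e_v$ (which $v$ provides) could never reach the nonempty edge set of the closed walk. Each switching strictly lengthens the common prefix with $v$, so your induction terminates, completing a correct and self-contained proof of the proposition.
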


This characterization can be used for studying the cardinality of
$k$-abelian equivalence classes. For example, using this
characterization, the following upper bound has been established
on the number of $k$-abelian \emph{singleton} classes, i.e.,
classes containing exactly one element:

\begin{theorem}\cite{DBLP:journals/tcs/KarhumakiPRW17}\label{th:singletons} The number of $k$-abelian singleton classes
is of order $O (n^{N_d(k-1)-1})$, where
$$N_d(l)=\tfrac{1}{l}\sum_{q\mid l}\varphi(q)d^{l/q}$$ is the number
of conjugacy classes of words of length $l$ over $\Sigma_d$ and $\varphi$
is the Euler's totient function. \end{theorem}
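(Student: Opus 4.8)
The plan is to translate the singleton condition into a statement about de Bruijn graphs and then count. I would encode a word $u\in\Sigma_d^n$ as a walk of length $n-(k-1)$ in the de Bruijn graph $B$ whose vertices are the words of $\Sigma_d^{k-1}$ and whose edges are the words of $\Sigma_d^{k}$, the edge $a_1\cdots a_k$ joining $a_1\cdots a_{k-1}$ to $a_2\cdots a_k$. Under this encoding the $k$-abelian class of $u$ is exactly the data consisting of the multiset of traversed edges (the counts $|u|_t$, $t\in\Sigma_d^k$) together with the marked endpoints $\Pref_{k-1}(u)$ and $\Suff_{k-1}(u)$, and two words are $k$-abelian equivalent iff they are Eulerian trails of the same edge-multiset with the same endpoints. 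A $k$-switching in the sense of Proposition~\ref{prop:rewritingCharacterization} is precisely an elementary rerouting of such a trail at a vertex visited at least twice, so by that proposition $u$ is a singleton if and only if its associated multigraph admits a \emph{unique} Eulerian trail from $\Pref_{k-1}(u)$ to $\Suff_{k-1}(u)$. The task is thus to count the connected sub-multigraphs of $B$ on $n-(k-1)$ edges, with marked endpoints, carrying a unique Eulerian trail.

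First I would establish the structural lemma governing uniqueness. Using the BEST theorem (or, equivalently, the observation that whenever the walk returns to a vertex from which two distinct onward continuations still remain one may perform a nontrivial switching), one shows that every edge used with multiplicity $\ge 2$ lies on a simple cycle that is traversed repeatedly ``in place'', and that any two such repeated cycles must be \emph{vertex-disjoint}: a shared vertex would offer a genuine choice of continuation and hence produce two distinct trails. Moreover each distinct edge contributes at most $d^k=O(1)$ to the length, so every growth in $n$ is carried by the heavy (multiplicity $\ge 2$) edges. Consequently a unique-trail multigraph decomposes into a family $C_1,\dots,C_r$ of pairwise vertex-disjoint simple cycles, traversed $t_1,\dots,t_r$ times, glued by a \emph{backbone} of edges each used a bounded number of times and of total length $O(1)$ independent of $n$.

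Next I would invoke the extremal fact that the maximum number of pairwise vertex-disjoint cycles in $B$ equals $N_d(k-1)$. This is witnessed by the canonical cover arising from the rotation permutation $a_1a_2\cdots a_{k-1}\mapsto a_2\cdots a_{k-1}a_1$ of $\Sigma_d^{k-1}$, whose orbits are exactly the conjugacy classes of $\Sigma_d^{k-1}$ and therefore number $N_d(k-1)$; one then checks that no packing can beat this cover. Hence $r\le N_d(k-1)$. A singleton word of length $n$ is now specified by a bounded amount of discrete data (the backbone, the chosen cycles $C_i$, and the marked endpoints, of which there are finitely many) together with the multiplicities $t_1,\dots,t_r$, which are nonnegative integers subject to the single length relation $\sum_{i=1}^{r} t_i\lvert C_i\rvert = n-(k-1)-\lvert\text{backbone}\rvert$. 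This linear constraint removes one degree of freedom, so the number of admissible tuples is $O(n^{r-1})=O(n^{N_d(k-1)-1})$; summing over the finitely many discrete configurations gives the claimed bound.

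The main obstacle is the structural lemma of the second paragraph: making precise that uniqueness of the Eulerian trail forces the heavily used edges to organize into vertex-disjoint repeated simple cycles with only a bounded backbone. The delicate points are ruling out long ``almost-rigid'' backbone segments that could themselves grow with $n$, and handling repeated cycles that are non-simple or that interact through the marked endpoints. The cleanest route is to argue contrapositively via Proposition~\ref{prop:rewritingCharacterization}, exhibiting an explicit nontrivial $k$-switching whenever two high-multiplicity cycles meet at a common vertex, or whenever a backbone segment is long enough to repeat a $\Sigma_d^{k-1}$-factor in an interleaved $x\cdots y\cdots x\cdots y$ pattern; combined with the packing bound $r\le N_d(k-1)$, which is what caps the exponent, this yields the estimate.
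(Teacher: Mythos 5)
Your overall route is the right one, and it is in fact the route of the proof this survey cites (the survey itself gives no argument beyond pointing at the $k$-switching characterization of Proposition~\ref{prop:rewritingCharacterization}): encode words as walks in the de Bruijn graph of order $k-1$, identify $k$-abelian classes with edge-multisets together with the two endpoints, characterize singletons as multigraphs admitting a unique Eulerian trail, show that such a multigraph is a bounded ``backbone'' plus pairwise vertex-disjoint repeated cycles, and count multiplicity vectors subject to one linear length constraint. Your final counting step is also correct: the nonnegative solutions of $\sum_{i=1}^{r} t_i\,|C_i| = M$ number $O(M^{r-1})$, so everything hinges on the bound $r \le N_d(k-1)$.

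The genuine gap is exactly there. You write that the rotation permutation exhibits $N_d(k-1)$ pairwise vertex-disjoint cycles and that ``one then checks that no packing can beat this cover.'' The direction you need is an \emph{upper} bound on the size of any vertex-disjoint cycle packing in the de Bruijn graph, and this is not a check: it is Golomb's conjecture, proved by Mykkeltveit (J.\ Combin.\ Theory Ser.\ B, 1972), and it is the essential external input behind the exponent $N_d(k-1)-1$; without it your argument gives no bound on $r$ at all. Note that the direction you do establish (the rotation orbits form a packing of size $N_d(k-1)$) is logically irrelevant to the upper bound you are proving --- it would only matter for the conjectured matching lower bound $\Theta\bigl(n^{N_d(k-1)-1}\bigr)$. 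A second, smaller gap is the structural lemma, which you yourself flag as the main obstacle: beyond disjointness of the repeated cycles, you must also show that a singleton's walk enters and exits each repeated cycle only once (the traversals are consecutive), since otherwise the word is not reconstructible from the bounded skeleton together with the multiplicities $t_1,\dots,t_r$, and the map underlying your count fails to be (boundedly) injective. Your plan of deriving this contrapositively from Proposition~\ref{prop:rewritingCharacterization} --- exhibiting an explicit switching whenever two cycles meet at a vertex or a cycle is revisited --- is sound and matches what the cited authors do, but in the proposal it remains a sketch, so as written the proof is incomplete on both counts.
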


It is worth noticing that this bound is conjectured to be tight; in fact
 the number of $k$-abelian singleton classes
is of order $\Theta (n^{N_d(k-1)-1})$. In \cite{DBLP:journals/fuin/CassaigneKPW17}
it is proved that the sequences of the numbers of singletons, as well as the numbers of $k$-abelian classes of length $n$, are both $\mathbb{N}$-rational (see, e.g., \cite{DBLP:books/lib/BerstelR88} for the definition). Using this result, the following precise values for the numbers $S_{k,d}(n)$ of singular $k$-abelian classes of length $n$ over $\Sigma_d$ were obtained for small $k$ and small alphabets:

\begin{proposition}  \cite{DBLP:journals/fuin/CassaigneKPW17}

\begin{enumerate}
    \item For all $n \geq 4$, $S_{2,2}(n) = 2n + 4$;
    \item  For all $n \geq 9$, $S_{3,2}(n) = \frac{1}{2} n^2 + 16n +\frac{2}{3}\big(e^{\frac{2\pi i}{3}n}+e^{-\frac{2\pi i}{3}n} \big)-\frac{535}{12}-\frac{3}{4}(-1)^n$;
     \item  For all $n \geq 6$, $S_{2,3}(n) = 3n^2+27n-63$.
\end{enumerate}
\end{proposition}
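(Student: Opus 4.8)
The plan is to combine three ingredients already in place: the rewriting characterization of $k$-abelian equivalence (\cref{prop:rewritingCharacterization}), the degree bound of \cref{th:singletons}, and the $\mathbb{N}$-rationality of the counting sequences \cite{DBLP:journals/fuin/CassaigneKPW17}. A word $u$ of length $n$ lies in a singleton $k$-abelian class precisely when every $k$-switching $S_{u,k}(i,j,l,m)$ returns $u$ itself; equivalently, viewing $u$ as a walk in the de Bruijn graph of order $k-1$ over $\Sigma_d$, the class of $u$ is a singleton iff $u$ is the \emph{unique} walk having its multiset of edges together with its prescribed length-$(k-1)$ prefix and suffix (a $k$-switching being exactly a rearrangement of two such subwalks). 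First I would make this uniqueness condition explicit on the small de Bruijn graphs at hand --- two vertices for $(k,d)=(2,2)$, three for $(2,3)$, four for $(3,2)$ --- thereby describing the singleton words by a finite automaton and recovering the $\mathbb{N}$-rationality of each sequence concretely.

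Granting $\mathbb{N}$-rationality, the generating function $F_{k,d}(x)=\sum_{n\ge 0}S_{k,d}(n)\,x^{n}$ is rational, so by partial fractions $S_{k,d}(n)$ agrees, for $n$ past a threshold, with a quasi-polynomial $\sum_i p_i(n)\,\omega_i^{\,n}$ whose $\omega_i$ are the reciprocals of the poles of $F_{k,d}$ and whose $p_i$ are polynomials. \cref{th:singletons} controls the growth: since $S_{k,d}(n)=O\!\left(n^{N_d(k-1)-1}\right)$ and $N_2(1)=2$, $N_2(2)=3$, $N_3(1)=3$, the pole at $x=1$ has order at most $2,3,3$ respectively, giving a polynomial part of degree at most $1,2,2$. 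The remaining poles lie at roots of unity and, as the finite computation confirms, are simple, contributing only bounded periodic corrections: none for $S_{2,2}$ and $S_{2,3}$ (denominators $(1-x)^{2}$ and $(1-x)^{3}$), and the extra simple factors $(1+x)$ and $(1+x+x^{2})$ for $S_{3,2}$, which account for the $(-1)^{n}$ and $e^{\pm 2\pi i n/3}$ terms.

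With the denominator of $F_{k,d}$ pinned down in this way, the numerator is a polynomial determined by finitely many values $S_{k,d}(0),S_{k,d}(1),\dots$, obtained by direct enumeration of short words (entirely feasible, since the relevant lengths are small). Solving the resulting linear system fixes the coefficients of the quasi-polynomial, and comparing against the computed values both establishes the three closed forms and certifies the ranges $n\ge 4$, $n\ge 9$, $n\ge 6$ beyond which the quasi-polynomial is exact --- these thresholds mark the point past which the numerator no longer perturbs the partial-fraction expansion.

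The hard part will be the second step of the first paragraph: determining the exact denominator of $F_{k,d}$, that is, precisely which roots of unity occur and with what multiplicities. The order of the pole at $x=1$ is capped by \cref{th:singletons}, but the genuinely delicate point is explaining the period-$2$ and period-$3$ corrections in $S_{3,2}$. These reflect congruence conditions on how a binary word can traverse the order-$2$ de Bruijn graph without admitting a nontrivial $2$-switching, and isolating those conditions --- equivalently, constructing the correct automaton for singleton words over this four-vertex graph --- is where the real work lies.
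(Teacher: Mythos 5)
Your plan follows the same route as the actual source of this result: the survey offers no proof of its own here, and \cite{DBLP:journals/fuin/CassaigneKPW17} indeed proceeds by showing that the singleton words form a regular language (via the $k$-switching characterization, \cref{prop:rewritingCharacterization}), deducing $\mathbb{N}$-rationality, and then computing the generating functions explicitly for small $(k,d)$. Your first paragraph reconstructs that route correctly, and your growth-rate bookkeeping via \cref{th:singletons} (using $N_2(1)=2$, $N_2(2)=3$, $N_3(1)=3$) is consistent with the degrees appearing in the three formulas.

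As a proof, however, the proposal has a genuine gap --- in fact two related ones. First, the certification step is unsound as described: knowing the order of the pole at $x=1$ (which is all \cref{th:singletons} gives you) and matching finitely many initial values cannot establish the closed forms. A rational generating function is pinned down by finitely many coefficients only once you have an a priori bound on the degree of its numerator --- equivalently, on the order of the linear recurrence, i.e., on the number of states of a recognizing automaton. Without that bound, a candidate formula can agree with any number of computed values and still be wrong; and the thresholds $n\geq 4$, $n\geq 9$, $n\geq 6$, which encode exactly where the numerator stops interfering, are outputs of the exact computation, so they cannot also serve as inputs to a ``comparison.'' (Even your assertion that all remaining poles are simple roots of unity is delegated to ``the finite computation.'') Second, the step you defer as ``the real work'' --- constructing the automata --- is not routine: a $k$-switching can exchange two \emph{distinct} blocks and still fix the word (for adjacent blocks this happens precisely when the blocks are powers of a common root, by the Lyndon--Sch\"utzenberger commutation lemma), so ``admits a nontrivial $k$-switching'' is not an obviously local, finite-state property. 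Proving regularity of the singleton language is precisely the main theorem of \cite{DBLP:journals/fuin/CassaigneKPW17}, not a finite check one can dispatch by inspection. Since the entire content of the proposition is the three exact formulas, and both the regularity argument and the exact transfer-matrix computation are left undone, what you have is a correct plan --- essentially the published one --- but not yet a proof.
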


Moreover, Whiteland~\cite[Proposition 6.7]{whitelandPhD} gave a formula for $S_{4,2}(n)$. 



Among other studies on $k$-abelian equivalence, we would like to
mention the classification of existence of $k$-abelian palindromic
poor words \cite{DBLP:journals/iandc/CassaigneKP18}, as well as a
$k$-abelian version of Fine and Wilf's Lemma~\cite{DBLP:journals/ijfcs/KarhumakiPS13}.



Finally, Peltom\"aki and Whiteland~\cite{PW20} extended the results of Sec.~\ref{sec:critStur} on the abelian critical exponent of Sturmian words to the case of $k$-abelian equivalence.

\subsection{Weak abelian equivalence}\label{subsec:WAP}

In this subsection we consider another modification of abelian
equivalence: Two finite words $u$ and $v$ are called \emph{weak abelian
equivalent} if they have the same frequencies of letters. For a
finite word $w\in\Sigma_d^+$, the \emph{frequency} $\rho_a(w)$ of a
letter $a\in\Sigma_d$ in $w$ is defined as
$\rho_a(w)=\frac{|w|_a}{|w|}$. In other words, in the case of weak
abelian equivalence only frequencies of letters are taken into
account, but not the lengths of the words. Clearly, weak abelian
equivalent words are abelian equivalent if and only if they have
the same lengths.

We define a \emph{weak abelian power} as a concatenation of weak
abelian equivalent words.
In \cite{pjm/1102784513} the authors explore the avoidance of weak abelian
powers:
\begin{theorem}
The following holds true:
\begin{itemize} \item Every binary word contains weak abelian $k$-powers for each $k$. \item There exists an infinite ternary word containing no weak abelian $(5^{11}+1)$-powers. \end{itemize}\end{theorem}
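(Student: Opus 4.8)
The plan is to handle the two parts by unrelated methods. Part 1 I would reduce to a statement about lattice points and settle by pigeonhole together with a Diophantine input; Part 2 I would prove by exhibiting an explicit morphic word and running a renormalization (desubstitution) argument. For Part 1, write $x=x_1x_2\cdots$ over $\{0,1\}$, put $f(0)=0$ and $f(n)=|x_1\cdots x_n|_1$, and consider the staircase of lattice points $P_n=(n,f(n))$, whose steps are $(1,0)$ or $(1,1)$. Two consecutive factors $x_{i+1}\cdots x_j$ and $x_{j+1}\cdots x_\ell$ are weak abelian equivalent exactly when the segments $P_iP_j$ and $P_jP_\ell$ have the same slope; hence a weak abelian $k$-power is precisely a collection of $k+1$ collinear points $P_{i_0},\ldots,P_{i_k}$ with $i_0<\cdots<i_k$. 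So it suffices to show that for every $k$ some line contains $k+1$ of the $P_n$.

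I would then split on the behaviour of the frequency $r_n=f(n)/n$. If $(r_n)$ does not converge, choose a rational $a/b$ strictly between $\liminf r_n$ and $\limsup r_n$ and study the integer walk $h(n)=bf(n)-an$, whose increments lie in $\{-a,\,b-a\}$; it is positive infinitely often and negative infinitely often, and each downward crossing of $0$ lands in the bounded set $\{1-a,\ldots,0\}$, so by pigeonhole some value of $h$ is attained infinitely often, yielding infinitely many collinear $P_n$. If $r_n\to a/b\in\mathbb{Q}$, the same walk satisfies $h(n)=o(n)$, so $P_0,\ldots,P_N$ occupy only $o(N)$ distinct values of $h$, and again some value has multiplicity tending to infinity. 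If $r_n\to\alpha$ with $\alpha$ irrational, I would bring in the convergents $p_i/q_i$ of $\alpha$: along slope $p_i/q_i$ the quantities $q_if(n)-p_in$ spread at rate $\approx\|q_i\alpha\|\,n$, so among the first $N$ points some slope-$p_i/q_i$ line carries on the order of $\|q_i\alpha\|^{-1}\approx q_{i+1}$ of them, and $q_{i+1}\to\infty$ gives $k+1$ collinear points for every $k$. (The balanced subcase, where $f(n)=\alpha n+O(1)$, alternatively follows at once from \cref{boundedAC}, since abelian powers are weak abelian powers.) I expect this last, irrational case to be the main obstacle, as it is exactly where the Diophantine quality of $\alpha$ must be fed in through the denominators $q_i$.

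For Part 2 I would produce the ternary word as the image under a coding of a fixed point of a carefully chosen substitution, and prove that no weak abelian $(5^{11}+1)$-power occurs by renormalization: assuming a factorization $w_1\cdots w_m$, $m=5^{11}+1$, in which all $w_i$ share the same letter-frequency vector, I would pull the factorization back through the morphism to obtain a shorter equal-frequency configuration in a preimage, iterate, and contradict a directly verified base case. The constant $5^{11}+1$ I would expect to arise as the accumulated multiplicative loss, roughly a factor $5$ per renormalization level over the depth forced by the alphabet and the substitution, rather than as a sharp threshold. The crux here, and the reason the bound is large and presumably far from optimal, is that weak abelian equivalence forgets lengths: the blocks $w_i$ may have pairwise different lengths and may straddle the images of individual letters in an uncontrolled way, so the desubstitution must track the real-valued frequency vectors of the blocks rather than integer Parikh vectors. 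Quantifying how the frequency constraints survive one renormalization step — and how many blocks are needed before an incompatibility is forced — is where the bulk of the work, and the explicit $5^{11}$, would reside.
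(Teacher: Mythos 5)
The survey itself contains no proof of this theorem --- it is quoted from Gerver and Ramsey \cite{pjm/1102784513} --- so the only meaningful comparison is with that source. Your Part 1 is correct, and it is in substance the argument of that paper: the reduction of weak abelian $k$-powers to $k+1$ collinear points of the staircase $P_n=(n,f(n))$ is exactly right, and each of your three cases closes. The one place needing care is the irrational case: $q_if(n)-p_in=q_i\bigl(f(n)-\alpha n\bigr)+(q_i\alpha-p_i)n$, and the first term need not be bounded; but for fixed $i$ it is $o(N)$ over $n\leq N$, so the pigeonhole count is still at least $N/\bigl(2\|q_i\alpha\|N+o(N)\bigr)\to\tfrac{1}{2}\|q_i\alpha\|^{-1}\geq\tfrac{1}{2}q_{i+1}$, which is all you need since $q_{i+1}\to\infty$. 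So Part 1 stands essentially as written.

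Part 2, however, has a genuine gap: nothing is constructed and nothing is verified. You exhibit neither the substitution nor the coding, and the lemma you defer --- how equal-frequency constraints behave under one desubstitution step, and how many blocks force a contradiction --- is not a technical detail but the entire content of the theorem; your own text concedes this is ``where the bulk of the work \ldots would reside.'' Two concrete points show the plan cannot be waved through. First, the substitution is severely constrained: by \cref{boundedAC}, any word with bounded abelian complexity contains abelian $k$-powers for every $k$, and abelian powers are in particular weak abelian powers, so your fixed point must have unbounded abelian complexity; this rules out the familiar Pisot-type candidates (e.g.\ the Tribonacci substitution), so ``carefully chosen'' is carrying all the weight. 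Second, your renormalization is set up in the wrong coordinates: the blocks of a putative weak abelian power need not align with images of letters, and frequency vectors are not preserved when blocks are cut or shifted, so a factorization does not visibly pull back to a ``shorter equal-frequency configuration'' in a preimage. The workable frame is the one you already used in Part 1 and then abandoned: pass to the walk in $\mathbb{Z}^3$ defined by the ternary word, where a weak abelian $m$-power is $m+1$ collinear points and where the substitution acts linearly, through its incidence matrix, on the self-similar pieces of the walk, so that lines are carried to lines under desubstitution. That is the setting in which Gerver and Ramsey \cite{pjm/1102784513} build their explicit sequence and bound by $5^{11}$ the number of its points on any line, and it is precisely the step your proposal leaves open.
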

The number $(5^{11}+1)$ seems to be far from being 
optimal. 

Recall that an infinite word $w$ is called \emph{abelian 
periodic} if $w = v_0 v_1 \cdots$, where $v_k\in\Sigma_d^*$ for
$k\geq 1$, and $v_i \sim_{ab} v_j$ for all integers $i, j \geq 1$. 

\begin{definition} An infinite word $w$ is called weakly abelian periodic if $w = v_0 v_1 \cdots$, where $v_i\in
\Sigma_d^+$, $\rho_a(v_i) = \rho_a(v_j)$ for all $a\in\Sigma_d$ and
all integers $i, j\geq 1$.\end{definition}

In other words, a weakly abelian periodic word is an infinite
weakly abelian power (with a preperiod). 

\begin{definition} An infinite word $w$ is called bounded weakly abelian
periodic if it is weakly abelian periodic with bounded lengths of
blocks, i.e., there exists $C$ such that for every $i$ we have
$|v_i|\leq C$. \end{definition}

One can consider the following geometric interpretation of weak
abelian equivalence. Let $w=w_1 w_2 \cdots$ be a finite or
infinite word over a finite alphabet $\Sigma_d$. We translate
$w$ to a graph visiting points of the infinite rectangular grid by
interpreting letters of $w$ as drawing instructions. In the binary
case, we associate $0$ with a move by vector
$\textbf{v}_0=(1,-1)$, and $1$ with a move $\textbf{v}_1=(1,1)$.
We start at the origin $(x_0,y_0)=(0,0)$. At step $n$, we are at a
point $(x_{n-1}, y_{n-1})$ and we move by a vector corresponding
to the letter $w_{n}$, so that we come to a point $(x_{n},
y_{n})=(x_{n-1}, y_{n-1})+v_{w_n}$, and the two points $(x_{n-1},
y_{n-1})$ and $(x_{n}, y_{n})$ are connected with a line segment.
So, we translate the word $w$ to a path in $\mathbb{Z}^2$. We
denote the corresponding graph by $g_w$. Hence, the
graph of a word is a piecewise linear function with linear segments
connecting integer points (see Example 1). It is easy to see that
for weakly abelian equivalent words the final points of their
graphs and the origin are collinear, and weakly abelian periodic
word $w$ has a graph with infinitely many integer points on a line
with rational slope. 
Note that instead of the vectors $(1,-1)$ and $(1,1)$, one can use
any other pair of noncollinear vectors $\textbf{v}_0$ and
$\textbf{v}_1$. For a $k$-letter alphabet one can consider a
similar graph in $\mathbb{Z}^k$.

\begin{example} Recall the regular paperfolding word
$p=001001100011\cdots$  The
graph corresponding to the regular paperfolding word with
$\textbf{v}_0=(1,-1)$, $\textbf{v}_1=(1,1)$ is displayed in Fig.~\ref{fig_wap}. The
regular paperfolding word is not balanced and is weak abelian periodic
along the line $y=-1$ (and actually along any line $y=C$, $C=-1,
-2, \dots$). \end{example}

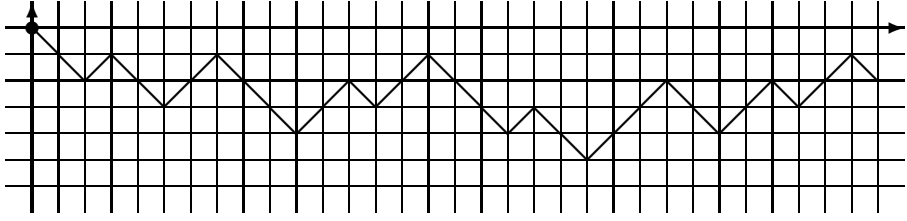
\begin{figure}
\centering
\begin{picture}(340,80)
\multiput(10,0)(10,0){33}{\line(0,1){80}}
\multiput(0,10)(0,10){7}{\line(1,0){340}}

\thicklines \put(0,70){\vector(1,0){340}}
\put(10,0){\vector(0,1){80}} \put(10,70){\circle*{5}}

\put(10,70){\line(1,-1){20}} \put(30,50){\line(1,1){10}}
\put(40,60){\line(1,-1){20}} \put(60,40){\line(1,1){20}}
\put(80,60){\line(1,-1){30}} \put(110,30){\line(1,1){20}}
\put(130,50){\line(1,-1){10}} \put(140,40){\line(1,1){20}}
\put(160,60){\line(1,-1){30}} \put(190,30){\line(1,1){10}}
\put(200,40){\line(1,-1){20}} \put(220,20){\line(1,1){30}}
\put(250,50){\line(1,-1){20}} \put(270,30){\line(1,1){20}}
\put(290,50){\line(1,-1){10}} \put(300,40){\line(1,1){20}}
\put(320,60){\line(1,-1){10}}

\end{picture}

        \caption{The
graph of the regular paperfolding word with $\textbf{v}_0=(1,-1)$,
$\textbf{v}_1=(1,1)$.}
        \label{fig_wap}
\end{figure}

In \cite{DBLP:journals/mst/AvgustinovichP16}, general properties of weak abelian periodicity are
studied; in particular, its relationships with the notions of balance and
letter frequency. Also, a characterization of weak abelian
periodicity of fixed points of binary uniform substitutions is
provided. 

Another result on weak abelian periodicity is a modification of a classical result of Ehrenfeucht and  Silberger on the relationship between periodicity and bordered factors, see Theorem \ref{th:borders}.
We say that a finite word $u$ is \emph{(weakly) abelian bordered} if $u$
contains a non-empty proper prefix which is (weakly) abelian
equivalent to a suffix of $u$.
Although Theorem \ref{th:borders} does not seem to generalize well for abelian equivalence relation (see a discussion in Subsection \ref{subsec:borders}), a similar assertion does hold, surprisingly, for weak
abelian periodicity:

\begin{theorem} \label{wap}\cite{DBLP:journals/mst/AvgustinovichP16}
    Let $w$ be an infinite word.
    If there exists a constant $C$ such that every factor $v$ of $w$ with $|v|\geq C$
    is weakly abelian bordered, then $w$ is bounded weakly abelian periodic.
    \end{theorem}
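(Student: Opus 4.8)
The plan is to work with the geometric interpretation of the word introduced just above. Write $P_n=\PV(w_1\cdots w_n)\in\Z^d$ for the Parikh vector of the length-$n$ prefix, so that the word becomes a lattice path $P_0=\mathbf 0,P_1,P_2,\dots$ whose $n$-th step is the basis vector indexed by $w_n$. Two factors are weakly abelian equivalent exactly when their displacement vectors $P_j-P_i$ are positively parallel, and a factor $w_{i+1}\cdots w_j$ is weakly abelian bordered exactly when some proper initial displacement $P_{i+\ell}-P_i$ is parallel to some proper final displacement $P_j-P_{j-m}$. Dually, being \emph{bounded} weakly abelian periodic means precisely that some line $L$ of rational direction carries infinitely many of the points $P_n$ with bounded index gaps. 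For clarity I describe the binary case, where $P_n=(n,h_n)$ with $h_n=|w_1\cdots w_n|_1-|w_1\cdots w_n|_0$; here ``parallel'' is ``equal slope'', so everything is governed by the height function $h_n$.

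The key observation is a one-sided criterion for being unbordered. Suppose the sub-path of a factor $w_{i+1}\cdots w_j$ lies strictly on one side of its chord $P_iP_j$, say strictly above it. Then every proper prefix has slope strictly greater than the chord slope $s_0=(h_j-h_i)/(j-i)$, while every proper suffix has slope strictly less than $s_0$; hence the sets of prefix slopes and of suffix slopes are disjoint and the factor is weakly abelian unbordered. Consequently the hypothesis forbids long strictly one-sided factors: whenever $P_i,P_j$ lie on a line $\ell$ while all intermediate $P_k$ are strictly above $\ell$, we must have $j-i<C$. Applied to consecutive ``contact points'' of a supporting line from below, this yields the two facts I will use: consecutive lattice points of the path lying on such a supporting line are at distance $<C$, and any excursion that rises above a level and returns to it does so within fewer than $C$ steps.

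Granting that the path stays within a bounded strip around a line of some slope $\mu$ (the point addressed below), the argument closes as follows. Let $I=\inf_n(h_n-\mu n)$ and, for $\eta>0$, put $S_\eta=\{\,n:h_n-\mu n<I+\eta\,\}$, which is infinite because the infimum is approached infinitely often. If $b<b'$ are consecutive elements of $S_\eta$ then $h_k-\mu k\ge I+\eta$ for $b<k<b'$, so every intermediate point lies strictly above the chord $P_bP_{b'}$; by the one-sided criterion that factor is unbordered, forcing $b'-b<C$. Thus each $S_\eta$ is syndetic with gap bound $C$. Letting $\eta\to0$ along a sequence, and using that every window of length $C$ meets every $S_\eta$ while containing fewer than $C$ positions, a pigeonhole argument produces in each far window a single $n_0$ with $h_{n_0}-\mu n_0=I$. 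Hence $\{n:h_n-\mu n=I\}$ is itself syndetic and infinite; any two of its points are collinear on a line of slope $\mu$, forcing $\mu$ to be rational, and these infinitely many path points on the single rational line $y=\mu x+I$, occurring with gaps $<C$, are exactly the block boundaries witnessing bounded weak abelian periodicity.

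The main obstacle is the reduction granted above: showing that the path cannot drift slowly away from its mean line, i.e.\ that for the critical slope $\mu=\liminf_n h_n/n$ the quantity $h_n-\mu n$ is bounded (not merely $o(n)$) and its infimum is approached infinitely often. The one-sided criterion alone does not exclude a slow monotone drift such as $h_n=\mu n-\sqrt n$, which would make every $S_\eta$ finite and break the scheme; ruling it out seems to require using the border condition beyond one-sidedness, for instance by matching the full multisets of initial and final slopes of the long prefixes $w_1\cdots w_N$ as $N\to\infty$ to force the local frequencies to stabilize, or a counting argument on the finitely many slopes of denominator $<C$ that can occur on short contact-to-contact steps. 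I expect this drift-elimination to be the crux, the rest being the geometric bookkeeping above. For a general alphabet one runs the same argument after replacing $h_n$ by a linear functional $\langle c,P_n\rangle$ and ``one side of the chord'' by ``one side of a supporting hyperplane'', the finiteness of admissible short displacements again giving rationality of the limiting direction.
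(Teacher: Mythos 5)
The survey states Theorem~\ref{wap} without proof (it is quoted from \cite{DBLP:journals/mst/AvgustinovichP16}), so your proposal has to stand on its own, and as written it does not. Your framework is sound: the lattice-path interpretation, the one-sided criterion (a factor whose path lies strictly on one side of its chord has all proper prefix slopes strictly on one side of the chord slope and all suffix slopes strictly on the other, hence is weakly abelian unbordered), and the endgame you run \emph{after} ``granting that the path stays within a bounded strip'' are all correct. But the bounded-strip reduction is not a technical preliminary to be granted; it is the substance of the theorem, and you explicitly leave it unproved (``I expect this drift-elimination to be the crux''). Without it, $I=\inf_n(h_n-\mu n)$ may be $-\infty$, every $S_\eta$ may be empty or finite, and the syndeticity and pigeonhole steps never get started. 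The two ideas you float for filling the hole (matching multisets of prefix and suffix slopes of long prefixes; counting slopes of denominator $<C$) are not developed into arguments. So this is a genuine gap, not a proof.

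The gap is closable, and with the very lemma you already have; the missing idea is to apply the one-sided criterion to chords from a \emph{moving basepoint} rather than to deviations from a fixed mean slope $\mu$. Fix $i$ and set $s_i=\inf_{n>i}\,(h_n-h_i)/(n-i)$. If this infimum were not attained, taking successive strict records (the least $n$ whose chord slope from $P_i$ beats all previous ones) would give arbitrarily long factors $w_{i+1}\cdots w_n$ all of whose intermediate points lie strictly above the chord $P_iP_n$, i.e.\ arbitrarily long weakly abelian unbordered factors, contradicting the hypothesis. So $s_i$ is attained, and the least attaining index $j$ satisfies $j-i<C$ by the same criterion; hence $s_i$ is a rational with denominator $<C$ and $|s_i|\le 1$, so only finitely many values of $s_i$ are possible. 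Moreover, if $j$ attains $s_i$ then $s_j\ge s_i$, since $h_n-h_j\ge s_i(n-i)-s_i(j-i)=s_i(n-j)$ for all $n>j$. Iterating $i\mapsto j$ from $i=0$ yields indices $i_0<i_1<i_2<\cdots$ with gaps $<C$ along which $s_{i_0}\le s_{i_1}\le\cdots$ takes finitely many values, hence is eventually constant, say equal to $s$; from that point on the points $P_{i_k}$ are collinear on a line of slope $s$ with gaps $<C$, which is exactly bounded weak abelian periodicity (blocks of length $<C$ with equal frequencies), with rationality of $s$ automatic. In particular, the slow drift $h_n=\mu n-\sqrt n$ you feared is impossible precisely because it would force non-attainment of some $s_i$, which the first step excludes; the strip, the sets $S_\eta$, and the pigeonhole all become unnecessary. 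Note finally that your closing remark about general alphabets is also not routine: in dimension $d\ge 3$ a hyperplane through two points is not determined, and ``slope'' must be replaced by the finitely many displacement vectors realizable in fewer than $C$ steps, so that extension requires genuine additional work.
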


However, the converse of the previous statement does not hold. An example of a
weakly abelian periodic word with arbitrarily many weak abelian
unbordered factors is given by the word
$010^21^20^31^3\cdots0^n1^n\cdots$.

\subsection{$k$-binomial equivalence}

In this subsection we introduce another notion of equivalence
refining the abelian equivalence, namely, the $k$-binomial
equivalence.


The \emph{binomial coefficient}
$\left(\begin{array}{c}u\\v\end{array}\right)$ of two words $u$
and $v$ is defined as the number of occurrences of $v$ as a
scattered subword (see Sec.~\ref{subsec:borders} for the definition of scattered subword) in $u$.

The name comes from the fact that for two natural numbers $p>q$ and for a letter $a$, one has

$$\left(\begin{array}{c} a^p \\ a^q \end{array}\right) = \left(\begin{array}{c} p \\ q
\end{array}\right)$$

and

$$\left(\begin{array}{c} ua \\ vb \end{array}\right)  =
\begin{cases}
\left(\begin{array}{c} u \\ vb \end{array}\right) + \left(\begin{array}{c} u \\ v \end{array}\right), & \mbox { if }  a = b; \\
\left(\begin{array}{c} u \\ vb \end{array}\right), & \mbox {
otherwise. }
\end{cases}$$

\begin{definition}Two words $x$ and $y$ are $k$-\emph{binomially equivalent}, denoted by $x\sim_k^{bin} y$ if, for each word $v$ of length at most $k$, one has
$\left(\begin{array}{c}x\\v\end{array}\right) =
\left(\begin{array}{c}y\\v\end{array}\right)$. \end{definition}

In other words, two words are $k$-binomially equivalent if they
contain the same number of occurrences of subwords of length
at most $k$.

Since $ \left(\begin{array}{c}u\\a\end{array}\right) = |u|_a$ for
every $a \in \Sigma_d$, it is clear that two words $u$ and $v$ are
abelian equivalent if and only if $u \sim_1^{bin} v$. As it holds for
$k$-abelian equivalence, we have a family of refined relations: for
all $u, v \in \Sigma_d^*$, $k\geq 1$, $u \sim_{k+1}^{bin} v$ implies
$u \sim_{k}^{bin} v$.

\begin{example} The words $0101110$ and
$1001101$ are 2-binomially equivalent, since for both words we
have coefficients:  $ \left(\begin{array}{c}u\\0\end{array}\right)
= 3$,  $ \left(\begin{array}{c}u\\1\end{array}\right) = 4$,  $
\left(\begin{array}{c}u\\00\end{array}\right)  = 3$,  $
\left(\begin{array}{c}u\\01\end{array}\right) = 7$,  $
\left(\begin{array}{c}u\\10\end{array}\right) = 5$,  $
\left(\begin{array}{c}u\\11\end{array}\right) = 6$. On the other
hand, they are not 3-binomially equivalent:  As an example, we
have  $ \left(\begin{array}{c} 0101110 \\ 001 \end{array}\right)
= 3$ but $ \left(\begin{array}{c} 1001101 \\ 001
\end{array}\right) = 5$. Also, this example shows that the $k$-binomial equivalence is different from $k$-abelian equivalence:
these two words are clearly not $2$-abelian equivalent.
\end{example}

The following proposition gives the growth order of the number of
$m$-binomial classes for binary words:

\begin{proposition} \cite{DBLP:journals/tcs/RigoS15,1930-5346_2021032} Let $k \geq 2$. We have


$$\Sigma_2^n/\sim_k^{bin} \in O (n^{(k-1)2^{k-1}+1} ). $$ 

In particular, for $k=2$,

$$\Sigma_2^n/\sim_2^{bin} = n^3 + 5n + 6.$$ \end{proposition}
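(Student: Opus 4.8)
The plan is to bound the number of classes by the number of distinct tuples of word–binomial coefficients, and then to show that this whole tuple is governed by a small, explicit sub-tuple. Since $x\sim_k^{bin}y$ holds exactly when $\binom{x}{v}=\binom{y}{v}$ for every $v$ with $|v|\le k$, the cardinality of $\Sigma_2^n/\!\sim_k^{bin}$ equals the number of distinct tuples $T(x)=\big(\binom{x}{v}\big)_{1\le |v|\le k}$ as $x$ ranges over $\Sigma_2^n$. Each entry with $|v|=j$ lies in $\{0,1,\dots,\binom{n}{j}\}$ and hence is $O(n^{j})$; the real task is to identify which entries are genuinely free.

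The main tool would be the classical shuffle identity $\binom{x}{u}\binom{x}{v}=\sum_{w}c^{w}_{u,v}\binom{x}{w}$, where $c^{w}_{u,v}$ counts the interleavings of $u$ and $v$ equal to $w$; it follows by induction from the Pascal-type recurrence recalled above. Using it I would prove the key lemma: for every binary $x$ of length $n$, the tuple $T(x)$ is determined by $\binom{x}{0}=|x|_0$ together with the coefficients $\binom{x}{v}$ for those $v$ with $2\le |v|\le k$ that begin with $0$ and end with $1$; call this determining set $S$.

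The proof of the lemma is the crux, and I would run it as a double induction. The primary induction is on $j=|w|$. The words $0^{j},1^{j}$ are harmless, since $\binom{x}{0^{j}}=\binom{|x|_0}{j}$ and $\binom{x}{1^{j}}=\binom{n-|x|_0}{j}$ are fixed by $\binom{x}{0}\in S$. For the remaining length-$j$ words I would use two peeling moves. To strip a leading block of $1$'s, write $w=1^{a}m$ with $m$ beginning in $0$ (so $|m|=j-a<j$): in $\binom{x}{1^{a}}\binom{x}{m}$ the unique interleaving whose leading $1$-block has length $a$ is $w$ itself, so $\binom{x}{w}=\binom{x}{1^{a}}\binom{x}{m}-\sum\binom{x}{u}$ with each $u$ of length $j$ but with a strictly shorter leading $1$-block. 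Symmetrically, to strip a trailing block of $0$'s one writes $w=w''0^{b}$ with $w''$ ending in $1$ and uses $\binom{x}{w''}\binom{x}{0^{b}}$. Performing the trailing-$0$ peeling first (secondary induction on the trailing-$0$-block length) reduces every word beginning in $0$ to a word of $S$ or to $0^{j}$, and performing the leading-$1$ peeling next (secondary induction on the leading-$1$-block length) reduces every word beginning in $1$ to words beginning in $0$. Each move lowers either the length or the relevant block length, so the recursion terminates; the delicate point — and the one I expect to require the most care — is the bookkeeping that guarantees every coefficient appearing on a right-hand side has already been determined, so that the two secondary inductions do not interfere.

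Granting the lemma, the count is immediate. There are $2^{j-2}$ binary words of length $j$ beginning with $0$ and ending with $1$, so
$$\sum_{v\in S}|v|=1+\sum_{j=2}^{k}j\,2^{j-2}=1+(k-1)2^{k-1},$$
using the identity $\sum_{j=2}^{k}j\,2^{j-2}=(k-1)2^{k-1}$. As $|S|$ depends only on $k$, the number of classes is at most $\prod_{v\in S}\big(\binom{n}{|v|}+1\big)=O\big(n^{(k-1)2^{k-1}+1}\big)$, the stated bound. (This $S$ is not optimal — a basis of Lyndon words in the shuffle algebra lowers the exponent — but it already yields the claim.) For $k=2$ one gets the exact value: the class of $x$ is determined by the pair $(p,r)$ with $p=|x|_0$ and $r=\binom{x}{01}$, since $\binom{x}{1}=n-p$, $\binom{x}{00}=\binom{p}{2}$, $\binom{x}{11}=\binom{n-p}{2}$ and $\binom{x}{10}=p(n-p)-r$; conversely the class determines $(p,r)$, so the number of classes equals the number of attainable pairs. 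For fixed $p$ the statistic $r$ takes every integer value in $[0,p(n-p)]$ — swapping an adjacent $10$ into $01$ changes $r$ by exactly $1$ and connects $1^{\,n-p}0^{\,p}$ (value $0$) to $0^{\,p}1^{\,n-p}$ (value $p(n-p)$) — giving $p(n-p)+1$ values. Hence the number of classes is $\sum_{p=0}^{n}\bigl(p(n-p)+1\bigr)=\tfrac{1}{6}\bigl(n^{3}+5n+6\bigr)$ (the displayed statement is missing the normalizing factor $\tfrac{1}{6}$, as the value for $n=1$ is $2$).
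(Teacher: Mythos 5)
The survey states this proposition without proof, quoting it from the cited papers of Rigo--Salimov and Chrisnata et al., so there is no in-paper argument to compare against; judged on its own, your proposal is essentially correct, and it is in fact the natural reconstruction of how the cited bound arises: the exponent $(k-1)2^{k-1}+1$ is exactly the total length of your determining set $S$ (the letter count $\binom{x}{0}$ plus the coefficients indexed by words that begin with $0$ and end with $1$), and the $k=2$ formula comes from the parametrization by $\bigl(|x|_0,\binom{x}{01}\bigr)$ together with the adjacent-swap connectivity argument, both of which you carry out correctly. You are also right that the displayed formula in the survey is missing the factor $1/6$: the number of classes is $(n^3+5n+6)/6$, as your count $\sum_{p=0}^{n}\bigl(p(n-p)+1\bigr)$ shows and as the check $n=1$ (two classes, not twelve) confirms.

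One technical correction is needed in the key lemma. The identity you invoke is not the shuffle identity but the \emph{infiltration} identity: with $c^{w}_{u,v}$ counting pure interleavings, the formula $\binom{x}{u}\binom{x}{v}=\sum_{w}c^{w}_{u,v}\binom{x}{w}$ is false --- for instance $\binom{x}{0}^{2}=2\binom{x}{00}+\binom{x}{0}$, not $2\binom{x}{00}$. The correct expansion runs over the infiltration product of $u$ and $v$, in which the two occurrences may share positions, so it produces terms $\binom{x}{w}$ of every length from $\max(|u|,|v|)$ up to $|u|+|v|$. Consequently your peeling formulas acquire additional correction terms $\binom{x}{u}$ with $|u|<j$, contrary to your claim that all correction terms have length exactly $j$. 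This does not damage the proof: those shorter terms are determined by the outer induction on length, while the length-$j$ terms in the expansion are precisely the pure shuffles, for which your uniqueness claim is valid --- since $m$ begins with $0$, the only shuffle of $1^{a}$ and $m$ whose leading $1$-block has length $a$ is $1^{a}m$, and it occurs with coefficient exactly $1$ (symmetrically for trailing $0$-blocks). With that one-line repair the double induction closes in the order you describe (words $0^j,1^j$; then words beginning with $0$ by induction on the trailing $0$-block, anchored at $S$; then words beginning with $1$ by induction on the leading $1$-block, anchored at the previous case), and the counting step $\prod_{v\in S}\bigl(\binom{n}{|v|}+1\bigr)=O\bigl(n^{(k-1)2^{k-1}+1}\bigr)$ is immediate.
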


This bound can be extended to non-binary alphabets:
\begin{proposition} \cite{DBLP:journals/ijac/LejeuneRR20}
 Let $k\geq 1$. We have
$$ \Sigma_m^n/\sim_k^{bin} \in O (n^{k^2}m^{k}).$$
\end{proposition}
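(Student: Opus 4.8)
The plan is to bound directly the number of distinct tuples of binomial coefficients that can occur, since by definition the $k$-binomial class of a word $w$ is completely determined by the finite list of values $\binom{w}{v}$ as $v$ ranges over all words over $\Sigma_m$ with $|v|\le k$. Thus $\bigl|\Sigma_m^n/\!\sim_k^{bin}\bigr|$ is at most the number of distinct such lists realized by the words $w\in\Sigma_m^n$, and it suffices to bound the range of each coordinate and multiply. For the range, note that $\binom{w}{v}$ counts the occurrences of $v$ as a scattered subword of $w$, and each such occurrence is a choice of $|v|$ positions among the $n$ positions of $w$; hence $0\le \binom{w}{v}\le \binom{n}{|v|}$, so this coordinate takes at most $\binom{n}{|v|}+1$ values.

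First I would group the words $v$ by length: there are exactly $m^{j}$ words of length $j$, and for $j=0$ the only coordinate $\binom{w}{\varepsilon}=1$ is constant. Multiplying the per-coordinate bounds gives
\[
\bigl|\Sigma_m^n/\!\sim_k^{bin}\bigr|\ \le\ \prod_{j=1}^{k}\Bigl(\tbinom{n}{j}+1\Bigr)^{m^{j}}.
\]
It then remains to read off the growth order in $n$. Using $\binom{n}{j}+1\le 2n^{j}$ for all $n\ge 1$, the right-hand side is at most $c(m,k)\,n^{E}$, where $c(m,k)$ depends only on $m$ and $k$ and $E=\sum_{j=1}^{k} j\,m^{j}$. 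Finally, bounding each summand by $j\,m^{j}\le k\,m^{k}$ and adding the $k$ of them yields $E\le k^{2}m^{k}$, so $\bigl|\Sigma_m^n/\!\sim_k^{bin}\bigr|=O(n^{k^{2}m^{k}})$, which is the asserted bound. At this crude level everything is elementary bookkeeping, so there is no real obstacle.

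The genuine difficulty — not needed for the stated bound, but explaining why the sharper binary estimate $O(n^{(k-1)2^{k-1}+1})$ above is harder — lies in the strong dependencies among the binomial coefficients, which the crude argument ignores. One should not treat the coordinates $\binom{w}{v}$ as independent: it is classical that $\binom{w}{u}\binom{w}{v}$ is a fixed integer combination of the $\binom{w}{z}$ with $|z|=|u|+|v|$ (the coefficients being those of the shuffle product), so $v\mapsto\binom{w}{v}$ extends to an algebra homomorphism on the shuffle algebra. By Radford's theorem the shuffle algebra is a polynomial algebra freely generated by the Lyndon words, whence the entire class of $w$ is already determined by the values $\binom{w}{\ell}$ for Lyndon words $\ell$ with $|\ell|\le k$. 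Exploiting this replaces the exponent $\sum_{j\le k} j\,m^{j}$ by $\sum_{j\le k} j\,L_m(j)$, where $L_m(j)$ is the number of Lyndon words of length $j$ over $\Sigma_m$; carrying out this refinement, and controlling the remaining relations among these coordinates precisely enough to match the lower bounds, is the hard part of obtaining sharp estimates such as the binary one.
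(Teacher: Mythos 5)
Your proof is correct, and there is nothing in the survey to compare it against: the proposition is stated as a citation to Lejeune--Rigo--Rosenfeld, and no proof is reproduced in the paper. The count you perform --- at most $\sum_{j=1}^{k}m^{j}\le km^{k}$ nonempty words $v$ with $|v|\le k$, each coordinate $\binom{w}{v}$ confined to $\{0,1,\dots,\binom{n}{|v|}\}$ --- is the standard argument, and it is visibly the computation that produces the exponent $k^{2}m^{k}$ ($km^{k}$ coordinates, each ranging over $O(n^{k})$ values), so it is essentially the argument behind the cited bound. One thing you should state explicitly rather than pass over: as typeset in the survey, the bound reads $O(n^{k^{2}}\,m^{k})$, which for fixed $m$ means $O(n^{k^{2}})$, and in that form the proposition is false --- already for $k=1$ and $m=3$ the number of ($1$-binomial, i.e.\ abelian) classes is $\binom{n+2}{2}=\Theta(n^{2})$, not $O(n)$. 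What you actually prove, $O(n^{k^{2}m^{k}})$, is the only reading under which the statement holds (the survey's formula has misplaced braces), and your intermediate estimate is in fact sharper, namely $O(n^{E})$ with $E=\sum_{j=1}^{k}jm^{j}\le k^{2}m^{k}$. Your concluding remarks are also sound: since $v\mapsto\binom{w}{v}$ extends to a homomorphism for the shuffle product, Radford's theorem implies that the class of $w$ is already determined by the coefficients $\binom{w}{\ell}$ over Lyndon words $\ell$ of length at most $k$, which would lower the exponent to roughly $\sum_{j\le k}m^{j}$; that refinement, not the crude count, is what is needed for sharp estimates such as the binary bound $O(n^{(k-1)2^{k-1}+1})$ quoted earlier in the survey.
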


\begin{definition}The $k$-\emph{binomial complexity} $b_x^{(k)}$ of an infinite word $x$ over $\Sigma_d$ maps an integer
$n$ to the number of $k$-binomial equivalence classes of factors
of length $n$ occurring in $x$:
\[b_x^{(k)}(n) =|(\Fact(x)  \cap  \Sigma_d^n)/\sim_k^{bin}|.\]
\end{definition}

Note that $b_x^{(1)}$ corresponds to the usual abelian complexity
$a_x$. We have the following relations: for all $k\geq 1$, $b^{(k)}_x (n) \leq
b^{(k+1)}_x (n)$ and $a_x (n) \leq b^{(k)}_x (n) \leq p_x(n)$.

\begin{theorem}\label{thm:bin_st}\cite{DBLP:journals/tcs/RigoS15}
Let $k \geq 2$. If $x$ is a Sturmian word, then $b^{(k)}_x (n) = n
+ 1$ for all $n \geq 0$. \end{theorem}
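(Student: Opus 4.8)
The plan is to sandwich $b_x^{(k)}(n)$ between the abelian and the factor complexity and then show that already at $k=2$ the binomial equivalence refines the factor set of a Sturmian word trivially. First I would record the upper bound: from the relations $a_x(n)\le b_x^{(k)}(n)\le p_x(n)$ stated just before the theorem, together with the fact that a Sturmian word has factor complexity $p_x(n)=n+1$, we get $b_x^{(k)}(n)\le n+1$ for every $k$ and every $n$. Since $\sim_{k+1}^{bin}$ refines $\sim_k^{bin}$, the sequence $b_x^{(k)}(n)$ is nondecreasing in $k$; hence it suffices to prove the matching lower bound for $k=2$, i.e.\ that any two distinct length-$n$ factors are already $2$-binomially inequivalent. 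This squeezes $b_x^{(k)}(n)=n+1$ for all $k\ge 2$.

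Next I would reduce $2$-binomial equivalence of binary words to a single statistic. For a binary word $u$ with $|u|_0=p$ and $|u|_1=q$, the subwords of length at most $2$ satisfy $\binom{u}{0}=p$, $\binom{u}{1}=q$, $\binom{u}{00}=\binom{p}{2}$, $\binom{u}{11}=\binom{q}{2}$, and $\binom{u}{01}+\binom{u}{10}=pq$, so all of them depend only on $\PV(u)$ except for $\binom{u}{01}$ (which then fixes $\binom{u}{10}$). Consequently two binary words of equal length are $2$-binomially equivalent if and only if they have the same Parikh vector and the same value of $\binom{\cdot}{01}$. Thus the lower bound reduces to showing that $\binom{\cdot}{01}$ is injective on each abelian class of length-$n$ factors of $x$; since for $n\ge 1$ the $n+1$ factors split into exactly two abelian classes (light and heavy), injectivity on each class yields $n+1$ distinct $2$-binomial classes.

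The heart of the argument, and the step I expect to be the main obstacle, is to control how $\binom{\cdot}{01}$ varies within one abelian class. Here I would use the rotation model: fixing a Sturmian word of slope $\alpha$ (the factor set, hence $b_x^{(k)}$, depends only on $\alpha$), the length-$n$ factor read at phase $t$ has $i$-th letter equal to the indicator of $\{t+i\alpha\}\in[1-\alpha,1)$. As $t$ sweeps the circle $\mathbb{R}/\mathbb{Z}$, the factor changes exactly when $t$ crosses one of the $n+1$ distinct points $\{-i\alpha\}$ with $0\le i\le n$, and these points cut the circle into the $n+1$ atoms corresponding to the $n+1$ factors. The key local computation I would carry out is that crossing $\{-i\alpha\}$ with $1\le i\le n-1$ simultaneously flips letter $i-1$ from $0$ to $1$ and letter $i$ from $1$ to $0$ (because $\{-i\alpha\}=\{(1-\alpha)-(i-1)\alpha\}$), i.e.\ it turns a factor $\cdots 01\cdots$ into $\cdots 10\cdots$ at positions $(i-1,i)$; this preserves the Parikh vector and decreases $\binom{\cdot}{01}$ by exactly $1$. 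The two remaining crossings, at $0=\{-0\alpha\}$ and $\{-n\alpha\}$, are the only ones changing the number of $1$'s (they flip letter $0$ or letter $n-1$ singly), so they are precisely the two points separating the light arc from the heavy arc.

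Putting this together, each abelian class occupies a single arc of the circle bounded by the two Parikh-changing points, and along that arc every internal crossing strictly decreases $\binom{\cdot}{01}$ by one; hence the factors of a fixed class receive pairwise distinct (indeed consecutive) values of $\binom{\cdot}{01}$, so $\binom{\cdot}{01}$ is injective on each class. This gives $b_x^{(2)}(n)=n+1$, and combined with the squeeze of the first paragraph, $b_x^{(k)}(n)=n+1$ for all $k\ge 2$ and all $n\ge 0$ (the case $n=0$ being trivial). The technical care will lie in justifying that only the two neighbouring letters flip at each internal crossing: since $\alpha$ is irrational, at $t=\{-i\alpha\}$ one has $\{(j-i)\alpha\}\in\{0,\,1-\alpha\}$ for $0\le j\le n-1$ only when $j\in\{i-1,i\}$, which is what makes the crossing an adjacent $01\leftrightarrow 10$ swap rather than a more complicated rearrangement.
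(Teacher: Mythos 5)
The survey does not actually prove this theorem: it is stated with a citation to Rigo and Salimov only, so there is no in-paper argument to compare yours against, and I have judged your proposal on its own merits. It is correct. Your skeleton --- the squeeze $a_x(n)\le b_x^{(k)}(n)\le p_x(n)=n+1$, monotonicity in $k$ reducing everything to the case $k=2$, and the observation that for equal-length binary words $2$-binomial equivalence amounts to equality of the Parikh vector together with $\binom{u}{01}$ (since $\binom{u}{00}=\binom{|u|_0}{2}$, $\binom{u}{11}=\binom{|u|_1}{2}$ and $\binom{u}{01}+\binom{u}{10}=|u|_0\,|u|_1$) --- is exactly the right reduction, and your geometric injectivity step checks out: by irrationality the $n+1$ points $\{-i\alpha\}$, $0\le i\le n$, are distinct; at $t=\{-i\alpha\}$ the only letters that can flip are $i-1$ and $i$; for $1\le i\le n-1$ the crossing is an adjacent $01\to 10$ swap, which preserves the Parikh vector and decreases $\binom{\cdot}{01}$ by exactly $1$; and the crossings at $0$ and $\{-n\alpha\}$ are the only weight-changing ones, so each abelian class occupies a single arc of the circle along which $\binom{\cdot}{01}$ is strictly monotone, hence injective. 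This rotation-model sweep is a transparent way to obtain the crucial fact (distinct equal-length Sturmian factors are never $2$-binomially equivalent), and it even yields slightly more than injectivity, namely that $\binom{\cdot}{01}$ takes consecutive integer values on each abelian class. Two details deserve explicit mention in a written-up version: first, that every length-$n$ factor of $s_\alpha$ is realized on some arc (density of the orbit $\{\rho+m\alpha\}$), since your argument quantifies over arcs rather than over occurrences; second, that distinct factors necessarily live on distinct arcs because the word read off an arc is constant --- both are standard but they carry the bijection on which the count rests.
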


\begin{remark} If $x$ is a right-infinite word such that $b^{(1)}_x (n)
= 2$ for all $n \geq 0$, then $x$ is clearly balanced. If
$b^{(2)}_x (n) = n + 1$, for all $n \geq 0$, then the factor
complexity function $p_x$ is unbounded and $x$ is aperiodic. As a
consequence of Theorem \ref{thm:bin_st}, an infinite word $x$ is
Sturmian if and only if, for all $n \geq 0$ and all $k \geq 2$,
$b^{(1)}_x (n) = 2$ and $b^{(k)}_x (n) = n + 1$.\end{remark}

A similar result holds for the Tribonacci word, which belongs to the family of Arnoux--Rauzy words (see the Preliminaries section for the formal definition of Arnoux--Rauzy words).
The factor complexity of every ternary Arnoux--Rauzy word is equal to $2n+1$, and for the Tribonacci word it turns out to be equal to its $k$-binomial complexity:
\begin{theorem} \cite{DBLP:journals/aam/LejeuneRR20}
Let $k \geq 2$ and $\tr$ be the Tribonacci word. Then $b^{(k)}_{\tr} (n) = 2n
+ 1$ for all $n \geq 0$.
\end{theorem}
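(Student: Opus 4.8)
The plan is to prove the equality by a sandwich argument. The excerpt records the chain $a_x(n)\le b_x^{(k)}(n)\le p_x(n)$ together with the monotonicity $b_x^{(k)}(n)\le b_x^{(k+1)}(n)$, and states that the factor complexity of a ternary Arnoux--Rauzy word, hence of $\tr$, is $p_{\tr}(n)=2n+1$. Thus for every $k\ge 2$ we already have
\[
b_{\tr}^{(2)}(n)\le b_{\tr}^{(k)}(n)\le p_{\tr}(n)=2n+1 .
\]
So it suffices to prove the single lower bound $b_{\tr}^{(2)}(n)\ge 2n+1$; the squeeze then yields $b_{\tr}^{(k)}(n)=2n+1$ for all $k\ge2$ at once. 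Since $\tr$ has exactly $2n+1$ distinct factors of length $n$, this lower bound is equivalent to the assertion that \emph{no two distinct factors of $\tr$ of the same length are $2$-binomially equivalent.}

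First I would reduce the data that a $2$-binomial class carries. Fix distinct factors $u\neq v$ with $|u|=|v|=n$ and look for a word $t$ with $|t|\le 2$ and $\binom{u}{t}\neq\binom{v}{t}$. If $\PV(u)\neq\PV(v)$ a single letter separates them, so assume $u\sim_{ab}v$. Over $\Sigma_3$ the length-$2$ coefficients are constrained: $\binom{u}{aa}=\binom{|u|_a}{2}$ is fixed by $\PV(u)$, and $\binom{u}{ab}+\binom{u}{ba}=|u|_a|u|_b$ for $a\neq b$. Hence, among abelian-equivalent words, the entire $2$-binomial type is captured by the triple $\Phi(u)=\bigl(\binom{u}{01},\binom{u}{02},\binom{u}{12}\bigr)$, and the goal becomes: $\Phi$ is injective on each abelian class of factors of length $n$.

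The engine for injectivity is the fixed-point structure under the Tribonacci morphism $\tau\colon 0\mapsto 01,\,1\mapsto 02,\,2\mapsto 0$, whose incidence matrix $M$ is unimodular. The crucial tool is a transfer formula obtained by splitting occurrences into within-block and cross-block contributions: for $a\neq b$,
\[
\binom{\tau(w)}{ab}=\sum_{c}|w|_c\binom{\tau(c)}{ab}+\sum_{c,d}M_{a,c}\,M_{b,d}\,\binom{w}{cd}.
\]
For abelian-equivalent arguments the first, Parikh-dependent, sum cancels upon taking differences, so $\Phi(\tau(u'))-\Phi(\tau(v'))$ is obtained from the difference vector $\bigl(\binom{u'}{cd}-\binom{v'}{cd}\bigr)_{c,d}$ by the linear map $M\otimes M$. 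Because $M$ is invertible, $M\otimes M$ is injective on $\R^9$; since $\tau(u')\sim_{ab}\tau(v')$ forces the diagonal and antisymmetric parts of the image to vanish, a nonzero $\Phi(u')-\Phi(v')$ cannot map to $0$. Thus injectivity of $\Phi$ is \emph{preserved} under applying $\tau$. By recognizability of the primitive aperiodic Tribonacci morphism, every long enough factor desubstitutes as $u=p\,\tau(u')\,s$ with $u'$ a shorter factor and $p,s$ of bounded length, which sets up an induction on $n$ with a finite, directly verified base case: if the desubstituted cores $u',v'$ differ they are separated by induction and the separation is propagated through the transfer formula, while if $u'=v'$ the discrepancy lies entirely in the boundary pieces.

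The main obstacle is precisely this boundary bookkeeping. The factors are not literal $\tau$-images but $p\,\tau(u')\,s$, and the short words $p,s$ add cross terms to each $\binom{u}{ab}$ that depend on $\PV(\tau(u'))$, on $|\tau(u')|_a$, and on how the factor is aligned inside its ancestor. One must verify that when $u'\sim_{ab}v'$ these extra contributions either coincide or, when they differ, already separate $u$ and $v$, and that the only factors failing to desubstitute uniquely---the unique left-special and right-special factors of each length, which for an Arnoux--Rauzy word are extendable in all three directions---are absorbed into the base case. Controlling these finitely many boundary configurations uniformly in $n$, and confirming that unimodularity of $M$ transports the injectivity of $\Phi$ cleanly through every reduction step, is the crux of the argument.
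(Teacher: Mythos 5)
The survey gives no proof of this theorem; it only cites Lejeune--Rigo--Rosenfeld and remarks that the proof is ``surprisingly involved,'' so your proposal has to be measured against that paper. Your reduction is the correct one and matches theirs: by the squeeze $a_x(n)\le b^{(k)}_x(n)\le p_x(n)$, monotonicity in $k$, and $p_{\tr}(n)=2n+1$, everything follows once one shows that no two distinct factors of $\tr$ of the same length are $2$-binomially equivalent, and your transfer formula for $\binom{\tau(w)}{ab}$ with the $M\otimes M$ action on difference vectors is exactly the computational engine used there. But there is a genuine gap at the point you yourself flag as ``the crux.'' The inductive step as you state it is not valid: from $u=p\,\tau(u')\,s$ and $v=p'\,\tau(v')\,s'$, knowing by induction that $u'$ and $v'$ are $2$-binomially distinguishable does not let you ``propagate the separation,'' because the boundary words $p,s,p',s'$ contribute terms to each difference $\binom{u}{ab}-\binom{v}{ab}$ that can exactly cancel the discrepancy $(M\otimes M)(\Delta')$ coming from the cores. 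Non-equivalence pushes forward only for literal $\tau$-images; in the actual inductive situation the core discrepancy and the boundary contributions are entangled, and nothing in your outline rules out cancellation.

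This failure is precisely why the published proof is organized around ``templates'' rather than a direct induction on factors. Lejeune, Rigo and Rosenfeld encode, for a hypothetical pair of distinct $2$-binomially equivalent factors, the whole vector of differences of Parikh vectors and of binomial-coefficient matrices, boundary contributions included; desubstitution sends a realizable template to one of finitely many ``parent'' templates (via $M^{-1}$ and $M^{-1}\otimes M^{-1}$), they prove that all ancestors of the trivial template realizable by long factors lie in an explicitly bounded finite set (using the eigenvalues of $M$), and they finish with a finite, computer-assisted check that none of these templates is realized by an actual pair of factors. Your outline has the right skeleton and the right linear algebra, but without this template-and-finiteness mechanism (or a substitute for it) the boundary bookkeeping is not a routine matter of ``finitely many configurations controlled uniformly in $n$''---it is the entire content of the theorem, consistent with the survey's remark that the proof is surprisingly involved and completely different from the Sturmian case.
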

The proof is surprisingly involved and is completely different from the proof for Sturmian words.

In contrast with Sturmian words and the Tribonacci word, which have the same binomial
 and factor complexity, certain morphic words (in
particular, the Thue--Morse word) have a bounded $k$-binomial
complexity.

\begin{definition} Let $\varphi$ be a substitution. If $\varphi(a)
\sim_{ab} \varphi(b)$ for all $a, b \in \Sigma_d$, then $\varphi$ is
said to be Parikh-constant. In particular, a
Parikh-constant substitution is $m$-uniform for some $m$, i.e., for
all $a \in \Sigma_d$, $|\varphi(a)| = m$. \end{definition}

\begin{theorem}\cite{DBLP:journals/tcs/RigoS15} Let $x$ be an infinite word that is a fixed point of a
Parikh-constant substitution. Let $k \geq 2$. There exists a constant
$C > 0$ (depending on $x$ and $k$) such that the $k$-binomial
complexity of $x$ satisfies $b^{(k)}_x (n) \leq C$ for all $n \geq
0$. \end{theorem}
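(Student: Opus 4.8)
The plan is to prove the statement by induction on $k$, combining a \emph{desubstitution} that reduces length-$n$ factors to images $\varphi(u)$ of much shorter factors, with a decomposition formula for $\binom{\varphi(u)}{v}$ which, thanks to Parikh-constancy, splits off a ``top-order'' term depending only on $|u|$ from lower-order terms that involve only $(k-1)$-binomial data of $u$.

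First I would set up the reduction. Since $\varphi$ is Parikh-constant it is $m$-uniform for some $m$, so the fixed point carries a block decomposition $x=\varphi(x_0)\varphi(x_1)\cdots$ into blocks of length $m$. Any factor $z$ of length $n$ can then be written as $z=s\,\varphi(u)\,t$, where $s$ is a (possibly empty) suffix of an image, $t$ a (possibly empty) prefix of an image, and $u\in\Fact(x)$ with $|u|\le n/m+1$; the pair $(s,t)$ ranges over a finite set whose size depends only on $\varphi$. Applying the concatenation rule $\binom{yy'}{v}=\sum_{v=v'v''}\binom{y}{v'}\binom{y'}{v''}$ twice, for every $v$ with $|v|\le k$ the value $\binom{z}{v}$ is a function of $s$, $t$, and the numbers $\binom{\varphi(u)}{v'}$ with $|v'|\le k$. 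Thus the $k$-binomial class of $z$ is determined by $(s,t)$ together with the $k$-binomial class of $\varphi(u)$, and $b^{(k)}_x(n)$ is bounded by a constant times the number of $k$-binomial classes of words $\varphi(u)$ with $u\in\Fact(x)$ and $|u|\le n/m+1$.

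Next I would prove the key lemma. Distributing an occurrence of $v$ in $\varphi(u)=\varphi(u_1)\cdots\varphi(u_{|u|})$ among the blocks gives
\begin{equation*}
\binom{\varphi(u)}{v}=\sum_{\substack{v=p_1\cdots p_r\\ p_j\neq\varepsilon}}\ \sum_{w=a_1\cdots a_r}\Bigl(\prod_{j=1}^{r}\binom{\varphi(a_j)}{p_j}\Bigr)\binom{u}{w}.
\end{equation*}
For the unique splitting into $r=|v|$ single letters $p_j=v_j$, Parikh-constancy yields $\binom{\varphi(a_j)}{v_j}=|\varphi(a_j)|_{v_j}=c_{v_j}$ independently of $a_j$; since $\sum_{|w|=|v|}\binom{u}{w}=\binom{|u|}{|v|}$, this term equals $\bigl(\prod_j c_{v_j}\bigr)\binom{|u|}{|v|}$, a quantity depending only on $|u|$. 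Every other splitting has a piece of length $\ge 2$, hence $r\le|v|-1\le k-1$, and so contributes a fixed linear combination of the coefficients $\binom{u}{w}$ with $|w|\le k-1$. Consequently the $k$-binomial class of $\varphi(u)$ is determined by $|u|$ together with the $(k-1)$-binomial class of $u$.

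Finally I would run the induction on $k$. For the base case $k=1$ the $1$-binomial complexity is the abelian complexity, and as every image has the same Parikh vector $\mathbf{p}$ one has $\PV(\varphi(u))=|u|\,\mathbf{p}$, so $\PV(z)$ depends only on $(s,t,|u|)$ and takes boundedly many values; hence $b^{(1)}_x$ is bounded. For the inductive step, assume $b^{(k-1)}_x(n')\le C_{k-1}$ for all $n'$. The admissible $u$ have $|u|$ in a set of at most two values for fixed $n$, and by the lemma the $k$-binomial class of $\varphi(u)$ is a function of $|u|$ and the $(k-1)$-binomial class of $u$, of which there are at most $C_{k-1}$; combining this with the desubstitution bounds $b^{(k)}_x(n)$ by $(\#(s,t))\cdot 2\cdot C_{k-1}=:C_k$, uniformly in $n$. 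The main obstacle is the key lemma: deriving the block-splitting formula correctly and verifying that Parikh-constancy is exactly what collapses the all-single-letter term to a function of $|u|$ alone while every remaining term drops to order $k-1$. Once that is in place, the desubstitution and the induction are routine bookkeeping.
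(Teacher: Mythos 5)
Your proof is correct, and it takes essentially the same route as the source of this statement: the survey states the theorem without proof, citing Rigo and Salimov, and their original argument rests on exactly your key lemma, namely that the block-splitting formula for $\binom{\varphi(u)}{v}$, with the all-singleton term collapsing to $\bigl(\prod_j c_{v_j}\bigr)\binom{|u|}{|v|}$ by Parikh-constancy, shows that a Parikh-constant morphism sends $(k-1)$-binomially equivalent words to $k$-binomially equivalent words. The surrounding desubstitution $z=s\,\varphi(u)\,t$ (with finitely many pairs $(s,t)$ and at most two values of $|u|$ per length $n$) and the induction on $k$ down to the bounded abelian complexity at $k=1$ match their derivation of boundedness, so nothing essential is missing from your proposal.
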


This result has recently been extended to Parikh-collinear substitutions, i.e., substitutions such that the images of all letters have collinear Parikh vectors
\cite{DBLP:conf/dlt/RigoSW22}. Equivalently, Parikh-collinear morphisms can be defined as morphisms which map all infinite
words to words with bounded abelian complexity~\cite{cassaigne2011avoiding}.

Similarly to other modifications of abelian equivalence, we can
define a $k$-binomial square (resp., cube or $l$-power) as a
concatenation of two (resp., three or $l$) $k$-binomial equivalent
words. The natural questions concern avoidability and minimal
sizes of the alphabets which allow one to avoid certain powers.

\begin{theorem}\cite{DBLP:journals/tcs/RaoRS15}
2-binomial squares (resp. cubes) are avoidable over a 3-letter
(resp. 2-letter) alphabet. The sizes of the alphabets are optimal.
\end{theorem}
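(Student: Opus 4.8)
The plan is to split the statement into the two easy lower bounds (optimality of the alphabet sizes) and the two substantive upper bounds (the constructions). The lower bounds require almost no work. Any ordinary square $xx$ is a $2$-binomial square, since $x\sim_2^{bin}x$ trivially; as every binary word of length at least $4$ contains an ordinary square, $2$-binomial squares are unavoidable over $\Sigma_2$, so $3$ letters are necessary. For cubes, over a unary alphabet every word of length at least $3$ contains $a\cdot a\cdot a$, which is a $2$-binomial cube, so $2$ letters are necessary. Hence everything reduces to producing an infinite $2$-binomial-square-free word over $\Sigma_3$ and an infinite $2$-binomial-cube-free word over $\Sigma_2$.

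For the upper bounds I would exhibit explicit morphisms (one on three letters for squares, one on two letters for cubes) and take their fixed points. To prove freeness I would first make the defining condition finitary: $u\sim_2^{bin}v$ is equivalent to $\PV(u)=\PV(v)$ together with the finitely many equalities $\binom{u}{ab}=\binom{v}{ab}$ over ordered letter-pairs $ab$. In the binary case this simplifies sharply: since $\binom{u}{00}$, $\binom{u}{11}$ and $\binom{u}{01}+\binom{u}{10}$ are all determined by $\PV(u)$, the only extra requirement beyond abelian equivalence is $\binom{u}{01}=\binom{v}{01}$; and writing $p_1<\cdots<p_s$ for the positions of the $1$'s one has $\binom{u}{01}=\bigl(\sum_j p_j\bigr)-\binom{s+1}{2}$, so two binary words of equal length are $2$-binomially equivalent iff they have the same number of $1$'s and the same sum of positions of their $1$'s. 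This clean invariant is what I would track in the binary construction.

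The core of each freeness proof is a synchronization-plus-descent argument. I would show the chosen morphism is synchronizing, so that any sufficiently long factor of the fixed point admits a unique decomposition into letter images. Assuming, toward a contradiction, that a $2$-binomial power of minimal length occurs and is longer than the synchronization threshold, I would align its blocks with image boundaries and pull it back to a strictly shorter candidate power in the preimage, contradicting minimality; the remaining short cases are settled by a finite computation, which in practice is exactly the template/Diophantine verification available for such morphisms via the general decidability framework for abelian- and binomial-power-freeness referenced earlier in this section.

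The hard part, and the reason binomial avoidance is genuinely harder than abelian avoidance, is that the binomial statistics are \emph{not} additive under concatenation: one has $\binom{uv}{ab}=\binom{u}{ab}+\binom{v}{ab}+|u|_a|v|_b$, with a quadratic cross term. When pulling a power back through the morphism these cross terms do not transform as cleanly as Parikh vectors do, so the main obstacle is a transport lemma showing that equality of the image invariants, together with the block alignment and the equal Parikh vectors of the halves, forces equality of the corresponding preimage invariants. Controlling these cross-term contributions across blocks of equal Parikh vector is where the real effort lies; once that lemma is in place, the descent and the finite base check complete the proof, and the lower bounds above show the alphabet sizes $3$ and $2$ cannot be lowered.
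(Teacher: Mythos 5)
Your optimality arguments are correct and complete: an ordinary square $xx$ is trivially a $2$-binomial square and every binary word of length $4$ contains one, so three letters are needed for squares; and $aaa$ over a unary alphabet settles the cube case. Your structural observations are also sound: $u\sim_2^{bin}v$ reduces to finitely many pair-coefficient equalities, in the binary case to ``same number of $1$'s and same sum of positions of the $1$'s,'' and the concatenation identity $\binom{uv}{ab}=\binom{u}{ab}+\binom{v}{ab}+|u|_a|v|_b$ is exactly the source of difficulty you say it is.

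The gap is that everything substantive is deferred. You never exhibit the morphisms whose fixed points are claimed to avoid $2$-binomial squares and cubes, and the ``transport lemma'' that you yourself identify as the crux --- controlling the quadratic cross terms when a candidate power is pulled back through the morphism --- is neither stated precisely nor proved, so the upper bounds (the actual content of the theorem) are not established. This part cannot be waved off as routine: since $2$-binomial equivalence refines abelian equivalence, every $2$-binomial square is an abelian square, and abelian squares are \emph{unavoidable} over three letters (the survey notes every ternary word of length $8$ contains one); hence any valid construction necessarily contains abelian squares, and the whole proof consists in showing that none of them passes the $2$-binomial test --- a property highly sensitive to the specific morphism. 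For comparison, the survey itself gives no proof either: it cites Rao, Rigo and Salimov and records their constructions in the remark that follows the theorem, namely the fixed point $x=012021\cdots$ of $g\colon 0\mapsto 012,\ 1\mapsto 02,\ 2\mapsto 1$ for squares and the fixed point $y=001001011\cdots$ of $h\colon 0\mapsto 001,\ 1\mapsto 011$ for cubes. Your intended strategy (fixed point of a morphism, finitary invariants, descent on a minimal counterexample plus a finite check) does match the spirit of the original proof, so what is missing is not the idea but its execution on these (or other) concrete morphisms.
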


\begin{remark} An example of an infinite word avoiding
$2$-binomial squares (resp., cubes) is given by the fixed point $x =
012021012102012021020121\cdots$ (resp., $y =
001001011001001011001011011 \cdots$) of the substitution $g$ (resp.,
$h$):

$$g : \begin{cases}
 0 \mapsto 012, \\ 1 \mapsto 02, \\ 2 \mapsto 1; \end{cases} \qquad h : \begin{cases}
0 \mapsto 001, \\ 1 \mapsto 011.
\end{cases}$$

\end{remark}






\subsection{Additive powers}

In this subsection we assume our finite alphabet is a subset
of $\mathbb{N}$. An \emph{additive $k$-power} is a finite
nonempty word of the form $x_1x_2 \cdots x_k$ where $|x_1| = \cdots
= |x_k|$ and $\sum x_1 = \sum x_2 = \cdots = \sum x_k$, where by
$\sum x_i$ we mean the sum of the elements appearing in the word
$x_i$. It is worth mentioning that a modification of this definition without the condition on equal lengths is less interesting, since additive $k$-powers
 are unavoidable if the words do not have to have the same length~\cite{HH00}. Since two words of the same length over $\{0, 1\}$ have the
same sum if and only if they are permutations one of each other,
Dekking's result on avoiding abelian $4$-powers in binary words (see~Theorem~\ref{avoid_global})
shows that it is possible to avoid additive $4$-powers.

\begin{theorem}{\cite{DBLP:journals/jacm/CassaigneCSS14}} The fixed point $$x= 031430110343430310110110314303434303434303143011031011011031011 \cdots$$ of the substitution
$ 0 \to 03, 1 \to 43, 3 \to 1, 4 \to 01$ avoids additive
cubes.\end{theorem}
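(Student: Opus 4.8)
The plan is to translate the additive-cube condition into a statement about prefix sums and then run an infinite-descent argument driven by the self-similar structure of $x$, reducing any hypothetical additive cube of large period to one of bounded period that a finite computation rules out. Write $\tau$ for the substitution $0\mapsto 03,\ 1\mapsto 43,\ 3\mapsto 1,\ 4\mapsto 01$ over the alphabet $A=\{0,1,3,4\}\subset\mathbb N$, so that $x=\lim_{n}\tau^n(0)$. For a factor $u=u_1\cdots u_n$ let $s(u)=\sum_j u_j$ denote its digit sum, and let $S(i)=\sum_{j<i}x_j$ be the prefix sums of $x$. Then $x_i\cdots x_{i+3n-1}$ is an additive cube of period $n$ exactly when $S(i+n)-S(i)=S(i+2n)-S(i+n)=S(i+3n)-S(i+2n)$, i.e.\ when the four equally spaced values $S(i),S(i+n),S(i+2n),S(i+3n)$ form an arithmetic progression. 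So the theorem is equivalent to the assertion that $S$ never takes four values in arithmetic progression at four equally spaced indices.

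The structural input comes from the morphism itself. The letter graph of $\tau$ has edges $0\to 0,\,0\to 3,\,1\to 4,\,1\to 3,\,3\to 1,\,4\to 0,\,4\to 1$; it is strongly connected and carries a loop at $0$, so $\tau$ is primitive. Hence $x$ is linearly recurrent (as recalled in the Preliminaries) and, by Moss\'e's recognizability theorem, $\tau$ has bounded synchronization delay: there is a constant $\kappa$ such that any factor $w$ of $x$ with $|w|\ge \kappa$ admits a unique factorization $w=p\,\tau(w')\,q$ with $w'$ a factor of $x$, with $|p|,|q|\le\kappa$, and with the cut after $p$ landing on a genuine $\tau$-image boundary. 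This is what makes desubstitution legitimate.

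Now for the descent. Assuming toward a contradiction that $x$ contains an additive cube, choose one $w=u_1u_2u_3$ of minimal period $n$, with $n$ exceeding the explicit bound produced below. Using bounded synchronization I would locate the $\tau$-cut points across $w$ and desubstitute each block: up to bounded-length boundary words each $u_j$ is the image $\tau(u_j')$ of a block $u_j'$, and $u_1'u_2'u_3'$ are consecutive in $x$. Since image lengths lie in $\{1,2\}$, the $|u_j'|$ agree up to a bounded additive error, so after realignment one can aim for equal-length preimages and a strictly shorter candidate cube. The main obstacle lies entirely in the \emph{sums}: $\tau$ is neither uniform nor Parikh-constant, with image sums $s(\tau(0))=3$, $s(\tau(1))=7$, $s(\tau(3))=1$, $s(\tau(4))=1$, so $s(\tau(u))$ depends on the whole Parikh vector $\PV(u)$ rather than on $|u|$ or $s(u)$. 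Consequently $s(u_1)=s(u_2)=s(u_3)$ does not pass directly to the preimages; one gets only that the discrepancies $s(u_2')-s(u_1')$ and $s(u_3')-s(u_2')$ are controlled by (i) the $\kappa$-bounded boundary letters at the six interfaces and (ii) the differences of the $\PV(u_j')$. The hard part will be proving that these contributions cannot cancel to yield a genuine shorter additive cube unless $n$ was already bounded. I would handle this by a case analysis over the finitely many admissible boundary words, tracking the induced linear relations among the $s(u_j')$ and $\PV(u_j')$; the rigidity needed is supplied precisely by the numerics $(3,7,1,1)$ and by the fact that $3$ and $4$ share image-sum $1$ yet have different image lengths. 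The output is a finite list of ``admissible templates'' describing how an additive cube can survive one desubstitution step, from which one reads off that iterated desubstitution strictly decreases the period until it drops below an explicit bound $L$.

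It then remains to verify, by a finite computation, that no factor of $x$ of period at most $L$ is an additive cube; by uniform recurrence this reduces to inspecting a single prefix of $x$ of length a fixed multiple of $L$, since all short factors already occur there. Combined with the descent, this yields the contradiction. I would add that the entire argument can be repackaged in the template formalism of Currie and Rampersad for morphic fixed points, defining additive templates that carry the two sum-difference parameters and showing that the set of templates realizable in $x$ is stable under a computable desubstitution operator while containing none that corresponds to a true additive cube; the termination (finiteness) of that procedure is again the crux, because a priori the sum-differences are unbounded integers and one must prove that only boundedly many arise along a minimal counterexample.
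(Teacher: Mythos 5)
The survey itself contains no proof of this theorem: it is quoted verbatim from Cassaigne--Currie--Schaeffer--Shallit (J.~ACM 2014), so your proposal has to be measured against that proof. Your architecture is indeed the right one, and the same as theirs: reformulate via sums, desubstitute using recognizability of the (primitive, aperiodic) morphism, work with templates carrying bounded boundary words and difference vectors, and finish with a finite verification. But there is a genuine gap, and you name it yourself without closing it: the finiteness of the family of templates that can arise along the descent, equivalently the boundedness of the difference vectors $d_i=\PV(u_{i+1})-\PV(u_i)$ under iterated desubstitution. This is not a detail one can delegate to ``a case analysis over the finitely many admissible boundary words'': the additive-cube condition only constrains $d_i$ to lie in the rank-two lattice $\{d:\langle\mathbf{1},d\rangle=\langle\sigma,d\rangle=0\}$ with $\sigma=(0,1,3,4)$, which is infinite, and the boundary corrections do nothing to tame it. The mechanism that actually bounds the templates in the original proof is spectral: the incidence matrix $M$ of this morphism (characteristic polynomial $x^4-x^3-2x^2+2x-1$) has two real eigenvalues of modulus greater than $1$ (about $1.69$ and $-1.50$) and a complex conjugate pair of modulus less than $1$ (about $0.63$). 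Desubstitution sends $d_i$ to $M^{-1}(d_i-e_i)$ with $e_i$ bounded, so along the dominant two-dimensional subspace $M^{-1}$ contracts and the accumulated error is a bounded geometric series, while along the contracting eigenspace the projections of Parikh vectors of \emph{all} factors of $x$ are uniformly bounded (by the standard Dumont--Thomas-type decomposition of prefixes into images $M^k(\text{bounded})$). Only the conjunction of these two facts makes the set of reachable templates finite; the morphism was constructed precisely so that its matrix has this eigenvalue split, not because of the digit sums $(3,7,1,1)$ you invoke as the source of ``rigidity''. Without this input your procedure has no termination proof, and the theorem does not follow.

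A secondary flaw is the induction scheme itself. You begin with ``choose an additive cube of minimal period $n$'' and aim to produce ``a strictly shorter candidate cube'', but the desubstitute of an additive cube is not an additive cube --- it is only an instance of a template with nonzero (and a priori unbounded) length- and sum-discrepancies, exactly because $(|\varphi(u)|,s(\varphi(u)))$ is not a function of $(|u|,s(u))$. So minimality of the period of a \emph{cube} gives no contradiction; the descent must be formulated from the start as: every sufficiently long instance in $x$ of a template $t$ forces an instance in $x$ of one of the finitely many parents of $t$, the ancestor-closure of the cube template is finite (by the spectral argument above), and no template in that closure has a short realization (finite check). Once recast this way, your ``minimal period'' plays no role, and the entire weight of the proof rests on the finiteness statement that your proposal leaves open.
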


Moreover, additive cubes can be avoided for any alphabet which is not equivalent to $\{0,1,2,3\}$ in the following sense (the remaining case of $\{0,1,2,3\}$  is open so far):

\begin{theorem} [\cite{DBLP:conf/dlt/LietardR20, Lietard}] For any set $\Sigma \subseteq \mathbb{N}$ of size $4$ such that $\Sigma$ cannot be obtained by applying the same affine function to all the elements of $\{0, 1, 2, 3\}$, there is an infinite word over $\Sigma$ avoiding additive sums. \end{theorem}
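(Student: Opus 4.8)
The plan is to reduce the infinite family of target alphabets to a manageable parametrized problem by exploiting the invariance of additive-cube-freeness under affine maps. First I would record the key observation: if $f(x)=\lambda x+\mu$ with $\lambda\neq 0$ is applied letterwise, then a length-$n$ block $x$ has $\sum f(x)=\lambda\sum x+n\mu$, so three consecutive equal-length blocks have equal sums before applying $f$ if and only if they have equal sums afterwards. Consequently a word $w$ over $\Sigma$ avoids additive cubes exactly when $f(w)$ over $f(\Sigma)$ does, and I am free to replace $\Sigma$ by any affine image of it.

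Next, normalization. Translating so that the least element is $0$ and rescaling so that the next is $1$, every $4$-element set is affinely equivalent to one of the form $\{0,1,q,r\}$ with rationals $1<q<r$. The excluded affine class of $\{0,1,2,3\}$ is precisely the four-term arithmetic progression, i.e.\ the single point $(q,r)=(2,3)$; every admissible alphabet corresponds to a point $(q,r)$ off this point (and off its affine shadows). This turns the theorem into the assertion that for each non-arithmetic parameter $(q,r)$ there is an infinite word over $\{0,1,q,r\}$ avoiding additive cubes, with the case $\{0,1,3,4\}$ already furnished by the preceding theorem.

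The heart of the argument is to separate the combinatorial skeleton of a candidate word from the numerical alphabet. I would fix candidate infinite words as fixed points of morphisms generalizing $0\mapsto 03,\ 1\mapsto 43,\ 3\mapsto 1,\ 4\mapsto 01$, working over an abstract four-symbol alphabet. For any factor written as a putative additive cube $x_1x_2x_3$, equality of the three sums is the pair of linear conditions $\sum_i\delta_i=0$ and $\sum_i\delta_i a_i=0$ in the alphabet values $a=(0,1,q,r)$, where $\delta$ is the difference of Parikh vectors; in the parameter plane this is a single line $\delta_3 q+\delta_4 r+\delta_2=0$. Thus a fixed candidate word avoids additive cubes over a concrete alphabet precisely when $(q,r)$ avoids all the lines contributed by its factors.

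The main obstacle is twofold: a single infinite word has infinitely many factors and hence a priori infinitely many forbidden lines, and one must cover the whole two-parameter family of non-arithmetic alphabets. For the first point I would use the template (backtracking) method already used for $\{0,1,3,4\}$: morphic self-similarity lets one reduce the verification that no additive cube occurs to checking a finite set of bounded \emph{templates} from which any additive cube would have to descend, so that for each fixed candidate morphism the forbidden set is in fact a finite union of lines. For the second point I would assemble a finite family of candidate morphisms and show that the intersection of their forbidden loci is contained in the arithmetic-progression point $(2,3)$ together with its affine images; then every non-arithmetic alphabet escapes at least one candidate, which yields the desired cube-free word. Producing this finite covering family and certifying the finitely many templates is exactly where the delicate (computer-assisted) case analysis of Lietard and Rosenfeld lies, and I expect it to be the crux of the proof.
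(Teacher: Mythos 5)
The survey itself contains no proof of this statement: it is quoted verbatim from Lietard and Rosenfeld (DLT 2020) and Lietard's thesis, so the only meaningful comparison is with the cited proof. Measured against that, your plan reconstructs its architecture faithfully: additive-cube-freeness is invariant under letterwise affine maps (your computation $\sum f(x)=\lambda\sum x+n\mu$ is the right one, and it is exactly why equal lengths matter); every admissible alphabet normalizes to $\{0,1,q,r\}$ with the excluded affine class collapsing to the single parameter point $(q,r)=(2,3)$; one then works with abstract morphic words whose putative additive cubes impose linear conditions on $(q,r)$, and combines a template-based finiteness argument with a finite covering family of morphisms, the case $\{0,1,3,4\}$ being the Cassaigne--Currie--Schaeffer--Shallit theorem stated just before. (You also silently repaired the survey's typo: ``avoiding additive sums'' should read ``avoiding additive cubes''.)

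However, as a proof your text has a genuine gap, and it is larger than the one you flag. The sentence claiming that ``for each fixed candidate morphism the forbidden set is in fact a finite union of lines'' is the entire theorem, and it does not follow from invoking the template method as a black box. The ancestor-finiteness bounds in that machinery come from spectral estimates on the abelianization matrix of the \emph{specific} morphism, restricted to the subspace cut out by the length and sum conditions; since the sum condition involves the unknown $(q,r)$, one must make these bounds uniform in the parameters (for instance by tracking exact Parikh-vector differences, which are parameter-free, rather than their images under the length--sum valuation), and arranging this is precisely the technical contribution of the cited papers. Two smaller but real imprecisions: a putative cube $x_1x_2x_3$ contributes \emph{two} difference vectors (blocks $1$--$2$ and $2$--$3$), so its locus is generically the intersection of two lines, not ``a single line''; and if the abstract word contains an abelian cube, the corresponding $\delta$ vanishes and the forbidden locus is the whole plane, so every candidate morphism must first be proved abelian-cube-free. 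Finally, no covering family is exhibited, and verifying that the bad loci of finitely many explicit morphisms meet only at $(2,3)$ is exactly the computer-assisted content you defer. So: right strategy, same as the cited argument, but what you have written is a plan with the crux outsourced, not a proof.
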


However, the size of the alphabet to avoid additive cubes considered in the previous theorems is not
optimal. Since an abelian cube is necessarily an additive cube,
and we know it is impossible to avoid abelian cubes over an
alphabet of size 2, the alphabet size cannot be 2. The minimal
size of the alphabet to avoid additive cubes has recently been
shown to be $3$ by M. Rao \cite{DBLP:journals/tcs/Rao15}. The question on
whether it is possible to avoid additive squares remains open.
However, it is possible to avoid additive squares over $\mathbb{Z}^2$ (with componentwise addition defined on vectors):

\begin{theorem}\cite{DBLP:journals/siamdm/RaoR18} The fixed point
$h_{add}^{\infty}\left(\begin{array}{c}0\\0\end{array}\right)$ of
the following substitution does not contain any additive square.

$$h_{add} : \begin{cases} \left(\begin{array}{c}0\\0\end{array}\right)\to
\left(\begin{array}{c}0\\0\end{array}\right)\left(\begin{array}{c}2\\1\end{array}\right)\left(\begin{array}{c}2\\0\end{array}\right)
\qquad \left(\begin{array}{c}1\\1\end{array}\right)\to
\left(\begin{array}{c}0\\0\end{array}\right)\left(\begin{array}{c}0\\1\end{array}\right)\left(\begin{array}{c}1\\0\end{array}\right)\\
\left(\begin{array}{c}2\\1\end{array}\right)\to
\left(\begin{array}{c}1\\1\end{array}\right)\left(\begin{array}{c}0\\1\end{array}\right)\left(\begin{array}{c}1\\0\end{array}\right)
\qquad \left(\begin{array}{c}0\\1\end{array}\right)\to
\left(\begin{array}{c}1\\1\end{array}\right)\left(\begin{array}{c}0\\1\end{array}\right)\left(\begin{array}{c}2\\1\end{array}\right)\\
\left(\begin{array}{c}2\\0\end{array}\right)\to
\left(\begin{array}{c}0\\0\end{array}\right)\left(\begin{array}{c}1\\0\end{array}\right)\left(\begin{array}{c}2\\0\end{array}\right)
\qquad \left(\begin{array}{c}1\\0\end{array}\right)\to
\left(\begin{array}{c}1\\1\end{array}\right)\left(\begin{array}{c}2\\1\end{array}\right)\left(\begin{array}{c}2\\0\end{array}\right)\\\end{cases}$$
\end{theorem}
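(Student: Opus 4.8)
The plan is to prove additive-square-freeness of $\mathbf w:=h_{add}^{\infty}\!\left(\begin{smallmatrix}0\\0\end{smallmatrix}\right)$ by the template (ancestor) method of Cassaigne, Currie, Schaeffer and Shallit, adapted from additive cubes to additive squares over $\Z^2$. This is essentially the only available technique: although $\mathbf w$ is $3$-automatic, being the fixed point of a $3$-uniform morphism, additive-square-freeness is not an obviously decidable property of it, since it constrains the \emph{unbounded} running sums of letters rather than a bounded window, so tools such as Walnut do not apply directly. Throughout, regard the six letters of the alphabet $A$ as vectors in $\Z^2$ and, for a finite factor $U$, write $S(U)\in\Z^2$ for the sum of its letters; an additive square is a factor $W=W_1W_2$ with $|W_1|=|W_2|$ and sum-discrepancy $S(W_1)-S(W_2)=\vec 0$. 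The core idea is that $3$-uniformity lets one \emph{desubstitute} any occurrence of such a square into a strictly shorter approximate square, and that only finitely many shapes of approximate square can ever arise, all of which are refuted by a bounded computation.

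First I would introduce templates. A template $t=(p,q,r,\vec d)$ records the (boundedly many) boundary letters $p,q,r\in A\cup\{\varepsilon\}$ sitting astride the three cut points — start of $W_1$, the junction, end of $W_2$ — together with an admissible discrepancy $\vec d\in\Z^2$; a factor of $\mathbf w$ \emph{realizes} $t$ if, after deleting the prescribed boundary letters, it splits into two equal-length halves whose sums differ by exactly $\vec d$. A genuine additive square realizes the base template $(\varepsilon,\varepsilon,\varepsilon,\vec 0)$. Using $\mathbf w=h_{add}(\mathbf w)$ and the fact that every image has length $3$, any realization $W$ of $t$ of length $>L_0$ sits across a predictable pattern of image blocks; writing each half as (image suffix)$\cdot h_{add}(X_i)\cdot$(image prefix) exhibits a parent factor $V=X_1X_2$ of length roughly $|W|/3$ and expresses $\vec d$ through the incidence matrix $\mathbf M$ of $h_{add}$ plus a bounded boundary correction. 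This yields a computable finite set $\mathrm{Parent}(t)$ of parent templates, at least one of which is realized by $V$.

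The heart of the argument, and the step I expect to be the main obstacle, is to prove that the set of templates reachable from the base template by iterating $\mathrm{Parent}$ is finite — equivalently, that the discrepancies $\vec d$ arising along every descending chain stay in a fixed bounded region of $\Z^2$. Writing $\vec d=V\delta$, where $\delta\in\Z^6$ is the Parikh-difference over the six letters of the two halves and $V$ is the $2\times 6$ matrix of letter-values, the parent step replaces $\delta$ by a preimage under $\mathbf M$ up to a bounded boundary term. Crucially there is \emph{no} reduction to a clean two-dimensional recursion on $\vec d$, because no fixed $2\times 2$ matrix $M_0$ satisfies $V\mathbf M=M_0V$ (the letter sums of the images are not an affine function of the letters). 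Boundedness of the reachable discrepancies therefore has to use both the expansion of $\mathbf M$ and the fact that all intermediate factors genuinely occur in $\mathbf w$; in practice it is certified by an explicit terminating search that computes the finite ancestor set outright, rather than by a closed-form estimate.

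Finally, every template in this finite ancestor set has minimal realizations of length at most a computable $L_0$. Since $h_{add}$ is primitive (readily verified, as the second iterate of $\left(\begin{smallmatrix}0\\0\end{smallmatrix}\right)$ already contains all six letters), $\mathbf w$ is linearly recurrent, so all its factors of length $\le L_0$ occur in a prefix of bounded length; I would enumerate that prefix and verify directly that no factor realizes a template whose discrepancy forces $\vec d=\vec 0$ — in particular that $\mathbf w$ has no additive square of length $\le L_0$. Were $\mathbf w$ to contain any additive square, the base template $(\varepsilon,\varepsilon,\varepsilon,\vec 0)$ would be realized; descending through parents strictly shortens the realization while remaining inside the finite ancestor set, so after finitely many steps it would produce a realization of length $\le L_0$ forcing a genuine additive square, contradicting the prefix check. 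Hence $\mathbf w$ is additive-square-free.
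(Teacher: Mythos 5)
Your overall architecture --- templates carrying boundary letters and a discrepancy, desubstitution through the $3$-uniform morphism, a finite ancestor set, then a terminal check on short factors via linear recurrence --- is indeed the architecture of the proof in the cited paper of Rao and Rosenfeld (the survey itself gives no proof, so that is the right comparison target), and you are right that no $2\times2$ reduction to the value space exists. The genuine gap is in the step you yourself call the heart of the argument, and it is not just an omitted computation: the mechanism you propose for it provably cannot work for this morphism. You describe the parent step as replacing $\delta\in\Z^6$ by ``a preimage under $M$ up to a bounded boundary term'' and want finiteness of the ancestor set to come from ``the expansion of $M$''. But the incidence matrix $M$ of $h_{add}$ is \emph{singular}: the Parikh vectors of the images satisfy $\PV(h_{add}(0,0))+\PV(h_{add}(2,1))=\PV(h_{add}(2,0))+\PV(h_{add}(0,1))=\PV(h_{add}(1,1))+\PV(h_{add}(1,0))=(1,1,1,1,1,1)$, so $\ker M$ is two-dimensional, and a direct computation of the characteristic polynomial gives the spectrum $\{3,\sqrt3,-\sqrt3,0,0,0\}$. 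Consequently preimages under $M$ are empty or are infinite cosets of $\ker M$, and no argument of the form ``all eigenvalues of $M$ have modulus greater than $1$, hence $M^{-1}$ contracts and ancestors stay bounded'' is available; note also the non-Perron eigenvalues $\pm\sqrt3$ of modulus greater than $1$. A bounded-balance substitute is equally unavailable: a word avoiding additive squares avoids abelian squares, hence by the Richomme--Saari--Zamboni result recalled in this survey it has unbounded abelian complexity, i.e., it is not $C$-balanced for any $C$, so Parikh differences of same-length factors are unbounded. Even the natural repair --- quotienting by $\ker M$, which is legitimate since $\ker M$ is killed by the length functional and by both sum functionals --- fails one level up: the vector $w$ assigning coefficients $2,-2,1,-1$ to the letters $(0,0),(2,0),(1,1),(0,1)$ and $0$ to the others satisfies $M^2w=0$ yet has total value $(-3,0)\neq(0,0)$, so the generalized $0$-eigenspace is \emph{not} invisible to the sum functionals. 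Handling these degenerate and expanding directions is exactly the delicate part of the Rao--Rosenfeld analysis, and none of it is captured by ``expansion of $M$''.

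Independently of this, your fallback --- finiteness ``certified by an explicit terminating search $\ldots$ rather than by a closed-form estimate'' --- is circular as a proof. A search certifies finiteness only if one already has an a priori region outside of which no ancestor can lie, so that the enumeration provably halts; producing that bound \emph{is} the missing lemma, and supplying it is precisely what the corresponding analyses of Cassaigne--Currie--Schaeffer--Shallit and of Rao--Rosenfeld accomplish. One smaller correction: the terminal check must verify that no short factor of the fixed point realizes \emph{any} template in the ancestor set, each with its own discrepancy vector, not only templates ``whose discrepancy forces $\vec d=\vec 0$''; as phrased, your check would miss exactly the realizations that produce the contradiction at the bottom of the descent.
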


\section{Miscellanea}\label{sec:misc}

\subsection{Abelian subshifts}\label{subsec:ab_subshifts}

In the subsection, we consider an abelian version of the symbolic dynamical notion of a subshift. 
The subsection is based on \cite{DBLP:journals/jcta/PuzyninaW22,DBLP:journals/corr/abs-2012-14701,DBLP:conf/dlt/KarhumakiPW18}.





Similarly to the notion of a subshift (see Preliminaries section), the abelian subshifts are defined as follows: For a subshift
$X\subseteq\Sigma^{\mathbb{N}}$ the \emph{abelian subshift} of
$X$ is defined as
\begin{equation*}
\mathcal A_{X}= \{ y\in \Sigma^{\mathbb{N}}\colon \forall u\in \Fact( y), \exists v\in \Fact(X) \mbox{ with }u\sim_{\text{ab}}v\}.
\end{equation*}
Taking as $X$ a subshift generated by an infinite word $x$, one has an abelian subshift $A_x$ generated by the infinite word $x$. Observe that for any $ x\in \Sigma^{\mathbb{N}}$ the abelian subshift
$\mathcal A_{ x}$ is indeed a subshift. 

\begin{example}[Thue--Morse word] \label{ex:TM}
Consider the abelian subshift of the Thue-Morse word $\tm$.
For odd lengths $\tm$ has two abelian factors, and for even
lengths three. Further, the number of occurrences of $1$ in each
factor differs by at most $1$ from half of its length. It is easy
to see that any factor of any word in $\{\varepsilon,0,1\}\cdot
\{01,10\}^{\mathbb{N}}$ has the same property, i.e., $\{\varepsilon,0,1\}\cdot
\{01,10\}^{\mathbb{N}}\subseteq \mathcal A_{\tm}$. In fact, equality
holds: $\mathcal A_{\tm}=\{\varepsilon,0,1\}\cdot \{01,10\}^{\mathbb{N}}.$
Indeed, let $ x\in \mathcal A_{\tm}$. Then $x$ has
blocks of each letter of length at most 2 (since there are no
factors $000$ and $111$). Moreover, between two consecutive
occurrences of $00$ there must occur $11$, and vice versa
(otherwise we have a factor $001010\cdots 0100$, where the number
of occurrences of $1$ differs by more than $1$ from half of its
length). Clearly, such word is in $\{\varepsilon,0,1\}\cdot
\{01,10\}^{\mathbb{N}}.$ So, for the Thue-Morse word, its subshift is huge
compared to $\Omega_{\tm}$: basically, it is a morphic image of
the full binary shift.
\end{example}


\subsubsection{On abelian subshifts of binary
words}\label{sec:binaryWords} 

The following theorem gives a characterization of Sturmian words among binary words,
in terms of abelian subshifts. We remark that purely
periodic balanced words are sometimes also called Sturmian (i.e., one can consider Sturmian words with rational slope), and we follow this terminology in this section.
\begin{theorem}\label{th:St}\cite{DBLP:conf/dlt/KarhumakiPW18}
Let $ x\in\{0,1\}^{\mathbb{N}}$ be a uniformly recurrent aperiodic word.
Then $\mathcal A_{ x}$ contains exactly one minimal subshift if
and only if $x$ is Sturmian.
\end{theorem}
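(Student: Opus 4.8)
The plan is to reduce membership in $\mathcal A_x$ to a clean ``window condition'' and then treat the two implications separately. For a binary word the Parikh vector of a factor $v$ is determined by its length and $|v|_1$, so I would first record the \emph{interval property}: sliding a length-$n$ window of $x$ by one position changes the number of $1$'s by at most one, hence (using recurrence) the set $\{|v|_1 : v\in\Fact(x)\cap\Sigma_2^{\,n}\}$ is exactly the integer interval $[f^-(n),f^+(n)]$, where $f^\pm(n)$ denote the maximum and minimum number of $1$'s among length-$n$ factors. Consequently $y\in\mathcal A_x$ if and only if every length-$n$ factor $v$ of $y$ satisfies $f^-(n)\le |v|_1\le f^+(n)$ for all $n$. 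Both directions of the theorem will be read off from this description, together with the fact (stated earlier) that a binary aperiodic word is balanced if and only if it is Sturmian.

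For the direction \emph{$x$ Sturmian $\Rightarrow$ unique minimal subshift}, I would note that a Sturmian word of slope $\alpha$ is balanced, so $f^+(n)-f^-(n)=1$ for every $n$. The window condition then forces every $y\in\mathcal A_x$ to be balanced with the same asymptotic frequency $\alpha$, so $\mathcal A_x$ is exactly the set of Sturmian words of slope $\alpha$, which coincides with the (minimal) Sturmian subshift $\Omega_x$. A minimal subshift is its own unique minimal subsystem, which settles this implication.

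For the converse I would argue the contrapositive: if $x$ is uniformly recurrent, aperiodic, and \emph{not} Sturmian, then it is not balanced, and I must exhibit a second minimal subshift in $\mathcal A_x$ besides $\Omega_x$ (which is minimal since $x$ is uniformly recurrent, and lies in $\mathcal A_x$). Since $f^+$ is subadditive and $f^-$ superadditive, Fekete's lemma gives $\overline\mu:=\lim f^+(n)/n=\inf_n f^+(n)/n$ and $\underline\mu:=\lim f^-(n)/n=\sup_n f^-(n)/n$; uniform recurrence (bounded gaps between occurrences of each letter) yields $0<\underline\mu\le\overline\mu<1$. Let $s$ be the characteristic balanced word of slope $\overline\mu$ (Sturmian if $\overline\mu$ is irrational, purely periodic if rational). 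Every length-$n$ window of $s$ has $1$-count in $\{\lfloor n\overline\mu\rfloor,\lceil n\overline\mu\rceil\}$, and the Fekete characterizations give the sandwich
\[
f^-(n)\le \underline\mu\, n\le \lfloor n\overline\mu\rfloor\le \lceil n\overline\mu\rceil\le f^+(n),
\]
the last inequality because $f^+(n)\ge\overline\mu\,n$ is an integer. Hence $s\in\mathcal A_x$, so its orbit closure $\Omega_s$ is a minimal subshift contained in $\mathcal A_x$. Every element of $\Omega_s$ is balanced, whereas $x\in\Omega_x$ is not, so $\Omega_s\neq\Omega_x$; thus $\mathcal A_x$ contains at least two minimal subshifts.

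I expect the main obstacle to be the heart of the hard direction, namely verifying that the extremal-slope mechanical word $s$ actually lies in $\mathcal A_x$. This rests on pinning down $\overline\mu$ as the \emph{infimum} $\inf_n f^+(n)/n$ (not merely a limit) and $\underline\mu$ as the corresponding supremum, which is exactly what makes the inequalities $\lceil n\overline\mu\rceil\le f^+(n)$ and $f^-(n)\le\lfloor n\overline\mu\rfloor$ hold \emph{for every} $n$ rather than only asymptotically; getting the boundary values $0<\overline\mu<1$ from uniform recurrence, so that $s$ is a genuine two-letter word and not $0^\omega$ or $1^\omega$, is the accompanying technical point. The remaining steps (the interval property, closedness and shift-invariance of $\mathcal A_x$, and minimality of the orbit closures of uniformly recurrent words) are standard and are supplied by results recalled earlier in the paper.
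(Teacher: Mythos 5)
The survey itself gives no proof of this theorem --- it is quoted from \cite{DBLP:conf/dlt/KarhumakiPW18} --- so the comparison can only be with the original source; your argument is essentially that proof: reduce membership in $\mathcal A_x$ to the window condition $f^-(n)\le |v|_1\le f^+(n)$, get the Sturmian direction from balance (abelian complexity $2$ forces $\mathcal A_x=\Omega_x$, the unique minimal subshift), and get the converse by planting a mechanical (balanced) word of extremal slope $\overline\mu=\inf_n f^+(n)/n$ inside $\mathcal A_x$, whose orbit closure is a second minimal subshift since all its elements are balanced while $x$ is not.

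One local slip should be repaired: the middle inequality in your displayed chain, $\underline\mu\, n\le \lfloor n\overline\mu\rfloor$, is false as written (take $\underline\mu=\overline\mu=1/2$ and $n$ odd). What you actually need is $f^-(n)\le\lfloor n\overline\mu\rfloor$, and this follows from the same integrality trick you invoke on the other side: Fekete gives $f^-(n)\le \underline\mu\, n$, and since $f^-(n)\in\mathbb{Z}$ this yields $f^-(n)\le\lfloor \underline\mu\, n\rfloor\le\lfloor \overline\mu\, n\rfloor$. With that correction the sandwich $f^-(n)\le\lfloor n\overline\mu\rfloor\le\lceil n\overline\mu\rceil\le f^+(n)$ holds for every $n$, and the rest of your argument (the interval property, $\mathcal A_x$ being a subshift, minimality of $\Omega_x$ and $\Omega_s$, and $\Omega_s\neq\Omega_x$ via balance) goes through as you describe.
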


An equivalent statement of the previous theorem is the following: 

\begin{theorem}\label{th:St1}Let $x$ be an aperiodic binary word. Then  $\mathcal A_{ x}=\Omega_x$ if and only if $x$ is Sturmian. \end{theorem}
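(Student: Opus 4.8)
The inclusion $\Omega_x\subseteq\mathcal A_x$ is immediate from the definitions: if $y\in\Omega_x$ then every $u\in\Fact(y)$ already lies in $\Fact(x)$, so one may take $v=u$ in the definition of $\mathcal A_x$. Hence the content of the statement lies in the reverse inclusion in the Sturmian case, together with the failure of equality otherwise. The plan is to prove the ``if'' direction directly from balance, and to deduce the ``only if'' direction from \cref{th:St} by exploiting minimality of $\Omega_x$; recall that $\Omega_x$ is minimal precisely when $x$ is uniformly recurrent, which is the standing hypothesis inherited from \cref{th:St} and which holds whenever $x$ is Sturmian.

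For the ``if'' direction, suppose $x$ is Sturmian and let $y\in\mathcal A_x$. Since $x$ is balanced in the sense of \cref{def:balance} and has abelian complexity $2$, for every $n\geq 1$ the factors of $x$ of length $n$ realize exactly two consecutive values of the number of $1$'s, say $c_n$ and $c_n+1$. Each factor $u$ of $y$ of length $n$ is abelian equivalent to a factor of $x$, and $\sim_{ab}$ preserves the number of $1$'s, so $|u|_1\in\{c_n,c_n+1\}$. Consequently any two factors of $y$ of the same length differ in their number of $1$'s by at most $1$, that is, $y$ is balanced. Its frequency of $1$'s equals $\lim_{n\to\infty}c_n/n=\alpha$, the irrational slope of $x$; since an ultimately periodic binary word has rational letter frequencies, $y$ must be aperiodic. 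As an aperiodic balanced binary word is Sturmian, $y$ is Sturmian of slope $\alpha$, and because two Sturmian words share their factor set exactly when they have the same slope we obtain $\Fact(y)=\Fact(x)$, whence $y\in\Omega_x$. This gives $\mathcal A_x\subseteq\Omega_x$, and with the trivial inclusion, equality.

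For the ``only if'' direction I argue through \cref{th:St}. Assume $\mathcal A_x=\Omega_x$. Since $x$ is uniformly recurrent, $\Omega_x$ is a minimal subshift; a minimal subshift contains no nonempty proper subshift, so the only minimal subshift contained in $\mathcal A_x=\Omega_x$ is $\Omega_x$ itself. Thus $\mathcal A_x$ contains exactly one minimal subshift, and \cref{th:St} yields that $x$ is Sturmian. Combining the two directions also shows that the conditions ``$\mathcal A_x=\Omega_x$'' and ``$\mathcal A_x$ contains exactly one minimal subshift'' are equivalent, confirming that \cref{th:St1} is a reformulation of \cref{th:St}.

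The main obstacle is the ``if'' direction: the delicate point is to upgrade the merely qualitative conclusion of \cref{th:St} (a single minimal subshift) to the exact set equality $\mathcal A_x=\Omega_x$. This is precisely what the balance argument supplies, by forcing every element of $\mathcal A_x$ to be not just uniformly recurrent but Sturmian \emph{of the same slope}, hence to have exactly the factors of $x$. The ``only if'' direction, by contrast, is a formal consequence of the minimality of $\Omega_x$ and \cref{th:St}.
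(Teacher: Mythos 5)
Your proof is essentially correct, and it follows what is in effect the paper's own route: the survey gives no self-contained argument for \cref{th:St1}, presenting it as a reformulation of \cref{th:St}, and your decomposition --- the trivial inclusion $\Omega_x\subseteq\mathcal A_x$, the balance argument for the ``if'' direction, and minimality of $\Omega_x$ plus \cref{th:St} for the ``only if'' direction --- is exactly the content such a reformulation requires, and is essentially how the cited sources argue. The ``if'' direction is complete as written: every $y\in\mathcal A_x$ inherits the two-valued Parikh spectrum of $x$, hence is balanced with irrational letter frequency equal to the slope $\alpha$ of $x$, hence aperiodic, hence Sturmian of slope $\alpha$ by Coven--Hedlund, hence has the same factor set as $x$ and lies in $\Omega_x$.

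One point in your ``only if'' direction deserves to be made explicit rather than parenthetical. You invoke uniform recurrence of $x$ (to know that $\Omega_x$ is minimal, and indeed to be allowed to apply \cref{th:St} at all), justifying it as a ``standing hypothesis inherited from \cref{th:St}''. As literally stated, \cref{th:St1} assumes only that $x$ is aperiodic and binary, so strictly speaking your argument does not cover non-uniformly-recurrent words. But this is a defect of the statement, not of your proof: without uniform recurrence the ``only if'' direction is false. The binary Champernowne word $c$ (which appears in Section 3 of the paper) satisfies $\Fact(c)=\{0,1\}^*$, so $\Omega_c$ is the full shift and trivially $\mathcal A_c=\Omega_c$, yet $c$ is aperiodic and far from Sturmian; moreover $c$ is recurrent, so recurrence alone cannot repair the statement. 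Thus the hypothesis you ``inherited'' is genuinely necessary, your reading of \cref{th:St1} (as carrying over the uniform recurrence assumption of \cref{th:St}, in line with the original Karhum\"aki--Puzynina--Whiteland papers) is the only one under which the theorem is true, and your write-up would be stronger if it said so outright --- e.g., by exhibiting this counterexample --- instead of treating the extra hypothesis as silently understood.
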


However, none of the characterizations extends to non-binary
alphabets:
Let $ f = 010010100\dots$ be the Fibonacci word and let $\varphi:
0\mapsto 02, 1\mapsto 12$. Then for $ w= \varphi ( f)$ one has
$\mathcal A_{ w} = \Omega_{ w}$ (see
\autoref{thm:AbelianSSMinimalComplexity}).

The following theorem shows that abelian subshifts of non-Sturmian uniformly recurrent binary words cannot contain finitely many minimal subshifts:

\begin{theorem}\cite{DBLP:journals/jcta/PuzyninaW22}
Let $x$ be a binary uniformly recurrent word which is not
aperiodic or periodic Sturmian. Then $\mathcal A_{ x}$ contains
infinitely many minimal subshifts.
\end{theorem}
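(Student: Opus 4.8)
We first reduce to the case that $x$ is aperiodic. Indeed, if $x=w^{\omega}$ with $w$ of length $p$, then the length-$p$ factors of $x$ are exactly the $p$ cyclic rotations of $w$, which all share a single Parikh vector; a sliding-window argument then forces every $y\in\mathcal{A}_x$ to have period $p$, so $\mathcal{A}_x$ is \emph{finite}. Hence the conclusion is only nonvacuous for aperiodic $x$, and we assume this from now on. Since $x$ is aperiodic and not Sturmian, the Coven--Hedlund characterization recalled above (aperiodic binary $x$ is Sturmian iff $a_x(n)=2$ for all $n\ge1$) gives some $n$ with $a_x(n)\ge3$; equivalently $x$ is not balanced. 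By the classical characterization of binary balance there is a word $w$ such that both $0w0$ and $1w1$ are factors of $x$. Writing $p=|w|+2$ and $r=|w|_1$, and letting $S_n=\{\,|u|_1 : u\in\Fact(x),\ |u|=n\,\}$, the counts of $x$ along a sliding window of length $p$ change by at most $1$ at each step, so $S_p$ is an interval of consecutive integers; as it contains $r$ and $r+2$ it contains $r+1$ as well. In particular the two Parikh-equal words $0w1$ and $1w0$ (both of length $p$ and $1$-count $r+1$) are Parikh-allowed, i.e.\ lie in an allowed class, and likewise $0w0,\,1w1$ are allowed.

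The plan is to embed inside $\mathcal{A}_x$ a subsystem carrying infinitely many distinct minimal subshifts, in the same spirit as the equality $\mathcal{A}_{\tm}=\{\varepsilon,0,1\}\cdot\{01,10\}^{\mathbb{N}}$ of \autoref{ex:TM}. The mechanism is the \emph{local swap} $0w1\leftrightarrow 1w0$: replacing one occurrence by the other changes the $1$-count of any window it meets by at most $1$, which the width-$\ge2$ slack of $S_p$ (and, by the interval property, of nearby lengths) can absorb. Concretely, I would fix a frequency-$\rho$ ``backbone'' compatible with $x$ (where $\rho$ is the $1$-frequency determined by $x$) and then install free binary choices of $0w1$ versus $1w0$ at positions prescribed by an auxiliary \emph{scheduling sequence} $\gamma$. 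Choosing the schedule to have density tending to $0$ keeps the asymptotic frequency equal to $\rho$, while choosing $\gamma$ itself uniformly recurrent makes the resulting word $Y_\gamma$ uniformly recurrent, hence its orbit closure a minimal subshift. Taking an infinite family of pairwise non-equivalent schedules $\gamma$ (for instance of strictly increasing minimal period, or of distinct Sturmian slopes rescaled to density $0$) yields minimal subshifts with pairwise distinct languages, hence infinitely many distinct minimal subshifts in $\mathcal{A}_x$. When $\rho$ is rational the construction simplifies: one produces directly infinitely many periodic points of $\mathcal{A}_x$ of growing period (as in the $\{01,10\}$-coded full shift for $\tm$), whose orbits are automatically distinct minimal subshifts.

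The main obstacle is the verification that every factor of every $Y_\gamma$ is Parikh-allowed, \emph{at all lengths simultaneously}, i.e.\ that its $1$-count always lies in $[\min S_n,\max S_n]$. This is a discrepancy-control problem: the sparse swaps perturb window counts only by $O(1)$, but one must show this perturbation never exceeds the available room $\max S_n-\min S_n$, while the backbone keeps the count pinned near $\rho n$. The delicate regime is irrational $\rho$, where $\mathcal{A}_x$ contains no periodic points and the allowed band is essentially Sturmian (width one) away from the defect positions, so the swaps must be spaced to match exactly the fluctuations that $x$ already exhibits; here one leans on uniform recurrence of $x$ to guarantee that the required local patterns $0w0$, $1w1$ and the intermediate class $r+1$ recur with bounded gaps, supplying slack at the scheduled lengths. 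I expect the argument to bifurcate according to whether $x$ is uniquely ergodic (constant frequency $\rho$, forcing all of $\mathcal{A}_x$ to frequency $\rho$ and demanding aperiodic minimal subshifts of varying combinatorial type) or not (in which case $\limsup \max S_n/n>\liminf\min S_n/n$ gives a nondegenerate interval of achievable frequencies and many minimal subshifts arise immediately). Establishing minimality and genuine distinctness of the $Y_\gamma$, and the uniform discrepancy bound in the uniquely ergodic irrational case, is where the real work lies.
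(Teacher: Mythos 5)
The survey states this theorem with a citation to Puzynina--Whiteland and contains no proof of it, so your attempt can only be judged on its own terms, and on those terms it is not a proof. Your preparatory observations are correct and are indeed the standard entry points: for purely periodic $x$ the sliding-window argument does force every $y\in\mathcal{A}_x$ to be $p$-periodic, hence $\mathcal{A}_x$ finite; for aperiodic non-Sturmian $x$ the Coven--Hedlund characterization does supply $w$ with $0w0,1w1\in\Fact(x)$; and each $S_n$ is an interval. One logical remark, though: the periodic case is not ``vacuous''. For $x=(0011)^\omega$ the stated hypothesis holds ($x$ is neither aperiodic Sturmian nor periodic balanced), yet your own argument shows $\mathcal{A}_x$ is exactly the union of the two periodic orbits of $(0011)^\omega$ and $(01)^\omega$ --- finitely many minimal subshifts. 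So what you have actually observed is that the statement must be read (as in the cited paper) with aperiodicity as part of the hypothesis; a counterexample to the literal wording is not the same thing as an empty case, and a proof should say so explicitly.

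The genuine gap is that everything from ``The plan is'' onward is a plan, not an argument: the three assertions that constitute the theorem --- (i) every factor of every $Y_\gamma$ has $1$-count inside $[\min S_n,\max S_n]$ \emph{for every} $n$, (ii) each $Y_\gamma$ is uniformly recurrent, (iii) the resulting minimal subshifts are pairwise distinct --- are all deferred (``I would fix\ldots'', ``I expect the argument to bifurcate\ldots'', ``where the real work lies''). Worse, two specific claims in the sketch are false, and they sit exactly at the crux. First, the parenthetical that the interval property gives width-$\geq 2$ slack ``of nearby lengths'': width $2$ at length $p$ forces only width $\geq 1$ at $p\pm 1$, and slack genuinely does not propagate. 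The Thue--Morse word $\tm$ has slack $2$ at every even length but width exactly $1$ at every odd length; this is precisely why $\mathcal{A}_{\tm}$ consists of $2$-block-aligned words rather than words with freely scheduled swaps, and it shows that any swap-based construction must be synchronized with the exact scales at which $x$ has slack --- the very problem your sketch leaves open. Second, requiring the schedule to have ``density tending to $0$'' while $Y_\gamma$ is uniformly recurrent is self-contradictory: in a uniformly recurrent word every factor, in particular one containing a swapped block $1w0$, occurs syndetically and hence with positive lower density. (The density condition is also pointless for its stated purpose, since $0w1\sim_{ab}1w0$ means swaps never change letter frequency.) So the construction cannot be carried out as described, and the necessary case analysis --- no uniform frequency, rational frequency, irrational frequency, each demanding a different construction --- is only gestured at rather than performed.
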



This theorem, however, does not extend to the non-binary case either (see the next subsection).

\subsubsection{On abelian subshifts of generalizations of Sturmian words to nonbinary alphabets}\label{sec:minimalComplexity}

A natural question is how  the characterization of $\Omega_x=\mathcal A_{ x}$ from Theorem
\ref{th:St1} extends to nonbinary alphabets. The problem is open so far: 

\begin{problem}
Characterize aperiodic non-binary words $x$ such that  $\Omega_x=\mathcal A_{ x}$.\end{problem}

In this section, we will see that the property $\Omega_x=\mathcal A_{ x}$ does not characterize natural generalizations of Sturmian words to nonbinary alphabets following different equivalent definitions of Sturmian words; in particular, words of minimal
complexity, balanced
 words and Arnoux--Rauzy words.

We start with aperiodic nonbinary words
of minimal complexity. Over an alphabet $\Sigma_d$, the minimal
factor complexity of an aperiodic word is known to be $n+d-1$. The structure of words of complexity
$n+C$ is related to the structure of Sturmian words and is well
understood (see~\cite{FM97,Didier99,KaboreTapsoba07}). 
We will make use of the following description of aperiodic ternary words of minimal complexity:



\begin{theorem}[\cite{FM97} as formulated in \cite{KaboreTapsoba07}]\label{thm:characterizationNp2Complexity}
A recurrent infinite word $x$ over $\Sigma_3$ has factor complexity $p_{x}(n)
= n+2$ for all $n\geq 1$ if and only if 
(up to permuting the letters)
$ x \in \Omega_{\varphi( s)}$, where $ s$ is a
Sturmian word over $\Sigma_2$ and $\varphi$ is defined
\begin{enumerate}[resume,topsep=0pt]
\item  either by \label{item:item2LowComplexity} $0\mapsto 02$, $1\mapsto 12$;
\item or by \label{item:item3LowComplexity} $0\mapsto 0$, $1\mapsto 12$.
\end{enumerate}
\end{theorem}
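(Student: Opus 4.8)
The plan is to prove the two directions separately, treating the ``if'' direction as an explicit complexity computation and the ``only if'' direction as a structural analysis of special factors.

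For the ``if'' direction, fix a Sturmian word $s$ over $\Sigma_2$, so $p_{s}(n)=n+1$ for all $n\geq 0$, and set $y=\varphi(s)$. In case~\ref{item:item2LowComplexity} ($0\mapsto 02$, $1\mapsto 12$) the letter $2$ occupies exactly the odd positions of $y$, so every length-$n$ factor is determined by its starting parity together with the underlying factor of $s$ read off the non-$2$ positions. Splitting the count according to whether a factor begins at an even or an odd position gives $p_{y}(2k)=2\,p_{s}(k)=2k+2$ and $p_{y}(2k+1)=p_{s}(k+1)+p_{s}(k)=2k+3$, i.e.\ $p_{y}(n)=n+2$ for $n\geq 1$. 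In case~\ref{item:item3LowComplexity} ($0\mapsto 0$, $1\mapsto 12$) the morphism merely inserts a $2$ after each $1$; here $1$ is always followed by $2$ and $2$ is always preceded by $1$, so the only freedom in extending a factor to the right sits at block boundaries and mirrors exactly the right extensions in $s$. Hence $y$ has a unique right special factor of each length, $p_{y}(n+1)-p_{y}(n)=1$, and since $p_{y}(1)=3$ one again gets $p_{y}(n)=n+2$. In both cases $s$ uniformly recurrent makes $y$ uniformly recurrent, and since the factor complexity depends only on the factor set, every word in $\Omega_{\varphi(s)}$ is recurrent with complexity $n+2$.

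For the ``only if'' direction, let $x$ be recurrent over $\Sigma_3$ with $p_{x}(n)=n+2$ for all $n\geq 1$. First I would extract the special-factor skeleton: from $p_{x}(n+1)-p_{x}(n)=1$ for $n\geq 1$ there is, for each length $n\geq 1$, exactly one right special factor and one left special factor, each of degree $2$, and these are nested as suffixes (resp.\ prefixes) of one another. At length $1$ this forces exactly one letter to be right special and one to be left special, while $p_{x}(1)=3$ says all three letters occur. The goal is to show that, up to permuting letters, one letter $c$ plays a distinguished role and the other two occur only in ``frozen'' contexts, so that $x$ is the image of a binary word $s$ under one of the two $\varphi$. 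Concretely, I would analyse the Rauzy graph on the three length-$1$ factors (three vertices, four edges, strongly connected) and its evolution through the bispecial factors, and argue that the only surviving possibilities are: (i) $c$ alternates with the two non-special letters, giving the pattern $c\,\{0,1\}\,c\,\{0,1\}\cdots$ of case~\ref{item:item2LowComplexity}; or (ii) two letters are glued into an unbreakable block ($1$ always followed by $2$ and $2$ always preceded by $1$), giving the insertion pattern of case~\ref{item:item3LowComplexity}.

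Once the frozen structure is established, the desubstitution is immediate: reading the non-$2$ positions in case (i), or contracting each block $12$ to a single symbol in case (ii), produces a binary word $s$ with $x=\varphi(s)$ up to a bounded phase shift. Finally I would run the ``if'' computation in reverse: the block decomposition yields a structural identity relating $p_{x}$ to $p_{s}$ (namely $p_{x}(2k)=2\,p_{s}(k)$ in case (i), and the analogous boundary count in case (ii)), so $p_{x}(n)=n+2$ forces $p_{s}(k)=k+1$; equivalently, the unique-right-special-factor correspondence passes from $x$ to $s$. Since $p_{x}$ is unbounded, $x$ is aperiodic, hence so is $s$, so $s$ is Sturmian and $x\in\Omega_{\varphi(s)}$. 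The main obstacle is the middle step: proving that the local degree pattern at short lengths, propagated through the Rauzy-graph evolution, leaves \emph{exactly} the two frozen configurations and no others, and that the resulting block decomposition tiles the whole word consistently. This is the genuinely combinatorial core (the content of the cited analysis of~\cite{FM97}); everything else is bookkeeping.
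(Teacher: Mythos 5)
The survey itself offers no proof of this statement: it is quoted from \cite{FM97}, as formulated in \cite{KaboreTapsoba07}, so the only question is whether your argument stands on its own. Your ``if'' direction essentially does. The parity count for $\varphi\colon 0\mapsto 02$, $1\mapsto 12$ is correct and complete: factors of length $2k$ split by starting parity into $p_s(k)+p_s(k)$ classes and factors of length $2k+1$ into $p_s(k+1)+p_s(k)$, the two families being disjoint because exactly one of them begins with $2$, which gives $n+2$. The case $0\mapsto 0$, $1\mapsto 12$ is only sketched (the ``unique right special factor of each length'' claim is asserted, not derived), but it can be completed by a similar block-boundary count; and the passage from $\varphi(s)$ to all of $\Omega_{\varphi(s)}$ via uniform recurrence of Sturmian words and minimality of the orbit closure is sound.

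The ``only if'' direction, however, has a genuine gap, and you name it yourself. The step from ``exactly one right special and one left special factor of each length, each of degree $2$'' to ``either one letter occupies exactly every other position, or two letters are glued into an unbreakable block'' \emph{is} the theorem: your proposal restates it as a goal (``I would analyse the Rauzy graph \dots and argue that the only surviving possibilities are (i) \dots or (ii)'') rather than proving it. Nothing in the text excludes the other a priori configurations: other shapes of the Rauzy graphs and other evolutions of the (unique) bispecial factors, distributions of the third letter that are neither alternating nor glued, or words that are not morphic images of any binary word at all. This exclusion is exactly the delicate part, and it must use recurrence in an essential way: the non-recurrent word $2s$, with $s$ Sturmian over $\Sigma_2$, also has $p(n)=n+2$ for all $n\geq 1$ yet fits neither case, so a correct argument has to identify precisely where strong connectivity of the Rauzy graphs forbids such configurations and then track the graph evolution level by level until one of the two frozen patterns is forced. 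That analysis --- the actual content of \cite{FM97} and \cite{KaboreTapsoba07} --- is absent; the desubstitution and the reverse complexity count that follow it are indeed bookkeeping, but only once the frozen structure has been established.
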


The abelian subshifts behave differently for a ternary alphabet and for larger alphabets: 


\begin{theorem}\label{thm:AbelianSSMinimalComplexity} \cite{DBLP:journals/corr/abs-2012-14701}
Let $x$ be a recurrent word of factor complexity $n+C$ for all $n\geq
1$. 

\begin{enumerate} \item For $C=2$, if $ x$ is as in
\autoref{thm:characterizationNp2Complexity},
\autoref{item:item2LowComplexity}, then $\mathcal A_{ x} =
\Omega_{x}$. If $ x$ is as in
\autoref{item:item3LowComplexity}, then $\mathcal A_{ x}$
contains uncountably many minimal subshifts.

\item If $C>2$, then $\mathcal A_{x}$ contains exactly two
minimal subshifts.

\end{enumerate}
\end{theorem}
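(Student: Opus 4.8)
The plan is to treat all three situations through one structural device. A word $x$ of complexity $n+C$ for all $n\ge 1$ has exactly $C+1$ distinct letters, and by the classification of minimal-complexity words (\cref{thm:characterizationNp2Complexity} for $C=2$, and its analogues over larger alphabets, \cite{FM97,Didier99,KaboreTapsoba07}) it lies in $\Omega_{\varphi(s)}$ for a binary Sturmian word $s$ and a marked substitution $\varphi$ sending the two Sturmian letters to words that together exhaust the alphabet. In every case the inclusion $\Omega_x\subseteq\mathcal A_x$ is immediate, since a factor of $y\in\Omega_x$ is itself a factor of $x$, hence abelian equivalent to one. The whole content is to bound $\mathcal A_x$ from above, and the mechanism I would isolate is that $\mathcal A_x$ is governed by a rigid periodic ``skeleton'' carrying the letters that occur in $\varphi$ with constant density, plus the Sturmian balance of the remaining positions, which by \cref{th:St1} is forced to reproduce $\Omega_s$.

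For the uniform ternary case (the substitution $0\mapsto 02,\ 1\mapsto 12$ of \cref{thm:characterizationNp2Complexity}) I would first note that the only Parikh vectors realised by length-$2$ factors of $x$ are $\{0,2\}$ and $\{1,2\}$; hence every length-$2$ factor of any $y\in\mathcal A_x$ carries exactly one $2$ and one letter of $\{0,1\}$, so $y$ strictly alternates between $\{0,1\}$ and $2$. Deleting the $2$'s gives $t\in\{0,1\}^{\mathbb N}$, and a length-$m$ window of $t$ lifts to a length-$(2m-1)$ factor of $y$; matching the forced count of $2$'s against a length-$(2m-1)$ factor of $x$ forces the $\{0,1\}$-counts to agree, so $t\in\mathcal A_s$. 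By \cref{th:St1}, $\mathcal A_s=\Omega_s$, whence $t\in\Omega_s$ and, reinserting the $2$'s, $y\in\Omega_{\varphi(s)}=\Omega_x$. Thus $\mathcal A_x=\Omega_x$, a single minimal subshift.

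The case $C>2$ runs the same skeleton argument one dimension higher, $\varphi$ now being uniform of period $C$ with images (up to permutation) $0\,2\,3\cdots C$ and $1\,2\,3\cdots C$. A length-$C$ factor of $x$ carries exactly one copy of each of $2,\dots,C$ and one letter of $\{0,1\}$ (call its slot $\ast$); forcing this on $y\in\mathcal A_x$ pins each $a\in\{2,\dots,C\}$ to a single residue class $\bmod\,C$ and leaves one residue for $\ast$, while the projection argument on the $\ast$-residue again yields a Sturmian $t\in\Omega_s$. What remains free is the cyclic order of the $C$ residue-types, and this is constrained by the length-$2$ factors: the unordered adjacent pairs occurring in $x$ are exactly $\{\ast,2\},\{2,3\},\dots,\{C-1,C\},\{C,\ast\}$, the edges of a single $C$-cycle on $\{\ast,2,\dots,C\}$. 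A valid skeleton is precisely a Hamiltonian cycle of this graph, and a $C$-cycle with $C\ge 3$ has exactly two (its two orientations). Each orientation, with $t$ ranging over the minimal $\Omega_s$, produces one minimal subshift, and the two are distinct since they realise opposite ordered adjacencies (e.g.\ $23$ versus $32$); as these two subshifts exhaust $\mathcal A_x$, it contains exactly two minimal subshifts. For $C=2$ the count degenerates to the single orientation of a $2$-cycle, recovering the previous paragraph.

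The genuinely different behaviour is the non-uniform ternary case (the substitution $0\mapsto 0,\ 1\mapsto 12$ of \cref{thm:characterizationNp2Complexity}), and this is where I expect the main difficulty. Here $2$ has the same density as $1$ but is no longer pinned to a residue: the realised length-$2$ classes are $\{0,0\},\{0,1\},\{0,2\},\{1,2\}$, so $y\in\mathcal A_x$ need only avoid $11$ and $22$, while the placement of each $2$ relative to its $1$ is free. The plan is to exploit this freedom: fixing $s$ to control the $\{0,1\}$-balance, one distributes the $2$'s inside the runs of $0$'s following the $1$'s according to an auxiliary binary parameter, and checks that (a) every factor stays abelian-admissible, since the $\#2\approx\#1$ balance is preserved, (b) each parameter yields a uniformly recurrent word, and (c) distinct parameters give subshifts with distinct factor sets; equivalently, one shows $\mathcal A_x$ factors onto a full shift. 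The delicate points—and where I would spend the most care—are guaranteeing uniform recurrence and distinctness of the constructed family, together with the appeal to the classification ensuring that for $C>2$ no analogous non-uniform structure occurs, so that the clean two-orientation count is exhaustive.
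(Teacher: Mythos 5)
The survey states this theorem without proof (it is quoted from \cite{DBLP:journals/corr/abs-2012-14701}), so your proposal can only be measured against what a complete argument must handle. Your treatment of the uniform ternary case is correct and is the natural one: length-$2$ Parikh vectors force strict alternation, deleting the $2$'s gives a binary word whose windows inherit Sturmian compositions, and \cref{th:St1} finishes. The decisive gap is in part 2, where your count of ``exactly two'' rests on the claim that every recurrent word of complexity $n+C$ with $C>2$ arises (up to renaming) from the uniform substitution $0\mapsto 023\cdots C$, $1\mapsto 123\cdots C$; you explicitly invoke ``the classification ensuring that for $C>2$ no analogous non-uniform structure occurs.'' No such classification holds. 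Let $s$ be the Fibonacci word and $\psi\colon 0\mapsto 0,\ 1\mapsto 123$. Then $\psi(s)$ is uniformly recurrent over four letters and has exactly one left special factor of each length, with exactly two left extensions (only the letter $0$ has two possible predecessors, namely $0$ and $3$), so $p_{\psi(s)}(n)=n+3$ for all $n\geq 1$; the word $\varphi(s)$ with $\varphi\colon 0\mapsto 02,\ 1\mapsto 13$ is another such example. Your Hamiltonian-cycle count never addresses these words, so part 2 is not proved. The theorem is nevertheless true for $x=\psi(s)$, but by a mechanism absent from your proposal: in $x$ the letter $2$ occurs only between a $1$ and a $3$, so the length-$2$ and length-$3$ constraints weld every $2$ of any $y\in\mathcal A_x$ into a block $123$ or $321$; blocks of opposite orientation cannot coexist in $y$, because a junction would create a window made of two $3$'s and some $0$'s (or two $1$'s and some $0$'s), and no factor of $x$ has such a Parikh vector, since any factor of $x$ containing two $3$'s contains a $1$ and a $2$ between them; finally one projects the block sequence to $\Omega_s$ and checks that the fully reversed word is admissible. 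Note also that even in your pinned uniform case you only verify length-$2$ admissibility of the reversed skeleton; that all of its window compositions actually occur in $x$ (so that the second subshift really lies in $\mathcal A_x$) still needs an argument.

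There is also a genuine gap in the second half of part 1. Distributing the $2$'s inside the $0$-runs ``according to an auxiliary binary parameter'' fails for arbitrary parameters: windows such as $101$, $202$ and $1001$ (two $1$'s and no $2$, or two $2$'s and no $1$) match no Parikh vector of a factor of $x$, since in $x$ every $1$ is immediately followed by a $2$ and every $2$ immediately preceded by a $1$. Consequently, in any $y\in\mathcal A_x$ the $1$'s and $2$'s strictly interlace, and each of the two position sets must separately satisfy Sturmian counting in every window; the uncountable family therefore cannot be indexed by free binary choices, but by a continuum of relative phases: one takes the $1$-positions and the $2$-positions to be Beatty (rotation) sequences of the same slope with independently chosen intercepts. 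Verifying that every window composition of such a word occurs in $x$, and that distinct phases yield distinct minimal subshifts, is the substantive work your sketch omits. Moreover, your fallback formulation that ``$\mathcal A_x$ factors onto a full shift'' is false for this $x$: the interlacing and counting constraints give $\mathcal A_x$ polynomial factor complexity, hence zero topological entropy, and a zero-entropy subshift admits no factor map onto a full shift (in contrast with, e.g., $\mathcal A_{\tm}$, which is essentially a full shift).
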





Now we examine aperiodic uniformly recurrent balanced words and
their abelian subshifts. The structure of such words is well understood and it has been described in  \cite{DBLP:journals/jct/Graham73,Hubert00}. 

\begin{theorem}\label{thm:BalancedAbelianSS} \cite{DBLP:journals/corr/abs-2012-14701}
Let $ x$ be aperiodic recurrent and balanced. Then $\mathcal
A_{ x}$ is the union of finitely many minimal subshifts.
\end{theorem}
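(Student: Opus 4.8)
The plan is to exploit the explicit structure theorem for balanced words due to Graham and Hubert~\cite{DBLP:journals/jct/Graham73,Hubert00}: an aperiodic recurrent balanced word $x$ over $\Sigma_d$ is, up to renaming letters, obtained by \emph{merging} a Sturmian word $s$ over $\{0,1\}$ with two \emph{constant gap} words. Concretely, the alphabet splits as $\Sigma_d = B_0 \sqcup B_1$, there are purely periodic words $p_0^\omega$ over $B_0$ and $p_1^\omega$ over $B_1$ in which every letter occurs along a single arithmetic progression, and $x$ is the word in which the $0$-positions of $s$ are filled, left to right, by the successive letters of $p_0^\omega$ and the $1$-positions by those of $p_1^\omega$. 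Let $\pi\colon\Sigma_d^{\mathbb N}\to\{0,1\}^{\mathbb N}$ be the letter-to-letter map collapsing $B_0$ to $0$ and $B_1$ to $1$, so that $\pi(x)=s$. I would first note that both $B_0$ and $B_1$ are nonempty, since otherwise $x$ would be periodic.

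First I would control the Sturmian skeleton. If $u\sim_{ab} v$ then $|\pi(u)|_0=\sum_{a\in B_0}|u|_a=\sum_{a\in B_0}|v|_a=|\pi(v)|_0$, so $\pi(u)\sim_{ab}\pi(v)$. Applying this to a factor $u$ of some $y\in\mathcal A_x$ shows that every factor of $\pi(y)$ is abelian equivalent to a factor of $s$, i.e.\ $\pi(y)\in\mathcal A_s$; by Theorem~\ref{th:St1}, $\mathcal A_s=\Omega_s$, a single minimal subshift. Hence $\pi$ maps $\mathcal A_x$ into $\Omega_s$. Next I would control the fibers: for $y\in\mathcal A_x$, let $w^{(0)}$ be the subsequence of letters of $y$ occupying the $0$-positions of $\pi(y)$ (infinite, since $s$ has infinitely many $0$'s), and $w^{(1)}$ the subsequence at the $1$-positions. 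Any length-$j$ factor of $w^{(0)}$ is the $B_0$-part of a factor $u$ of $y$ having exactly $j$ zeros under $\pi$; comparing $u$ with an abelian-equivalent factor $v$ of $x$ and restricting Parikh vectors to $B_0$ shows this factor is abelian equivalent to a length-$j$ factor of $p_0^\omega$. Thus $w^{(0)}\in\mathcal A_{p_0^\omega}$, likewise $w^{(1)}\in\mathcal A_{p_1^\omega}$, and $y$ is recovered from the triple $(\pi(y),w^{(0)},w^{(1)})$ by re-interleaving.

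The crux of the finiteness is then the following lemma: for a constant gap word $p^\omega$, the abelian subshift $\mathcal A_{p^\omega}$ is finite. I would prove it by observing that if a letter $a$ occupies a single arithmetic progression of difference $g_a$ in $p^\omega$, then every factor of $p^\omega$ of length $g_a$ contains exactly one $a$; hence every length-$g_a$ factor of any $y\in\mathcal A_{p^\omega}$ also contains exactly one $a$, which forces the occurrences of $a$ in $y$ to be spaced exactly $g_a$ apart. Therefore each such $y$ is itself a constant gap word with the same gaps, and these are pinned down by finitely many offset choices (an exact cover of the nonnegative integers), so $\mathcal A_{p^\omega}$ is a finite union of periodic, hence minimal, orbits.

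Finally I would assemble the pieces. The two previous steps give an embedding $y\mapsto(\pi(y),w^{(0)},w^{(1)})$ of $\mathcal A_x$ into $\Omega_s\times\mathcal A_{p_0^\omega}\times\mathcal A_{p_1^\omega}$, where the last two factors are finite. Fixing the finitely many possible constant-gap fibers, each resulting family of merges is the orbit closure of a word obtained by interleaving the minimal system $\Omega_s$ with fixed periodic sequences; the shift acts on it as a skew product over the minimal base $\Omega_s$ with periodic fibers, which is a finite union of minimal subshifts consisting of uniformly recurrent points. Summing over the finitely many fiber choices shows that $\mathcal A_x$ contains only finitely many minimal subshifts and that every point of $\mathcal A_x$ is uniformly recurrent, whence $\mathcal A_x$ is their union. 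The main obstacle I anticipate is precisely this last dynamical step: verifying that the interleaving of the minimal Sturmian system with fixed periodic fibers is genuinely a finite union of minimal subshifts (skew-product minimality over a minimal base), and checking that every admissible triple is realized, so that the inclusion $\mathcal A_x\subseteq\Omega_s\times\mathcal A_{p_0^\omega}\times\mathcal A_{p_1^\omega}$ yields an exact description rather than a mere containment.
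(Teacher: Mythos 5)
Your proposal is correct and follows essentially the same route as the proof in the source the survey cites for this statement (the survey itself gives no proof, but its surrounding text --- invoking the Graham--Hubert structure of recurrent aperiodic balanced words --- points to exactly this argument): decompose $x$ as a merge of a Sturmian word $s$ with constant-gap fillings, push $\mathcal A_x$ into $\mathcal A_s=\Omega_s$ via Theorem~\ref{th:St1}, show the fillings lie in the finite sets $\mathcal A_{p_0^\omega}$ and $\mathcal A_{p_1^\omega}$, and conclude through the finite skew product over the minimal Sturmian base. Your two anticipated obstacles are in fact harmless: the splitting of a finite group extension of a minimal system into finitely many minimal components is a standard short argument, and exact realization of all triples is not needed, since $\mathcal A_x$ is itself a closed shift-invariant set, so every minimal subshift meeting it is contained in it, and the mere containment of $\mathcal A_x$ in a finite union of minimal subshifts already forces $\mathcal A_x$ to equal the union of those components it meets.
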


Depending on the balanced word, its abelian subshift can contain either one or several minimal subshifts. In fact, for any integer $k$, there exist words (not necessarily balanced) with abelian subshifts containing exactly $k$ minimal subshifts \cite{DBLP:journals/corr/abs-2012-14701}.




\smallskip

Finally, we discuss abelian subshifts of Arnoux--Rauzy words~\cite{ArRa,DJP}. Apparently, the structure of abelian subshifts of Arnoux--Rauzy
words is rather complicated. For example, it is not hard to see
that for any Arnoux--Rauzy word with a characteristic word $c$ its
abelian subshift contains $20c$ (here we assume that 0 is the
first letter of the directive word $x$ and $2$ is the third letter occurring in
$x$ for the first time, i.e., $x$ has a prefix of the
form $0\{0,1\}^*1 \{0,1\}^* 2$). On the other hand, $20c \notin
\Omega_c$, so ${\mathcal{A}}_w\neq \Omega_w$ for an Arnoux--Rauzy word $w$. Hejda, Steiner and Zamboni studied the abelian shift
of the Tribonacci word $\tr$. They announced that $\mathcal
A_{\tr}\setminus \Omega_{\tr}\neq\emptyset$ but that
$\Omega_{\tr}$ is the only minimal subshift contained in
$\mathcal A_{\tr}$
\cite{HejdaSteinerZamboni15,ZamboniPersonal}.

An interesting open question is to understand the general
structure of the abelian subshifts of Arnoux--Rauzy words:

\begin{problem}
Characterize abelian subshifts of Arnoux--Rauzy words.\end{problem}

\subsection{Rich words and abelian equivalence}

In this subsection, we exhibit a nice fact relating palindromes and abelian equivalence. It is easy to see that a finite word of length $n$ contains at most $n+1$ distinct palindromes (including the empty word). Indeed,  adding a letter to a word, one can introduce at most one new palindrome. Words of length $n$ containing $n+1$ distinct palindromes are called \emph{rich}.

\begin{proposition}\cite{DBLP:journals/ejc/GlenJWZ09}
Any two rich words with the same set of palindromic factors are abelian equivalent.
\end{proposition}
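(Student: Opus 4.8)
The plan is to derive a formula expressing each letter-count $|w|_a$ of a rich word $w$ purely in terms of its set $\Pal(w)$ of palindromic factors; once such a formula is available, the claim is immediate, since two rich words with the same set of palindromic factors will then have identical Parikh vectors.

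First I would recall the fundamental lemma of Droubay, Justin and Pirillo: for any word $p$ and any letter $a$, the word $pa$ has at most one palindromic factor that is not already a factor of $p$, and if there is one, it is the longest palindromic suffix of $pa$. Consequently, writing $w=w_1\cdots w_n$ and letting $p_i=w_1\cdots w_i$ be its prefixes, appending $w_i$ to $p_{i-1}$ introduces at most one new palindrome. Richness of $w$ --- the equality $|\Pal(w)|=|w|+1$ --- holds precisely when each of the $n$ steps introduces exactly one new palindrome, namely the longest palindromic suffix $q_i$ of $p_i$. In particular, the map $i\mapsto q_i$ is a bijection from $\{1,\dots,n\}$ onto the set $\Pal(w)\setminus\{\varepsilon\}$ of nonempty palindromic factors of $w$.

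The key observation is then that each $q_i$ begins with the letter $w_i$. Indeed, $q_i$ is a suffix of $p_i$, so its last letter is $w_i$; being a palindrome, its first letter equals its last letter, hence also $w_i$. Combining this with the bijection above, I obtain, for every letter $a$,
\[
|w|_a=\#\{\, i : w_i=a \,\}=\#\{\, i : q_i \text{ begins with } a \,\}=\#\{\, q\in\Pal(w)\setminus\{\varepsilon\} : q \text{ begins with } a \,\}.
\]
The right-hand side depends only on the set $\Pal(w)$. Therefore, if $u$ and $v$ are rich words with $\Pal(u)=\Pal(v)$, then $|u|_a=|v|_a$ for every letter $a$, so $\PV(u)=\PV(v)$ and $u\sim_{ab}v$, as required. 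Summing the displayed identity over all $a$ also recovers the equality $|u|=|v|$, consistent with the two sets of palindromes having the same cardinality.

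The essentially only nontrivial ingredient is the Droubay--Justin--Pirillo lemma together with its standard consequence that, in a rich word, the first occurrence of each nonempty palindromic factor arises as the longest palindromic suffix of a prefix; the rest is a direct counting argument. The main point to get right is the bijectivity asserted in the second step --- that the $q_i$ are pairwise distinct and exhaust $\Pal(w)\setminus\{\varepsilon\}$ --- which is exactly the content of richness.
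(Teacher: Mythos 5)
Your proof is correct and follows essentially the same route as the paper's: both arguments rest on the fact that in a rich word each nonempty palindromic factor arises exactly once as the (longest) palindromic suffix of a prefix, so that counting occurrences of a letter $a$ reduces to counting palindromic factors ending (hence beginning) with $a$, a quantity determined by $\Pal(w)$ alone. Your write-up merely makes explicit, via the Droubay--Justin--Pirillo lemma and the bijection $i\mapsto q_i$, the counting step that the paper states in one line.
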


\begin{proof} Let $w$ and $w'$ be two distinct rich words with the same set of palindromic factors. Any palindromic factor of $w$ (resp. $w'$) ending (and hence beginning) with a letter $x \in \Sigma$ is the
unique palindromic suffix of some prefix of $w$ (resp. $w'$). Thus the number of $x$'s in $w$ (resp. $w'$) is the number of palindromic
factors ending with $x$. So, $ |w|_x =| w'|_x$ for each letter $x \in \Sigma$. Therefore, $w$ and $w'$
are abelian equivalent.  \end{proof}

The converse of the previous proposition does not hold true, in general. For example, $001$ and $010$ are abelian equivalent rich words but their palindromic factors are different.

\subsection{Abelian saturated words}

Let $f$ be an increasing function. An infinite word $w$ is called abelian $f(n)$-\emph{saturated} if there exists a constant $C$ such that each factor of length $n$ contains at least $C f(n)$ abelian nonequivalent factors. 

\begin{theorem}\cite{DBLP:journals/combinatorics/Puzynina19} A binary infinite word cannot be abelian $n^2$-saturated, but, for any $\varepsilon>0$, there exist abelian $n^{2-\varepsilon}$-saturated binary infinite words. 
\end{theorem}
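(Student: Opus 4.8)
The plan is to convert everything into the balance function of \cref{def:balance} and then to control it simultaneously across all scales. For a binary word $v$ of length $n$ and each $m\le n$, the set of values $\{\,|u|_1 : u\in\Fact(v)\cap\Sigma_2^m\,\}$ is an interval of consecutive integers, since sliding a length-$m$ window by one position changes the number of $1$'s by at most one. Hence the number of abelian classes of length-$m$ factors of $v$ is exactly $B_v(m)+1$, and the number of abelian nonequivalent factors of $v$ is
\begin{equation*}
N(v)=\sum_{m=1}^{n}\bigl(B_v(m)+1\bigr)=n+\sum_{m=1}^{n}B_v(m).
\end{equation*}
Since $B_v(m)\le\min(m,n-m)$, one always has $N(v)\le n^2/4+O(n)$, so the first statement asserts that $\sum_m B_v(m)$ cannot be of order $n^2$ in \emph{every} window at once, while the second asserts that the exponent $2-\varepsilon$ is attainable.

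For the impossibility I would first record an extraction step: if $N(v)\ge Cn^2$ for a length-$n$ window $v$, then splitting the sum at $m=\sqrt{C}\,n$ and using $B_v(m)\le m$ on the short scales forces $\sum_{m>\sqrt C\,n}B_v(m)\ge \tfrac{C}{4}n^2$, so some scale $m^{\ast}$ with $\sqrt C\,n< m^{\ast}\le n$ satisfies $B_v(m^{\ast})\ge\tfrac{C}{4}n$, i.e.\ has relative density contrast $\ge C/4$. Thus $n^2$-saturation would force, at \emph{every} dyadic scale $2^{k}\le n$, that every window of length $\asymp 2^{k}$ contains two sub-blocks of length $\asymp 2^{k}$ whose $1$-densities differ by a fixed constant $\delta=\delta(C)>0$. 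The heart of the argument, and the main obstacle, is to show that such a uniform contrast cannot persist over the $\log_2 n$ scales below $n$. I would prove this by a martingale/orthogonality estimate: fixing a long block of length $2^{K}$ and letting $M_k$ be the $1$-density of the dyadic sub-block of scale $2^{K-k}$ containing a uniformly random position, the sequence $(M_k)_{k}$ is a martingale with values in $[0,1]$, so by orthogonality of its increments $\sum_{k}\mathbb{E}\bigl[(M_{k+1}-M_k)^2\bigr]=\operatorname{Var}(M_K)-\operatorname{Var}(M_0)\le \tfrac14$. A uniform sibling-block contrast $\delta$ at every scale makes each increment satisfy $\mathbb{E}[(M_{k+1}-M_k)^2]\ge c\,\delta^2$, so the left-hand side is $\ge c\,\delta^2 K\to\infty$, contradicting the bound $\tfrac14$. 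Consequently, for all large $n$ some window of length $n$ has $N(v)=o(n^2)$, so $w$ is not abelian $n^2$-saturated. The delicate points I expect to fight are matching the (only approximately dyadic) scale $m^{\ast}$ to the dyadic decomposition, and upgrading ``some high-contrast pair in every window'' to ``a positive fraction of sibling pairs have contrast $\ge\delta$'', which is what is needed to bound the increment below.

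For the construction, fixing $\varepsilon>0$, I would build a word whose scale-$m$ density contrast decays like $m^{-\varepsilon}$ uniformly in the window, so that $B_v(m)\gtrsim m^{1-\varepsilon}$ on a positive proportion of scales and hence
\begin{equation*}
N(v)=n+\sum_{m=1}^{n}B_v(m)\ \gtrsim\ \sum_{m=1}^{n}m^{1-\varepsilon}\ \asymp\ n^{2-\varepsilon}
\end{equation*}
for every window $v$ of length $n$; note this is consistent with Part~1, since $\sum_k (2^{k})^{-2\varepsilon}$ converges whereas constant contrast diverges. The natural way to produce controlled fluctuation at all scales simultaneously is self-similarity: take the fixed point of a primitive substitution (on an auxiliary alphabet, then projected letter-to-letter onto $\{0,1\}$) whose incidence matrix has dominant eigenvalue $\theta_1$ and second eigenvalue $\theta_2$ with $\log_{\theta_1}\theta_2=1-\varepsilon$. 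By Adamczewski's characterization of the balance function of primitive morphic words (used already in \cref{sec:complexity}), the balance grows like $m^{\log_{\theta_1}\theta_2}=m^{1-\varepsilon}$. The remaining difficulty is \emph{uniformity}: that characterization controls the global balance, whose $\Omega'$-part is only a $\limsup$, so I would use the linear recurrence of the fixed point (equivalently, the renewal structure of the substitution) to guarantee that a positive proportion of scales $m\le n$ satisfy $B_v(m)\ge c\,m^{1-\varepsilon}$ \emph{inside every window}, which is exactly the lower bound saturation demands.
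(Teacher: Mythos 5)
Your two reductions are sound and well chosen: for a binary window $v$ the $1$-counts of its length-$m$ factors form an integer interval, so $v$ has exactly $B_v(m)+1$ abelian classes at length $m$ and $N(v)=n+\sum_{m\le n}B_v(m)$ in total; and your Part~2 route is the same as the source's (the survey gives no proof, only the construction $\sigma\colon a\mapsto a^K b,\ b\mapsto ab^K$, which is exactly a primitive uniform substitution with eigenvalues $\theta_1=K+1$, $\theta_2=K-1$, so $\log_{\theta_1}\theta_2\to1$ as $K\to\infty$). The problem is that both halves of your argument stop at their decisive step, and these are genuine gaps, not routine verifications. In Part~1, the extraction hands you, in each window, \emph{one} pair of equal-length factors (length at least $\sqrt{C}\,n$, at arbitrary, non-aligned positions) with density contrast at least $C/4$; the martingale inequality $\sum_k\mathbb{E}\bigl[(M_{k+1}-M_k)^2\bigr]\le \tfrac{1}{4}$ only produces a contradiction against a lower bound on the \emph{average squared sibling contrast}, and no such bound has been derived. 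The upgrade you say you need (a positive fraction of sibling pairs of contrast $\ge\delta$ at every level) does not follow from the extraction, and it is also more than is necessary. What closes the argument is: (i) a bridging lemma --- since the two extracted factors occupy a $\sqrt{C}$-fraction of the window, they are approximated to density error $\delta/4$ by unions of dyadic blocks $r=r(C)$ levels down, giving two \emph{same-level} dyadic blocks of contrast $\ge\delta/2$; since a parent's density is the average of its children's, walking up to the common ancestor forces some sibling pair within those $r$ levels to have contrast $\ge\delta/(2r)$; (ii) aggregation by bands of $r$ levels --- applying (i) inside every dyadic block at levels $0,r,2r,\dots$ yields good sibling pairs in pairwise disjoint subtrees, so each band contributes at least a constant $c(C)>0$ to the sum of squared increments, and with $\asymp\log n$ bands this contradicts the bound $\tfrac{1}{4}$. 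Without (i)--(ii) there is no contradiction, so Part~1 as written is incomplete.

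In Part~2 the gap is the claim that a positive proportion of scales $m\le n$ satisfy $B_v(m)\ge c\,m^{1-\varepsilon}$. Adamczewski's lower bound is only a $\limsup$ over scales, and linear recurrence cannot upgrade it: recurrence localizes in \emph{position}, not in \emph{scale} (its correct use is precisely $B_v(m)=B_w(m)$ for $m\le n/\kappa$, $\kappa$ the recurrence constant). The repair is elementary but must be said: the balance function is $1$-Lipschitz, $|B_w(m+1)-B_w(m)|\le 1$, and for the morphism above one has explicit geometric good scales, $B_w\bigl((K+1)^k\bigr)\ge (K-1)^k$, because $\PV(\sigma^k(a))-\PV(\sigma^k(b))=(K-1)^k\,(1,-1)$ and both $\sigma^k(a)$ and $\sigma^k(b)$ are factors. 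Lipschitz-spreading around the largest good scale below $n/(2\kappa)$ gives an interval of $\asymp m^{1-\delta_K}$ scales, all within reach of every window, on which the balance is $\ge\tfrac{1}{2}m^{1-\delta_K}$, where $\delta_K=1-\log_{K+1}(K-1)$. Summing only over that interval yields $N(v)\ge c\,n^{2-2\delta_K}$ for every window $v$ --- note the exponent $2-2\delta_K$, not $2-\delta_K$: the good scales form a vanishing proportion, so you lose a factor of two in the exponent. This loss is harmless (given $\varepsilon$, choose $K$ with $2\delta_K\le\varepsilon$), but it has to be accounted for, and your sketch, which implicitly assumes the positive-proportion statement, does not do so.
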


The examples of such words can be built using uniform morphisms of the following form:

$$\sigma: a\mapsto a^Kb, b\mapsto ab^K.$$
Choosing $K$ large enough, one gets $n^{2-\varepsilon}$-saturated binary infinite words \cite{DBLP:journals/combinatorics/Puzynina19}. The existence of abelian $n^2$-saturated infinite words over larger alphabets is an open question:

\begin{problem}
Do there exist abelian $n^2$-saturated infinite words over alphabets of cardinality more than 2?
\end{problem}

\section{Acknowledgments}

We thank the anonymous reviewers for their careful reading and helpful comments. We also thank James Currie, Jarkko Peltom\"aki, Narad Rampersad, Michel Rigo, Markus Whiteland and Luca Zamboni for reading a preliminary version of this paper and providing many valuable suggestions.

\bibliographystyle{abbrv}

\end{document}